\renewcommand{\maketag@@@}[1]{\hbox{\m@th\normalsize\normalfont#1}}%
\let\reftagform@=\tagform@
\def\tagform@#1{\maketag@@@{\ignorespaces\textcolor{gray}{(#1)}\unskip\@@italiccorr}}
\renewcommand{\eqref}[1]{\textup{\reftagform@{\ref{#1}}}}
\crefname{lemma}{Lemma}{Lemmas}
\crefname{corollary}{Corollary}{Corollaries}
\crefname{theorem}{Theorem}{Theorems}
\crefname{assumption}{Assumption}{Assumptions}
\declaretheorem[name=Remark]{rmk} %
\declaretheorem[name=Theorem]{theorem}
\declaretheorem[name=Lemma]{lemma}
\declaretheorem[name=Proposition]{proposition}
\declaretheorem[name=Corollary]{corollary}
\declaretheorem[name=Definition]{definition}
\declaretheorem[name=Assumption]{assumption}
\declaretheorem[name=Example]{example} %
    \let\Cref\crtCref
    \let\cref\crtcref
\newcommand{\CC}{\mathbb{C}}
\newcommand{\EE}{\mathbb{E}}
\newcommand{\II}{\mathbb{I}}
\newcommand{\MM}{\mathbb{M}}
\newcommand{\NN}{\mathbb{N}}
\newcommand{\PP}{\mathbb{P}}
\newcommand{\RR}{\mathbb{R}}
\newcommand{\Aa}{\mathcal{A}}
\newcommand{\Bb}{\mathcal{B}}
\newcommand{\Ff}{\mathcal{F}}
\newcommand{\Ll}{\mathcal{L}}
\newcommand{\Mm}{\mathcal{M}}
\newcommand{\Oo}{\mathcal{O}}
\newcommand{\Pp}{\mathcal{P}}
\newcommand{\Xx}{\mathcal{X}}
\newcommand{\one}{\mathbf{1}}
\def\[#1\]{\begin{equation}\begin{aligned}#1\end{aligned}\end{equation}}
\def\*[#1\]{\begin{equation*}\begin{aligned}#1\end{aligned}\end{equation*}}
\def\s*[#1\s]{\small\begin{align*}#1\end{align*}\normalsize}
\newcommand{\lcr}[3]{\left #1 #2 \right #3} %
\newcommand{\lcrx}[4][{-1}]{
	\IfEq{#1}{-1}{\left #2 {{{{#3}}}} \right #4}{
   	\IfEq{#1}{0}{#2 {{{{#3}}}} #4}{
	\IfEq{#1}{1}{\bigl #2 {{{{#3}}}} \bigr #4}{
	\IfEq{#1}{2}{\Bigl #2 {{{{#3}}}} \Bigr #4}{
	\IfEq{#1}{3}{\biggl #2 {{{{#3}}}} \biggr #4}{
	\IfEq{#1}{4}{\Biggl #2 {{{{#3}}}} \Biggr #4}{
    \GenericWarning{"4th argument to lcrx must be -1, 0, 1, 2, 3, or 4"}
    }}}}}}} %
\newcommand{\stk}[2]{\ensuremath{\stackrel{\text{#2}}{#1}}}
\newcommand{\upper}[1]{\ifbool{hideupper}{}{^{(#1)}}}
\newcommand{\subalign}[1]{%
  \vcenter{%
    \Let@ \restore@math@cr \default@tag
    \baselineskip\fontdimen10 \scriptfont\tw@
    \advance\baselineskip\fontdimen12 \scriptfont\tw@
    \lineskip\thr@@\fontdimen8 \scriptfont\thr@@
    \lineskiplimit\lineskip
    \ialign{\hfil$\m@th\scriptstyle##$&$\m@th\scriptstyle{}##$\crcr
      #1\crcr
    }%
  }
}
\renewcommand{\Pr}{\mathbb{P}} %
\def\EE{\mathbb{E}} %
\newcommand{\cPr}[1]{
 \Pr^{#1}}  %
\newcommand{\cEE}[1]{
 \EE^{#1}} %
\newcommand{\EEE}[1]{\underset{#1}{\EE}}
\DeclareMathOperator*{\Cov}{Cov} %
\DeclareMathOperator*{\Var}{Var} %
\newcommand{\KLname}{KL}
\DeclareMathOperator{\KLop}{\KLname}
\newcommand{\KL}[2]{\KLop\lcr({#1\vphantom{#2}\hspace{1pt}}\Vert\hspace{-3pt}\lcr.{#2\vphantom{#1}})}
\newcommand{\stT}{\ \text{s.t.}\ }
\newcommand{\andT}{\ \text{and}\ }
\newcommand{\orT}{\ \text{or}\ }
\newcommand{\withT}{\ \text{with}\ }
\newcommand{\ind}[1]{\II_{\sbra{{#1}}}} %
\def\multiset#1#2{\ensuremath{\left(\kern-.3em\left(\genfrac{}{}{0pt}{}{#1}{#2}\right)\kern-.3em\right)}}
\DeclareMathOperator*{\argmin}{\arg\min} %
\DeclareMathOperator*{\argmax}{\arg\max} %
\DeclareMathOperator*{\newlim}{\mathrm{lim}\vphantom{\mathrm{infsup}}}
\DeclareMathOperator*{\newmin}{\mathrm{min}\vphantom{\mathrm{infsup}}}
\DeclareMathOperator*{\newmax}{\mathrm{max}\vphantom{\mathrm{infsup}}}
\DeclareMathOperator*{\newinf}{\mathrm{inf}\vphantom{\mathrm{infsup}}}
\DeclareMathOperator*{\newsup}{\mathrm{sup}\vphantom{\mathrm{infsup}}}
\renewcommand{\lim}{\newlim}
\renewcommand{\min}{\newmin}
\renewcommand{\max}{\newmax}
\renewcommand{\inf}{\newinf}
\renewcommand{\sup}{\newsup}
\newcommand{\tr}{\mathrm{Tr}} %
\newcommand{\adj}{^{\dag}} %
\DeclareMathOperator{\diag}{diag} %
\newcommand{\dee}{\mathrm{d}} %
\newcommand{\grad}{\nabla} %
\newcommand{\hess}{\grad^{\otimes 2}} %
\newcommand{\rnderiv}[3][{-1}]{
\IfEq{#1}{inline}{\dee #2 / \dee #3}{\frac{\dee #2}{\dee #3}}}
\newcommand{\distto}{\rightsquigarrow}
\newcommand{\distas}{\sim}
\newcommand{\distiidas}{\stk{\distas}{iid}}
\newcommand{\pushfwdmeas}[2]{{{#1}_{\sharp} #2}}
\newcommand{\Law}{\Ll}
\newcommand{\cLaw}[1]{\Law^{#1}} %
\newcommand{\IID}{i.i.d.}
\newcommand{\normaldist}{\mathrm{N}}
\newcommand{\unifdist}{\mathrm{Unif}}
\newcommand{\rbra}[2][{-1}]{\lcrx[#1] ( {#2} ) }
\newcommand{\cbra}[2][{-1}]{\lcrx[#1] \{ {#2} \} }
\newcommand{\sbra}[2][{-1}]{\lcrx[#1] [ {#2} ] }
\newcommand{\abs}[2][{-1}]{\lcrx[#1] \vert {#2} \vert }
\newcommand{\set}[2][{-1}]{\lcrx[#1] \{ {#2} \}}
\newcommand{\floor}[2][{-1}]{\lcrx \lfloor {#2} \rfloor}
\newcommand{\norm}[2][{-1}]{\lcrx[#1] \Vert {#2} \Vert}
\newcommand{\inner}[3][{-1}]{\lcrx[#1] \langle {{#2},\ {#3}} \rangle}
\newcommand{\defas}{:=}  %
\newcommand{\revdefas}{=:}  %
\newcommand{\Nats}{{\NN}}
\newcommand{\NatsO}{\Nats\cup\set{0}}
\newcommand{\Reals}{\RR}
\newcommand{\Complex}{\CC}
\newcommand{\PosReals}{\Reals_+}
\newcommand{\range}[2][{1}]{
	\IfEq{#1}{1}{\sbra{#2}}{\sbra{#2}_{#1}}}
\newcommand{\rangeO}[2][{0}]{
	\IfEq{#1}{0}{\sbra{#2}_0}{\sbra{#2}_{#1}}}
\newcommand{\ointer}[2][{-1}]{\lcrx[#1] ( {#2} ) }
\newcommand{\cointer}[2][{-1}]{\lcrx[#1] [ {#2} ) }
\newcommand{\union}{\cup}
\newcommand{\Intersect}{\bigcap}
\DeclareMathOperator{\domain}{dom}
\DeclareMathOperator{\interior}{interior}
\DeclareMathOperator{\convhull}{Conv}
\definecolor{WowColor}{rgb}{.75,0,.75}
\definecolor{SubtleColor}{rgb}{0,0,.50}
\newcommand{\NA}[1]{\textcolor{SubtleColor}{ {\tiny \bf ($\star$)} #1}}
\newcommand{\LATER}[1]{\textcolor{SubtleColor}{ {\tiny \bf ($\dagger$)} #1}}
\newcommand{\TBD}[1]{\textcolor{SubtleColor}{ {\tiny \bf (!)} #1}}
\newcommand{\PROBLEM}[1]{\textcolor{WowColor}{ {\bf (!!)} {\bf #1}}}
\newcounter{margincounter}
\newcommand{\displaycounter}{{\arabic{margincounter}}}
\newcommand{\incdisplaycounter}{{\stepcounter{margincounter}\arabic{margincounter}}}
\newcommand{\fTBD}[1]{\textcolor{SubtleColor}{$\,^{(\incdisplaycounter)}$}\marginpar{\tiny\textcolor{SubtleColor}{ {\tiny $(\displaycounter)$} #1}}}
\newcommand{\fPROBLEM}[1]{\textcolor{WowColor}{$\,^{((\incdisplaycounter))}$}\marginpar{\tiny\textcolor{WowColor}{ {\bf $\mathbf{((\displaycounter))}$} {\bf #1}}}}
\newcommand{\fLATER}[1]{\textcolor{SubtleColor}{$\,^{(\incdisplaycounter\dagger)}$}\marginpar{\tiny\textcolor{SubtleColor}{ {\tiny $(\displaycounter\dagger)$} #1}}}
\DeclareRobustCommand{\suppresscomments}{
\renewcommand{\LATER}[1]{}
\renewcommand{\fLATER}[1]{}
\renewcommand{\TBD}[1]{}
\renewcommand{\fTBD}[1]{}
\renewcommand{\PROBLEM}[1]{}
\renewcommand{\fPROBLEM}[1]{}
\renewcommand{\NA}[1]{##1}  %
}
\newcommand{\transposesym}{\top}
\newcommand{\transpose}{^{\transposesym}}
\DeclareMathOperator*{\Sym}{Sym} %
\newcommand{\mineigval}{\lambda_{\min}}
\newcommand{\iact}{\tau}  %
\newcommand{\driftMat}{B}
\newcommand{\diffusionMat}{A}
\def\cadlag{c\`adl\`ag\xspace}
\newcommand{\iid}{I.I.D.}
\newcommand{\innov}{\xi}
\newcommand{\innoviter}[1]{{\innov_{#1}}}
\newcommand{\aniso}{\Lambda}
\newcommand{\precon}{\Gamma}
\newcommand{\massmatrix}{M}
\newcommand{\stochgrad}[2]{\hat G\upper{#1}_{#2}}
\newcommand{\stepsize}{h}
\newcommand{\dataidx}{i}
\newcommand{\mbname}{I}
\newcommand{\minibatch}[2]{{\mbname\upper{#1}_{#2}}}
\newcommand{\minibatchseqlong}[1]{{\rbra[0]{\minibatch{#1}{\iternum}}_{\iternum\in\Nats}}}
\newcommand{\mbidx}{j}
\newcommand{\mbdataidx}[3]{{\mbname\upper{#1}_{#2}(#3)}}
\newcommand{\priorposteriordist}{\pi}
\newcommand{\prior}{{\priorposteriordist^{(0)}}}
\newcommand{\loglik}{\ell}
\newcommand{\reg}{r}
\newcommand{\posterior}[1]{\priorposteriordist\upper{#1}}
\newcommand{\potential}[1]{\mathcal{U}\upper{#1}}
\newcommand{\priorposteriormeas}{\Pi}
\newcommand{\priormeas}{{\priorposteriormeas\upper{0}}}
\newcommand{\posteriormeas}[1]{\priorposteriormeas\upper{#1}}
\newcommand{\samplesize}{n}
\newcommand{\batchsize}{b}
\newcommand{\iternum}{k}
\newcommand{\iternumdum}{k'}
\newcommand{\numpasses}{m}
\newcommand{\testfun}{f}
\newcommand{\data}{X}
\newcommand{\dumdata}{x}
\newcommand{\dataspace}{\Xx}
\newcommand{\dataset}[1]{{\mathbf{\data}\upper{#1}}}
\newcommand{\datadist}{P}
\newcommand{\parm}{\theta}
\newcommand{\parmdum}{\theta}
\newcommand{\parmiter}[2]{{{\parm\upper{#1}_{#2}}}}
\newcommand{\iteravg}[2]{{\bar\parm\upper{#1}_{#2}}}
\newcommand{\parmdim}{d}
\newcommand{\parmspace}{{\Theta}}
\newcommand{\trueparm}{\parm_\star}
\newcommand{\limparm}{\vartheta}
\newcommand{\scaleparmiter}[2]{{\limparm\upper{#1}_{#2}}} %
\newcommand{\parmMLE}[1]{{\widehat \parm}\upper{#1}}
\newcommand{\parmLMLE}[1]{{\widehat \parm}\upper{#1}}
\newcommand{\localparm}{{\limparm}}
\newcommand{\limparmavg}{\bar\limparm}
\newcommand{\probmeasures}[1]{{\Mm_{1,+}\rbra{#1}}}
\newcommand{\modeldist}[1]{{Q_{#1}}}
\newcommand{\modeldens}[1]{{q_{#1}}}
\newcommand{\basemeasure}{{\mu}}
\newcommand{\frobprod}{:}
\newcommand{\generator}{A}
\newcommand{\contfuns}{C}
\newcommand{\contbddfuns}{{\overline C}}
\newcommand{\contvanfuns}{\hat \contfuns}
\newcommand{\smoothcompactsuppfuns}{C^\infty_c}
\newcommand{\kdiffuns}[1]{C^{#1}}
\newcommand{\bddmeasfuns}{B}
\newcommand{\normTV}[1]{\norm{{#1}}_{\textup{TV}}}
\newcommand{\normInf}[1]{\norm{{#1}}_{{\infty}}}
\newcommand{\invtemp}{\beta}
\newcommand{\SA}[1]{{\Sigma_{#1}}}
\newcommand{\lebesgue}{\lambda}
\newcommand{\model}{\MM}
\newcommand{\localmodel}[1]{{\MM\upper{#1}}}
\newcommand{\jointPP}{{\PP_{\priormeas,\model}}}
\newcommand{\jointEE}{{\EE_{\priormeas,\model}}}
\newcommand{\jointCEE}[1]{{\EE_{\priormeas,\model}^{#1}}}
\newcommand{\finfo}{\mathcal{J}}
\newcommand{\finfoOpt}{\finfo_\star}
\newcommand{\finfoEmp}[1]{{\widehat \finfo}\upper{#1}}
\newcommand{\vinfo}{\mathcal{I}}
\newcommand{\vinfoOpt}{\vinfo_\star}
\newcommand{\vinfoEmp}[1]{{\widehat \vinfo}\upper{#1}}
\newcommand{\loc}[1]{\textup{loc}_{#1}}
\newcommand{\semigroup}{T}
\newcommand{\mkernel}{U}
\newcommand{\bddlinear}{\Bb}
\newcommand{\banachspace}{L}
\newcommand{\dualbanachspace}{L'}
\newcommand{\kernel}{k}
\newcommand{\skspace}{D}
\newcommand{\timescale}{\alpha}
\newcommand{\const}[1]{c_{#1}}
\newcommand{\weiner}{W}
\newcommand{\subseqidx}{m}
\newcommand{\subsubseqidx}{k}
\newcommand{\subsubsubseqidx}{j}
\newcommand{\subseqsamplesize}{\samplesize_\subseqidx}
\newcommand{\subsubseqsamplesize}{\samplesize_{\subseqidx_\subsubseqidx}}
\newcommand{\subsubsubseqsamplesize}{\samplesize_{\subseqidx_{\subsubseqidx_\subsubsubseqidx}}}
\newcommand{\compactset}{K}
\newcommand{\radius}{R}
\newcommand{\lipschitzconst}{L}
\newcommand{\equicontinuityfun}{\rho}
\newcommand{\ball}{B}
\newcommand{\samplespace}{\Omega}
\crefname{premise}{premise}{premises}
\newcounter{termcounter}
\renewcommand{\thetermcounter}{\Alph{termcounter}}
\crefname{term}{term}{terms}
\def\term{\@ifnextchar[\term@optarg\term@noarg}%
\def\term@optarg[#1]#2{%
  \textup{#1}%
  \def\@currentlabel{#1}%
  \def\cref@currentlabel{[][2147483647][]#1}%
  \cref@label[term]{#2}}
\def\term@noarg#1{%
  \refstepcounter{termcounter}%
  \textup{\thetermcounter}%
  \cref@label[term]{#1}}
\newcommand{\incrInnov}[1]{\Delta\upper{#1}_{\innov}}
\newcommand{\incrPrior}[1]{\Delta\upper{#1}_{\prior}}
\newcommand{\incrLoglik}[1]{\Delta\upper{#1}_{\loglik}}
\newcommand{\incrTotal}[1]{\Delta\upper{#1}}
\newcommand{\incrInnovNup}{\Delta_{\innov}}
\newcommand{\incrPriorNup}{\Delta_{\prior}}
\newcommand{\incrLoglikNup}{\Delta_{\loglik}}
\newcommand{\statcov}{Q_\infty}
\newcommand{\margcov}[1]{Q_{{#1}}}
\newcommand{\avgcov}[1]{\bar Q_{{#1}}}
\newcommand{\statmeas}{\nu}
\newcommand{\localscale}{w}
\newcommand{\stepsizePow}{\mathfrak{h}}
\newcommand{\invtempPow}{\mathfrak{t}}
\newcommand{\localscalePow}{\mathfrak{w}}
\newcommand{\timescalePow}{\mathfrak{a}}
\newcommand{\batchsizePow}{\mathfrak{b}}
\newcommand{\assmoment}[1]{{p_{\text{#1}}}}
\newcommand{\asspower}[1]{{q_{\text{#1}}}}
\newcommand{\boundarymap}{\Pp}
\newcommand{\driftName}{\texttt{d}}
\newcommand{\langNoiseName}{\texttt{g}}
\newcommand{\mbNoiseName}{\texttt{mb}}
\newcommand{\add}[1]{#1}
\newcommand{\mydelete}[1]{}
\newcommand{\delete}[1]{}
\newcommand{\edit}[2]{#2}
\newcommand{\mnote}[1]{}
\title{Tuning Stochastic Gradient Algorithms for \\ Statistical Inference via Large-Sample Asymptotics}
\author[1,2]{Jeffrey Negrea}
\author[3]{Jun Yang}
\author[4]{Haoyue Feng}
\author[5,2]{Daniel M. Roy}
\author[4,6]{Jonathan H. Huggins}
\affil[1]{Department of Statistics and Actuarial Science, University of Waterloo, Canada}
\affil[2]{Vector Institute, Canada}
\affil[3]{Department of Mathematical Sciences, University of Copenhagen, Denmark}
\affil[4]{Department of Mathematics \& Statistics, Boston University, USA}
\affil[5]{Department of Statistical Sciences, University of Toronto, Canada}
\affil[6]{Faculty of Computing \& Data Sciences, Boston University, USA}
\begin{document}

\maketitle

\begin{abstract}
\add{The} tuning of stochastic gradient algorithms (SGAs) for optimization and sampling is often based on heuristics and trial-and-error rather than generalizable theory.
We address this theory--practice gap by characterizing the large-sample statistical asymptotics of SGAs via a joint step-size--sample-size scaling limit.
We show that iterate averaging with a large fixed step size is robust to the choice of tuning parameters and asymptotically has covariance proportional to that of the MLE sampling distribution.
We also prove a Bernstein--von Mises-like theorem to guide tuning, including for generalized posteriors that are robust to model misspecification. 
Numerical experiments validate our results and recommendations in realistic finite-sample regimes.
\add{Our work lays the foundation for a systematic analysis of other stochastic gradient Markov chain Monte Carlo algorithms for a wide range of models.}

\end{abstract}

\section{Introduction}\label{sec:introduction}
Stochastic gradient algorithms, \add{which} were originally proposed as optimization and root finding methods by \citet{robbins1951stochastic}, \delete{and} have become
the standard approach to large-scale optimization in statistics and machine learning.
\delete{This}\add{Their} success \delete{is because}\add{can be attributed to} the reduction in per-iteration computational complexity from subsampling outweigh\delete{s}\add{ing} the accuracy loss from stochastic approximation for empirical objectives.
Hence, stochastic \delete{optimization methods}\add{gradient algorithms} scale more favourably with the sample size and model complexity than their deterministic counterparts \citep{bach2011non-asymptotic,Goodfellow:2016:DL-book}.
Over the past decade, this scalability has also lead to tremendous growth in the use \add{of} stochastic gradient Markov chain Monte Carlo sampling algorithms, particularly in machine learning \citep{welling2011bayesian,Nemeth:2021:SGLD}.

Most analyses of stochastic gradient optimization procedures such as stochastic gradient descent (SGD) focus on the parameter error
or the optimality gap \citep[e.g.,][]{bach2011non-asymptotic,kushner2003stochastic,nemirovski2009robust,Reddi:2018:Adam},
while analyses of stochastic gradient sampling procedures such as stochastic gradient Langevin dynamics (SGLD)
focus on how well the empirical distribution of the iterates approximates the posterior
\citep[e.g.,][]{teh2016consistency,vollmer2016exploration,brosse2018sgld,Baker.2019,Nemeth:2021:SGLD,raginsky2017non,durmus2017nonasymptotic,durmus2019high}.
\add{These results often rely on settings for tuning parameters that fall outside of standard practice.}
\add{It is an important challenge to explain why, empirically, stochastic gradient algorithms appear successful with previously unvalidated tunings (e.g., large step size and small batch size).}
\delete{Yet\add{,} in practice\add{,} stochastic gradient algorithms are successful even when used with tuning parameter combinations
insufficient to result in accurate approximations according to the standard theory (e.g., large step size and small batch size).}
The lack of an explanatory theory forced users to rely on heuristic and problem-specific approaches to tuning parameters.

We take a step toward closing this gap between theory and practice when the step size is fixed across iterations but decreases with the sample size.
The fixed--step-size setting proves to be practically relevant for optimization because convergence to a near-optimum is rapid and robust to the precise
step size choice \citep{bach2011non-asymptotic,dieuleveut2020bridging} while
for sampling, using a fixed--step-size leads to better mixing time behaviour:
the number of iterations until the next approximately independent sample is constant, unlike in the decreasing-step size regime where the number of
iterations until the next approximately independent sample increases without bound \citep{teh2016consistency,vollmer2016exploration}.

Our main result characterizes the statistical scaling limits of stochastic gradient algorithms as the sample size tends to infinity.
We show that the sample paths of a very general class of preconditioned stochastic gradient algorithms\delete{---including stochastic gradient descent with and without additional Gaussian noise, momentum, and/or acceleration---} converge
to the sample paths of an Ornstein--Uhlenbeck process under relatively mild conditions. \add{The class of algorithms includes stochastic gradient descent with and without additional Gaussian noise, momentum, and/or acceleration.}
Notably, however, while the asymptotic guarantees in the decreasing step size case often require an impractically large number of iterates,
numerical experiments show that our constant step size averaging result can hold after a small number of passes over the dataset.
\delete{For sampling, we show that it is more computationally efficient (in terms of mixing) to use SGLD to sample from a distribution that is far from the
posterior but has better robustness to model misspecification.}
\add{For sampling, we show that it is even possible to leverage stochastic gradients to sample modifications to the posterior that have better robustness to model misspecification.}
\delete{This result suggests that stochastic gradients, rather than being a problem in need of solution, can be beneficial---or at the very least are harmless.}\add{This result suggests that stochastic gradients have a potentially beneficial (or at least benign) role to play, rather than one that creates accuracy problems in exchange for computational efficiency.}

\add{Because the guarantees we provide are asymptotic in the sample size, it is possible that they may not be representative for a particular dataset. 
Therefore, we complement our asymptotic results with three numerical experiments to demonstrate that the limiting behaviour often predicts actual performance. 
These include}
\delete{Therefore, we confirm our theoretical findings through three numerical experiments:}
a simulation study with a Gaussian location model, and two real-data experiments
(a logistic regression example with 1 million observations, and a misspecified Poisson regression example with 150,000 observations). 

\subsection{Implications for sampling}

\add{
\citet{Nemeth:2021:SGLD} recently identified several key areas for stochastic gradient MCMC (SG-MCMC) research. 
Our work makes significant strides in two of these areas for fixed–step-size variants of SGLD through our analysis of their large-sample asymptotics.
One 
key area
they identify is the need for general theoretical results beyond the log-concave regime that are not asymptotic in the number of iterations. 
We move beyond the log-concave regime by using large-sample asymptotics, analogous to the applicability of the Bernstein-von Mises theorem regardless of the convexity of the likelihood. 
Under tuning regimes relevant to statistical inference, our results apply after a constant number of epochs (i.e., passes over the full dataset). 
Another 
key area
identified by \citet{Nemeth:2021:SGLD} is the need for methods for robust and/or adaptive tunings. Ideally, tunings ought to be automatable for non-experts to use.
We use our results to make recommendations on the tuning of these methods in the large-sample setting (see \cref{tab:tuning-combos}), 
which is especially relevant in practice since stochastic gradient MCMC algorithms are typically used when the sample size is large. 
In particular, a large class of bad tunings whose large-sample asymptotics do not match the large-sample asymptotics of the target, or whose large-sample asymptotic local mixing is very slow, can be immediately identified and ruled out.
Good tunings with the correct large-sample asymptotics and rapid asymptotic local mixing can also be identified as candidates for use and possibly fine-tuned using
other methods \citep[e.g.,][]{Coullon2023}. 
The guidance we derive in this way does not require additional expertise to use and could be implemented in an automated way.
Moreover, our statistical perspective on the large-sample asymptotics of these methods leads to the insight that other benefits can be obtained by targeting statistically robust modifications of the posterior distribution,
a direction not foreseen by \citeauthor{Nemeth:2021:SGLD}.}

\add{
We illustrate the implications of our results with two recent applications of SG-MCMC in the statistics literature. 
\begin{example} \label{exa:pollock}
\citet{pollock2020quasi} benchmark their subsampling-based MCMC algorithm against SGLD. They tuned SGLD using the best-available-at-the-time theoretical guidance \citep{teh2016consistency} and other best practices, 
including variance-reduced stochastic gradients \citep{Baker.2019}. 
However, their implementation of SGLD mixes slowly and does not appear to be sampling from the posterior. 
This can be attributed to two causes. 
First, because they use a decreasing step size, each nearly independent sample takes an increasing number of epochs to reach. 
Second, even if they were to use a fixed--step-size---or to run the Langevin diffusion for the posterior directly---it would have mixed slowly due to ill-conditioning of the posterior distribution. 
Based on the scaling limit we derive, the poor approximation quality due to ill-conditioning would have been foreseen, and a fixed--step-size sampler could have been appropriately tuned. Our theory predicts that, because SGLD was not preconditioned adequately,  mixing would be very slow---and how much slower it is relative to the optimal preconditioner. Furthermore, since the optimal preconditioner according to our theory was used in the implementation of their method, the numerical comparison overstates the relative benefits of their proposed method versus SGLD. We demonstrate this on the same data as used by \citet{pollock2020quasi} in our \cref{experiment2}. 
In short, this example exhibits how our results are directly exploitable: our tuning recommendations would have resolved the mixing problems of SGLD seen by \citet{pollock2020quasi}, and led to a more meaningful comparator to their method. 
\end{example}}%
\add{\begin{example} \label{exa:nemeth}
\citet[Section 6.3]{Nemeth:2021:SGLD} compare various SG-MCMC algorithms on a challenging matrix factorization problem. 
Due to the lack of actionable tuning advice in the literature, they use the kernel Stein discrepancy (KSD) to select the step size. 
They initialise the variance-reduced SG-MCMC algorithms at the maximum a posteriori solution.
Due to a pathology of the KSD \citep{Coullon2023}, this results in selection of the smallest possible step size of $10^{-10}$,
which leads to the variance-reduced chains essentially remaining at their initialization.
As a result, \citet[Figure 6]{Nemeth:2021:SGLD} incorrectly suggests the that the variance-reduced algorithms had much lower predictive error than other SG-MCMC algorithms, 
when in fact it illustrates that the maximum a posteriori solution provides small test error---but of course no uncertainty quantification. 
Our theory predicts the observed poor approximation to the posterior and lack of meaningful uncertainty quantification from the ``stuck'' chains. 
Using our recommendations would have avoided the undetected pathological slow-mixing behaviour resulting from the use of the KSD. 
\end{example}
}

\subsection{Implications for optimization and frequentist inference}

\add{
Our theory provide rigorous foundations and new insights into the use of iterate averaging with fixed--step-size SGAs.
Our main result differs from the seminal works on scaling limits in stochastic approximations \citep{kushner1981asymptotic, pflug1986stochastic, walk1977invariance, Kushner:1993, kushner2003stochastic} in both the nature of our analysis and the required regularity conditions. 
We analyze the setting where the source of the stochastic gradients is itself random and undergoes stochastic convergence. 
This is an important distinction because this joint limit is the pivotal object that we study in the present work and is required to address our research questions.
In further contrast to our work, \citet{kushner1981asymptotic,pflug1986stochastic,Kushner:1993, kushner2003stochastic} require restrictive assumptions that are not readily lifted to this ``doubly stochastic convergence'' case. 
The assumptions required by our analysis, on the other hand, are quite weak. We allow the batch size used to compute the stochastic gradient to be constant or depend on the dataset size, and allow the batches to be sampled with or without replacement. 
We only require the local maximizer to converge in probability and we do not assume the model is correctly specified.
At the same time, our results are stronger than those achieved by previous analyses since we characterize both the sample paths of the iterates and the complete stationary distribution. %
For example, \citet{walk1977invariance} only considers decreasing step sizes and demonstrates asymptotic normality of the marginal distribution, unlike our a functional/path limit results. 
Characterising the sample-path distribution is critical to analysing not just iterate averages but also the mixing time.
}

\add{ Despite not directly applying to the statistical setting with fixed step size, the work of \citet{kushner1981asymptotic, pflug1986stochastic, walk1977invariance, Kushner:1993, kushner2003stochastic}
has been an important source of motivation for more recent methodological developments. 
We highlight two examples.}
\add{
\begin{example}
\citet{li2018statistical} propose a method for constructing samples from a local asymptotic fiducial distribution (one whose credible regions are asymptotic confidence sets at the same significance) by magnifying the deviations of SGD from the mode. They point out that the intuition underlying their results aligns with the Ornstein--Uhlenbeck scaling limit of stochastic gradient algorithms (including references to \citep[e.g.,][]{kushner1981asymptotic, pflug1986stochastic}).
However, the rigorous proof of their results does not leverage the continuous-time limit, and uses stronger assumptions  -- such as weak strong convexity (equivalently, strong convexity of the composite objective) -- than we require. The gap between the seminal work on Ornstein--Uhlenbeck limits of SGAs and the intuition they formed is, again, the need for a joint stochastic limit of the decreasing step size and stochastically varying objective function. Using our results, their heuristic explanation based on the continuous-time limit could be made rigorous and their findings and methods could be extended beyond weak strong convexity.
\end{example}
}
\add{
\begin{example}
A special case of the scaling limit we derive, and some of resulting the tuning recommendations, was conjectured by \citet{mandt2017stochastic} based on the heuristic combination of statistical asymptotics of the posterior distribution (Bernstein--von Mises) and results from the stochastic approximation literature.
However, being led mainly by heuristics, \citet{mandt2017stochastic} arrive at an erroneous conclusion on the effect of iterate averaging that. 
In particular, they claim that ``there exist conditions [on the data-generating process and sample size] under which iterate averaging generates one true posterior sample per pass over the data.'' However, their heuristic approach fails to account for ``low order'' terms, which our analyses in \cref{sec:optimization-iteravg} reveal are \emph{not} actually low-order when only making a single pass over the dataset. 
This difference is observed empirically in the numerical experiments we present in \cref{experiment1}.
Our results also go far beyond those conjectured by \citet{mandt2017stochastic}, broadening the applicability of this category of result, including fixed and growing batch-size regimes, non-traditional spatial scalings and concentration rates, and incorporating comparison to other asymptotic target distributions of interest. 
\end{example}
}
Another practical benefit of our rigour for the stochastic limit of stochastic processes is that we can clearly distinguish previous heuristics which can be turned into precise claims (the functional Bernstein--von Mises results, for instance), and those which require additional qualification or restricted application (such as mixing time results).

\delete{
As such, our results can be viewed a generalization and formalization of \mbox{\citet{mandt2017stochastic}}'s  heuristic arguments, and open the way
for further generalizations to situations where heuristics provide minimal insight such as infinite- dimensional models and models where the number of
parameters scales with the sample size.
Our results also complement those of \mbox{\citet{Kushner:1993}} and \mbox{\citet{kushner2003stochastic}}, which provide the basis for the conjectures in \mbox{\citet{mandt2017stochastic}}, and establish weak convergence of stochastic gradient algorithms to an OU process in a variety of settings.
Notably, however, they do not cover cases where either (i) the objective function is itself stochastic (in particular arising as the random loss function based on a finite \iid sample) or
(ii) the joint scaling of the objective function with the tuning parameters of the algorithm, both of which are required to obtain our asymptotic statistical results.
}

\subsection{Notation}

Let $\probmeasures{\mathcal{A}}$ denote the set of probability measures on the measure space $\mathcal{A}$
and let $\Nats \defas \set[0]{1, 2, \dots}$ denote the natural numbers.
For $\samplesize\in\Nats$, define $\range{\samplesize} \defas \set{1,\dots,\samplesize}$.
For $\parmdim \in \Nats$, denote the $\parmdim$-dimensional Gaussian distribution with mean $\mu \in \Reals^{\parmdim}$
and (positive semi-definite) covariance matrix $\Sigma \in \Reals^{\parmdim \times \parmdim}$ by $\normaldist_{\parmdim}(\mu, \Sigma)$.
For vectors $a, b \in \Reals^{\parmdim}$, define the outer product $a \otimes b \in \Reals^{\parmdim \times \parmdim}$ given by
$(a \otimes b)_{ij} = a_{i}b_{j}$ and write $a^{\otimes 2} \defas a \otimes a$.
Let $\grad \otimes \grad = \grad^{\otimes 2}$ denote the Hessian operator.
For random elements $(\xi_{k})_{k \in \Nats}$ and $\xi$, we write $\xi_{k} \distto \xi$ to denote convergence in distribution;
that is, $\xi_{k} \distto \xi$ if and only if for every bounded continuous function $f$, $\EE\{f(\xi_{k})\} \to \EE\{f(\xi)\}$ as $k\to\infty$.
We write $\Law(\xi)$ for the distribution (law) of a random element $\xi$, and $\cLaw{\nu}(\xi)$ for the conditional distribution of $\xi$ given another random element $\nu$.
For a square matrix $M$, define the symmetrization operator as $\Sym(M) \defas (M + M\transpose)/2$.
For a function $f:\Aa\to \banachspace$ with $\Aa$ a set and $(\banachspace,\norm{\cdot})$ a normed linear space, define $\normInf{f} \defas \sup_{a\in \Aa} \norm{f(a)}$.

\section{Stochastic Gradient Optimization and Sampling} \label{sec:sgas}

Let $\dataset{\samplesize} = (\data_{\dataidx})_{\dataidx=1}^{\samplesize} \in \dataspace^{\samplesize}$ denote a dataset
with observations $\data_{\dataidx}$ independently and identically distributed (\IID) from an unknown distribution $\datadist$.
For parameter $\parm \in \parmspace \subseteq \Reals^{\parmdim}$, consider the potential
$\potential{\samplesize}(\parm) \defas \reg\rbra[0]{\parm} + \sum_{\dataidx = 1}^{\samplesize} \loglik\rbra[0]{\parm; \data_\dataidx}$,
where typically $\loglik$ represents a log-likelihood or a negative loss function, and $\reg\rbra[0]{\parm}$ represents a regularizer or
a (possibly improper) log prior density $\log \prior(\parm)$ that is everywhere positive on $\parmspace$.

If $-\potential{\samplesize}(\parm)$ is interpreted as a (possibly regularized) loss, perhaps the most popular estimator for the (locally) optimal population parameter $\trueparm$
satisfying $\EE\cbra[0]{\grad\loglik\rbra{ \trueparm; \data_{1}}} = 0$, is the M-estimator
$\parmLMLE{{\samplesize}}$ satisfying the first-order optimality condition $\grad \potential{\samplesize}(\parmLMLE{{\samplesize}}) = 0$.
If $-\potential{\samplesize}(\parm)$ is interpreted as the negative log of the joint model density or as a generalized Bayesian loss \citep{bissiri2016general},
the quantity of interest is (usually) an expectation with respect to the (generalized) posterior density $\posterior{{\samplesize}}(\parm) \propto \exp\{-\potential{\samplesize}(\parm)\}$
of a function $\testfun : \parmspace \to \Reals^{\ell}$, which we denote $\posterior{{\samplesize}}(\testfun)$.
In either case, when $\samplesize$ is large relative to the computational cost of evaluating $\loglik\rbra[0]{\parm; \data_\dataidx}$,
classical optimization methods for approximating $\parmLMLE{{\samplesize}}$ (e.g., gradient descent or Newton--Raphson) and sampling methods for estimating $\posterior{{\samplesize}}(\testfun)$ (e.g., Metropolis--Hastings algorithms)
become computationally prohibitive.

Stochastic gradient algorithms provide a means of reducing the per-iteration computational cost of optimization and sampling methods.
To generate a sequence of iterates $\parmiter{\samplesize}{1}, \dots, \parmiter{\samplesize}{\iternum}, \ldots \in \parmspace$,
rather than computing exact gradients of $\samplesize^{-1}\potential{\samplesize}$ using the full dataset,
at iteration $\iternum$ a small batch of subsampled data is used instead to compute an unbiased gradient estimate
\[ \label{eq:stochastic-gradient-formula}
\textstyle\stochgrad{\samplesize}{\iternum}
\defas \frac{1}{\samplesize}\grad\reg\rbra{\parmiter{\samplesize}{\iternum}} + \frac{1}{\batchsize\upper{\samplesize}}\sum_{\mbidx=1}^{\batchsize\upper{\samplesize}} \grad\loglik\rbra[3]{ \parmiter{\samplesize}{\iternum};\ \data_{\mbdataidx{\samplesize}{\iternum}{\mbidx}}},
\]
where $\minibatchseqlong{\samplesize} \in \rbra[0]{\range{\samplesize}^\batchsize}^\Nats$ are an independent and identically distributed (\IID{})
sequence of uniform random samples  from $\set{1,\dots,\samplesize}$ of size $\batchsize\upper{\samplesize}$, which are
formed either with or without replacement.\footnote{``With replacement'' means $\minibatchseqlong{\samplesize} \distiidas \unifdist\rbra[0]{\range{\samplesize}^\batchsize}$,
and ``without replacement'' means $\minibatchseqlong{\samplesize} \distiidas \unifdist \rbra[0]{\set[0]{\mbname\in \range{\samplesize}^\batchsize: \rbra{j_1\neq j_2 \Rightarrow \mbname(j_1)\neq \mbname(j_2)}}}$.}

For optimization, the canonical approach is stochastic gradient descent (SGD), which has one-step update
\[\label{eq:sgd-iter-update}
	\parmiter{\samplesize}{\iternum+1}
		& = \parmiter{\samplesize}{\iternum} + \frac{\stepsize\upper{\samplesize}_{\iternum}}{2} \stochgrad{\samplesize}{\iternum},
\]
where $(\stepsize\upper{\samplesize}_{\iternum})_{\iternum \in \Nats}$ is a sequence of positive step sizes.
While optimal tuning of the last-iterate error is challenging, averaging the iterates can provide automatic optimal uncertainty quantification \citep{polyak1992acceleration,Kushner:1993,kushner2003stochastic}. %
More precisely, when $\stepsize_{\iternum} \propto \iternum^{-\varsigma}$ for $\varsigma \in (0,1)$,
the iterate average $\iteravg{\samplesize}{\iternum} \defas \frac{1}{\iternum}\sum_{\iternumdum=1}^{\iternum} \parmiter{\samplesize}{\iternumdum}$
satisfies
\[ \label{eq:iterate-averaging-optimality}
\lim_{\samplesize\to\infty}\lim_{\iternum \to \infty}\iternum\Cov(\iteravg{\samplesize}{\iternum})
	= \finfoOpt^{-1}\vinfoOpt\finfoOpt^{-1}
	= \lim_{\samplesize \to \infty}\samplesize \Cov( \parmLMLE{{\samplesize}}),
\]
where
$\vinfoOpt %
\defas \EE\cbra{\grad_\parm\loglik(\trueparm; \data)\otimes \grad_\parm\loglik(\trueparm; \data) }$
and
$\finfoOpt  %
 \defas -\EE\cbra[0]{\hess_\parm\loglik(\trueparm; \data)}$
are, respectively, the first- and second-order Fisher information matrices.
Such results are, however, very sensitive to the choice of step size schedule, leading to impractically slow convergence rates \citep{bach2011non-asymptotic,toulis2021proximal}.

For sampling, the canonical approach is stochastic gradient Langevin dynamics \citep[SGLD;\ ][]{welling2011bayesian},
with one-step update 
\[\label{eq:sgld-iter-update-simple}
\parmiter{\samplesize}{\iternum+1}
		 = \textstyle\parmiter{\samplesize}{\iternum}
			+ \frac{\stepsize\upper{\samplesize}_{\iternum}}{2}\stochgrad{\samplesize}{\iternum}
			+ \sqrt{\frac{\stepsize\upper{\samplesize}_{\iternum}}{\invtemp}}\ \innoviter{\iternum},
\]
where $\innoviter{\iternum} \distas \normaldist_{\parmdim}(0,I)$ is independent standard Gaussian noise
and $\invtemp \in (0, \infty]$ is the inverse temperature, which is usually taken to be
$\samplesize$.\footnote{We take $\invtemp^{-1}$ to mean $0$ when $\invtemp = +\infty$, in which case we recover SGD from \cref{eq:sgd-iter-update}.}
The benefits of introducing stochastic gradients into an MCMC procedure are less clear than for optimization since retaining exactness
would require an accept/reject step using the full-sample likelihood in the Metropolis--Hastings adjustment.
While SGLD can be asymptotically exact when run with a decreasing step size,
the optimal choice of step sizes results in a slow $\iternum^{-1/3}$ convergence rate \citep{teh2016consistency,vollmer2016exploration}.
Further, these results do not directly guarantee finite-time accuracy \citep{brosse2018sgld}.
Despite these limitations, variants of SGLD has been an active area of methods development and seen adoption in practice \citep{Ahn:2012,Chen:2014,Ma:2015:recipe,Baker.2019,Nemeth:2021:SGLD}. 

\section{Stochastic gradient algorithms and their scaling limits} \label{sec:main-results}

In this section we develop a comprehensive framework that accurately predicts the large-sample behaviour of stochastic gradient algorithms with fixed step sizes
for inference and parameter estimation, including in cases where the model is misspecified.
We develop our methods and theory in the framework of a \emph{stochastic gradient meta-algorithm} with one-step update
\[\label{eq:sgld-iter-update}
	\parmiter{\samplesize}{\iternum+1}
		& = \parmiter{\samplesize}{\iternum}
			+ \frac{\stepsize\upper{\samplesize} \precon}{2} \stochgrad{\samplesize}{\iternum}
			+ \sqrt{\frac{\stepsize\upper{\samplesize}\aniso}{\invtemp\upper{\samplesize}}}\ \innoviter{\iternum},
\]
where
$\precon \in \Reals^{\parmdim \times \parmdim}$ is the (not necessarily positive semi-definite) \emph{gradient preconditioner},
$\aniso \in \Reals^{\parmdim \times \parmdim}$ is the positive semi-definite \emph{diffusion anisotropy matrix},
$\innoviter{\iternum}$ are \IID{} $\normaldist_d(0,I_d)$, and
$\stochgrad{\samplesize}{\iternum}$ implicitly depends on the batch size $\batchsize\upper{\samplesize}$ (which in turn may vary with the sample size $\samplesize$).
Unless otherwise noted take the parameter space $\parmspace = \Reals^\parmdim$.
The meta-algorithm subsumes the SGD and SGLD algorithms described in \cref{sec:sgas}.
It also includes momentum-based methods;
see \cref{sec:example-underdamped} for details in the case of the underdamped stochastic Langevin dynamics.

\subsection{Scaling limit of the stochastic gradient meta-algorithm}

We now characterize the behaviour of the sample path of the iterates of \cref{eq:sgld-iter-update} in the region about $\parmLMLE{\samplesize}$, which
will enable us to determine the limiting distribution of the iterate average (for optimization), the asymptotic stationary distribution of
the iterates (for optimization and sampling), and the mixing speed (for sampling).
Our approach is to obtain a functional central limit theorem by taking the scaling limit of the
piecewise-constant, continuous-time process
\[  \label{eq:local-scaled-process}
\localparm\upper{\samplesize}_{t} \defas \localscale\upper{\samplesize}\rbra{\parmiter{\samplesize}{\floor{\timescale\upper{\samplesize} t}} - \parmLMLE{{\samplesize}}},
\]
where $\localscale\upper{\samplesize} \to \infty$  determines the spatial scaling
and $\timescale\upper{\samplesize} \to \infty$ determines the temporal scaling.
Since it suffices for practical application, we assume polynomial scaling of all tuning parameters as a function of sample size:
$\stepsize\upper{\samplesize} = \const{\stepsize}\samplesize^{-\stepsizePow}$ for $\stepsizePow > 0$,
$\batchsize\upper{\samplesize} = \floor{\const{b}\samplesize^\batchsizePow}$ for $\batchsizePow \ge 0$,
and $\invtemp\upper{\samplesize} = \const{\invtemp}\samplesize^{\invtempPow}$ for $\invtempPow \in \Reals$.
Given these tuning parameters, in order to have a stable and non-trivial\footnote{By non-trivial here, we mean that the limiting SDE should have both non-zero drift and non-zero diffusion terms if possible.} limit, we must take the time scaling to be $\timescale\upper{\samplesize} = \samplesize^{\stepsizePow}$ and the spatial scaling to be $\localscale\upper{\samplesize} = \samplesize^{\localscalePow}$ for $\localscalePow = \min\cbra{\batchsizePow + \stepsizePow, \invtempPow}/2$. In this setting we have the following result, under \crefrange{ass:basic-reg}{ass:V} discussed in \cref{sec:assumptions}.
All proofs are deferred to the Supplementary Materials. 

\begin{theorem}[Scaling limit of the meta-algorithm]
    \label{thm:sgld-scaling-limit-multiple-scalings-boundary-simple}
    \label{THM:SGLD-SCALING-LIMIT-MULTIPLE-SCALINGS-BOUNDARY-SIMPLE}

    If \crefrange{ass:basic-reg}{ass:V} hold, there exists $\trueparm \in \parmspace$ such that $\parmLMLE{{\samplesize}} \stk\to{p} \trueparm$,
    and there exists $\localparm_{0} \in \probmeasures{\parmspace}$ such that $\localparm\upper{\samplesize}_{0} \distto \localparm_{0}$, then
        $
            (\localparm\upper{\samplesize}_{t})_{t\in\PosReals} \distto (\localparm_{t})_{t\in\PosReals}
        $
        in the Skorohod topology\footnote{See \cref{sec:semigroups-etc} for further discussion.} in probability,
        where
        $(\localparm_{t})_{t\in\Reals}$ is an Ornstein--Uhlenbeck process given by
        \[ \label{eq:generic-OU}
        \dee \localparm_{t}
        		&= -\frac{1}{2} \driftMat \localparm_{t} \,\dee t + \vphantom{\frac{1}{2}}\sqrt{\diffusionMat} \, \dee\weiner_{t},
\]
	with $\weiner_{t}$ a $\parmdim$-dimensional standard Brownian motion,
	$\driftMat \defas \const{\stepsize}\precon \finfoOpt$ the drift matrix,
	$\diffusionMat \defas \ind{\batchsizePow + \stepsizePow \le \invtempPow}\frac{\const{\stepsize}^2\overline{\const{\batchsize}}}{4\const{\batchsize}} \precon \vinfoOpt 	\precon\transpose + \ind{\invtempPow \le \batchsizePow + \stepsizePow}\frac{\const{\stepsize}}{\const{\invtemp}} \aniso$ the positive semi-definite diffusion matrix,
	and $\overline{\const{\batchsize}} \defas 1 - \const{\batchsize}\ind{\batchsizePow = 1 \andT \text{``no replacement''}}$ the batch constant.
\end{theorem}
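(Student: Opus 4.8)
The plan is to establish the weak convergence by the generator/martingale-problem method for diffusion approximations (in the style of Stroock--Varadhan and Ethier--Kurtz): embed the chain in continuous time so that each update advances time by $1/\timescale\upper{\samplesize}$, show the discrete generators converge to the Ornstein--Uhlenbeck generator on a core, verify a compact-containment (tightness) estimate, and invoke well-posedness of the limiting martingale problem. The data-dependent coefficients converge only \emph{in probability}, so the final move is a subsequence argument: every subsequence admits a further subsequence along which the relevant empirical quantities converge almost surely, and along such subsequences the classical fixed-coefficient machinery applies, yielding almost-sure weak convergence of the conditional laws --- which is exactly convergence in distribution in the Skorohod topology in probability. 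Throughout, write $Z\upper{\samplesize}_{\iternum} \defas \localscale\upper{\samplesize}(\parmiter{\samplesize}{\iternum} - \parmLMLE{\samplesize})$, so that $\localparm\upper{\samplesize}_{t} = Z\upper{\samplesize}_{\floor{\timescale\upper{\samplesize} t}}$, let $\mathcal{G}\upper{\samplesize}f(z) \defas \timescale\upper{\samplesize}\,\EE[f(Z\upper{\samplesize}_{\iternum+1}) - f(z) \mid Z\upper{\samplesize}_{\iternum} = z]$ be the discrete generator, and let the target be $\mathcal{L}f(z) = -\tfrac{1}{2}(\driftMat z)\cdot\grad f(z) + \tfrac{1}{2}\tr(\diffusionMat\,\hess f(z))$, whose martingale problem is well-posed because its coefficients are affine and globally Lipschitz.

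First I would record the one-step increment in rescaled coordinates, splitting the stochastic gradient as $\stochgrad{\samplesize}{\iternum} = \fullgrad{\samplesize}{\iternum} + \gradnoise{\samplesize}{\iternum}$ into the full gradient $\fullgrad{\samplesize}{\iternum} = \samplesize^{-1}\grad\potential{\samplesize}(\parmiter{\samplesize}{\iternum})$ and the mean-zero subsampling noise $\gradnoise{\samplesize}{\iternum}$. Since the injected noise $\innoviter{\iternum}$ is independent of $\gradnoise{\samplesize}{\iternum}$, the conditional mean of $\Delta Z\upper{\samplesize}_{\iternum}$ comes entirely from the full gradient. Taylor-expanding $\grad\potential{\samplesize}$ about $\parmLMLE{\samplesize}$ and using the first-order optimality $\grad\potential{\samplesize}(\parmLMLE{\samplesize}) = 0$ gives $\fullgrad{\samplesize}{\iternum} = \samplesize^{-1}\hess\potential{\samplesize}(\parmLMLE{\samplesize})\,(\localscale\upper{\samplesize})^{-1}Z\upper{\samplesize}_{\iternum} + O(|Z\upper{\samplesize}_{\iternum}|^{2}/(\localscale\upper{\samplesize})^{2})$; multiplying by $\timescale\upper{\samplesize}\stepsize\upper{\samplesize}\precon/2$, using $\timescale\upper{\samplesize}\stepsize\upper{\samplesize} = \const{\stepsize}$, and applying the law of large numbers $\samplesize^{-1}\hess\potential{\samplesize}(\parmLMLE{\samplesize}) \to -\finfoOpt$ at the consistent estimator $\parmLMLE{\samplesize}$ yields a limiting drift $-\tfrac{1}{2}\const{\stepsize}\precon\finfoOpt z = -\tfrac{1}{2}\driftMat z$. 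The Taylor remainder carries a factor $\samplesize^{-\localscalePow - 1} \to 0$, so it is negligible uniformly on compacts in $z$.

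For the diffusion, the drift contributes only at order $(\timescale\upper{\samplesize})^{-2}$ and is wiped out after scaling, so the limiting covariance comes from the two independent noise sources. The subsampling-noise covariance equals $(\batchsize\upper{\samplesize})^{-1}\overline{\const{\batchsize}}$ times the empirical gradient-outer-product at $\parmiter{\samplesize}{\iternum}$, which converges to $\vinfoOpt$ because $\parmiter{\samplesize}{\iternum} \to \trueparm$ and the mean-subtraction term vanishes by optimality; the finite-population correction for without-replacement sampling is precisely what produces the batch constant $\overline{\const{\batchsize}}$ in the $\batchsizePow = 1$ case. Scaling the subsampling and injected contributions by $\timescale\upper{\samplesize}(\localscale\upper{\samplesize})^{2}$ produces the exponents $\samplesize^{\,2\localscalePow - \stepsizePow - \batchsizePow}$ and $\samplesize^{\,2\localscalePow - \invtempPow}$ respectively; with $\localscalePow = \min\{\batchsizePow + \stepsizePow, \invtempPow\}/2$, whichever exponent vanishes keeps its term (both at equality) and the other is suppressed, reproducing $\diffusionMat$ together with its indicators. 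A direct third-moment estimate gives $\timescale\upper{\samplesize}\,\EE[|\Delta Z\upper{\samplesize}_{\iternum}|^{3} \mid \cdot] = O(\samplesize^{-\stepsizePow/2}) \to 0$ under a moment assumption on $\grad\loglik$, so the Lindeberg-type condition holds and no jumps appear in the limit. Together these establish $\mathcal{G}\upper{\samplesize}f \to \mathcal{L}f$ on $C^{\infty}_{c}$ (a core), uniformly on compacts.

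To close the argument I would verify compact containment by a Lyapunov estimate: with $V(z) = 1 + |z|^{2}$ the discrete generator satisfies $\mathcal{G}\upper{\samplesize}V(z) \le \Const{1}(1 + V(z))$ uniformly in large $\samplesize$, so a discrete Gronwall bound gives $\sup_{t \le T}\EE[|\localparm\upper{\samplesize}_{t}|^{2}] \le \Const{2}(T)$ and hence tightness through Markov's inequality. Generator convergence, compact containment, and the assumed convergence of initial laws then deliver weak convergence to the unique OU solution along any realization for which $\parmLMLE{\samplesize} \to \trueparm$, $\samplesize^{-1}\hess\potential{\samplesize}(\parmLMLE{\samplesize}) \to -\finfoOpt$, and the empirical gradient-outer-product $\to \vinfoOpt$; since these hold in probability, the subsequence reduction upgrades this to the claimed in-probability convergence. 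The main obstacle is precisely the doubly-stochastic structure coupled with the fact that the linearization is only valid near $\parmLMLE{\samplesize}$: because the objective itself is random and changes with $\samplesize$, the coefficient convergence and the Taylor-remainder control must be tied to the compact-containment estimate so that the (random, time-varying) trajectory provably remains where the expansion and the empirical covariance are controlled, and the subsequence-to-almost-sure reduction is what lets the classical diffusion-approximation theorems apply despite the objective varying with the sample size.
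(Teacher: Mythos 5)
Your proposal is sound in outline and shares the paper's two structural pillars---the reduction of in-probability convergence to almost-sure convergence along sub-subsequences of the data-dependent quantities, followed by a Taylor expansion of the one-step generator that identifies the drift from the empirical Hessian at $\parmLMLE{\samplesize}$, the two diffusion contributions with their phase transition at $\batchsizePow+\stepsizePow=\invtempPow$ (including the finite-population correction giving $\overline{\const{\batchsize}}$), and a negligible third-order remainder---but it closes the argument by a genuinely different theorem. The paper proves \emph{uniform over all of} $\Reals^\parmdim$ convergence $\normInf{\generator\upper{\samplesize}\testfun - \generator\testfun}\to 0$ for $\testfun\in\smoothcompactsuppfuns(\Reals^\parmdim)$ and invokes the semigroup-convergence result (\cref{prop:semigroup-convergence-discrete-ek}, compiled from Ethier--Kurtz); this forces a separate Gaussian-tail argument showing the discrete generator is negligible far outside the support of the test function, but it requires no tightness proof, since path-space convergence comes packaged with the semigroup theorem. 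You instead use the martingale-problem route (Stroock--Varadhan/Ethier--Kurtz): locally uniform convergence of coefficients, a Lindeberg-type third-moment condition, compact containment via a Lyapunov function, and well-posedness of the OU martingale problem. This trades the ``far from support'' estimate for an explicit tightness argument; both are standard routes to the same conclusion, and yours arguably makes the probabilistic structure (drift, covariance, Lindeberg) more transparent, while the paper's keeps all the analysis inside a single uniform estimate on generators.

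Two steps of your sketch need repair, though neither changes the architecture. First, you invoke ``first-order optimality $\grad\potential{\samplesize}(\parmLMLE{\samplesize})=0$,'' but $\parmLMLE{\samplesize}$ is a critical point of the log-likelihood sum only; the potential includes the regularizer, so $\grad\potential{\samplesize}(\parmLMLE{\samplesize}) = \grad\reg(\parmLMLE{\samplesize})\neq 0$ in general. The regularizer must be carried as a separate increment and shown to vanish after rescaling (it is $O(\samplesize^{\localscalePow-1})$, as in the paper's treatment of its prior terms); as written your cancellation is false. Second, your compact-containment step applies a discrete Gronwall bound to $\EE\sbra{1+\abs{Z\upper{\samplesize}_{k}}^2}$, which requires $\sup_\samplesize \EE\sbra{\abs{Z\upper{\samplesize}_0}^2}<\infty$; the theorem only assumes $\localparm\upper{\samplesize}_0 \distto \localparm_0$, which gives tightness of the initial laws but not bounded second moments. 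You need the standard patch: restrict to the event $\set{\abs{Z\upper{\samplesize}_0}\le R}$, whose probability is at least $1-\epsilon$ uniformly in $\samplesize$ by tightness, run the Lyapunov argument from the restricted initial law, and let $R\to\infty$ after $\samplesize$. Relatedly, your claimed $O(\samplesize^{-\stepsizePow/2})$ third-moment rate is immediate only for the Gaussian part; for the minibatch part it requires the truncation device the paper builds from its almost-sure maximal inequality (bounding $\max_{\dataidx\le\samplesize}$ of gradient and Hessian norms by polynomial rates along the almost-sure subsequences), not just existence of moments.
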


\add{
\begin{rmk}[Assumptions]
\Crefrange{ass:basic-reg}{ass:V} are quite weak and notably do not require convexity or bounded gradients. 
We require that the sequence of empirical critical points of the log-likelihood converges to a critical point of the expected log-likelihood. 
The critical point does not even need to be a minimizer, though in the case of a limiting critical point where the hessian is not positive definite,
the paths of the process will move away from the critical point instead of towards it (hence the need for the Hurwitz condition for existence of the stationary distribution below).
Further, we do not require a specific rate of convergence for the empirical critical point to the limiting one, 
but there is a trade-off determined by our proof strategy between this rate and the number of moments we must assume exist for various derivatives of the likelihood.
\end{rmk}
}

\begin{rmk}[Effects of stochastic gradient noise]
As expected, the mini-batch noise contributes in the large-sample regime when $\stepsizePow+\batchsizePow\leq\invtempPow$.
This exactly corresponds to when the mini-batch noise in a single step is on the same order ($=$) or dominates ($<$) the noise from the Gaussian innovations, $\innoviter{\iternum}$.
We can interpret the phase transition as occurring because the variance of the mini-batch gradient scales as $\samplesize^{-2\stepsizePow-\batchsizePow}$ while the variance of update due to the Gaussian innovations scale as $\samplesize^{-\stepsizePow-\invtempPow}$.
The spatial scaling is chosen as $\localscalePow = \min\cbra{\batchsizePow + \stepsizePow, \invtempPow}/2$ to ensure that at least one of (a) the mini-batch noise or (b) the Gaussian innovations contribute to the limit, as otherwise the limit would be a gradient flow instead of Ornstein--Uhlenbeck process, and hence fail to capture the asymptotically dominant local stochastic behaviour around $\parmLMLE{\samplesize}$.
\end{rmk}

\begin{rmk}[SGLD with control variates]
	Modifications to SGLD that include control variates can be analyzed using similar techniques. 
	These methods seek to reduce the variance of stochastic gradients using a control variate. 
	In \cref{sec:control-variates,apx:sketch-cv} we examine the SGLD-FP algorithm \citep{Baker.2019,nagapetyan2017true}, 
	where the control variate is given by the random gradient function evaluated at (the current estimate of) the MLE. 
	Formally, in an idealized setting where the MLE is known, it modifies the meta-algorithm by replacing \cref{eq:stochastic-gradient-formula} with
	\[ \label{eq:stochastic-gradient-formula-SGLDFP}
	\stochgrad{\samplesize}{\iternum}
	\defas \frac{1}{\samplesize}\grad\reg\rbra{\parmiter{\samplesize}{\iternum}} + \frac{1}{\batchsize\upper{\samplesize}}\sum_{\mbidx=1}^{\batchsize\upper{\samplesize}} \cbra{\grad\loglik\rbra[3]{ \parmiter{\samplesize}{\iternum};\ \data_{\mbdataidx{\samplesize}{\iternum}{\mbidx}}}
		-
		\grad\loglik\rbra[3]{ \parmLMLE{\samplesize};\ \data_{\mbdataidx{\samplesize}{\iternum}{\mbidx}}}}.
	\]
	We find that, when a non-trivial amount of additional Gaussian noise is included ($\invtemp>0$), the use of control variates is sufficient to reduce the variance in minibatch gradients so much that at all non-trivial scalings the gradient noise is $0$.
	Hence, \cref{thm:sgld-scaling-limit-multiple-scalings-boundary-simple} holds for SGLD-FP except with	$\localscalePow = \invtempPow/2$, and
	$\diffusionMat = \frac{\const{\stepsize}}{\const{\invtemp}} \aniso$. 
\end{rmk}

Based on \cref{thm:sgld-scaling-limit-multiple-scalings-boundary-simple}, we can establish the following corollaries which we will further leverage to explain the empirical behaviour of stochastic gradient methods and to make recommendations for how these methods could be best tuned.
First, we have a characterization of the marginal and (when it exists) the stationary covariance
of the limiting process, including conditions under which simplified forms are possible.
A square matrix $M$ is said to be \emph{Hurwitz} (or \emph{stable}) if every eigenvalue of $M$ has negative real part.

\begin{corollary}[Marginal and stationary covariances] \label{cor:sgld-scaling-limit-properties}
In the setting of \cref{thm:sgld-scaling-limit-multiple-scalings-boundary-simple}, the following hold:
\begin{enumerate}
\item For any initial parameter $\localparm_{0}$, at time $t$ the marginal distribution is
$
	 \cLaw{\localparm_{0}}(\localparm_{t}) = \normaldist_{\parmdim}\rbra[1]{e^{-s\driftMat/2} \limparm_0,\margcov{t}}
$,
with
    \*[
         \margcov{t}
         		&\defas \Cov(\localparm_{t}\vert \localparm_{0})
		= \int_0^t e^{-s\driftMat/2} \diffusionMat e^{-s\driftMat\transpose/2} \dee s.
    \]
\item If $-\precon\finfoOpt$ is Hurwitz, then $\statcov \defas \lim_{t\to\infty}\margcov{t}$ exists
and the stationary distribution of $(\localparm_{t})_{t\in\Reals}$ is $\statmeas \defas \normaldist_{\parmdim}(0,\statcov)$. In this case, $\statcov$ solves the equation
\[\label{eq:stat-cov-implicit}
	\frac{1}{2}\driftMat \statcov  + \frac{1}{2}\statcov \driftMat \transpose = \diffusionMat.
\]
\end{enumerate}
\end{corollary}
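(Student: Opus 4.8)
The plan is to treat the limiting dynamics \cref{eq:generic-OU} as a linear, constant-coefficient Ornstein--Uhlenbeck SDE and solve it explicitly by the method of integrating factors. First I would apply It\^o's formula to the auxiliary process $Y_{t} \defas e^{t\driftMat/2}\localparm_{t}$; because $e^{t\driftMat/2}$ is deterministic and commutes with $\driftMat$, the process is linear in the state, so there is no second-order correction and the two drift contributions cancel, leaving $\dee Y_{t} = e^{t\driftMat/2}\sqrt{\diffusionMat}\,\dee\weiner_{t}$. Integrating and multiplying back by $e^{-t\driftMat/2}$ yields the variation-of-constants representation
\[
\localparm_{t} = e^{-t\driftMat/2}\localparm_{0} + \int_{0}^{t} e^{-(t-s)\driftMat/2}\sqrt{\diffusionMat}\,\dee\weiner_{s}.
\]
Conditionally on $\localparm_{0}$, the stochastic integral is a Wiener integral of a deterministic matrix-valued integrand, hence a centered Gaussian; this immediately gives the conditional mean $e^{-t\driftMat/2}\localparm_{0}$ of part (i). Applying the It\^o isometry, using $\sqrt{\diffusionMat}(\sqrt{\diffusionMat})\transpose = \diffusionMat$, and substituting $u = t-s$ produces $\margcov{t} = \int_{0}^{t} e^{-u\driftMat/2}\diffusionMat\, e^{-u\driftMat\transpose/2}\,\dee u$, completing part (i).

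For part (ii) I would first translate the hypothesis: since $\driftMat = \const{\stepsize}\precon\finfoOpt$ with $\const{\stepsize}>0$, the matrix $-\precon\finfoOpt$ being Hurwitz is equivalent to every eigenvalue of $-\driftMat/2$ having strictly negative real part. Standard matrix-exponential estimates then give constants $C,\lambda > 0$ with $\norm{e^{-t\driftMat/2}} \le C e^{-\lambda t}$ for all $t \ge 0$, where any $\lambda$ below the spectral abscissa of $\driftMat/2$ works and possible non-diagonalizability of $\driftMat$ contributes only polynomial factors that are absorbed into the exponential. Consequently the integrand defining $\margcov{t}$ decays exponentially, so $\statcov \defas \lim_{t\to\infty}\margcov{t} = \int_{0}^{\infty} e^{-u\driftMat/2}\diffusionMat\, e^{-u\driftMat\transpose/2}\,\dee u$ exists.

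It remains to identify $\statmeas = \normaldist_{\parmdim}(0,\statcov)$ as the stationary law and to derive the Lyapunov equation \cref{eq:stat-cov-implicit}. For invariance I would initialize $\localparm_{0}\sim\normaldist_{\parmdim}(0,\statcov)$ and invoke part (i): the mean remains $0$, while the covariance is $e^{-t\driftMat/2}\statcov\, e^{-t\driftMat\transpose/2} + \margcov{t}$; rewriting the first summand as $\int_{t}^{\infty} e^{-u\driftMat/2}\diffusionMat\, e^{-u\driftMat\transpose/2}\,\dee u$ via the shift $u=s+t$ and adding $\margcov{t}=\int_{0}^{t}$ recovers exactly $\statcov$. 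The same exponential bound shows $e^{-t\driftMat/2}\localparm_{0}\to 0$ and $\margcov{t}\to\statcov$ for every fixed $\localparm_{0}$, giving $\cLaw{\localparm_{0}}(\localparm_{t}) \distto \normaldist_{\parmdim}(0,\statcov)$ and hence both convergence to and uniqueness of the stationary law. Finally, integrating the identity $\frac{\dee}{\dee s}\rbra{e^{-s\driftMat/2}\diffusionMat\, e^{-s\driftMat\transpose/2}} = -\tfrac12\driftMat\, e^{-s\driftMat/2}\diffusionMat\, e^{-s\driftMat\transpose/2} - \tfrac12 e^{-s\driftMat/2}\diffusionMat\, e^{-s\driftMat\transpose/2}\driftMat\transpose$ over $[0,t]$ gives $\tfrac12\driftMat\margcov{t} + \tfrac12\margcov{t}\driftMat\transpose = \diffusionMat - e^{-t\driftMat/2}\diffusionMat\, e^{-t\driftMat\transpose/2}$, and letting $t\to\infty$ annihilates the boundary term to yield \cref{eq:stat-cov-implicit}. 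Since this is the classical theory of linear SDEs, I expect no conceptual obstacle; the only place demanding care is the non-normal matrix-exponential bound underpinning both the convergence of the defining integral and the vanishing of the boundary term, so that is where I would concentrate the rigor for a general (non-symmetric, possibly non-diagonalizable) drift $\driftMat$.
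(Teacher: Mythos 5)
Your proof is correct and takes essentially the same approach as the paper: the paper treats this corollary as classical Ornstein--Uhlenbeck theory (deferring to standard references such as Karatzas--Shreve), and its own proof of \cref{prop:averaged-covariance} begins from exactly the variation-of-constants representation $\localparm_{t} = e^{-t\driftMat/2}\localparm_{0} + \int_0^t e^{-(t-s)\driftMat/2}\sqrt{\diffusionMat}\,\dee\weiner_{s}$ and the stationary covariance $\statcov = \int_0^\infty e^{-s\driftMat/2}\diffusionMat\, e^{-s\driftMat\transpose/2}\,\dee s$ that you derive. Your treatment of the Hurwitz hypothesis (exponential decay of $e^{-t\driftMat/2}$ for non-normal $\driftMat$) and the integration-by-parts derivation of \cref{eq:stat-cov-implicit} correctly and rigorously fill in the details the paper leaves implicit; note also that your conditional mean $e^{-t\driftMat/2}\localparm_{0}$ fixes an evident typo ($s$ for $t$) in the statement itself.
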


Let $\statmeas\upper{\samplesize}$ denote the stationary measure of the stochastic gradient algorithm when the sample size is $\samplesize$, if it exists.
The previous corollary leads to conditions for a Bernstein--von Mises-type result for these stationary measures. 

\begin{corollary}[Bernstein--von Mises-type theorem] \label{cor:sgld-bvm}\label{COR:SGLD-BVM}
    In the setting of \cref{thm:sgld-scaling-limit-multiple-scalings-boundary-simple},
    if  $-\precon\finfoOpt$ is Hurwitz and the collection $\{\statmeas\upper{\samplesize}\}_{\samplesize\in\Nats}$
    is uniformly tight, 
    the stationary-distributed parameters $\parm\upper{\samplesize} \distas \statmeas\upper{\samplesize}$ satisfy
$\samplesize^{\localscalePow}(\parm\upper{\samplesize} - \parmLMLE{{\samplesize}}) \distto \normaldist_{\parmdim}(0, \statcov)$ in probability.
\end{corollary}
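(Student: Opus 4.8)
The plan is to derive the corollary from \cref{thm:sgld-scaling-limit-multiple-scalings-boundary-simple} by initializing the meta-algorithm \emph{at its own stationary law} and reading off the limit of the rescaled stationary distribution from the time-$t$ marginals of the limiting Ornstein--Uhlenbeck process. Write $Y\upper{\samplesize} \defas \samplesize^{\localscalePow}\rbra{\parm\upper{\samplesize} - \parmLMLE{\samplesize}}$ for $\parm\upper{\samplesize}\distas\statmeas\upper{\samplesize}$. Since $\localscale\upper{\samplesize}=\samplesize^{\localscalePow}$, the variable $Y\upper{\samplesize}$ is exactly $\localparm\upper{\samplesize}_t$ from \cref{eq:local-scaled-process} when the chain $\parmiter{\samplesize}{\cdot}$ is started from $\statmeas\upper{\samplesize}$; and because $\statmeas\upper{\samplesize}$ is invariant, \emph{every} iterate has law $\statmeas\upper{\samplesize}$, so $\Law(\localparm\upper{\samplesize}_t\given\dataset{\samplesize})=\Law(Y\upper{\samplesize}\given\dataset{\samplesize})$ for all $t\ge 0$. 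The task therefore reduces to showing that this $t$-independent conditional law converges weakly to $\statmeas=\normaldist_{\parmdim}(0,\statcov)$, in probability over the data.

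First I would record that, since $\const{\stepsize}>0$, the condition that $-\precon\finfoOpt$ is Hurwitz is equivalent to $-\driftMat/2=-\tfrac{\const{\stepsize}}{2}\precon\finfoOpt$ being Hurwitz, so by \cref{cor:sgld-scaling-limit-properties} the limiting OU process is ergodic with the \emph{unique} invariant law $\statmeas$. The core of the argument is then a subsequence/tightness device. Fixing an arbitrary subsequence, the uniform-tightness hypothesis, applied to the spatially rescaled stationary laws $\Law(Y\upper{\samplesize}\given\dataset{\samplesize})$, yields a further subsequence along which these laws converge weakly to some $\mu\in\probmeasures{\parmspace}$. Taking the now-convergent initial law $\localparm_0\sim\mu$, \cref{thm:sgld-scaling-limit-multiple-scalings-boundary-simple} gives $(\localparm\upper{\samplesize}_t)_{t}\distto(\localparm_t)_{t}$ in the Skorohod topology. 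Because the limiting OU paths are almost surely continuous, each fixed $t$ is a.s.\ a continuity point, so the evaluation map is continuous for the limit law and the continuous mapping theorem upgrades this to the marginal convergence $\localparm\upper{\samplesize}_t\distto\localparm_t$ for every $t\ge 0$. The left-hand laws are the $t$-independent $\Law(Y\upper{\samplesize}\given\dataset{\samplesize})\to\mu$, which forces $\Law(\localparm_t)=\mu$ for all $t$ when $\localparm_0\sim\mu$; that is, $\mu$ is invariant under the transition semigroup of the limiting OU process. By the uniqueness above, $\mu=\statmeas$. Since every subsequential limit equals $\statmeas$ and the family is tight, the whole sequence converges to $\statmeas$.

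The remaining work is to thread this argument through the two layers of randomness correctly, and this is where the hard part lies. The conclusion of \cref{thm:sgld-scaling-limit-multiple-scalings-boundary-simple} is itself convergence \emph{in distribution in probability}, so the extraction of a weakly convergent subsequence of initial laws and the subsequent invocation of the theorem cannot be done ``at a fixed data realization'' naively. I would handle this with the standard subsequence criterion for convergence in probability: metrize weak convergence on the Skorohod path space and on $\probmeasures{\parmspace}$ by bounded-Lipschitz metrics, reduce the claim to convergence in probability of a real-valued discrepancy, and pass to an a.s.-convergent sub-subsequence of the data on which both the path-level convergence of \cref{thm:sgld-scaling-limit-multiple-scalings-boundary-simple} holds and the rescaled stationary laws are tight; the invariance-and-uniqueness identification is then run pathwise and reassembled. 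The conceptual content---that any subsequential limit of the rescaled stationary laws must be invariant for the limiting OU process and hence equal to $\statmeas$---is short; the bookkeeping interlocking tightness with the in-probability path limit is the main obstacle.
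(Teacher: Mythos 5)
Your outer skeleton coincides with the paper's: reduce weak convergence in probability to almost-sure convergence along sub-subsequences (the paper's \cref{lem:conv-prob-subseq}), use tightness to extract subsequential weak limits of the rescaled stationary laws, show every such limit is invariant for the limiting Ornstein--Uhlenbeck process, and conclude by uniqueness of its invariant law under the Hurwitz condition. Where you genuinely differ is the identification step: you start the chain at stationarity, project the path-level limit to fixed-time marginals via the continuous mapping theorem, and exploit $t$-independence of the marginals, whereas the paper works at the operator level, invoking the second part of its \emph{full} theorem (\cref{thm:sgld-scaling-limit-multiple-scalings-boundary}), namely almost-sure strong convergence of the conditional transition semigroups $\semigroup\upper{\subsubsubseqsamplesize}_{t}\stk\to{s}\semigroup_{t}$ along sub-sub-subsequences, together with \citet[Part~4, Theorem~9.10]{ethier2009markov}, which states that weak limit points of stationary measures of strongly converging semigroups are stationary for the limit semigroup. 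Your stationary-start argument is, in effect, a proof of that citation, so conceptually it is a legitimate alternative.

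The gap lies exactly where you flag ``the main obstacle,'' and your proposed repair does not close it. The tightness extraction must be performed on the conditional laws $\Law(Y\upper{\samplesize}\given\dataset{\samplesize})$, which are \emph{random} measures: Prokhorov's theorem applies only after fixing a data realization, so the subsequence along which these laws converge is data-dependent. But \cref{thm:sgld-scaling-limit-multiple-scalings-boundary-simple}, as stated, yields convergence in distribution \emph{in probability}, and an in-probability statement cannot be invoked along a random, data-dependent subsequence. Your fix---passing to an a.s. sub-subsequence of the data ``on which the path-level convergence of the theorem holds''---is circular: with the stationary initialization, the theorem's hypothesis $\localparm\upper{\samplesize}_{0}\distto\localparm_{0}$ is precisely the convergence of the rescaled stationary laws you are trying to establish, so you cannot arrange for its conclusion to hold along a data-subsequence in advance. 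What is needed is an initialization-free statement valid pathwise for each fixed good data realization; this is exactly the extra content of the paper's full theorem (the semigroup convergence), which the corollary's citation of the ``simple'' version conceals. Two repairs preserve your idea: (i) follow the paper and run your identification pathwise using the a.s. semigroup convergence, for which stationarity gives $\statmeas\upper{\samplesize}(\semigroup\upper{\samplesize}_{t}f)=\statmeas\upper{\samplesize}(f)$ and strong convergence lets you replace $\semigroup\upper{\samplesize}_{t}$ by $\semigroup_{t}$; or (ii) keep the simple theorem as a black box but extract subsequences from the deterministic annealed laws $\EE[\Law(Y\upper{\samplesize}\given\dataset{\samplesize})]$ rather than the conditional ones, verify the initial hypothesis unconditionally, and then run your $t$-independence argument on the conditional marginals, which converge in probability to the deterministic OU marginals. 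As written, however, the interlocking of tightness with the in-probability path limit does not go through.
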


We can interpret Corollary \ref{cor:sgld-bvm} as saying that if there is a subsequence of the stationary measures where no probability mass ``escapes to infinity'' along that subsequence, then that subsequence converges weakly to the stationary distribution of the limiting process.

\subsection{Discussion of assumptions} \label{sec:assumptions}

\crefrange{ass:basic-reg}{ass:V} are fairly mild. %
\cref{ass:basic-reg} requires that the likelihood has a minimal number of continuous derivatives, and that the regularizer is smooth in the optimization theory sense
of having Lipschitz gradients.
\begin{assumption}
	\label{ass:basic-reg}
	$\grad\reg$ is $\lipschitzconst_0$-Lipschitz, and $\loglik(\cdot;x) \in \kdiffuns{2}(\parmspace)$ for each $\dumdata\in\dataspace$.
\end{assumption}
\cref{ass:grad-norm-mom} ensures that the gradient value of the log-likelihood at the limiting parameter is not too volatile via a moment condition.
\begin{assumption}\label{ass:grad-norm-mom}
	$\stepsizePow-\localscalePow - \timescalePow/3 >0$ and
	$\EE\sbra{\norm{\grad\loglik(\trueparm;\data_1)}^{\assmoment{2}}}<\infty$
	for some $\assmoment{2}> \frac{1}{\stepsizePow-\localscalePow - \timescalePow/3}$.
\end{assumption}
 \cref{ass:gll} ensures that the random likelihood functions from each data sample are sufficiently smooth via a moment condition on the random smoothness parameter.
\begin{assumption}
	\label{ass:gll}
	For some $\asspower{3}\in\cointer{0,\localscalePow}$ and $\assmoment{3} \defas \frac{1}{\stepsizePow+\asspower{3}-\localscalePow - \timescalePow/3}$,
	$
		\norm[0]{\parmLMLE{{\samplesize}} - \trueparm}  \in o_p(1 / \samplesize^{\asspower{3}}),
		 \andT \ 
		\EE \sbra[1]{ \normInf{\hess\loglik(\cdot;\data_1)}^{\assmoment{3}}}  <\infty.
	$
\end{assumption}
\cref{ass:J,ass:V} require convergence of the  first-and second-order empirical Fisher information matrices
$
	\vinfoEmp{\samplesize}(\parm)
		= \textstyle \frac{1}{\samplesize}\sum_{\dataidx\in\range{\samplesize}} [\grad\loglik(\parm;\, \data_\dataidx)]^{\otimes 2}
$
and
$
	\finfoEmp{\samplesize}(\parm)
		 = \textstyle \frac{1}{\samplesize}\sum_{\dataidx\in\range{\samplesize}} [-\hess\loglik(\parm;\, \data_\dataidx)].
$
For any $r > 0$, define the ball $\ball\upper{\samplesize}\rbra[0]{r} \defas \{ \parm \in \parmspace : \norm{\parm - \parmMLE{\samplesize}} \le r/\samplesize^\localscalePow\}$.
\begin{assumption}
	\label{ass:J}\label{ASS:J}
	There is a non-decreasing sequence $r_{\finfo,\samplesize} \overset{n \to \infty}{\longrightarrow} \infty$  such that
	$\sup_{\parm\in\ball\upper{\samplesize}\rbra[0]{r_{\finfo,\samplesize}}} \norm[1]{\finfoEmp{\samplesize}(\parm) - \finfo(\trueparm)} \stk\to{p} 0.
	$
\end{assumption}
\begin{assumption}
	\label{ass:V}\label{ASS:V}
There is a non-decreasing sequence $r_{\vinfo,\samplesize} \overset{n \to \infty}{\longrightarrow} \infty$  such that
$
\sup_{\parm\in\ball\upper{\samplesize}\rbra[1]{r_{\vinfo,\samplesize}}} \norm[1]{\vinfoEmp{\samplesize}(\parm) - \vinfo(\trueparm)} \stk\to{p} 0
.$
\end{assumption}
The assumptions all hold, for example, for generalized linear models with bounded covariates and either Lipschitz inverse-link functions, or suitably constrained parameter domains. %
Several sufficient conditions for each of \cref{ass:J,ass:V} are given in \cref{sec:assVJ-suff}.

\section{Practical implications of the scaling limit}

We now turn to assessing the implications of our statistical scaling limit on the large-sample behavior of stochastic gradient
algorithms used for optimization and sampling.

\subsection{Mixing time} \label{sec:implications-mixingtime}

Because we characterize the full-path behavior of the meta-algorithm, we can obtain insights into its mixing speed.
Let $\hat\statmeas_{\iternum}\upper{\samplesize}(f) \defas \iternum^{-1}\sum_{\iternumdum=1}^{\iternum} f(\parmiter{\samplesize}{\iternumdum})$ denote the
Monte Carlo estimate of $\statmeas\upper{\samplesize}(f)$.
We can use the \emph{mixing time} (or worst-case integrated autocorrelation time) $\iact\upper{\samplesize} \defas \sup_{f} \inf \{ \iternum : \Var_{\hat\statmeas_{\iternum}\upper{\samplesize}}(f) / \Var_{\statmeas\upper{\samplesize}}(f) \le 1 \} $ to characterize the efficiency of MCMC algorithms.
For the limiting process, define the ``Monte Carlo average'' $\hat\statmeas_{t}(f) \defas t^{-1} \int_{0}^{t} f(\localparm_{s})\,\dee s$
and the mixing time $\iact \defas \sup_{f} \inf \{ t : \Var_{\hat\statmeas_{t}}(f) / \Var_{\statmeas}(f) \le 1 \}$.
When the limiting process is reversible, standard results\footnote{Apply the spectral theorem for self-adjoint operators \citep{rudin1991functional} to the Poincar\'e inequality \citep{bakry2014analysis}} allow us to upper bound $\tau$ by the reciprocal of the spectral gap of the limiting process.
Since the spectral gap of the Ornstein--Uhlenbeck process is $\mineigval(\driftMat)/2$, where  $\mineigval(\driftMat)$ denotes its minimum eigenvalue of $\driftMat$,
we may \emph{heuristically} conclude then that the limiting mixing time is $\iact\upper{\samplesize} = 2\timescale\upper{\samplesize}/\mineigval(\driftMat)$ iterations. 
\add{This mixing time corresponds to $2\timescale\upper{\samplesize}\batchsize\upper{\samplesize}/\mineigval(\driftMat) = 2\batchsize\upper{\samplesize}/\{\stepsize\upper{\samplesize}\mineigval(\precon\finfoOpt)\}$ likelihood evaluations, or equivalently $ 2\batchsize\upper{\samplesize}/\{\stepsize\upper{\samplesize}\mineigval(\precon\finfoOpt)\}$ epochs.}
Even when the limiting process is not reversible, the spectral gap is still a useful metric for the large-time rate of mixing of the process, and is given by the same formula, while the integrated autocorrelation time becomes intractable.

The reason these arguments are heuristic is because weak converge of the processes and stationary distributions
 is insufficient to conclude that the mixing times converge.
In \cref{apx:mixing-asympt}, we provide further details and describe a possible approach to making the mixing result rigorous.
\add{We note, however, that comparison between the mixing time of a scaling limit and the mixing time of the corresponding pre-limiting processes is standard is MCMC tuning, even though it is technically only a heuristic. This is, for example, the nature of widely celebrated results in the optimal scaling literature \citep[e.g.,][]{gelman1997weak, roberts2001optimal}. 
Thus, as a practical matter, a user with a dataset of size $\samplesize$ can conclude that using a step size $\stepsize$ and batch size $\batchsize$, will result in a mixing time of roughly
\[\label{eq:mixing-equation}
\frac{2\batchsize}{\stepsize\,\mineigval\{\precon\finfoEmp{\samplesize}(\parmLMLE{{\samplesize}})\}}
\]
epochs, thereby providing a valuable constraint when tuning $\batchsize$, $\stepsize$, and $\precon$.} 
\add{Some example tuning parameter combinations that lead to limiting stationary distributions of interest, and the corresponding mixing times of the limit process, are given in \cref{tab:tuning-combos}.}

\subsection{Optimization}
\label{sec:optimization-iteravg}
The key implication of our results for optimization concern the average $\smash{\iteravg{\samplesize}{\iternum} = \frac{1}{\iternum}\sum_{j=1}^\iternum \parmiter{\samplesize}{j}}$ of the first $\iternum$ iterations of the algorithm.
The accuracy of the iterate average is characterized by its covariance $\avgcov{\iternum}\upper{\samplesize} \defas \Cov(\iteravg{\samplesize}{\iternum})$.
We can approximate $\avgcov{\iternum}\upper{\samplesize}$ in terms of the covariance of the
averaged limiting process, which is defined as $\limparmavg_{t} \defas t^{-1} \int_{0}^{t} \limparm_{s} \, \dee s$.
The following result is similar in spirit to Theorem 2.1 of \citet{Kushner:1993}.

\begin{proposition}[Path averaging] \label{prop:averaged-covariance}\label{PROP:AVERAGED-COVARIANCE}
For $(\localparm_{t})_{t\in\PosReals}$ defined by \cref{eq:generic-OU}, assuming $-\driftMat$ is Hurwitz and $\localparm_{0}\distas \normaldist(0,\statcov)$,
the covariance of the averaged limiting process is
\begin{align}
\avgcov{t}
\defas \Cov\rbra{\limparmavg_{t}}
	&=  \frac{4}{t}\driftMat^{-1}\diffusionMat\driftMat^{-\top} - \frac{8}{t^{2}}\Sym\rbra{\driftMat^{-2}\cbra{I - e^{-t \driftMat/2}}\statcov} \label{eq:continuous-time-IA} \\
	&=
	\begin{cases}
		\statcov - \frac{t}{6}\diffusionMat + O(t^{2}) & \text{if } t \ll 7\norm{\driftMat}^2\norm{\driftMat^{-2}\statcov}^{1/2} \\
		\frac{4}{t}\driftMat^{-1}\diffusionMat\driftMat^{-\top} + O(t^{-2}) & \text{if }  t \gg 3\norm{\driftMat^{-2}\statcov}^{1/2}.
	\end{cases} \label{eq:continuous-time-IA-estimates}
\end{align}
If either (i) $\batchsizePow + \stepsizePow < \invtempPow$ or (ii) $\batchsizePow + \stepsizePow = \invtempPow$ and $\const{\invtemp} = +\infty$, then
$
	\frac{4}{t}\driftMat^{-1}\diffusionMat\driftMat^{-\top}
		 = \frac{\overline{\const{\batchsize}}}{t\const{\batchsize}} \finfoOpt^{-1} \vinfoOpt \finfoOpt^{-1}.
$
\end{proposition}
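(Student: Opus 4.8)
The plan is to exploit that $(\localparm_{t})_{t\in\PosReals}$ is a Gaussian process, so $\limparmavg_{t} = t^{-1}\int_0^t \localparm_s\,\dee s$ is Gaussian with mean zero, and computing $\avgcov{t}$ reduces to integrating the stationary autocovariance. First I would record the autocovariance: writing the mild solution $\localparm_u = e^{-(u-s)\driftMat/2}\localparm_s + \int_s^u e^{-(u-r)\driftMat/2}\sqrt{\diffusionMat}\,\dee\weiner_r$ for $s\le u$, the integral is independent of $\localparm_s$, so using stationarity $\Cov(\localparm_s)=\statcov$ gives $\Cov(\localparm_s,\localparm_u) = \statcov\, e^{-(u-s)\driftMat\transpose/2}$ for $s\le u$, and its transpose $e^{-(s-u)\driftMat/2}\statcov$ for $s\ge u$.

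Next, $\avgcov{t} = t^{-2}\int_0^t\int_0^t \Cov(\localparm_s,\localparm_u)\,\dee s\,\dee u$, and splitting the square into $\{s\le u\}$ and $\{s\ge u\}$ (the two pieces being transposes of one another) yields $\avgcov{t} = \frac{2}{t^2}\Sym(M)$ with $M = \int_0^t \statcov\rbra{\int_0^u e^{-(u-s)\driftMat\transpose/2}\,\dee s}\dee u$. Using $-\driftMat$ Hurwitz to invert $\driftMat$, the inner integral is $2(\driftMat\transpose)^{-1}(I - e^{-u\driftMat\transpose/2})$, and integrating over $u$ gives $M = 2t\,\statcov\driftMat\invtranspose - 4\,\statcov(\driftMat\transpose)^{-2}(I - e^{-t\driftMat\transpose/2})$. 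Symmetrizing reproduces the closed form in \cref{eq:continuous-time-IA} once two simplifications from the stationary Lyapunov identity \cref{eq:stat-cov-implicit}, namely $\tfrac12\driftMat\statcov + \tfrac12\statcov\driftMat\transpose = \diffusionMat$, are applied: conjugating it by $\driftMat^{-1}$ and $\driftMat\invtranspose$ gives $\Sym(\statcov\driftMat\invtranspose) = \driftMat^{-1}\diffusionMat\driftMat\invtranspose$ for the leading term, while reading it directly gives $\Sym(\driftMat\statcov) = \diffusionMat$ for the expansion below. I would also use that $e^{-t\driftMat/2}$ commutes with $\driftMat^{-2}$ to reconcile the factor ordering inside the $\Sym$ with the displayed form.

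The estimates in \cref{eq:continuous-time-IA-estimates} follow from this closed form. For large $t$, the bracket $(\driftMat\transpose)^{-2}(I - e^{-t\driftMat\transpose/2})$ converges to $(\driftMat\transpose)^{-2}\statcov$, so the second term is $O(t^{-2})$ and $\frac{4}{t}\driftMat^{-1}\diffusionMat\driftMat\invtranspose$ remains. For small $t$, I would Taylor-expand $e^{-t\driftMat/2} = I - \tfrac{t}{2}\driftMat + \tfrac{t^2}{8}\driftMat^2 - \cdots$; the $\tfrac{4}{t}$ contributions cancel, the next order produces $\statcov$, and the identity $\Sym(\driftMat\statcov) = \diffusionMat$ converts the linear term into $-\tfrac{t}{6}\diffusionMat$, giving $\statcov - \tfrac{t}{6}\diffusionMat + O(t^2)$. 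Finally, the last displayed claim is a direct substitution: in cases (i) and (ii) the innovation term drops and $\diffusionMat = \tfrac{\const{\stepsize}^2\overline{\const{\batchsize}}}{4\const{\batchsize}}\precon\vinfoOpt\precon\transpose$, so inserting $\driftMat = \const{\stepsize}\precon\finfoOpt$ into $\frac{4}{t}\driftMat^{-1}\diffusionMat\driftMat\invtranspose$ cancels $\precon$, $\precon\transpose$, and $\const{\stepsize}$ (using symmetry of $\finfoOpt$, so $\finfoOpt\invtranspose=\finfoOpt^{-1}$), leaving $\frac{\overline{\const{\batchsize}}}{t\const{\batchsize}}\finfoOpt^{-1}\vinfoOpt\finfoOpt^{-1}$.

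I expect the main obstacle to be not the order of the remainders but pinning down the \emph{explicit} threshold constants $t \ll 7\norm{\driftMat}^2\norm{\driftMat^{-2}\statcov}^{1/2}$ and $t \gg 3\norm{\driftMat^{-2}\statcov}^{1/2}$ that delimit the two regimes. These require carefully bounding the tail of the matrix-exponential series and the operator norm of the symmetrized remainder in terms of $\norm{\driftMat}$ and $\norm{\driftMat^{-2}\statcov}$, whereas the Gaussian reduction, the double integral, and the Lyapunov simplifications are standard matrix calculus.
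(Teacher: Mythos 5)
Your proposal is correct and follows essentially the same route as the paper's proof: mild solution plus stationarity to get the autocovariance $e^{-(s-r)\driftMat/2}\statcov$, splitting the double integral over the two triangular regions (transposes of each other), explicit integration, the Lyapunov identities $\Sym(\driftMat^{-1}\statcov)=\driftMat^{-1}\diffusionMat\driftMat^{-\top}$ and $\Sym(\driftMat\statcov)=\diffusionMat$, a Taylor expansion for small $t$, the crude $\frac{8}{t^{2}}\norm{\driftMat^{-2}\statcov}$ bound for large $t$, and direct substitution of $\driftMat=\const{\stepsize}\precon\finfoOpt$, $\diffusionMat=\frac{\const{\stepsize}^2\overline{\const{\batchsize}}}{4\const{\batchsize}}\precon\vinfoOpt\precon\transpose$ for the final claim. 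The threshold constants you flagged as the remaining obstacle fall out exactly as you anticipated: the paper takes a third-order Taylor expansion with remainder $\norm{\tilde R_3(t)}\le\frac{t^{2}}{48}\norm{\driftMat}^{4}\norm{\driftMat^{-2}\statcov}$, whence $7\approx\sqrt{48}$, and the large-$t$ bound gives $3\approx\sqrt{8}$.
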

The proof of this result is in \cref{proof:prop:averaged-covariance}.
Using Proposition \ref{prop:averaged-covariance}, we can characterize large-sample behaviour of $\iteravg{\samplesize}{\iternum}$ for $\iternum = \iternum\upper{\samplesize} \defas \floor{\numpasses \timescale\upper{\samplesize} /  \const{\batchsize}} = \floor{\numpasses \samplesize^{\stepsizePow}  /  \const{\batchsize}}$,
which corresponds to making $\numpasses$ passes over the dataset.

\begin{corollary}[Bernstein--von Mises-type theorem for iterate averaging]\label{cor:iteravg}\label{COR:ITERAVG}
Suppose \crefrange{ass:basic-reg}{ass:V} all hold.
If $\batchsizePow + \stepsizePow \le \invtempPow \le 1$
and $\Law(\localparm\upper{\samplesize}_{0}) \distto \normaldist(0,\statcov)$,
then $\samplesize^{\batchsizePow + \stepsizePow}(\iteravg{\samplesize}{\iternum\upper{\samplesize}} - \parmLMLE{{\samplesize}})$ converges in distribution to a zero-mean Gaussian and 
\[\label{eq:asymptotic-iteravg-cov}
\samplesize^{\batchsizePow + \stepsizePow} \Cov\rbra{\iteravg{\samplesize}{\iternum\upper{\samplesize}}}
	\to  \frac{4 \const{\batchsize}}{ \const{\stepsize}\numpasses}\Sym\rbra{{\rbra{\precon \finfoOpt}}^{-1}\statcov}
	- \frac{8 \const{\batchsize}^{2}}{\const{\stepsize}^{2}\numpasses^{2}}\Sym\rbra{\rbra{\precon \finfoOpt}^{-2}\cbra{I - e^{-\numpasses \precon \finfoOpt/\rbra{2\const{\batchsize}}}}\statcov}
\]
in probability. 
If in addition $\batchsizePow + \stepsizePow = 1$ and $\const{\invtemp} = +\infty$, then %
\[\label{eq:asymptotic-iteravg-cov-simple}
\samplesize \Cov\rbra{\iteravg{\samplesize}{\iternum\upper{\samplesize}}}
 	\to \frac{1}{\numpasses} \finfoOpt^{-1} \vinfoOpt \finfoOpt^{-1} + R(\numpasses) \text{ in probability},
\qquad\text{where  $\norm{R(\numpasses)} \le  \frac{8\const{\batchsize}^{2}}{\const{\stepsize}^{2}\numpasses^{2}} \norm{\rbra{\precon \finfoOpt}^{-2}\statcov}$.}
\]
\end{corollary}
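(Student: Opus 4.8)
The plan is to transport the functional scaling limit of \cref{thm:sgld-scaling-limit-multiple-scalings-boundary-simple} through the continuous-time averaging functional, and then, as a separate matter, to upgrade the resulting weak convergence to convergence of second moments. The starting point is that the rescaled, recentered iterate average is, up to negligible boundary effects, the time-average of the piecewise-constant process $\localparm\upper{\samplesize}$ of \cref{eq:local-scaled-process}. Writing $\localscale\upper{\samplesize} = \samplesize^{\localscalePow}$ and noting that under the hypothesis $\batchsizePow + \stepsizePow \le \invtempPow$ we have $2\localscalePow = \batchsizePow + \stepsizePow$, the piecewise-constant structure of $\localparm\upper{\samplesize}$ (which is constant on each interval of length $1/\timescale\upper{\samplesize}$) gives
\[
\samplesize^{\localscalePow}\rbra{\iteravg{\samplesize}{\iternum\upper{\samplesize}} - \parmLMLE{{\samplesize}}}
= \frac{1}{T}\int_0^T \localparm\upper{\samplesize}_s \,\dee s + o_p(1),
\qquad T \defas \numpasses/\const{\batchsize},
\]
where $\iternum\upper{\samplesize} = \floor{\timescale\upper{\samplesize} T}$. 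The $o_p(1)$ collects the off-by-one index shift and the fractional final interval at $T$; each contributes a single scaled iterate divided by $\iternum\upper{\samplesize}$, and since $\localparm\upper{\samplesize}_0 \distto \normaldist(0,\statcov)$ and $\localparm\upper{\samplesize}_T = O_p(1)$ (as $T$ is a.s.\ a continuity point of the limit), these are $O_p(1/\iternum\upper{\samplesize})$.

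Next I would invoke the continuous mapping theorem. The map $x \mapsto T^{-1}\int_0^T x_s\,\dee s$ is continuous from Skorohod space to $\Reals^{\parmdim}$ at every path that is continuous at $T$; the limit $\localparm$ has a.s.-continuous (Ornstein--Uhlenbeck) paths, so the functional is a.s.\ continuous with respect to $\Law(\localparm)$. Applying it to the ``in probability'' mode of \cref{thm:sgld-scaling-limit-multiple-scalings-boundary-simple} yields $T^{-1}\int_0^T \localparm\upper{\samplesize}_s\,\dee s \distto \limparmavg_T$ in probability, and combining with the display gives $\samplesize^{\localscalePow}(\iteravg{\samplesize}{\iternum\upper{\samplesize}} - \parmLMLE{{\samplesize}}) \distto \limparmavg_T$ in probability. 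Because the hypothesis $\Law(\localparm\upper{\samplesize}_0)\distto\normaldist(0,\statcov)$ forces the limit process to start at $\localparm_0\distas\normaldist(0,\statcov)$, the vector $\limparmavg_T$ is a linear functional of the Gaussian process $\localparm$, hence zero-mean Gaussian with covariance $\avgcov{T}$ by \cref{prop:averaged-covariance}. This already establishes the asserted convergence in distribution to a zero-mean Gaussian.

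The covariance statements require promoting this weak convergence to convergence of the conditional covariance $\Cov(\samplesize^{\localscalePow}(\iteravg{\samplesize}{\iternum\upper{\samplesize}} - \parmLMLE{{\samplesize}})\mid\dataset{\samplesize}) = \samplesize^{\batchsizePow+\stepsizePow}\Cov(\iteravg{\samplesize}{\iternum\upper{\samplesize}}\mid\dataset{\samplesize})$ (the covariance in \cref{eq:asymptotic-iteravg-cov} being understood conditionally on $\dataset{\samplesize}$, consistent with the in-probability mode of the limit; note subtracting the data-measurable $\parmLMLE{{\samplesize}}$ does not change it). This is the step I expect to be the main obstacle: weak convergence plus the continuous mapping theorem do not by themselves give convergence of second moments, and one additionally needs conditional uniform integrability of $\{\samplesize^{\batchsizePow+\stepsizePow}\norm{\iteravg{\samplesize}{\iternum\upper{\samplesize}} - \parmLMLE{{\samplesize}}}^{2}\}_{\samplesize}$. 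I would obtain this by propagating a uniform bound on a $(2+\delta)$-th moment of the scaled iterates through the recursion \cref{eq:sgld-iter-update}, using exactly the moment conditions of \cref{ass:grad-norm-mom,ass:gll} (whose moment exponents are calibrated against the rate $\asspower{3}$ precisely for such tail control), and then transferring the per-iterate bound to the average by convexity. Given uniform integrability, the in-probability convergence of conditional laws upgrades to $\samplesize^{\batchsizePow+\stepsizePow}\Cov(\iteravg{\samplesize}{\iternum\upper{\samplesize}}\mid\dataset{\samplesize}) \stk\to{p} \avgcov{T}$.

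It then remains to evaluate $\avgcov{T}$ in closed form. Substituting $\driftMat = \const{\stepsize}\precon\finfoOpt$ and $T = \numpasses/\const{\batchsize}$ into \cref{eq:continuous-time-IA}, and rewriting the first term via the stationary identity $\diffusionMat = \Sym(\driftMat\statcov)$ from \cref{eq:stat-cov-implicit}, which gives $\driftMat^{-1}\diffusionMat\driftMat^{-\top} = \Sym(\driftMat^{-1}\statcov) = \const{\stepsize}^{-1}\Sym((\precon\finfoOpt)^{-1}\statcov)$, reproduces \cref{eq:asymptotic-iteravg-cov}. For the special case $\batchsizePow + \stepsizePow = 1$ with $\invtempPow = 1$ and $\const{\invtemp} = +\infty$ (condition (ii) of \cref{prop:averaged-covariance}), the final identity of that proposition yields $\tfrac{4}{T}\driftMat^{-1}\diffusionMat\driftMat^{-\top} = \tfrac{\overline{\const{\batchsize}}}{\numpasses}\finfoOpt^{-1}\vinfoOpt\finfoOpt^{-1}$; since $\stepsizePow>0$ forces $\batchsizePow < 1$ and hence $\overline{\const{\batchsize}} = 1$, the leading term is $\numpasses^{-1}\finfoOpt^{-1}\vinfoOpt\finfoOpt^{-1}$, and bounding the remaining $\Sym(\driftMat^{-2}\{I - e^{-T\driftMat/2}\}\statcov)$ summand of \cref{eq:continuous-time-IA} gives the stated bound on $R(\numpasses)$, establishing \cref{eq:asymptotic-iteravg-cov-simple}.
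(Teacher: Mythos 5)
Your proposal is correct and takes essentially the same route as the paper's proof: represent $\samplesize^{\localscalePow}(\iteravg{\samplesize}{\iternum\upper{\samplesize}}-\parmLMLE{{\samplesize}})$ as the time average of $\localparm\upper{\samplesize}$ over the horizon $T=\numpasses/\const{\batchsize}$, transfer to the Ornstein--Uhlenbeck limit of \cref{thm:sgld-scaling-limit-multiple-scalings-boundary-simple}, and evaluate the limiting covariance via \cref{prop:averaged-covariance} with $\driftMat=\const{\stepsize}\precon\finfoOpt$ and $\overline{\const{\batchsize}}=1$. If anything you are more explicit than the paper, whose written proof compresses the discrete-to-continuous covariance transfer (including the uniform-integrability issue you correctly flag, though you too leave it as a sketch) into a single unjustified $\approx$ and never separately addresses the distributional-convergence claim that your continuous-mapping argument supplies; note also that your substitution yields the exponent $e^{-\numpasses\const{\stepsize}\precon\finfoOpt/(2\const{\batchsize})}$, which agrees with the paper's own derivation, the displayed \cref{eq:asymptotic-iteravg-cov} apparently omitting the factor $\const{\stepsize}$.
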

It follows from \cref{eq:asymptotic-iteravg-cov-simple,eq:iterate-averaging-optimality} that for $\numpasses$ sufficiently large, iterate averaging with potentially large step size
of order $n^{-\stepsizePow}$ (\edit{with $\stepsizePow \le 1$}{``large'' constant-in-time step sizes here means that $\stepsizePow \ll 1$, while the results apply for any $\stepsizePow \leq 1$}) and batch size of order $n^{1-\stepsizePow}$ has numerical error
$\Cov\rbra[1]{\iteravg{\samplesize}{\iternum\upper{\samplesize}}}  \approx \frac{1}{\numpasses} \Cov( \parmLMLE{{\samplesize}})$
for estimation of $\parmLMLE{{\samplesize}}$.
This error is optimal, in the sense that after $\numpasses \gg 1$ passes over the dataset,
it is small compared to the statistical error $\Cov( \parmLMLE{{\samplesize}})$.
It is instructive to consider two idealized cases: 
\begin{enumerate}
\item Take $\precon = \aniso = I$ and assume that $\finfoOpt$ and $\vinfoOpt$ commute. The bound on the remainder term simplifies to
$\norm{R(\numpasses)} \le  \frac{2\const{\batchsize}}{\const{\stepsize}\numpasses^{2}} \norm{\finfoOpt^{-1}}$.
Hence, we should only expect the remainder term to be small when $\numpasses^{2} \gg \frac{2\const{\batchsize}}{\const{\stepsize}} \norm{\finfoOpt^{-1} \vinfoOpt \finfoOpt^{-2} }$.
\item Take $\precon = \aniso = \finfoOpt^{-1}$. The bound on the remainder term simplifies to
$\norm{R(\numpasses)} \le  \frac{2\const{\batchsize}}{\const{\stepsize}\numpasses^{2}} \norm{\finfoOpt^{-1} \vinfoOpt \finfoOpt^{-1} }$.
Hence, we should only expect the remainder term to be small when $\numpasses^{2} \gg \frac{2\const{\batchsize}}{\const{\stepsize}}$.
\end{enumerate}
In either case, a large step size constant $\const{\stepsize}$ relative to the batch size constant $\const{\batchsize}$ leads to the remainder term being small even for small $\numpasses$.
However, particularly without preconditioning, this regime may lead to numerical instability.

\subsection{Sampling}
\label{sec:sampling}
\textbf{Sampling from the posterior.}
The Bernstein-von Mises theorem states that the posterior-distributed parameter $\parm\upper{\samplesize} \distas \posterior{{\samplesize}}$ satisfies
$\samplesize^{1/2}(\parm\upper{\samplesize} - \parmLMLE{{\samplesize}}) \distto \normaldist_{\parmdim}\rbra[1]{0, \finfoOpt^{-1}}$ in probability.
In order for the large-sample stationary distribution of \cref{eq:sgld-iter-update} to match the Bernstein--von Mises limit of the posterior, we must first enforce that $\localscalePow= 1/2$. \add{Then, there are several ways to ensure that the limiting process has the same distribution as the limiting, six of which using various forms of SGD, SGLD, and SGLD-FP are shown in \cref{tab:tuning-combos}.}

\mydelete{Then, there are three ways to ensure that the limiting process has the same distribution as the limiting posterior:
\begin{enumerate}
\item\label{item:1} Choose the tuning parameter scalings so that $\stepsizePow+\batchsizePow>\invtempPow$, which then requires $\invtempPow=1$ to ensure $\localscalePow= 1/2$.
This condition can be interpreted as saying that combinations of mini-batch size and step size must yield mini-batch gradient variances that vanish fast enough to become negligible in the limit.
In this case, selecting $\precon=\aniso$ for any positive definite $\aniso$, $\const{\invtemp}=1$, and arbitrary values of $\const{\stepsize}, \const{\batchsize}$ will suffice.
\item \label{item:2} Precondition the mini-batch gradients so that the contribution of mini-batch noise to the limit is oriented exactly to give the correct variance.
This, in turn can be achieved in two ways.
	\begin{enumerate}[label=\alph*., ref=\alph*]
	\item \label{item:2a} If $\stepsizePow+\batchsizePow<\invtempPow$, select $\precon$ such that the matrix $\statcov$ that solves
            \[ \label{eq:stat-cov-implicit-simple}
            	\frac{1}{2}\precon\finfoOpt \statcov
		+ \frac{1}{2}\statcov \finfoOpt\transpose\precon\transpose = \frac{\const{\stepsize}\overline{\const{\batchsize}}}{4\const{\batchsize}} \precon \vinfoOpt \precon\transpose
            \]
            is $\statcov = \finfoOpt$.
            As can be verified directly, and is essentially argued in \citet[Corollary~4]{mandt2017stochastic}, taking $\precon = \vinfoOpt^{-1}$ and $\const{\stepsize} = \frac{4\const{\batchsize}}{\overline{\const{\batchsize}}}$, the limiting stationary measure will match the limiting posterior.
	\item \label{item:2b} If $\stepsizePow+\batchsizePow=\invtempPow$, take $\aniso = \finfoOpt^{-1} = \precon$, and choose $\const{\stepsize}$ and $\const{\invtemp}$
	jointly so that $\frac{\const{\stepsize}\const{\batchsize}}{4 \overline{\const{\batchsize}}} + \frac{1}{\const{\invtemp}} = 1$.
	\end{enumerate}
\item \label{item:3} Use the SGLD fixed point algorithm with $\stepsizePow+\batchsizePow=\invtempPow = 1$.
As in the first case, it then suffices to select $\precon=\aniso$, $\const{\invtemp}=1$, and arbitrary values of $\const{\stepsize}, \const{\batchsize}$.
\end{enumerate}
}

In terms of the number of gradient queries per unit \add{mixing} time \delete{in the scaling limit scales, options \ref{item:2}\ref{item:2b} and \ref{item:3}}\add{the cases where $\stepsize\not\in o(b/n)$} are the most efficient as the query-count scales linearly
with the dataset size (since $\stepsizePow+\batchsizePow=1$), while for \delete{options \ref{item:1} and \ref{item:2}\ref{item:2a}}\add{the cases where $\stepsize\in o(b/n)$} it scales super-linearly ($\stepsizePow+\batchsizePow>1$).
In practice, options \delete{\ref{item:2}\ref{item:2b} and \ref{item:3} both}\add{involving preconditioning matrices ($\Gamma\neq I$) or control variates (SGLD-FP)} first require an estimate of $\parmLMLE{{\samplesize}}$ to, respectively,
construct \add{estimates of }the preconditioner $\precon = \finfoEmp{\samplesize}(\parmLMLE{{\samplesize}})^{-1} \approx \finfoOpt^{-1}$
\add{and/}or construct the control variates $\grad\loglik(\parmLMLE{{\samplesize}};\data_i)$.
The latter option is more appealing, particularly if $\parmdim$ is large, as no matrix inversion or per-iteration multiplication is required.
In either case, however, preconditioning with $\finfoEmp{\samplesize}(\parmLMLE{{\samplesize}})^{-1}$ will minimize the mixing time.
\add{Methods for estimating $\finfoOpt,\ \vinfoOpt$, their inverses, and sparse approximations have been explored extensively in other work \citep[e.g.,][]{haario2001,ahn2012,mandt2017stochastic,pollock2020quasi,chen2020statistical}. }

\textbf{Alternative uncertainty quantification.}
\add{When the model is misspecified or when generalized Baye\-sian inference based on a loss function is used \citep{bissiri2016general}, the (generalized) posterior distribution may provide less-than-robust uncertainty quantification 
because the (local) M-estimator $\parmLMLE{\samplesize}$ is itself asymptotically normal, centered at the true parameter $\trueparm$, with covariance equal to the ``sandwich'' covariance matrix, $\finfoOpt^{-1}\vinfoOpt\finfoOpt^{-1}$ \citep{kleijn2012bernstein,muller2013risk}.
If the model is well-specified (i.e., $\datadist = \modeldist{\parm}$ for some $\parm\in\parmspace$), then $\finfoOpt=\vinfoOpt$, and so $\finfoOpt^{-1}\vinfoOpt\finfoOpt^{-1}=\finfoOpt^{-1}$. However, if the model is misspecified (i.e., $\datadist \ne \modeldist{\parm}$ for any $\parm\in\parmspace$), then the sandwich may differ from $\finfoOpt^{-1}$ \citep{huber1967,white1982maximum}.
In this case, posterior credible sets are not asymptotically well-calibrated frequentist confidence sets \citep{kleijn2012bernstein,muller2013risk}
and predictions (or other decision-theoretic quantities) can become unstable \citep{huggins2019baggedposterior,huggins2023modelselection}.}

The question of how to account for misspecification in the Bayesian setting has been addressed in a number of ways \citep[e.g.,][]{royall2003interpreting,muller2013risk,stafford1996robust,grunwald2017inconsistency,huggins2019baggedposterior}.
For example, we may want to match the sandwich covariance, as prescribed by \citet{muller2013risk}, which by definition is robust to model misspecification in a frequentist sense.
Or we may want to combine the sandwich and Bernstein--von Mises covariances, as in the bagged posterior \citep{huggins2019baggedposterior}.
Either of these desiderata can be obtained by setting $\precon = \aniso = \finfoOpt^{-1}$,
and any valid $\stepsizePow+\batchsizePow=1=\invtempPow$.
With this tuning, for any $w_1, w_2>0$, taking $\const{\stepsize} = 4w_1\const{\batchsize}$
and $\const{\invtemp} = w_2^{-1}$, gives $\statcov = w_1\finfoOpt^{-1}\vinfoOpt\finfoOpt^{-1} +  w_2\finfoOpt^{-1}$.
This matches the asymptotic distribution of the bagged posterior with re-sampling rate $w_1$ when $w_1 = w_2$ \citep{huggins2019baggedposterior}. \add{This is summarized in the ``Bagged Posterior'' row of \cref{tab:tuning-combos}.}
Moreover, we can obtain any convex combinations of the uncertainty quantification from the posterior and the asymptotics of the M-estimator by taking $w_1+w_2=1$.
This enables interpolation between frequentist-like and Bayesian-like forms of inference and results in a mixing time of $1/w_{1}$ epochs.
Hence in principle we can use SGD (by setting $w_{1} = 1$, $w_{2} = 0$, and $\const{\invtemp}=+\infty$) to obtain the sandwich covariance and minimize the mixing
time to be a single pass over the dataset. \add{This can be interpreted as an asymptotic local fiducial distribution for the parameter, as it has credible sets which match frequentist confidence sets asymptotically. This tuning is summarized in the ``Local Fiducial'' row of \cref{tab:tuning-combos}.}

\add{
\begin{table}[t]
\begin{tabular}{llllllll}
Target           & Target Asymp. Cov.                                            & Algo. & $\precon$        & $\aniso$         & $\invtemp$                     & $\stepsize$                                                               & Mix. Time (Epochs)                                                                \\ \hline
Posterior        & $\finfoOpt^{-1}$                                              & \smaller SGD   & $\vinfoOpt^{-1}$ & n.a.             & n.a.                           & $4\batchsize / \samplesize$                                               & $\lambda_{\min{}}^{-1}(\vinfoOpt^{-1}\finfoOpt)$                                      \\
Posterior        & $\finfoOpt^{-1}$                                              & \smaller SGLD  & $\finfoOpt^{-1}$ & $\finfoOpt^{-1}$ & $\const{\invtemp} \samplesize$ & $\frac{4 \batchsize (1-\const{\invtemp})} {\samplesize \const{\invtemp}}$ & $\frac{\const{\invtemp}}{(1-\const{\invtemp})}$                                       \\
Posterior        & $\finfoOpt^{-1}$                                              & \smaller SGLD  & $\finfoOpt^{-1}$ & $\finfoOpt^{-1}$ & $\samplesize$ & $o(\batchsize / \samplesize)$                                             & $\frac{\batchsize}{\samplesize \stepsize} \in \omega(1) $                             \\
Posterior        & $\finfoOpt^{-1}$                                              & \smaller SGLD  & $I$              & $I$              & $\samplesize$ & $o(\batchsize / \samplesize)$                                             & $\frac{\batchsize\lambda_{\min{}}^{-1}(\finfoOpt)}{\samplesize \stepsize} \in \omega(1) $ \\

Posterior        & $\finfoOpt^{-1}$                                              & \smaller SGLD-FP  & $I$              & $I$              & $\samplesize$ & $4\batchsize / \samplesize$                                              & $\lambda_{\min{}}^{-1}(\finfoOpt)$ \\
Posterior        & $\finfoOpt^{-1}$                                              & \smaller SGLD-FP  & $\finfoOpt^{-1}$             & $\finfoOpt^{-1}$              & $\samplesize$ & $4\batchsize / \samplesize$                                              & $1$ \\

Bagged Posterior & $w_1 \finfoOpt^{-1}+w_2\finfoOpt^{-1}\vinfoOpt\finfoOpt^{-1}$ & \smaller SGLD  & $\finfoOpt^{-1}$ & $\finfoOpt^{-1}$ & $\frac{\samplesize}{w_2}$         & $\frac{4 w_1 \batchsize}{\samplesize}$                                          & $1/w_1$                                                                               \\
Local Fiducial   & $\finfoOpt^{-1}\vinfoOpt\finfoOpt^{-1}$                       & \smaller SGD   & $\finfoOpt^{-1}$ & n.a.             & n.a.                           &$\frac{4 \batchsize}{\samplesize}$                                          & $1$                                                                                  
\end{tabular}
\caption{\add{Tuning parameter combinations for various target distributions, and their corresponding mixing times {in epochs}. 
If the mixing time is $\omega(1)$ (in $\samplesize$), then in the limit the process does not mix in a constant number of epochs.}}
\label{tab:tuning-combos}
\end{table}

}

\section{Numerical Experiments} \label{sec:experiments}

We present results for three experiments using both simulated and real data that show our theory closely reflects finite-sample behavior. Source code for experiments is at \url{https://github.com/jnegrea/stat-infr-sgas}.

\subsection{Experiment 1: Gaussian simulation study}
\label{experiment1}

First we demonstrate the effect of model misspecification on tuning, highlighting both the sampling implications (\cref{sec:sampling,tab:tuning-combos})
and the iterate averaging behaviour relevant to optimization (\cref{sec:optimization-iteravg}). 
We choose the combination of the data-generating distribution and likelihood function specifically to ensure that $\finfoOpt \neq \vinfoOpt$, so that the effect of misspecification would be apparent.
We run SGD with no preconditioning and with preconditioning by $\finfoOpt$, and SGLD with preconditioning by $\finfoOpt$. 
For SGLD we use the inverse temperature $\invtemp\upper{\samplesize} = \samplesize$, which corresponds to the canonical choice that would be made when not using stochastic gradients. %
We also compute iterate averages for SGD with no preconditioning \add{over 1 epoch and 8 epochs}. 
Exact specifications for the experiment are in \cref{tab:exp1-2-3-settings} in \delete{\cref{apx:experiment-supp} of} the supplemental material.

We interpret our results using our scaling limit with parameters
$\localscalePow = 1/2$,
$\stepsizePow = 1$,
$\batchsizePow = 0$, \add{which}
\delete{This combination of scaling parameters} corresponds to the standard statistical local scaling and a fixed batch size.
\Cref{fig:exp1-results} shows plots for the joint density of the first and last coordinates of the parameter vector for each of the five tunings.
The density for the empirical run of the algorithms is given by a 2D kernel density estimate. The density for the predicted behaviour is given by the stationary distribution of the limiting process.
As predicted by our results \add{in \cref{sec:sampling}, specifically in the ``Local Fiducial'' row of \cref{tab:tuning-combos}}, preconditioning by $\finfoOpt$ leads to an empirical distribution for the iterates of the algorithm matching the covariance of the MLE \add{(an asymptotic locally fiducial distribution)}, 
not preconditioning leads to behaviour that matches neither (but is still predictable using our results), 
and preconditioning by $\finfoOpt$ for SGLD leads to an empirical distribution for the iterates of the algorithm matching the asymptotics of a {bagged posterior}, which is given by a linear combination of the covariance of the MLE and the covariance of the posterior. 
\add{Furthermore, as predicted by the results in \cref{sec:optimization-iteravg} (in particular \cref{eq:asymptotic-iteravg-cov-simple}), and in contrast to the predictions in \citet{mandt2017stochastic}, iterate averaging for a ``large'' number of epochs (8) is closely approximated by the scaled sandwich covariance, while iterate averaging over a ``small'' number of epochs (1) is not sufficient for the approximation by the scaled sandwich covariance to be accurate.} 
\add{Finally, \cref{tab:exp1-mixing-times} shows that the mixing times predicted by our theory closely match their empirical counterparts.}

\begin{figure}[t]
  \centering
  \begin{subfigure}[b]{0.3\textwidth}
    \includegraphics[width=\textwidth]{./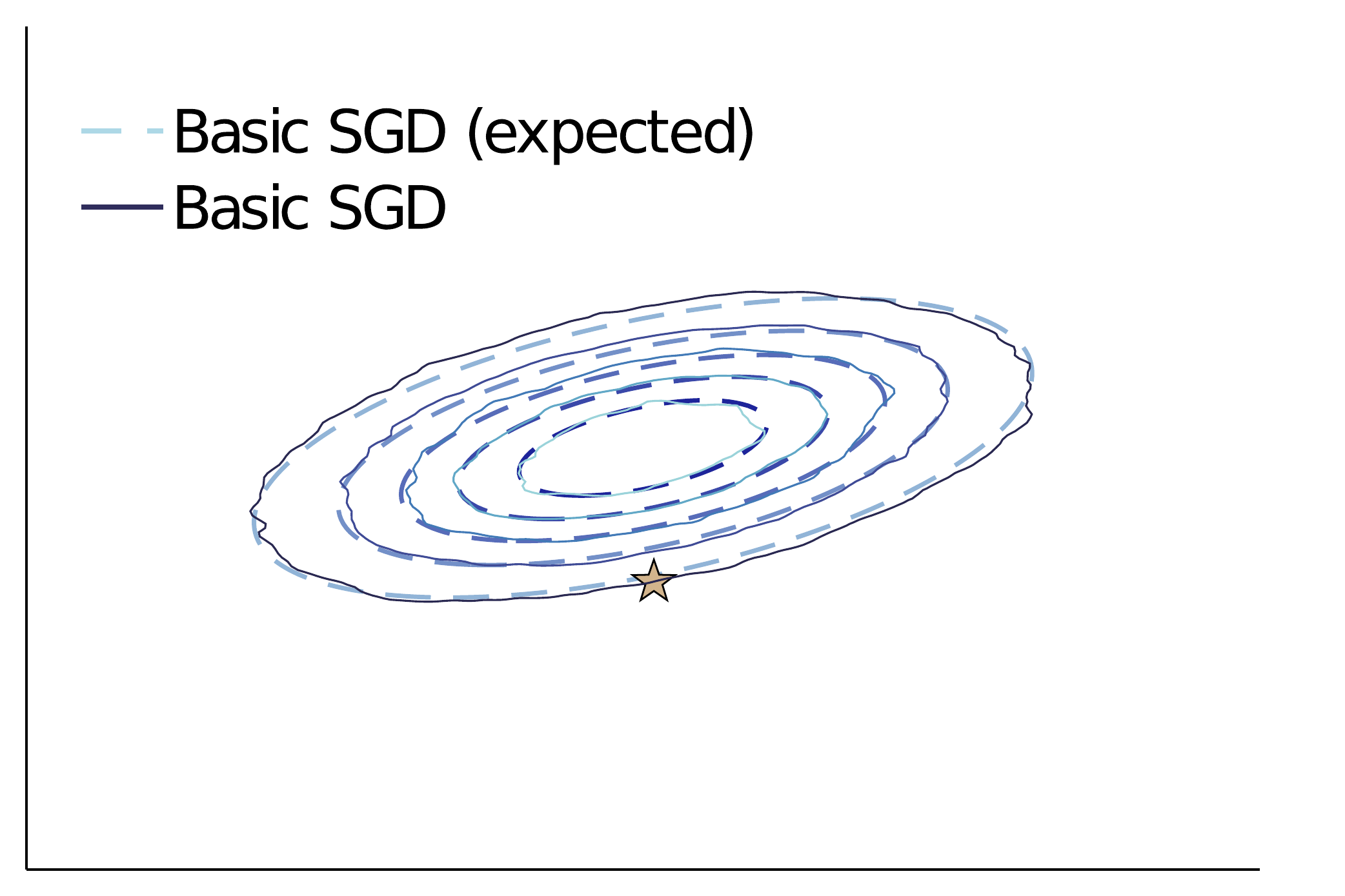}
    \caption{No preconditioning}
  \end{subfigure}
  \begin{subfigure}[b]{0.3\textwidth}
    \includegraphics[width=\textwidth]{./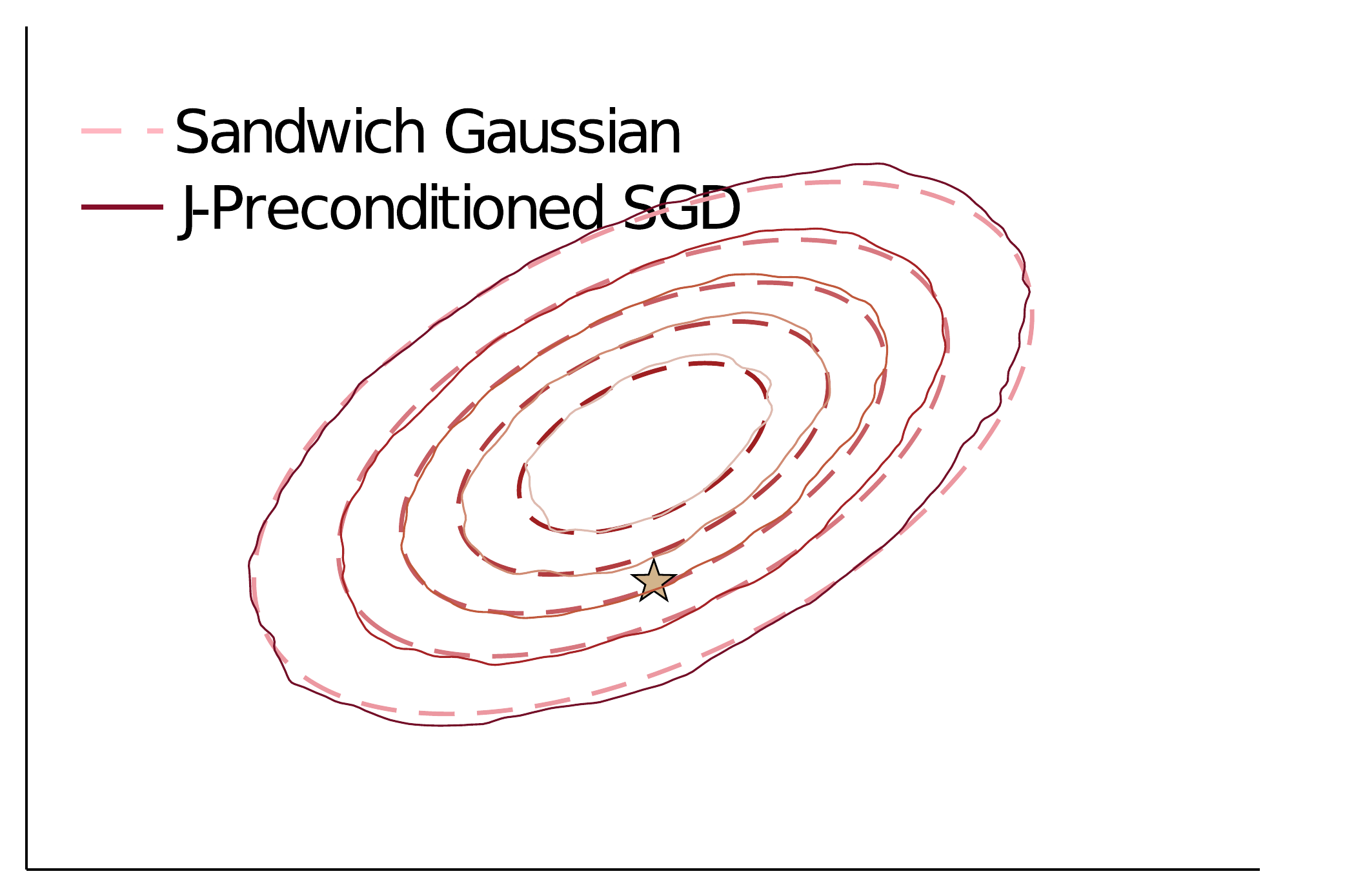}
    \caption{$\finfoOpt^{-1}$-precond. SGD}
  \end{subfigure}
  \begin{subfigure}[b]{0.3\textwidth}
    \includegraphics[width=\textwidth]{./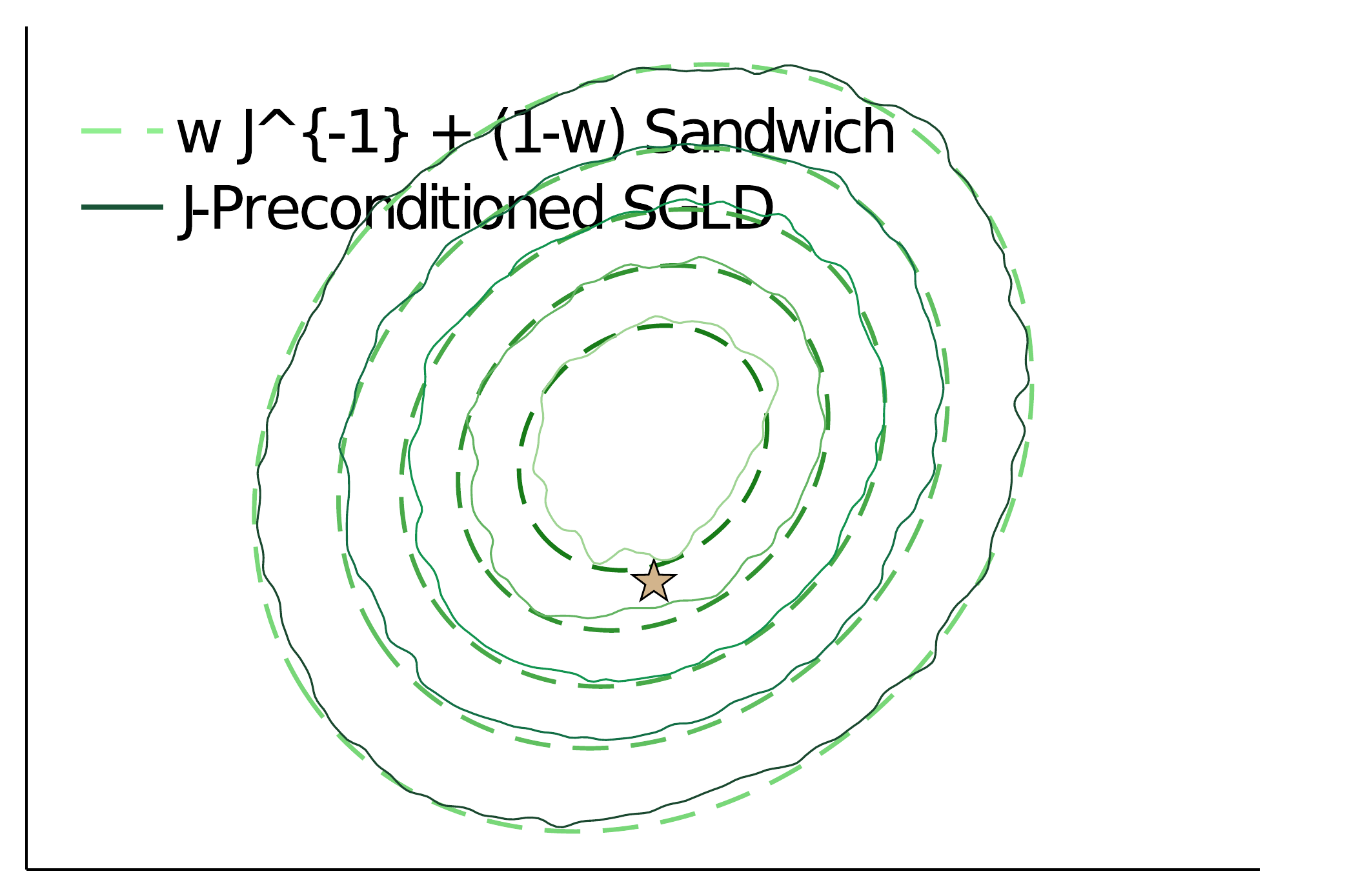}
    \caption{$\finfoOpt^{-1}$-precond. SGLD}
  \end{subfigure} \\
  \begin{subfigure}[b]{0.3\textwidth}
    \includegraphics[width=\textwidth]{./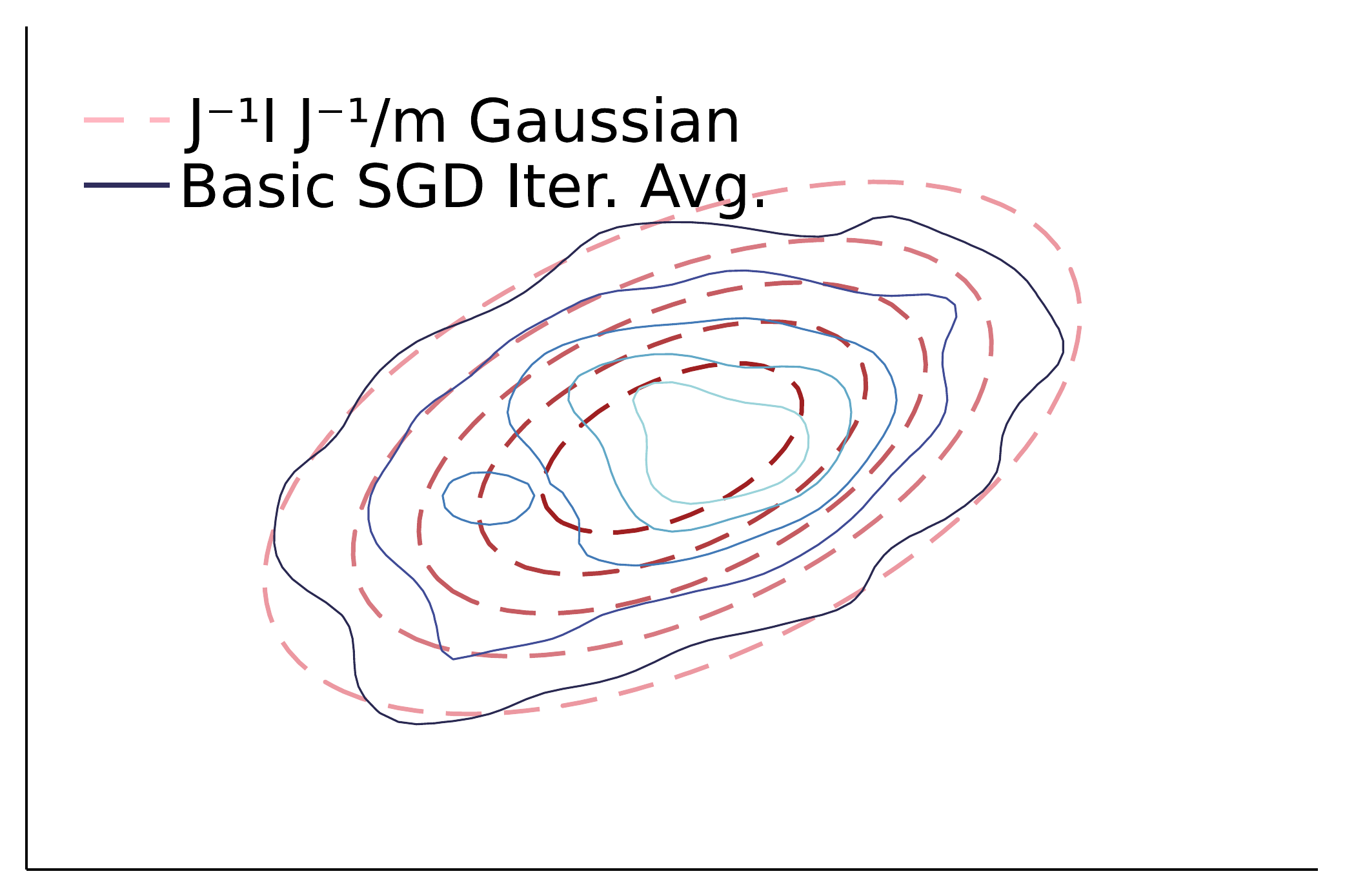}
    \caption{SGD Iter. Avg., $8$ epochs} 
  \end{subfigure} 
  \begin{subfigure}[b]{0.3\textwidth}
    \includegraphics[width=\textwidth]{./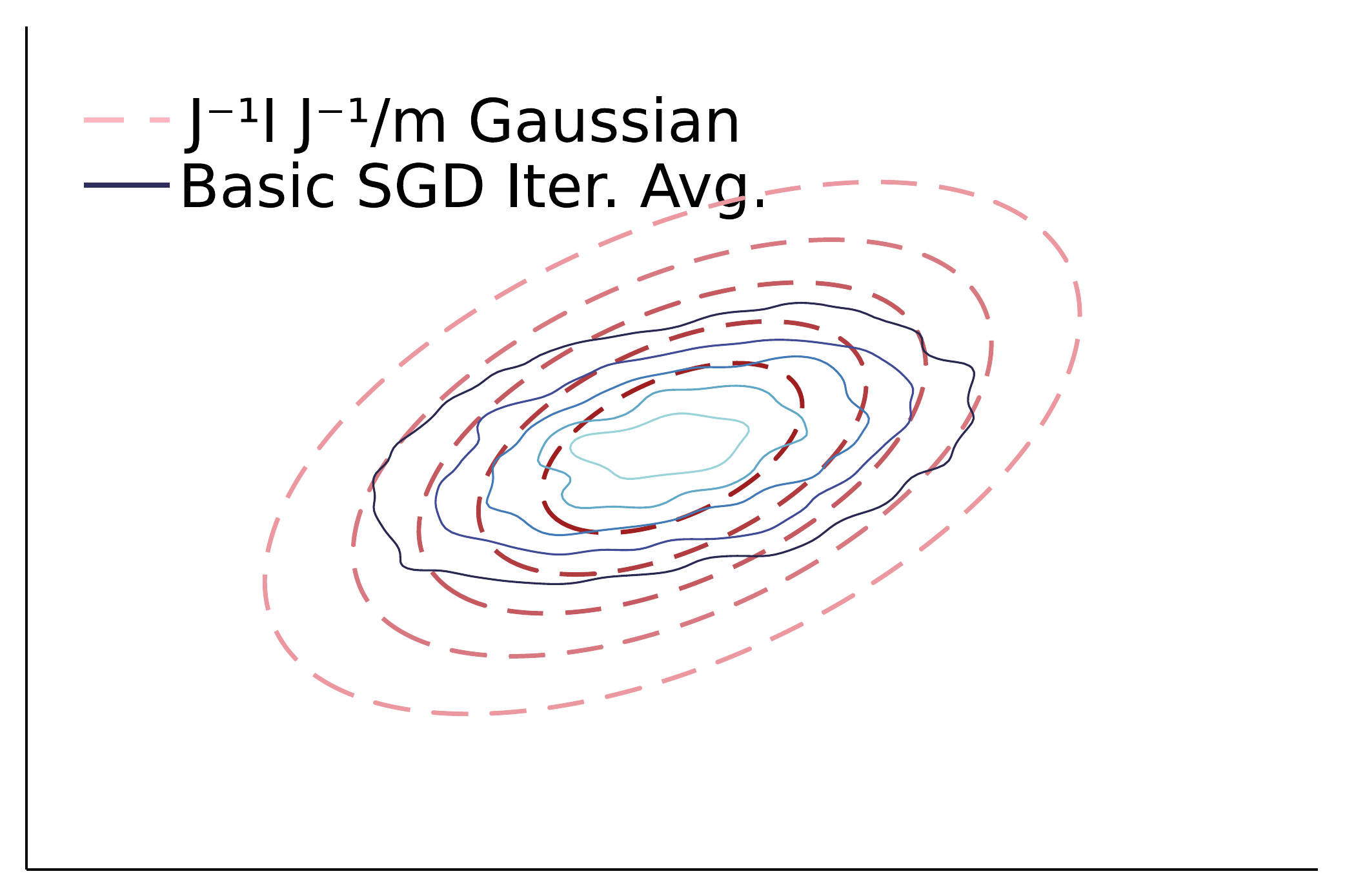}
    \caption{SGD Iter. Avg., $1$ epoch} 
  \end{subfigure} 
  \vspace{-1em}
  \caption{Results of experiment 1. The empirical results follow the theoretical predictions based on scaling limits. \delete{The KDE for the 8 epoch iterate average is more volatile than the others because its effective sample size is $\approx8$ times smaller.} For preconditioned SGLD, $w=1/2$.}
  \label{fig:exp1-results}
\end{figure}

\begin{table}[b]
  \centering
  \begin{tabular}{l|ll|ll|ll}
    &\multicolumn{2}{c|}{Experiment 1} &\multicolumn{2}{c|}{Experiment 2} &\multicolumn{2}{c}{Experiment 3} \\
    Method & Emp. & Pred. & Emp. & Pred. & Emp. & Pred. \\ \hline
      SGD, no preconditioning
      & 3.2
      & 3.2
      & 150
      & 480
      & -
      & -\\
      $\finfoOpt^{-1}$-preconditioned SGD
      & 1.1
      & 1.0 
      & 1.2
      & 1.0
      & 1.5
      & 1.0 \\
      $\vinfoOpt^{-1}$-preconditioned SGD
      & 2.3
      & 2.8  
      & 1.0
      & 1.0
      & - 
      & - \\
      $\finfoOpt^{-1}$-preconditioned SGLD
      & 2.2
      & 2.0
      & 2.3
      & 2.0
      & 3.0
      & 2.0 
  \end{tabular}
  \caption{Comparison of empirical and predicted mixing times (in epochs) for all experiments measured by integrated autocorrelation times (IACT). The empirical value is computed numerically from the run. The predicted value is computed following \cref{tab:tuning-combos}.}
  \label{tab:exp1-mixing-times}\label{tab:exp2-mixing-times}\label{tab:exp4-mixing-times}
\end{table}

\subsection{Experiment 2: Large-scale inference for airline delay data---logistic regression}
\label{experiment2}

Next we examine the same airline dataset and model as in \citet{pollock2020quasi} using their pre-processed data \add{so we can directly compare our recommended settings to 
the results they obtained with suboptimal tuning parameters (see \cref{exa:pollock}).} 
 The responses are binary and there are 3 covariates. We use the full dataset ($\approx$ 120 million observations) to estimate the ``ground truth'' quantities $(\trueparm, \finfoOpt, \vinfoOpt)$, and we apply the stochastic gradient algorithms using \delete{as a dataset} a random subsample of size 1 million from the full dataset. 

For the results regarding the marginal distribution of the iterates, we compare SGLD without preconditioning to SGD preconditioned by $\vinfoOpt$. For this example, the matrices $\finfoOpt$ and $\vinfoOpt$ are numerically indistinguishable, and hence all three preconditioned methods we examined yield essentially identical results, and all are materially different from not preconditioning. Again, we interpret this using our scaling limit with parameters
$\localscalePow = 1/2$,
$\stepsizePow = 1$,
$\batchsizePow = 0$.
An experimental finding of \citet{pollock2020quasi} was that (non-preconditioned) SGLD had relatively poor mixing performance as compared with the ScaLE algorithm they introduce.
\Cref{fig:exp2-results,fig:exp2-results-joint,tab:exp2-mixing-times} similarly show that, without preconditioning, SGLD fails to properly quantify uncertainty in the true parameter (marginally for coordinate 4, and jointly) and mixes slowly.
Furthermore, SGLD without preconditioning mixes materially more slowly than preconditioned methods, as evidenced by the jagged histogram from its run (\cref{fig:exp2-results}) and the contour plot (\cref{fig:exp2-results-joint}).
Our numerical results also show that their findings would have been significantly different had they used the appropriate preconditioning as predicted by our results showing that preconditioning accelerates the mixing of SGLD considerably and leads to more accurate uncertainty quantification. 
\add{These findings are consistent with our theoretical developments in \cref{sec:implications-mixingtime,sec:sampling} (in particular, \cref{eq:mixing-equation}, and the preconditioned SGD and non-preconditioned SGLD and SGLD-FP rows of \cref{tab:tuning-combos}).
Thus, we can conclude that the poor relative mixing of non-preconditioned SGLD-FP observed in \citet{pollock2020quasi}, and the fast mixing with preconditioning could both have been predicted
using our results. 
In particular, fixed--step-size preconditioned SGD would have been much more competitive with that work's proposed method than the non-preconditioned decreasing step size SGLD that was used.}

To further explore the value of our tuning guidance, we also consider the behavior of iterate averaging when using the \delete{more practically realistic}preconditioner $\diag(\finfoOpt)^{-1}$, \add{which is less computationally demanding in high dimensions,}
and examine different combinations of the step size and batch size scaling powers that both lead to statistically relevant scaling limits (in particular $(\stepsizePow,\batchsizePow)\in\set{(1,0), (1/2,1/2)}$). In both cases the iterate averages are computed for one epoch. 
As shown in \cref{fig:exp2-results,fig:exp2-results-supp}, since this is a ``small'' number of epochs, the higher order approximation from \cref{eq:asymptotic-iteravg-cov} is required to have an accurate approximation. 
In particular, \cref{fig:exp2-results-supp} confirms that one epoch is not sufficient for \cref{eq:asymptotic-iteravg-cov-simple} to be accurate in this case. \add{This is consistent with our theoretical developments in \cref{sec:optimization-iteravg}.}

\begin{figure}[t]
  \centering
  \begin{tabular}{cccc}
  \begin{subfigure}[b]{0.24\textwidth}
    \includegraphics[width=\textwidth]{./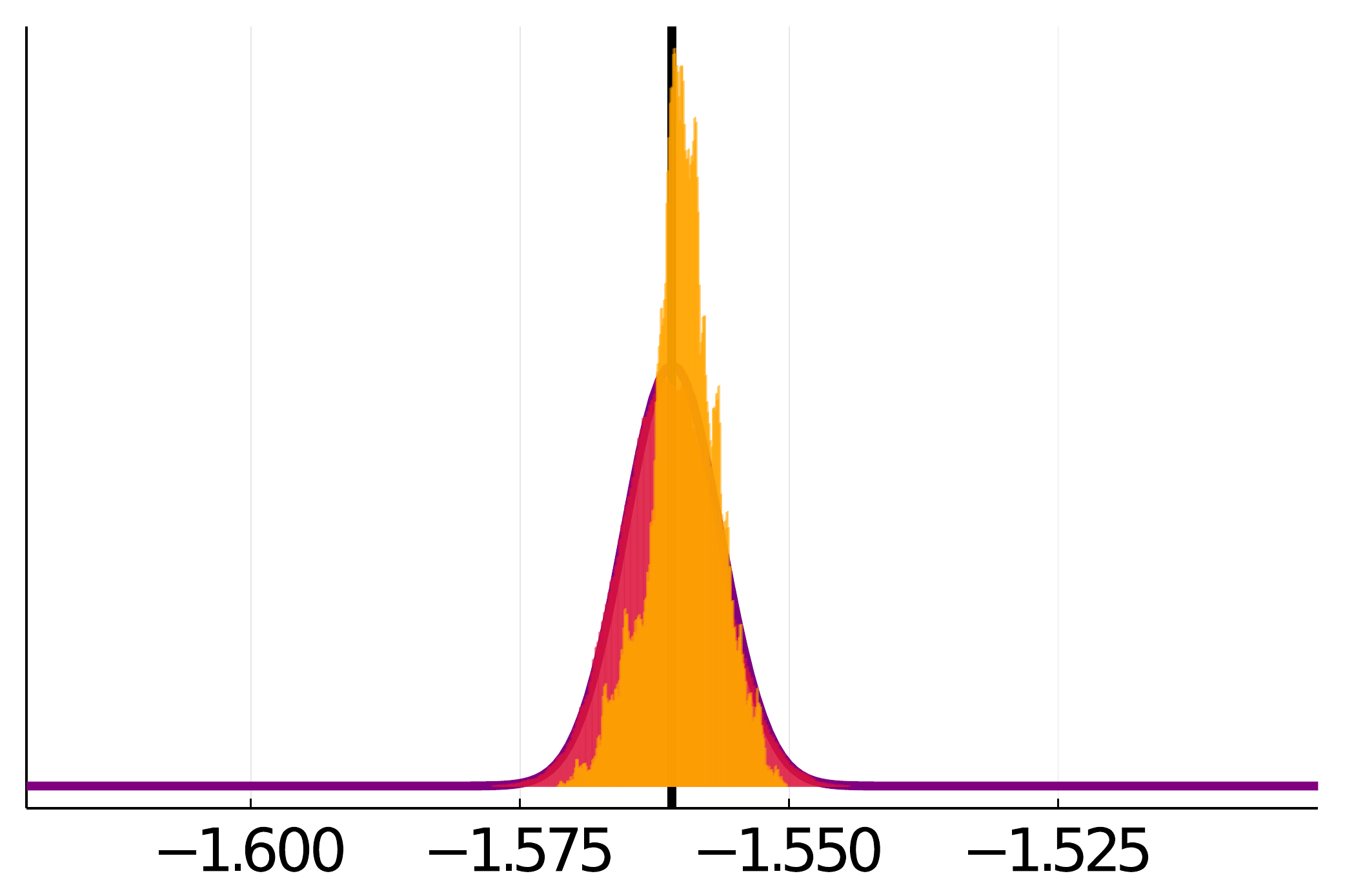}
    \caption{\ }\label{fig:exp2-results-a}
  \end{subfigure} & \hspace{-2em}
  \begin{subfigure}[b]{0.24\textwidth}
    \includegraphics[width=\textwidth]{./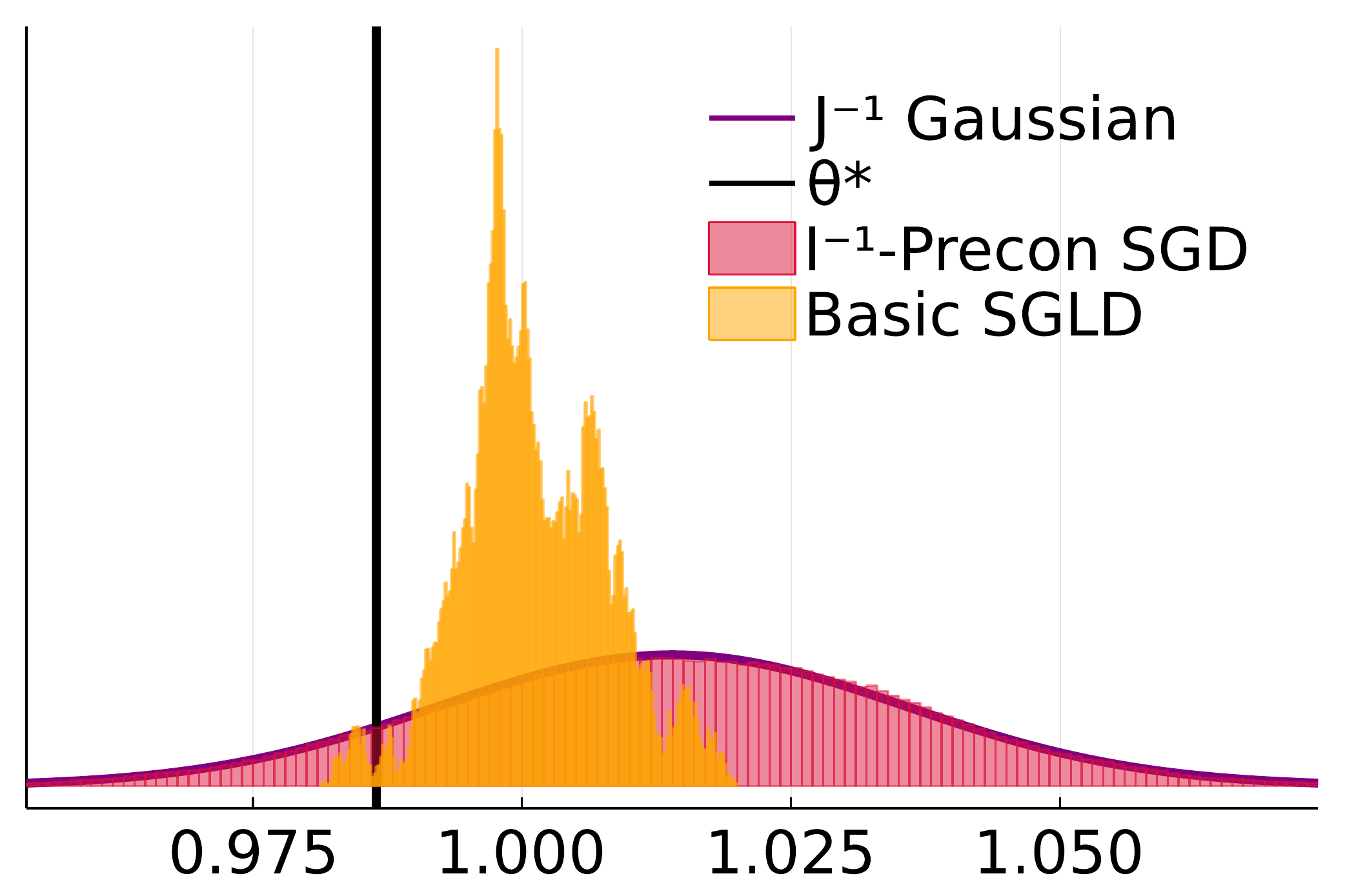}
    \caption{\ }\label{fig:exp2-results-b}
  \end{subfigure} 
  \begin{subfigure}[b]{0.24\textwidth}
    \includegraphics[width=\textwidth]{./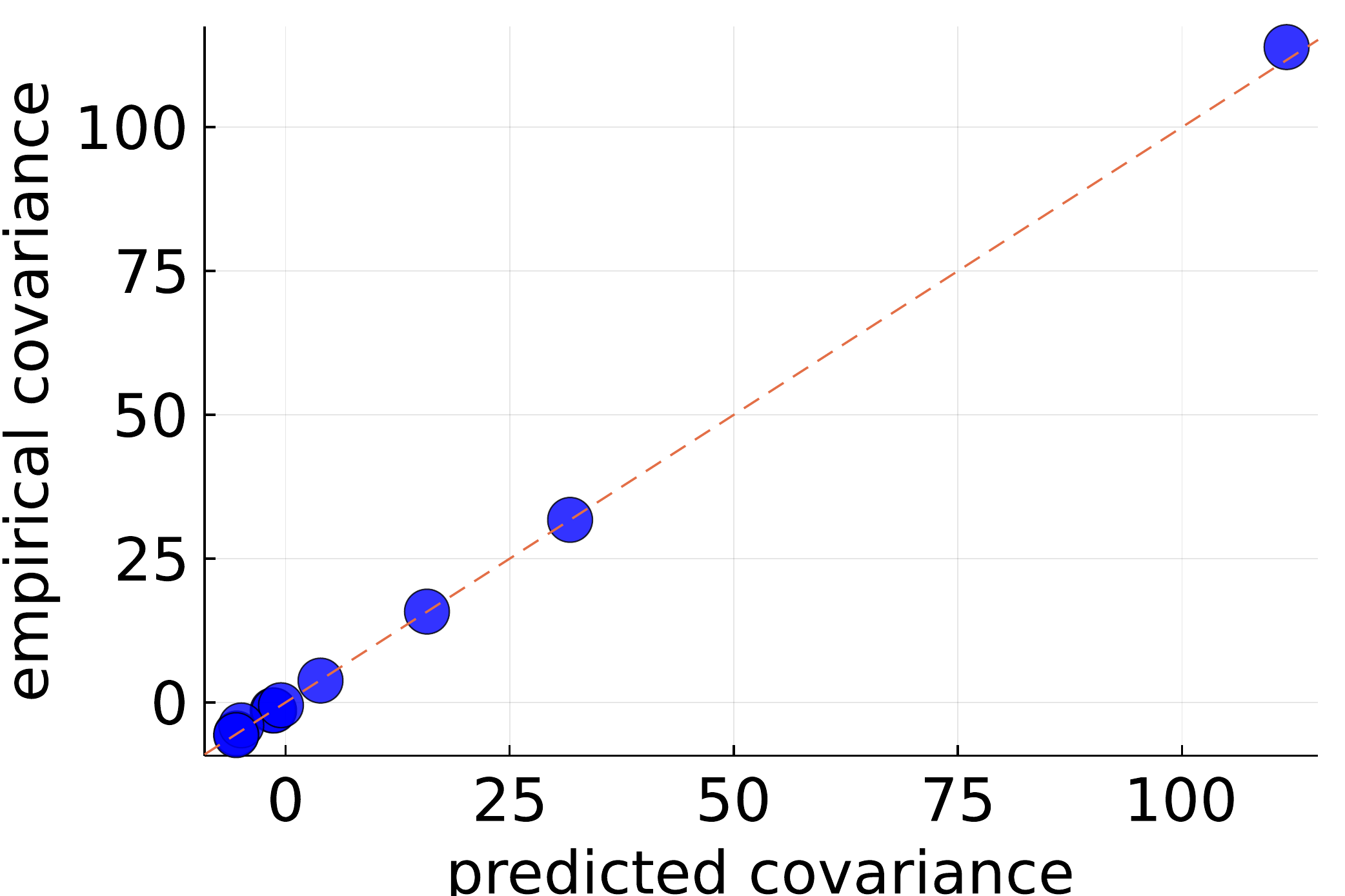}
    \caption{\ }\label{fig:exp2-results-c}
  \end{subfigure} & \hspace{-2em}
  \begin{subfigure}[b]{0.24\textwidth}
    \includegraphics[width=\textwidth]{./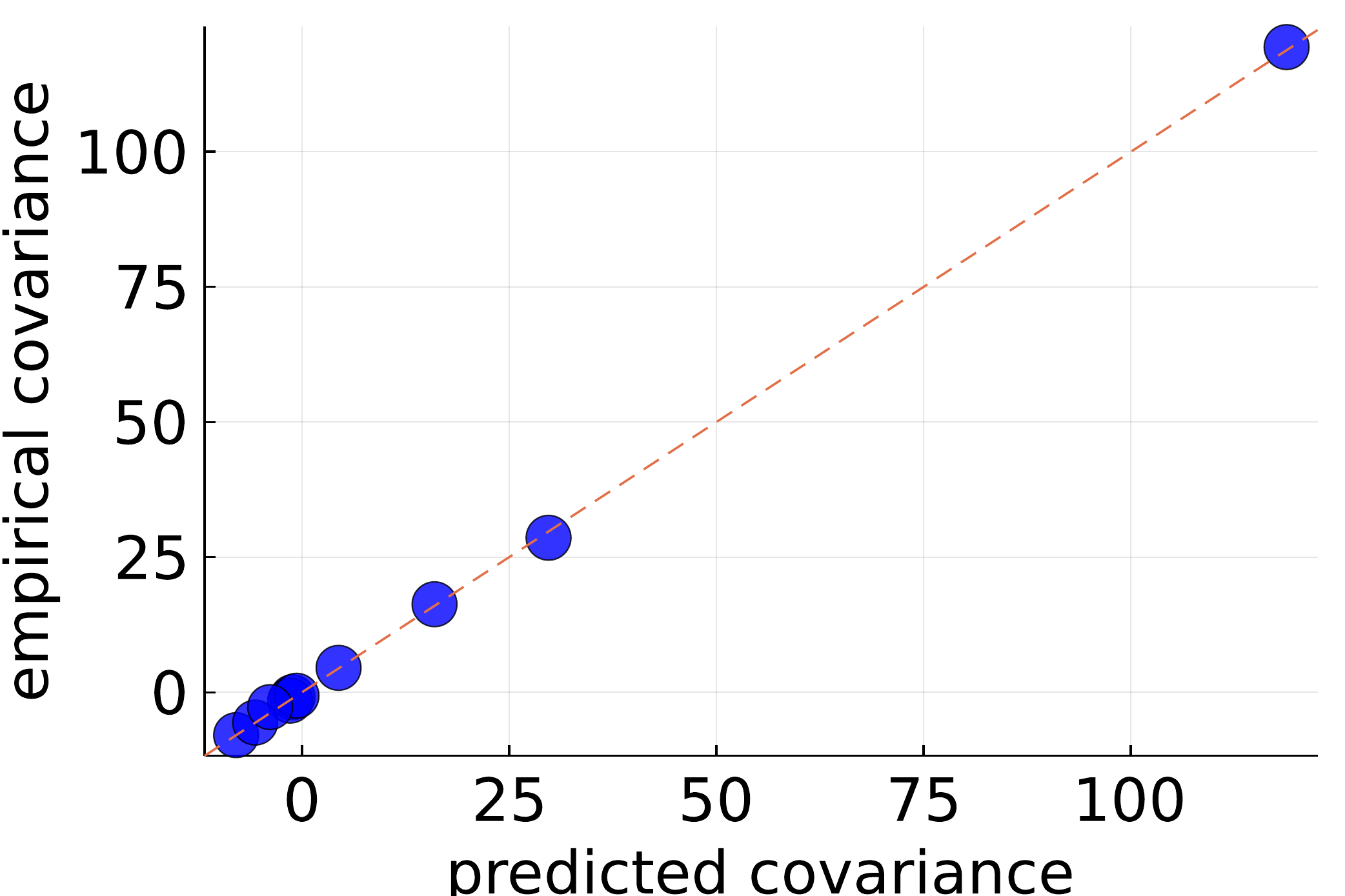}
    \caption{\ }\label{fig:exp2-results-d}
  \end{subfigure} 
  \vspace{-1em}
\end{tabular}
  \caption{ Results of experiment 2. \cref{fig:exp2-results-a,fig:exp2-results-b} show the univariate results for the marginal distributions of parameters 1 and 4 respective when $(\stepsizePow,\batchsizePow)=(1,0)$, and $\batchsizePow=0$. \cref{fig:exp2-results-c,fig:exp2-results-d} show the predicted and actual entries of the variance-covariance matrix for iterate averages when $(\stepsizePow,\batchsizePow)$ is $(1,0)$ and $(1/2,1/2)$ respectively. The predictions for the iterate average are based upon \cref{eq:asymptotic-iteravg-cov}.  }
  \label{fig:exp2-results}
\end{figure}

\subsection{Experiment 3: Large-scale inference for airline delay data---Poisson regression}

Finally, to validate the value of our tuning recommendations in a more complex, clearly misspecified model, we examine the \delete{final year from} the original airline dataset  \citep{DVN/HG7NV7_2008} that the experiments in \citet{pollock2020quasi} were based upon.
In this case the responses are non-negative integers and significantly zero-inflated (relative to a Poisson distribution), and we have opted not to model the zero-inflation to magnify the effect of misspecification.
The model has 25 parameters.
We use the full 2008 data ($\approx$ 1.5 million observations) to estimate the ``ground truth'' quantities $(\trueparm, \finfoOpt, \vinfoOpt)$, and we apply the stochastic gradient algorithms to a dataset consisting of a random subsample of size 150{,}000 from the full \add{2008} dataset. For this example, the matrices $\finfoOpt$ and $\vinfoOpt$ differ significantly in scale,
and hence both preconditioned methods we examine yield materially different uncertainty quantification for the parameter. The non-preconditioned methods are numerically unstable at the comparable step sizes and quickly diverge.
\Cref{fig:exp4-results,tab:exp4-mixing-times} show that both preconditioned methods behave exactly as predicted by the asymptotic theory in \cref{sec:sampling,sec:implications-mixingtime}\add{, and the results directly confirm the predictions made in, and support the recommendations implied by, \cref{tab:tuning-combos}.}

\begin{figure}[t]
  \centering
  \begin{tabular}{cccc}
  \begin{subfigure}[b]{0.24\textwidth}
    \includegraphics[width=\textwidth]{./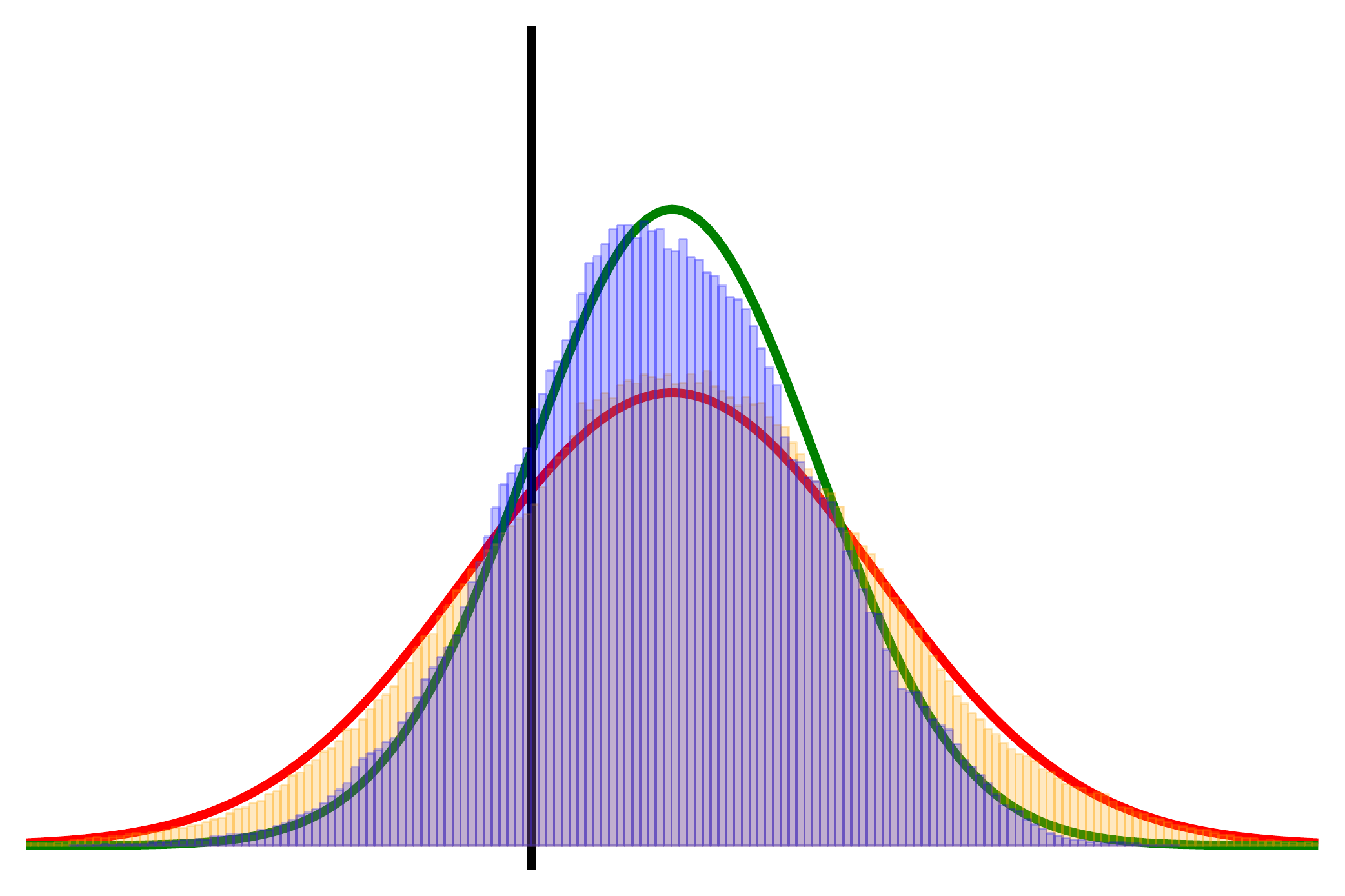}
    \caption{Parameter 2}
  \end{subfigure} & \hspace{-2em}
  \begin{subfigure}[b]{0.24\textwidth}
    \includegraphics[width=\textwidth]{./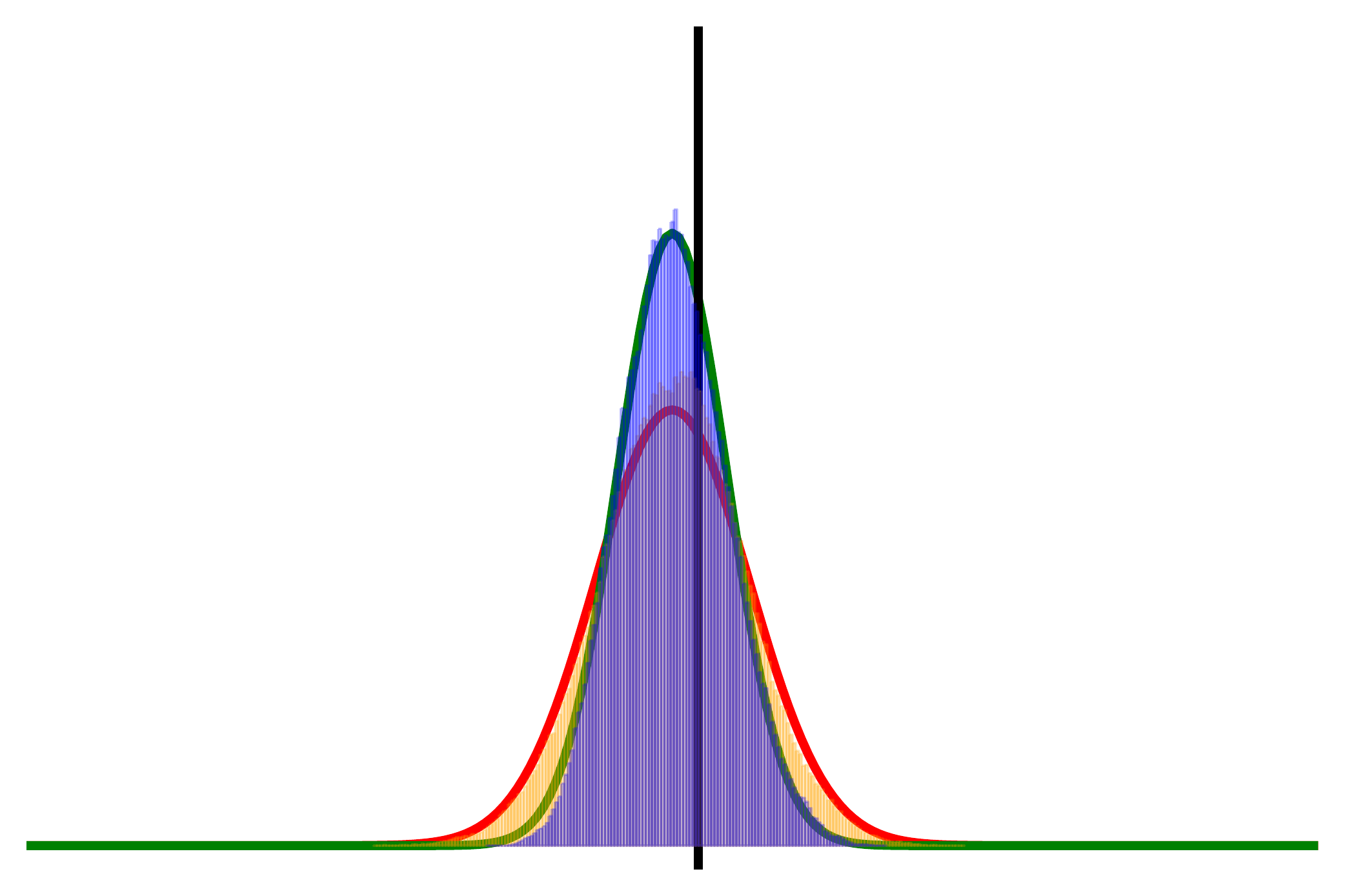}
    \caption{Parameter 3}
  \end{subfigure} & \hspace{-2em}
  \begin{subfigure}[b]{0.24\textwidth}
    \includegraphics[width=\textwidth]{./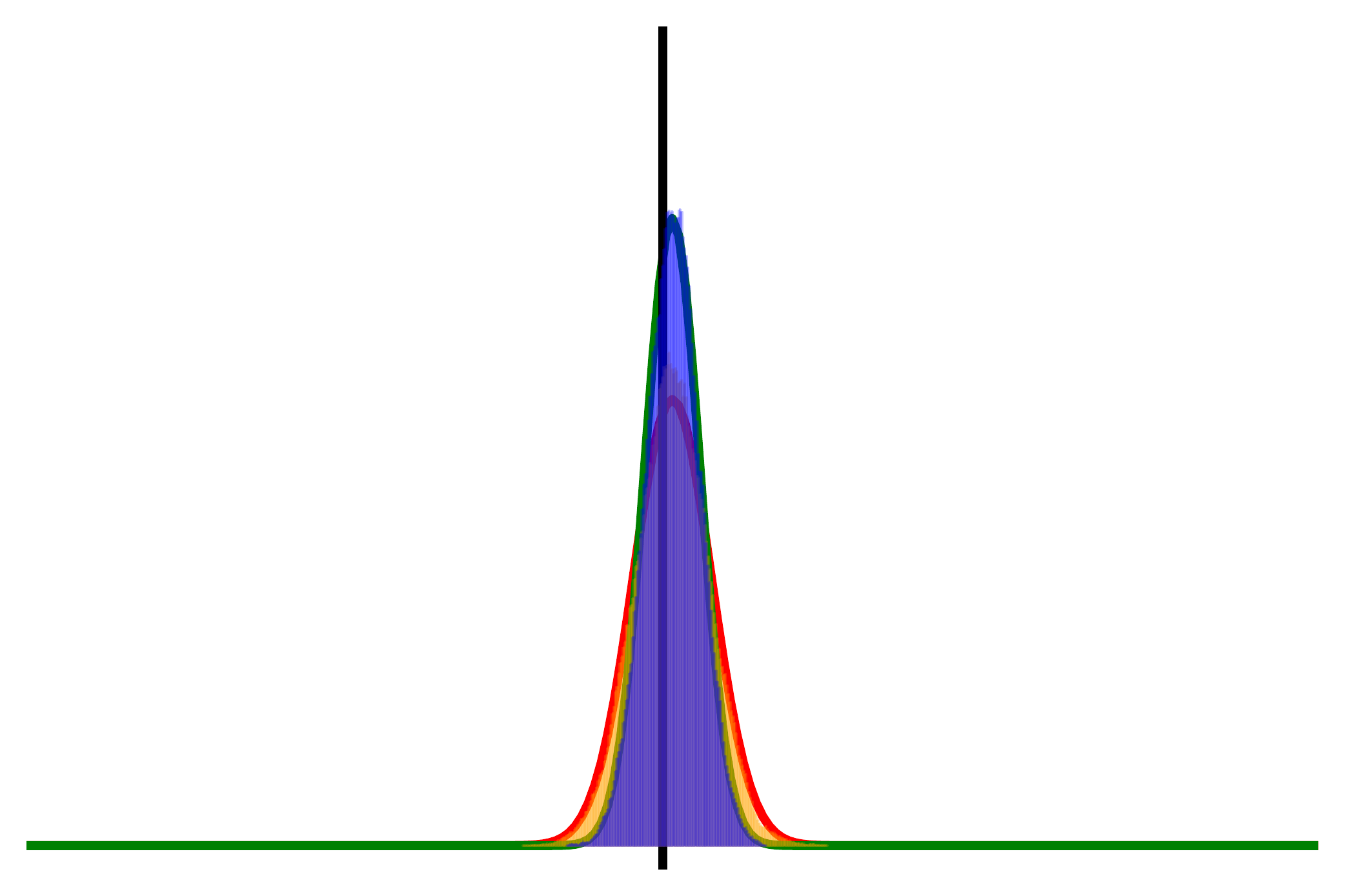}
    \caption{Parameter 24}
  \end{subfigure} & \hspace{-2em}
  \begin{subfigure}[b]{0.24\textwidth}
    \includegraphics[width=\textwidth]{./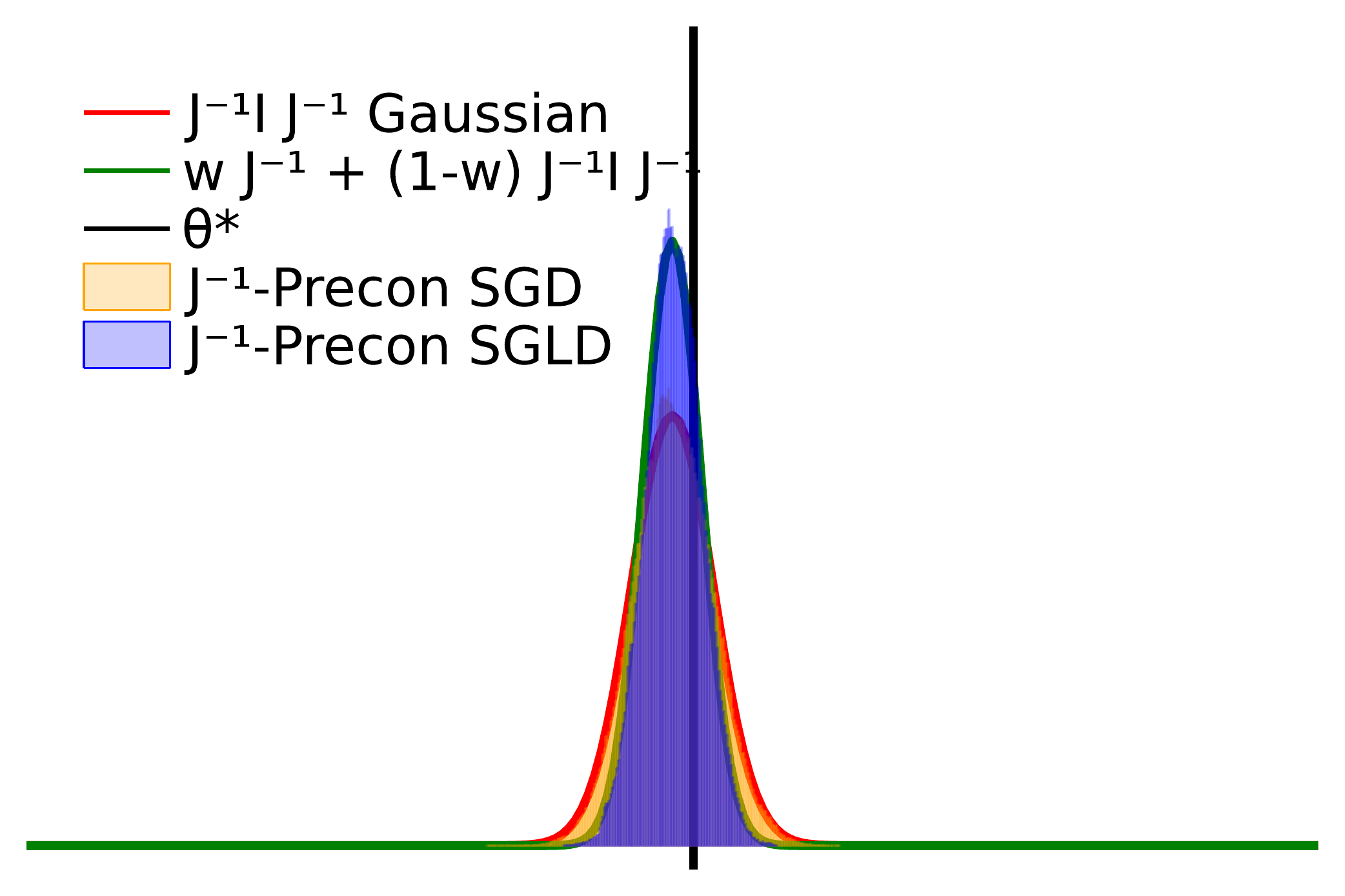}
    \caption{Parameter 25}
  \end{subfigure}
\end{tabular}
\vspace{-1em}
  \caption{Univariate results for experiment 3.}
  \label{fig:exp4-results}
\end{figure}

\section{Discussion}

Given their ubiquity, stochastic gradient methods for optimization and sampling have been analyzed from a range of mathematical perspectives. 
Our work represents a convergence between  non-statistical, continuous-time analyses \citep[e.g,][]{kushner2003stochastic} (and also an often heuristic machine learning literature), %
statistical, discrete-time characterizations \citep[e.g.,][]{toulis2017asymptotic}, and Markov chain analyses of constant--step-size algorithms \citep[e.g.,][]{dieuleveut2020bridging}.

By focusing on the practically relevant fixed--step-size\add{, large-sample} setting, we are able to characterize the stationary distributions of the limiting stochastic processes. 
In combination with our statistical perspective, we are able to derive Bernstein--von Mises theorems: Corollary \ref{cor:sgld-bvm} for the marginal iterates, which is relevant to sampling applications,
and Corollary \ref{cor:iteravg} for iterative averages, which is relevant to optimization.
The latter result complements analogous characterizations of iterate averages with decreasing step size schedules and fixed data \citep{polyak1992acceleration,kushner2003stochastic}.
Both results show that iterate averaging is robust to the choices of tuning parameters, including preconditioning, and can provide statistically optimal numerical estimates of the optimum. 
Our Bernstein--von Mises theorems offer insight into misspecified settings and clarify potential benefits of using stochastic gradients---something present in previous work from the statistical, discrete-time perspective, but limited to the marginal behaviour of individual iterates \citep{toulis2017asymptotic}.  %

Compared to previous heuristic arguments, our theory provides a more precise delineation of when continuous-time approximations are applicable \citep[c.f.][]{mandt2017stochastic,li2018statistical}.
\delete{Indeed, the intuition underlying \citet{mandt2017stochastic} was excellent and largely led to correct results. 
However, while that work lacked a rigorous proof for such scaling limits, we have established the mathematical justification for joint scaling of step size and sample size for stochastic gradient algorithms and have shown that these types of scaling limits can apply much more broadly than those authors asserted.}
For example we show that (a) there is no requirement for batch sizes to be large enough that single iteration increments are approximately Gaussian, and
(b) these scaling limits exist for much broader combinations of joint scaling of step size and sample size leading to different rates of contraction.
At the same time, the precise nature of our results allow us to more clearly understand the limitations of scaling limit analyses: heuristic calculations involving the behaviour across a large number of iterations can be replaced with corresponding approximations from the limiting process precisely when the time horizon involved is $O(1)$ on the limiting time scale. 
Our iterate averaging results provide a case in point: in order to have the same rate of contraction as the posterior distribution and/or the MLE, we must carefully choose the scaling of the step size and batch size together to ensure the spatial scaling is of order $1/\sqrt{\samplesize}$. 
Furthermore, Corollary \ref{cor:iteravg} shows that, in general, iterate averaging must be done over $m \gg 1$ epochs for the covariance of the iterate average to accurately approximate the (rescaled) covariance of the MLE (which is in contrast to the claims of \citet{mandt2017stochastic}).

\add{Our results are advances in two of the key stochastic gradient MCMC research areas identified by \citet{Nemeth:2021:SGLD}. Namely, by invoking a large-sample limit we are able to provide results that circumvent strong convexity assumptions, and we are able to provide comprehensive analyses of various tuning combinations and to make tuning recommendations that can be implemented by or for practitioners. Using this new-found understanding, we were able to explain the empirical results of some critiques of SGLD-like methods and to show that with adequate tuning (that we identify) they would have performed significantly better.}

Overall, our rigorous, continuous-time statistical approach to analyzing stochastic gradient algorithms complements existing work, yielding new insights into the practical effectiveness of stochastic gradient methods, and opens new avenues for future research. 
Because our results are expressed in terms of the joint scaling/choice of dataset size, step size, batch size, and other algorithm parameters, users can gain insight into a stochastic gradient algorithm's statistical behavior for specific choices of these values, which is not possible when taking the limit with the dataset size fixed. 

\add{
Besides the concrete guidance for tuning SGAs and the explanations of prior work's empirical observations, our methods also lay the foundation for similar analyses of other SGAs
and data generating models. Two such examples would be hierarchical models where the parameter dimension depends on the sample size and nonparametric models.  
Such analyses would allow for a systematic and fair comparison between inference methods and provide a better understanding of when stochastic gradient methods are effective. 
Another interesting new line of inquiry is to quantify in which finite-sample regimes our asymptotic results hold, which would enable more precise guidance for use in practice.
}
\delete{However, to fully realize the potential for new insights, it will be necessary to explore extensions to other algorithms and more complex scenarios (e.g., proximal updates and high-dimensional models like neural networks), with our analysis of SGLD with control variates (SGLD-FP) providing a proof-of-concept that such extensions are feasible.}

\bibliography{bib-files/scaling-limit-sgld}

\begin{thebibliography}{60}
\providecommand{\natexlab}[1]{#1}
\providecommand{\url}[1]{\texttt{#1}}
\expandafter\ifx\csname urlstyle\endcsname\relax
  \providecommand{\doi}[1]{doi: #1}\else
  \providecommand{\doi}{doi: \begingroup \urlstyle{rm}\Url}\fi

\bibitem[Ahn et~al.(2012{\natexlab{a}})Ahn, Korattikara, and Welling]{Ahn:2012}
S.~Ahn, A.~Korattikara, and M.~Welling.
\newblock {Bayesian Posterior Sampling via Stochastic Gradient Fisher Scoring}.
\newblock In \emph{International Conference on Machine Learning},
  2012{\natexlab{a}}.

\bibitem[Ahn et~al.(2012{\natexlab{b}})Ahn, Korattikara, and Welling]{ahn2012}
S.~Ahn, A.~Korattikara, and M.~Welling.
\newblock {B}ayesian posterior sampling via stochastic gradient fisher scoring.
\newblock In \emph{International Conference on Machine Learning}, pages
  1591--1598, 2012{\natexlab{b}}.

\bibitem[An et~al.(2018)An, Wang, Sun, Xu, Dai, and Zhang]{an2018pid}
W.~An, H.~Wang, Q.~Sun, J.~Xu, Q.~Dai, and L.~Zhang.
\newblock A pid controller approach for stochastic optimization of deep
  networks.
\newblock In \emph{Proceedings of the IEEE Conference on Computer Vision and
  Pattern Recognition}, pages 8522--8531, 2018.

\bibitem[Atchad{\'e}(2021)]{atchade2021approximate}
Y.~F. Atchad{\'e}.
\newblock Approximate spectral gaps for markov chain mixing times in high
  dimensions.
\newblock \emph{SIAM Journal on Mathematics of Data Science}, 3\penalty0
  (3):\penalty0 854--872, 2021.

\bibitem[Baker et~al.(2019)Baker, Fearnhead, Fox, and Nemeth]{Baker.2019}
J.~Baker, P.~Fearnhead, E.~B. Fox, and C.~Nemeth.
\newblock {Control variates for stochastic gradient MCMC}.
\newblock \emph{Statistics and Computing}, 29\penalty0 (3):\penalty0 599--615,
  2019.
\newblock \doi{10.1007/s11222-018-9826-2}.

\bibitem[Bakry et~al.(2014)Bakry, Gentil, Ledoux, et~al.]{bakry2014analysis}
D.~Bakry, I.~Gentil, M.~Ledoux, et~al.
\newblock \emph{Analysis and geometry of Markov diffusion operators}, volume
  103.
\newblock Springer, 2014.

\bibitem[Bissiri et~al.(2016)Bissiri, Holmes, and Walker]{bissiri2016general}
P.~G. Bissiri, C.~C. Holmes, and S.~G. Walker.
\newblock A general framework for updating belief distributions.
\newblock \emph{Journal of the Royal Statistical Society: Series B (Statistical
  Methodology)}, 78\penalty0 (5):\penalty0 1103, 2016.

\bibitem[Brosse et~al.(2018)Brosse, Durmus, and Moulines]{brosse2018sgld}
N.~Brosse, A.~Durmus, and E.~Moulines.
\newblock {The promises and pitfalls of Stochastic Gradient {L}angevin
  Dynamics}.
\newblock In \emph{Advances in Neural Information Processing Systems}, 2018.

\bibitem[Chen et~al.(2014)Chen, Fox, and Guestrin]{Chen:2014}
T.~Chen, E.~B. Fox, and C.~Guestrin.
\newblock {Stochastic Gradient Hamiltonian Monte Carlo}.
\newblock In \emph{International Conference on Machine Learning}, 2014.

\bibitem[Chen et~al.(2020)Chen, Lee, Tong, and Zhang]{chen2020statistical}
X.~Chen, J.~D. Lee, X.~T. Tong, and Y.~Zhang.
\newblock Statistical inference for model parameters in stochastic gradient
  descent.
\newblock \emph{Annals of Statistics}, 48\penalty0 (1):\penalty0 251--273,
  2020.

\bibitem[Coullon et~al.(2023)Coullon, South, and Nemeth]{Coullon2023}
J.~Coullon, L.~South, and C.~Nemeth.
\newblock {Efficient and generalizable tuning strategies for stochastic
  gradient MCMC}.
\newblock \emph{Statistics and Computing}, 33\penalty0 (3):\penalty0 66, 2023.
\newblock ISSN 0960-3174.
\newblock \doi{10.1007/s11222-023-10233-3}.

\bibitem[Cyrus et~al.(2018)Cyrus, Hu, Van~Scoy, and Lessard]{cyrus2018robust}
S.~Cyrus, B.~Hu, B.~Van~Scoy, and L.~Lessard.
\newblock A robust accelerated optimization algorithm for strongly convex
  functions.
\newblock In \emph{2018 Annual American Control Conference (ACC)}, pages
  1376--1381. IEEE, 2018.

\bibitem[Dieuleveut et~al.(2020)Dieuleveut, Durmus, Bach,
  et~al.]{dieuleveut2020bridging}
A.~Dieuleveut, A.~Durmus, F.~Bach, et~al.
\newblock Bridging the gap between constant step size stochastic gradient
  descent and markov chains.
\newblock \emph{Annals of Statistics}, 48\penalty0 (3):\penalty0 1348--1382,
  2020.

\bibitem[Duncan et~al.(2017)Duncan, N{\"u}sken, and Pavliotis]{duncan2017using}
A.~Duncan, N.~N{\"u}sken, and G.~Pavliotis.
\newblock Using perturbed underdamped langevin dynamics to efficiently sample
  from probability distributions.
\newblock \emph{Journal of Statistical Physics}, 169\penalty0 (6):\penalty0
  1098--1131, 2017.

\bibitem[Durmus and Moulines(2017)]{durmus2017nonasymptotic}
A.~Durmus and E.~Moulines.
\newblock Nonasymptotic convergence analysis for the unadjusted langevin
  algorithm.
\newblock \emph{The Annals of Applied Probability}, 27\penalty0 (3):\penalty0
  1551--1587, 2017.

\bibitem[Durmus and Moulines(2019)]{durmus2019high}
A.~Durmus and E.~Moulines.
\newblock High-dimensional bayesian inference via the unadjusted langevin
  algorithm.
\newblock \emph{Bernoulli}, 25\penalty0 (4A):\penalty0 2854--2882, 2019.

\bibitem[Durrett(2019)]{durrett2019probability}
R.~Durrett.
\newblock \emph{Probability: theory and examples}, volume~49.
\newblock Cambridge university press, 4th edition, 2019.

\bibitem[Ethier and Kurtz(2009)]{ethier2009markov}
S.~N. Ethier and T.~G. Kurtz.
\newblock \emph{Markov processes: characterization and convergence}, volume
  282.
\newblock John Wiley \& Sons, 2009.

\bibitem[Gelman et~al.(1997)Gelman, Gilks, and Roberts]{gelman1997weak}
A.~Gelman, W.~R. Gilks, and G.~O. Roberts.
\newblock Weak convergence and optimal scaling of random walk metropolis
  algorithms.
\newblock \emph{The annals of applied probability}, 7\penalty0 (1):\penalty0
  110--120, 1997.

\bibitem[Goodfellow et~al.(2016)Goodfellow, Bengio, and
  Courville]{Goodfellow:2016:DL-book}
I.~Goodfellow, Y.~Bengio, and A.~Courville.
\newblock \emph{{Deep Learning}}.
\newblock MIT Press, 2016.

\bibitem[Gr{\"u}nwald and Van~Ommen(2017)]{grunwald2017inconsistency}
P.~Gr{\"u}nwald and T.~Van~Ommen.
\newblock Inconsistency of bayesian inference for misspecified linear models,
  and a proposal for repairing it.
\newblock \emph{Bayesian Analysis}, 12\penalty0 (4):\penalty0 1069--1103, 2017.

\bibitem[Gupal and Bazhenov(1972)]{gupal1972stochastic}
A.~Gupal and L.~Bazhenov.
\newblock Stochastic analog of the conjugant-gradient method.
\newblock \emph{Cybernetics}, 8\penalty0 (1):\penalty0 138--140, 1972.

\bibitem[Haario et~al.(2001)Haario, Saksman, and Tamminen]{haario2001}
H.~Haario, E.~Saksman, and J.~Tamminen.
\newblock An adaptive metropolis algorithm.
\newblock \emph{Bernoulli}, 7\penalty0 (2):\penalty0 223--242, 2001.
\newblock URL \url{http://www.jstor.org/stable/3318737}.

\bibitem[Huber(1967)]{huber1967}
P.~J. Huber.
\newblock The behavior of maximum likelihood estimates under nonstandard
  conditions.
\newblock In \emph{Proceedings of the Fifth Berkeley Symposium on Mathematical
  Statistics and Probability}. Univ of California Press, 1967.

\bibitem[Huggins and Miller(2019)]{huggins2019baggedposterior}
J.~H. Huggins and J.~W. Miller.
\newblock Using bagged posteriors for robust inference and model criticism.
\newblock \emph{arXiv}, arXiv:1912.07104 [stat.ME], 2019.

\bibitem[Huggins and Miller(2023)]{huggins2023modelselection}
J.~H. Huggins and J.~W. Miller.
\newblock {Reproducible Model Selection Using Bagged Posteriors}.
\newblock \emph{Bayesian Analysis}, 18\penalty0 (1):\penalty0 79--104, 2023.

\bibitem[Kallenberg(2006)]{kallenberg2006foundations}
O.~Kallenberg.
\newblock \emph{Foundations of modern probability}.
\newblock Springer Science \& Business Media, 2006.

\bibitem[Karatzas and Shreve(2014)]{karatzas2014brownian}
I.~Karatzas and S.~Shreve.
\newblock \emph{Brownian motion and stochastic calculus}, volume 113.
\newblock springer, 2014.

\bibitem[Kleijn and Van~der Vaart(2012)]{kleijn2012bernstein}
B.~J.~K. Kleijn and A.~W. Van~der Vaart.
\newblock The {B}ernstein-von-{M}ises theorem under misspecification.
\newblock \emph{Electronic Journal of Statistics}, 6:\penalty0 354--381, 2012.

\bibitem[Kushner and Yin(2003)]{kushner2003stochastic}
H.~Kushner and G.~G. Yin.
\newblock \emph{Stochastic approximation and recursive algorithms and
  applications}.
\newblock Springer, 2003.

\bibitem[Kushner and Huang(1981)]{kushner1981asymptotic}
H.~J. Kushner and H.~Huang.
\newblock Asymptotic properties of stochastic approximations with constant
  coefficients.
\newblock \emph{SIAM Journal on Control and Optimization}, 19\penalty0
  (1):\penalty0 87--105, 1981.
\newblock \doi{10.1137/0319007}.
\newblock URL \url{https://doi.org/10.1137/0319007}.

\bibitem[Kushner and Yang(1993)]{Kushner:1993}
H.~J. Kushner and J.~Yang.
\newblock {Stochastic Approximation with Averaging of the Iterates: Optimal
  Asymptotic Rate of Convergence for General Processes}.
\newblock \emph{SIAM Journal on Control and Optimization}, 31\penalty0
  (4):\penalty0 1045--1062, 1993.
\newblock ISSN 0363-0129.
\newblock \doi{10.1137/0331047}.

\bibitem[Lessard et~al.(2016)Lessard, Recht, and Packard]{lessard2016analysis}
L.~Lessard, B.~Recht, and A.~Packard.
\newblock Analysis and design of optimization algorithms via integral quadratic
  constraints.
\newblock \emph{SIAM Journal on Optimization}, 26\penalty0 (1):\penalty0
  57--95, 2016.

\bibitem[Li et~al.(2018)Li, Liu, Kyrillidis, and Caramanis]{li2018statistical}
T.~Li, L.~Liu, A.~Kyrillidis, and C.~Caramanis.
\newblock Statistical inference using sgd.
\newblock In \emph{Proceedings of the AAAI Conference on Artificial
  Intelligence}, volume~32, 2018.

\bibitem[Ma and Yarats(2018)]{ma2018quasi}
J.~Ma and D.~Yarats.
\newblock Quasi-hyperbolic momentum and adam for deep learning.
\newblock In \emph{International Conference on Learning Representations}, 2018.

\bibitem[Ma et~al.(2015)Ma, Chen, and Fox]{Ma:2015:recipe}
Y.-A. Ma, T.~Chen, and E.~B. Fox.
\newblock {A Complete Recipe for Stochastic Gradient MCMC}.
\newblock In \emph{Advances in Neural Information Processing Systems}, pages
  2917--2925, 2015.

\bibitem[Mandt et~al.(2017)Mandt, Hoffman, and Blei]{mandt2017stochastic}
S.~Mandt, M.~D. Hoffman, and D.~M. Blei.
\newblock Stochastic gradient descent as approximate {B}ayesian inference.
\newblock \emph{The Journal of Machine Learning Research}, 18\penalty0
  (1):\penalty0 4873--4907, 2017.

\bibitem[Moulines and Bach(2011)]{bach2011non-asymptotic}
E.~Moulines and F.~Bach.
\newblock Non-asymptotic analysis of stochastic approximation algorithms for
  machine learning.
\newblock In \emph{Advances in Neural Information Processing Systems}, 2011.

\bibitem[M{\"u}ller(2013)]{muller2013risk}
U.~K. M{\"u}ller.
\newblock Risk of bayesian inference in misspecified models, and the sandwich
  covariance matrix.
\newblock \emph{Econometrica}, 81\penalty0 (5):\penalty0 1805--1849, 2013.

\bibitem[Nagapetyan et~al.(2017)Nagapetyan, Duncan, Hasenclever, Vollmer,
  Szpruch, and Zygalakis]{nagapetyan2017true}
T.~Nagapetyan, A.~B. Duncan, L.~Hasenclever, S.~J. Vollmer, L.~Szpruch, and
  K.~Zygalakis.
\newblock The true cost of stochastic gradient langevin dynamics.
\newblock \emph{arXiv preprint arXiv:1706.02692}, 2017.

\bibitem[Nemeth and Fearnhead(2021)]{Nemeth:2021:SGLD}
C.~Nemeth and P.~Fearnhead.
\newblock {Stochastic gradient Markov chain Monte Carlo}.
\newblock \emph{Journal of the American Statistical Association}, 533\penalty0
  (116):\penalty0 433--450, 2021.

\bibitem[Nemirovski et~al.(2009)Nemirovski, Juditsky, Lan, and
  Shapiro]{nemirovski2009robust}
A.~Nemirovski, A.~Juditsky, G.~Lan, and A.~Shapiro.
\newblock Robust stochastic approximation approach to stochastic programming.
\newblock \emph{SIAM Journal on Optimization}, 19\penalty0 (4):\penalty0
  1574--1609, 2009.

\bibitem[Pflug(1986)]{pflug1986stochastic}
G.~C. Pflug.
\newblock Stochastic minimization with constant step-size: asymptotic laws.
\newblock \emph{SIAM Journal on Control and Optimization}, 24\penalty0
  (4):\penalty0 655--666, 1986.

\bibitem[Pollock et~al.(2020)Pollock, Fearnhead, Johansen, and
  Roberts]{pollock2020quasi}
M.~Pollock, P.~Fearnhead, A.~M. Johansen, and G.~O. Roberts.
\newblock Quasi-stationary monte carlo and the scale algorithm.
\newblock \emph{Journal of the Royal Statistical Society: Series B (Statistical
  Methodology)}, 82\penalty0 (5):\penalty0 1167--1221, 2020.

\bibitem[Polyak and Juditsky(1992)]{polyak1992acceleration}
B.~T. Polyak and A.~B. Juditsky.
\newblock Acceleration of stochastic approximation by averaging.
\newblock \emph{SIAM journal on control and optimization}, 30\penalty0
  (4):\penalty0 838--855, 1992.

\bibitem[Raginsky et~al.(2017)Raginsky, Rakhlin, and
  Telgarsky]{raginsky2017non}
M.~Raginsky, A.~Rakhlin, and M.~Telgarsky.
\newblock Non-convex learning via stochastic gradient langevin dynamics: a
  nonasymptotic analysis.
\newblock In \emph{Conference on Learning Theory}, pages 1674--1703. PMLR,
  2017.

\bibitem[Reddi et~al.(2018)Reddi, Kale, and Kumar]{Reddi:2018:Adam}
S.~J. Reddi, S.~Kale, and S.~Kumar.
\newblock {On the Convergence of Adam and Beyond}.
\newblock In \emph{International Conference on Learning Representations}, 2018.

\bibitem[Robbins and Monro(1951)]{robbins1951stochastic}
H.~Robbins and S.~Monro.
\newblock {A Stochastic Approximation Method}.
\newblock \emph{The Annals of Mathematical Statistics}, 22\penalty0
  (3):\penalty0 400 -- 407, 1951.

\bibitem[Roberts and Rosenthal(2001)]{roberts2001optimal}
G.~O. Roberts and J.~S. Rosenthal.
\newblock Optimal scaling for various metropolis-hastings algorithms.
\newblock \emph{Statistical science}, 16\penalty0 (4):\penalty0 351--367, 2001.

\bibitem[Royall and Tsou(2003)]{royall2003interpreting}
R.~Royall and T.-S. Tsou.
\newblock Interpreting statistical evidence by using imperfect models: robust
  adjusted likelihood functions.
\newblock \emph{Journal of the Royal Statistical Society: Series B (Statistical
  Methodology)}, 65\penalty0 (2):\penalty0 391--404, 2003.

\bibitem[Rudin(1991)]{rudin1991functional}
W.~Rudin.
\newblock Functional analysis.
\newblock \emph{McGraw Hill}, 1991.

\bibitem[Stafford(1996)]{stafford1996robust}
J.~E. Stafford.
\newblock A robust adjustment of the profile likelihood.
\newblock \emph{The Annals of Statistics}, 24\penalty0 (1):\penalty0 336--352,
  1996.

\bibitem[Teh et~al.(2016)Teh, Thiery, and Vollmer]{teh2016consistency}
Y.~W. Teh, A.~H. Thiery, and S.~J. Vollmer.
\newblock Consistency and fluctuations for stochastic gradient {L}angevin
  dynamics.
\newblock \emph{The Journal of Machine Learning Research}, 17\penalty0
  (1):\penalty0 193--225, 2016.

\bibitem[Toulis and Airoldi(2017)]{toulis2017asymptotic}
P.~Toulis and E.~M. Airoldi.
\newblock Asymptotic and finite-sample properties of estimators based on
  stochastic gradients.
\newblock \emph{The Annals of Statistics}, 45\penalty0 (4):\penalty0
  1694--1727, 2017.

\bibitem[Toulis et~al.(2021)Toulis, Horel, and Airoldi]{toulis2021proximal}
P.~Toulis, T.~Horel, and E.~M. Airoldi.
\newblock The proximal robbins–monro method.
\newblock \emph{Journal of the Royal Statistical Society: Series B (Statistical
  Methodology)}, 83\penalty0 (1):\penalty0 188--212, 2021.

\bibitem[United States Department~of Tansportation(2008)]{DVN/HG7NV7_2008}
B.~o. T.~S. United States Department~of Tansportation.
\newblock Data expo 2009: Airline on time data, 2008.
\newblock URL
  \url{https://dataverse.harvard.edu/citation?persistentId=doi:10.7910/DVN/HG7NV7}.

\bibitem[Vollmer et~al.(2016)Vollmer, Zygalakis, and
  Teh]{vollmer2016exploration}
S.~J. Vollmer, K.~C. Zygalakis, and Y.~W. Teh.
\newblock Exploration of the (non-) asymptotic bias and variance of stochastic
  gradient langevin dynamics.
\newblock \emph{The Journal of Machine Learning Research}, 17\penalty0
  (1):\penalty0 5504--5548, 2016.

\bibitem[Walk(1977)]{walk1977invariance}
H.~Walk.
\newblock An invariance principle for the robbins-monro process in a hilbert
  space.
\newblock \emph{Zeitschrift f{\"u}r Wahrscheinlichkeitstheorie und verwandte
  Gebiete}, 39\penalty0 (2):\penalty0 135--150, 1977.

\bibitem[Welling and Teh(2011)]{welling2011bayesian}
M.~Welling and Y.~W. Teh.
\newblock Bayesian learning via stochastic gradient langevin dynamics.
\newblock In \emph{International Conference on Machine Learning}, pages
  681--688, 2011.

\bibitem[White(1982)]{white1982maximum}
H.~White.
\newblock Maximum likelihood estimation of misspecified models.
\newblock \emph{Econometrica}, 50\penalty0 (1):\penalty0 1--25, 1982.

\end{thebibliography}

\appendix

\section{Further applications and extensions}

We now discuss applications and extensions of our scaling limit to more complex, practically relevant stochastic gradient algorithms. The poor approximation accuracy of SGLD with uninformed tunings has led to the proposal of many alternatives \citep[e.g.,][]{pollock2020quasi,Nemeth:2021:SGLD,vollmer2016exploration}.
Of particular note are two approaches which are used to reduce the error of both stochastic optimization and sampling.
First, momentum-based methods such as
(stochastic) heavy ball \citep{gupal1972stochastic} and underdamped (stochastic gradient) Langevin dynamics \citep{an2018pid,lessard2016analysis,cyrus2018robust,ma2018quasi} aim to improve on SGLD by decreasing the mixing time of the stochastic process being discretized, typically by moving to a non-reversible process which can in general mix faster than a reversible one. %
Second, variance reduction methods aim to improve the accuracy of the approximate posterior obtained by improving the stochastic estimates of the gradients used in the update formula at each step. For example \citet{nagapetyan2017true} and \citet{Baker.2019} do this with a clever choice of control variates. Lastly, in practice, often parameter spaces are constrained, and we show that this does not affect the scaling limit.

\subsection{Applications to momentum-based algorithms}
\label{sec:example-underdamped}

Special cases of our results include momentum-based acceleration of SGD such as the quasi-hyperbolic momentum algorithm of \citet{ma2018quasi}, 
which includes many momentum-based algorithms as special cases (e.g., Nesterov's accelerated gradient, PID control algorithms \citep{an2018pid}, and more; 
see \citep[Table 1]{ma2018quasi}).
As an example, we show how we can express underdamped stochastic gradient Langevin dynamics in terms of our general stochastic gradient algorithm.
We lift the parameter space to a \emph{phase space} given by $\tilde \parmspace = \parmspace\times \Reals^\parmdim$,
for $\tilde\parm = (\parm, \psi) \in \tilde\parmspace$, we extend the log-likelihood to the phase space according to $\tilde\loglik(\tilde\parm; x) = \loglik(\parm; x) - \psi\transpose\massmatrix^{-1} \psi/2$,
and lift the prior to phase space using the (improper) prior $\tilde\priorposteriordist^{(0)}(\tilde\parm) = \prior(\parm)$.
For (stochastic) heavy ball and underdamped (stochastic gradient) Langevin dynamics  (cf., e.g., \citet[Eqs.~4~and~5]{duncan2017using}),
the lifted Hamiltonian preconditioner $\tilde\precon$ and the lifted diffusion matrix $\tilde\aniso$
are
$\tilde\precon = \sbra{\begin{smallmatrix} 0\ \ & -I \\ I & \precon \end{smallmatrix}}$
and
$\tilde\aniso = \sbra{\begin{smallmatrix} 0\ \ & 0 \\ 0\ \  & \precon\  \end{smallmatrix}}$. 
This yields a combined parameter update formula of
\[\label{eq:sgld-iter-update-momentum}
\parmiter{\samplesize}{\iternum+1}
	& = \parmiter{\samplesize}{\iternum}
		+ \frac{\stepsize\upper{\samplesize}}{2} M^{-1} {\tilde \parm_{\iternum}\upper{\samplesize}}, \qquad
		&
{\psi_{\iternum+1}\upper{\samplesize}}
	& = \rbra{I-\frac{\stepsize\upper{\samplesize}\precon}{2} M^{-1}}{\psi_{\iternum}\upper{\samplesize}}
				+ \frac{\stepsize\upper{\samplesize}}{2}\stochgrad{\samplesize}{\iternum}
				+ \sqrt{\frac{\stepsize\upper{\samplesize}}{\invtemp\upper{\samplesize}}\precon}\ \innoviter{\iternum}.
\]
The corresponding limiting process is
$
\dee\tilde \localparm_{t}
		= -\frac{1}{2} \tilde \driftMat \tilde \localparm_{t} \dee t + \vphantom{\frac{1}{2}}\sqrt{\tilde\diffusionMat} \dee\tilde\weiner_{t},
$
where $\tilde\weiner_{t}$ is a $2\parmdim$-dimensional standard Brownian motion, and the 
drift and diffusion matrices are, respectively,
$
	\tilde\driftMat
		= \const{\stepsize}\sbra{\begin{smallmatrix} 0 & -M^{-1} \\ \finfoOpt & \precon M^{-1} \end{smallmatrix}}
$
and
$\tilde\diffusionMat
		 = \ind{\batchsizePow + \stepsizePow \le \invtempPow}\frac{\const{\stepsize}^2\overline{\const{\batchsize}}}{4\const{\batchsize}}
		 \sbra{\begin{smallmatrix} 0 & 0\\ 0 & \vinfoOpt \end{smallmatrix}}
		  + \ind{\invtempPow \le \batchsizePow + \stepsizePow}\frac{\const{\stepsize}}{\const{\invtemp}} \sbra{\begin{smallmatrix} 0\ \ & 0 \\ 0\ \  & \precon\  \end{smallmatrix}}.
$

\subsection{Extension to control variates}
\label{sec:control-variates}

SGLD methods with control variates  aim improve the reliability of SGLD as an MCMC method to reduce the variance caused by mini-batching by introducing a ``zero variance control variate'' \citep{Baker.2019,nagapetyan2017true}.
Because this modification corresponds to a data-dependent change in the structure of the way stochastic gradients for the potential function are generated, this algorithm does not fit into the framework of \cref{sec:main-results}.
However, our analysis can be easily modified to apply to these control variate methods, as we show in \cref{apx:sketch-cv}.
We find that the scaling limit for SGLD with control variates is nearly the same as without control variates, except that the diffusion term corresponding to mini-batch noise is always $0$. 
This is because the average drift is (by design) not affected by the control variate, the additional Gaussian innovations have the same contribution as before, and the mini-batch noise is now always lower order.
Hence, the spatial scaling can always be chosen so that the noise from Gaussian innovations persists in the limit by taking $\localscalePow=\invtempPow/2$,
where the corresponding limiting process takes the form of \cref{eq:generic-OU} 
with $\driftMat = \const{\stepsize}\precon \finfoOpt$ %
and $\diffusionMat = \frac{\const{\stepsize}}{\const{\invtemp}} \aniso$.

\subsection{Extension to constrained parameter spaces}
\label{sec:constrained-parmspace}

If $\parmspace \subsetneq \Reals^\parmdim$, then the iterations given by \cref{eq:sgld-iter-update} may
exit $\parmspace$, resulting in undefined behaviour.
The typical way to handle this case is to impose \emph{boundary dynamics}.
The two most common examples of such boundary dynamics are \emph{reflecting} and \emph{projecting}.
Projecting maps iterates that would exit $\parmspace$ to the nearest point within $\parmspace$.
Reflecting, defined when the boundary is sufficiently smooth, treats the dynamics between two iterates as the motion of a particle in constant speed linear motion over a fixed time, and when the particle reaches the boundary it collides elastically and ``bounces'' off.
In either case the new iterate is a measurable function of the previous iterate and the vector between the previous iterate what the new iterate would have been without adjusting for the constraint.
Moreover, these conditions both satisfy that the distance between iterates is constrained by what the distance would have been without adjusting for the constraint.
In \cref{apx:sketch-bounded} we consider boundary dynamics satisfying a generalized version of this property.
    When $\parmspace \subsetneq \Reals^\parmdim$ and $\trueparm\in\interior(\parmspace)$ the proof is essentially the same %
 because, intuitively, the assumption that $\localparm\upper{\samplesize}(0)\distto \localparm(0)$ ensures that the processes we consider all start near $\trueparm$ and away from the boundary of $\parmspace$, and thus the spatial scaling drives the boundary of $\parmspace$ outside any bounded set.

\section{Additional Definitions and Technical Results}

Before presenting proofs of the various results of this work, we introduce some additional miscellaneous notations, definitions, and technical results that we will use. 

\subsection{Bernstein-von Mises under misspecification}
\begin{definition}
	The first and second order \emph{Fisher information matrices}, $\vinfo$ and $\finfo$ respectively, are defined for a log-likelihood function $\loglik$ and probability distribution $\datadist$ by
	\*[
	\vinfo(\parm)
			& = \EEE{\data\sim\datadist}\sbra{\grad_\parm\loglik(\parm; \data)\otimes \grad_\parm\loglik(\parm; \data) },
	    &
	    \andT&&
			\finfo(\parm)
	        & =  -\EEE{\data\sim\datadist} \hess_\parm\loglik(\parm; \data).
	\]
\end{definition}

Let $\dataspace$ be a Polish space with $\sigma$-field $\SA{\dataspace}$, $\probmeasures{\dataspace}$ denote the set of probability measures on $\dataspace$, and suppose that $\datadist\in\probmeasures{\dataspace}$.
Suppose that $\dataset{\Nats} \defas \rbra{\data_{\dataidx}}_{\dataidx\in\Nats} \distas \datadist^{\otimes\Nats}$.
Let $\samplesize\in\Nats$ denote a sample size, let $\range{\samplesize} \defas \set{1,\dots,\samplesize}$, and let $\dataset{n} \defas \rbra{\data_{\dataidx}}_{\dataidx\in\range{\samplesize}} \distas \datadist^{\otimes \samplesize}$ be an \iid{} sample of size $\samplesize$ from $\datadist$.

Let $\parmspace \subseteq \Reals^\parmdim$ be open and nonempty, let $Q$ be a regular conditional distribution from $\parmspace$ to $(\dataspace,\SA{\dataspace})$; i.e.:
\begin{itemize}
    \item[(i)] for all $\parm\in\parmspace$, $\modeldist{\parm}\in\probmeasures{\dataspace}$, and
    \item[(ii)] for all $A\in\SA{\dataspace}$, $\modeldist{\cdot}(A):\parm\mapsto \modeldist{\parm}(A)$ is measurable\footnote{$\parmspace$ is equipped with the Borel $\sigma$-field inherited from $\Reals^\parmdim$}.
\end{itemize}
Suppose there exists a $\sigma$-finite measure, $\basemeasure$, on $\dataspace$, such that for all $\parm\in\parmspace$, $\modeldist{\parm} \ll \basemeasure$.
Let $\modeldens{\parm}$ denote a version of $\rnderiv[inline]{\modeldist\parm}{\basemeasure}$ for each $\parm\in\parmspace$.
Let $\loglik(\parm; \dumdata) \defas \log\modeldens{\parm}(\dumdata)$ for all $\parm\in\parmspace$ and $\dumdata\in\dataspace$.
We consider $\model \defas \set{\modeldist{\parm} \ \vert \ \parm\in\parmspace}$ to be a \emph{model} for $\datadist$. The model is \emph{well-specified} when $\datadist \in \MM$, and is \emph{misspecified} otherwise.
The \emph{pseudo-true parameter} of the model is defined as $\trueparm \defas \argmax_{\parm\in\parmspace} \EEE{\data{}\distas\datadist} \loglik(\parm; \data{})$.
If $\basemeasure\ll \datadist$ then
\*[
\trueparm = \argmax_{\parm\in\parmspace} \EEE{\data{}\distas\datadist} \loglik(\parm; \data{}) = \argmin_{\parm\in\parmspace} \KL{\datadist}{\modeldist{\parm}}.
\]
Let $\priormeas\in \probmeasures{\parmspace}$ be any distribution on $\parmspace$.
Let $\jointPP \in \probmeasures{\parmspace\otimes \dataspace^{\Nats{}}}$, given by
\*[
    \jointPP(A\times B)
		& \defas \int \ind{\parm \in A} \sbra{\int \ind {\dumdata\upper{\Nats}\in B} \modeldist{\parm}^\Nats(\dee \dumdata\upper{\Nats})} \priormeas(\dee \parm)
\]
denote the joint distribution of the data and the parameter according to the model and the prior, where $\modeldist{\parm}^\Nats(\dee \dumdata\upper{\Nats})$ denotes the law of an \iid{} sequence from $\modeldist{\parm}$ (an infinite product measure on the cylinder $\sigma$-field). Let $\jointEE$ denote the expectation under $\jointPP$.
The posterior for $\parm$ under the model $\model$ given data $\dataset{\samplesize}$ is the random probability measure on $\parmspace$ given by
\*[
\posteriormeas{\dataset{\samplesize}}(A)
	& \defas \jointCEE{\dataset{\samplesize}}\sbra{\ind{\parm \in A}},
\]
where for a random variable or $\sigma$-field $G$, an expectation operator $\EE$ and a random variable $Y$, $\cEE{G}(Y)$ is the conditional expectation of $Y$ given $G$.
The posterior $\posteriormeas{\dataset{\samplesize}}$ can be viewed as a probability kernel from $\dataspace^\samplesize$ to $\parmspace$.

Let $\lebesgue$ denote the Lebesgue measure.
If $\priormeas\ll \lebesgue$ with $\rnderiv[inline]{\priormeas}{\lebesgue} \revdefas \prior$, then $\posteriormeas{\dataset{\samplesize}} \ll \lebesgue$ with $\rnderiv[inline]{\posteriormeas{\dataset{\samplesize}}}{\lebesgue} = \posterior{\dataset{\samplesize}}$ given by 
\[
    \posterior{\dataset{\samplesize}}(\parm)
        \propto {\prior(\parm) \prod_{\dataidx\in\range{\samplesize}} \modeldens{\parm}(\data_{\dataidx})} = \prior(\parm) \exp\rbra{\sum_{\dataidx\in\range{\samplesize}} \loglik(\parm; \data_\dataidx)}.
\]

Let $\parmMLE{{\samplesize}} \defas \argmax_{\parm\in\parmspace}\sum_{\dataidx\in\range{\samplesize}} \loglik(\parm; \data_\dataidx)$ denote the maximum likelihood estimator (MLE) of $\trueparm$ given the data $\dataset{\samplesize}$.
Posterior distributions have a general tendency to concentrate around the MLE as the sample size increases. Therefore, we will often reparameterize the model by considering a \emph{local parametrization}, where to each parameter $\parm\in\parmspace$ we associate a \emph{local parameter}, $\localparm\in\sqrt{\samplesize}\rbra{\parmspace - \parmMLE{{\samplesize}}}$ based on the identification
\*[
    \localparm = \sqrt{\samplesize}\rbra{\parm - \parmMLE{{\samplesize}}}
\]
and the \emph{local model} is given by
\*[
    \localmodel{\dataset{\samplesize}}
        & \defas \set{\modeldist{\parmMLE{{\samplesize}} + \frac{1}{\sqrt{\samplesize}}\localparm} \ \vert \ \localparm \in \sqrt{\samplesize}\rbra{\parmspace - \parmMLE{{\samplesize}}} }.
\]
The random localization map is given by
\*[
    \loc{\dataset{\samplesize}} : \parm \mapsto \sqrt{\samplesize}\rbra{\parm - \parmMLE{{\samplesize}}}
\]

For a measurable function $f:\Aa\to\Bb$ and a measure $\mu$ on $\Aa$, the \emph{pushforward} of $\mu$ through $f$ is the measure $\pushfwdmeas{f}{\mu}$ on $\Bb$ defined by $[\pushfwdmeas{f}{\mu}](B) = \mu(f^{-1}(B))$ for all measurable $B\subset \Bb$.

\begin{proposition}[BvM under model misspecification, \citet{kleijn2012bernstein}]
    \label{prop:misspec-bvm}
    Under regularity conditions, 
    \*[
        \normTV{\pushfwdmeas{\sbra{\loc{\dataset{\samplesize}}}}{\posteriormeas{\dataset{\samplesize}}} - \Phi} \stk\to{P} 0.
    \]
		with $\trueparm = \argmax_{\parm\in\parmspace} \EEE{\data{}\distas\datadist} \loglik(\parm; \data{})$,
    $\finfoOpt = -\EEE{\data\distas\datadist}\sbra{\hess\loglik(\trueparm;\ \data)}$,
    and $\Phi = \normaldist\rbra{0, \finfoOpt^{-1}}$.
\end{proposition}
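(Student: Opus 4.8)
The plan is to follow the two-part structure that underlies all Bernstein--von Mises theorems, adapted to the misspecified setting as in \citet{kleijn2012bernstein}: a local asymptotic normality (LAN) expansion supplies a Gaussian shape, while a posterior-concentration argument reduces the total-variation claim to a shrinking neighborhood of $\trueparm$. First I would establish the misspecified LAN expansion. For $h$ ranging over a fixed compact set, a second-order expansion of the log-likelihood ratio about $\trueparm$ gives
\[
\sum_{\dataidx=1}^{\samplesize}\sbra{\loglik\rbra{\trueparm + \tfrac{h}{\sqrt{\samplesize}}; \data_\dataidx} - \loglik\rbra{\trueparm; \data_\dataidx}}
= h\transpose \Delta_{\samplesize} - \tfrac{1}{2} h\transpose \finfoOpt h + o_p(1),
\]
where $\Delta_{\samplesize} \defas \samplesize^{-1/2}\sum_{\dataidx} \grad\loglik(\trueparm; \data_\dataidx) \distto \normaldist_{\parmdim}(0, \vinfoOpt)$ by the central limit theorem, using $\EEE{\data\distas\datadist}\grad\loglik(\trueparm;\data) = 0$, which holds because $\trueparm$ maximizes the expected log-likelihood. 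The curvature is pinned to $\finfoOpt = -\EEE{\data\distas\datadist}\hess\loglik(\trueparm;\data)$ via a uniform law of large numbers for the Hessian. The defining feature of misspecification is that the quadratic term is governed by $\finfoOpt$ while the fluctuations of the linear term are governed by $\vinfoOpt \ne \finfoOpt$; the \emph{shape} of the posterior sees only the curvature $\finfoOpt$, which is why the limit covariance is $\finfoOpt^{-1}$ rather than the sandwich.

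Second, I would prove posterior concentration: for a suitable $M_{\samplesize}\to\infty$, the posterior assigns asymptotically all mass to $\set{\parm : \norm{\parm - \trueparm} \le M_{\samplesize}/\sqrt{\samplesize}}$. Under misspecification this is driven by the expected log-likelihood gap relative to the pseudo-true parameter, and is obtained from the existence of uniformly consistent tests separating $\trueparm$ from the complements of its neighborhoods, combined with a prior-mass (Kullback--Leibler-type) lower bound controlling the denominator in Bayes' rule. These are exactly the ``regularity conditions'' alluded to in the statement, and in the misspecified case they must be phrased against the possibly non-convex landscape $\parm\mapsto\EEE{\data\distas\datadist}\loglik(\parm;\data)$.

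Third, I would combine the two. In the coordinate $h = \sqrt{\samplesize}(\parm - \trueparm)$, the localized posterior density is proportional to $\prior(\trueparm + h/\sqrt{\samplesize})\exp(h\transpose \Delta_{\samplesize} - \tfrac12 h\transpose \finfoOpt h + o_p(1))$. Continuity and positivity of the prior at $\trueparm$ make the prior factor asymptotically constant on the concentration set, so the localized posterior is asymptotically the $\normaldist_{\parmdim}(\finfoOpt^{-1}\Delta_{\samplesize}, \finfoOpt^{-1})$ density; integrating the LAN remainder over the concentration set (rather than merely pointwise in $h$) and matching normalizing constants upgrades this to total-variation convergence to that Gaussian. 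Finally I would recenter at the MLE: since $\loc{\dataset{\samplesize}}: \parm \mapsto \sqrt{\samplesize}(\parm - \parmMLE{\samplesize})$, the first-order condition $\grad\sum_\dataidx\loglik(\parmMLE{\samplesize};\data_\dataidx)=0$ together with the uniform Hessian convergence yields $\sqrt{\samplesize}(\parmMLE{\samplesize} - \trueparm) = \finfoOpt^{-1}\Delta_{\samplesize} + o_p(1)$. This deterministic (data-dependent) shift moves the center from $\finfoOpt^{-1}\Delta_{\samplesize}$ to $0$; total variation is invariant under a common translation, and the residual $o_p(1)$ contributes only a vanishing TV error between Gaussians of equal covariance, so $\pushfwdmeas{[\loc{\dataset{\samplesize}}]}{\posteriormeas{\dataset{\samplesize}}} \to \Phi = \normaldist_{\parmdim}(0, \finfoOpt^{-1})$ in total variation, as claimed.

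The main obstacle is the third step: upgrading the pointwise-in-$h$ LAN approximation to a genuine total-variation statement. This requires integrable control of the remainder uniformly over a local neighborhood that grows with $\samplesize$, while simultaneously ruling out escape of posterior mass into the tails. Under misspecification the requisite testing and entropy conditions are no longer the familiar well-specified ones and must be stated relative to the pseudo-true parameter, which is where essentially all of the regularity hypotheses are consumed; the remaining steps (the LAN expansion and the MLE linearization) are comparatively routine given \cref{ass:basic-reg,ass:gll,ass:J} and the convergence of the empirical Fisher information matrices.
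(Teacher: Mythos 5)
The paper does not actually prove this proposition: it is stated as a background result imported, under unspecified ``regularity conditions,'' directly from \citet{kleijn2012bernstein}. Your sketch correctly reconstructs that reference's proof strategy---the misspecified LAN expansion with curvature $\finfoOpt$ and score fluctuations governed by $\vinfoOpt$, $\sqrt{\samplesize}$-rate posterior concentration via tests and prior-mass bounds, the upgrade from pointwise LAN to total variation on the localized scale, and recentering at the MLE via $\sqrt{\samplesize}(\parmMLE{\samplesize}-\trueparm)=\finfoOpt^{-1}\Delta_{\samplesize}+o_p(1)$---so your proposal is correct and follows essentially the same approach as the source the paper relies on by citation.
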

 
 \subsection{Convergence modes of measures and operators}
Let $\Aa$ be a measurable space, and let $\bddmeasfuns(\Aa)$ denote the collection of bounded measurable functions on $\Aa$. 
For a function $f:\Aa\to \banachspace$ with $(\banachspace,\norm{\cdot})$ a normed linear space, define
\*[
	\normInf{f}
		& \defas \sup_{a\in \Aa} \norm{f(a)}.
\]
For a sequence of probability measures, $\set{\mu_n}_{n\in\Nats}$ and a probability measure $\mu$ on a measurable space $\Aa$, we have the following modes of convergence:
\begin{itemize}
	\item $\mu_n$ converges in \emph{total variation} to $\mu$, denoted by $\mu_n \stk\to{TV} \mu$, \emph{if and only if}
		\*[
			\sup_{f\in \bddmeasfuns(\Aa)} \frac{\abs{\mu_n f - \mu f}}{\normInf{f}}
				&\to 0.
		\]
	\item if $\Aa$ is also a topological space and the $\sigma$-field on $\Aa$ is the Borel $\sigma$-field, then $\mu_n$ converges \emph{in distribution} (also called \emph{weakly}) to $\mu$, denoted by $\mu_n \distto \mu$, \emph{if and only if} for all $f\in\contbddfuns(\Aa)$, $\abs{\mu_n f - \mu f} \to 0$.
\end{itemize}
Clearly
	\*[
		\rbra{\mu_n \stk\to{TV} \mu} \implies \rbra{\mu_n \stk\to{s} \mu} \implies \rbra{\mu_n \distto \mu}
	\]
while the converses do not hold in general. 

For a Banach Space $\banachspace$ with norm $\norm{\cdot}$ denote its dual space (the space of all bounded linear operators on $\banachspace$) by $\dualbanachspace$. 
$\dualbanachspace$ is a Banach space with norm $\norm{y}\defas \sup_{x\in \banachspace\setminus\set{0}} \abs{fx} / \norm{x}$ for all $f\in\dualbanachspace$. 
Denote the set of bounded linear operators from $\banachspace$ to itself by $\bddlinear(\banachspace)$. 
$\bddlinear(\banachspace)$ is also a Banach space with norm given by $\norm{T} = \sup_{x\in \banachspace\setminus\set{0}} \norm{Tx} / \norm{x}$.

For a sequence of bounded linear operators, $\set{T_n}_{n\in\Nats}$, and a bounded linear operator, $T$, all mapping a Banach Space $L$ to itself, we have the following modes of convergence:
\begin{itemize}
	\item $T_n$ converges \emph{in norm} to $T$ if and only if
		\[
			\norm{T_n-T} = \sup_{(x,y)\in L\times L'}\frac{\abs{\inner{y}{(T_n-T)x}}}{\norm{x}\norm{y}} \to 0
		\]
	\item $T_n$ converges \emph{strongly} to $T$, denoted $T_n \stk\to{s} T$ if and only if for all $x\in L$
		\[
			\sup_{y\in L'}\frac{\abs{\inner{y}{(T_n-T)x}}}{\norm{y}} \to 0
		\]
\end{itemize}
Clearly
\*[
	\rbra{\norm{T_n-T}\to 0} \implies \rbra{T_n \stk\to{s} T}
\]
while the converse does not hold in general.
 
\subsection{Operator Semigroups and Weak Convergence of Markov Processes} \label{sec:semigroups-etc}

For a Banach space, $(\banachspace,\norm{\cdot})$, let $\bddlinear(\banachspace)$ denote the collection of all bounded linear operators from $\banachspace$ to itself, and let $I$ denote the identity operator.
An \emph{operator semigroup} on $\banachspace$ is a function $T:\PosReals \to \bddlinear(\banachspace)$ such that
\begin{enumerate}
    \item[i)] $\semigroup(0) = I$,
    \item[ii)] $\semigroup(t+s) = \semigroup(t)\semigroup(s)$ for all $t,s\in\Reals$.
\end{enumerate}
An operator semigroup is \emph{strongly continuous} if
\begin{enumerate}
    \item[iii)] $\lim_{t\to 0^+} \norm{\semigroup_{t} \testfun - \testfun} = 0$ for all $\testfun \in \banachspace$.
\end{enumerate}
An operator semigroup is \emph{contractive} if
\begin{enumerate}
    \item[iv)] $\norm{T_{t}} \leq 1$ for all $t\in\PosReals$.
\end{enumerate}
The \emph{infinitesimal generator} (or just \emph{generator}, for brevity) of the semigroup $\semigroup$ is the (possibly unbounded) linear operator defined by
\*[
    \generator \testfun = \lim_{t\to 0^+} \frac{\semigroup_{t}\testfun - \testfun}{t}
\]
for $\testfun\in \domain(\generator) = \set{\testfun\in\banachspace \ \vert \ \lim_{t\to 0^+} \rbra{\semigroup_{t}\testfun - \testfun}/{t} \text{ exists}}$.
Let
\*[
    \contvanfuns(\Reals^\parmdim)
        & = \set{\testfun\in\contfuns(\Reals^\parmdim) \ \vert \ \forall\epsilon>0\  \exists K_{\testfun,\epsilon}\subset\Reals^\parmdim \text{ compact} \withT \sup_{\parm\not\in K_{\testfun,\epsilon}}\abs{\testfun(\parm)}\leq \epsilon}
\]
Then $\contvanfuns(\Reals^\parmdim)$ is a Banach space under the norm $\normInf{\testfun} = \sup_{\parm\in\Reals^\parmdim}\abs{\testfun(\parm)}$.
The dual space of $\contvanfuns(\Reals^\parmdim)$ is the space of bounded signed measures under the \emph{total variation norm}
\*[
    \normTV{\mu} = \sup_{\substack{\testfun \in \contvanfuns(\Reals^\parmdim) \\ \normInf{f}\leq 1}} \abs{\int f(\parm) \mu(d\parm) }.
\]
We will work with $\rbra{\banachspace,\norm{\cdot}} = \rbra[1]{\contvanfuns(\Reals^\parmdim), \normInf{\cdot}}$.
A semigroup on $\rbra[1]{\contvanfuns(\Reals^\parmdim), \normInf{\cdot}}$ is \emph{positive} if
\begin{enumerate}
    \item[v)] $\testfun\geq 0 \implies \semigroup\testfun \geq 0$.
\end{enumerate}
A semigroup on $\rbra[1]{\contvanfuns(\Reals^\parmdim), \normInf{\cdot}}$ is \emph{Feller} if it is strongly continuous, contractive, and positive.

Semigroups naturally model the \emph{forward operators} of Markov processes in continuous time. If $X_{t}$ is a Markov process with transition kernels $\kernel_t(\cdot,\cdot)$ then the forward operator corresponding to the Markov process (equivalently, corresponding to its transitio kernels) is defined by
\[
    \semigroup_{t} \testfun(x)
        & = \EE_x \testfun(X_{t}) = \int \testfun(y) \kernel_t(x,dy)
\]
where $\EE_x$ denotes expectation under the law of the Markov process given when $X(0) = x$ almost surely. The semigroup property is then equivalent to the Kolmogorov forward equation.

The generator, $\generator$, of a Feller semigroup $\semigroup$ has a dense domain; $\domain(\generator)$ is dense in $\contvanfuns(\Reals^\parmdim)$. 
A Markov process for which the corresponding forward operators form a a Feller semigroup is called a \emph{Feller process}. 
Feller processes have a richly developed theory; see, for example, \citet{ethier2009markov} or \citet{kallenberg2006foundations}.
The following facts will be useful to us. 
First, every Feller process on $\Reals^\parmdim$ has a version with \emph{\cadlag} (a.k.a \emph{right continuous with left limits}, or  \emph{rcll}) paths, that is for all $t>0$, $\lim_{s\to t^-} X(s)$ exists and $\lim_{s\to t^+} X_{t}$. 
Second for each $I \in \set{[0,T] \ \vert \ T>0} \union \set{\PosReals}$, the collection of all \cadlag functions from $I$ to $\Reals^\parmdim$ is a separable and complete metric space under the \emph{Skorohod metric} \citep[Theorem~A2.2]{kallenberg2006foundations}.
The formula for the Skorohod metric is not particularly illuminating, so is omitted here and may be found in the reference.
This space is denoted by $\skspace(I,\Reals^\parmdim)$.
The Borel $\sigma$-field generated by the Skorohod metric is equal to $\sigma(\set{\pi_t \ \vert \ t\in I'})$ where $\pi_t(X) = X_{t}$ are the projection maps, and $I'$ is any dense subset of $I$.

Let $\smoothcompactsuppfuns(\Reals^\parmdim)$ be the set of functions $\Reals^d\to\Reals$ with compact support and with continuous derivatives of all orders. $\smoothcompactsuppfuns(\Reals^\parmdim)$ is dense in $\contbddfuns(\Reals^\parmdim)$.

\begin{proposition}[Approximation of Markov Chains (compiled from  {\citet{ethier2009markov}}]
	\label{prop:semigroup-convergence-discrete-ek}
	Let $\generator: \smoothcompactsuppfuns(\Reals^\parmdim) \to \contbddfuns(\Reals^\parmdim)$ be linear and suppose that the closure of the graph of $\generator$ (with respect to the graph norm defined by $\norm{f}_\generator = \normInf{f}+\normInf{A f}$ for all $f\in\banachspace$) generates a Feller semigroup $\semigroup$ on $\Reals^\parmdim$.
	Let $\rbra{\limparm_{t}}_{t\in\PosReals}$ be a Markov process with forward operator semigroup $\semigroup$. Let $\rbra{(\parmiter{\samplesize}{\iternum})_{\iternum\in\NatsO}}_{\samplesize\in\Nats}$ be a sequence of (discrete-time) Markov chains on $\Reals^\parmdim$ with respective transition kernels $(\mkernel\upper{\samplesize})_{\samplesize\in\Nats}$. Suppose that $0<\timescale\upper{\samplesize} \to \infty$, and let
    \*[
        \generator\upper{\samplesize}
            & = \timescale\upper{\samplesize}\rbra{\mkernel\upper{\samplesize} - I}
        &&
        & \semigroup\upper{\samplesize}_{t}
            & = \rbra{\mkernel\upper{\samplesize}} ^{\floor{\timescale\upper{\samplesize}\ t }}
        &&
        & \limparm\upper{\samplesize}_{t} = \parmiter{\samplesize}{\floor{\timescale\upper{\samplesize}\ t}}.
    \]
		
	If $\normInf{\generator\upper{\samplesize} \testfun - \generator \testfun} \to 0$
    for all $\testfun \in \smoothcompactsuppfuns(\Reals^\parmdim)$,
    then
    \begin{itemize}
        \item[(a)] $\semigroup\upper{\samplesize}_{t} \stk\to{s} \semigroup_{t}$ for each $t>0$, and
        \item[(b)] If $\limparm\upper{\samplesize}(0) \distto \limparm(0)$ then $\limparm\upper{\samplesize}(\cdot) \distto \limparm(\cdot)$ in the Skorohod metric.
    \end{itemize}
\end{proposition}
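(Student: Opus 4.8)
The plan is to assemble two classical results, both available in \citet{ethier2009markov}: the discrete-time Trotter--Kato (Sova--Kurtz) approximation theorem for part~(a), and the generator criterion for weak convergence of Markov processes for part~(b). Throughout I work on the Banach space $\banachspace = \contvanfuns(\Reals^\parmdim)$ under $\normInf{\cdot}$, on which $\semigroup$ is by hypothesis a Feller---hence strongly continuous and contractive---semigroup whose generator is the closure $\overline{\generator}$ of $(\generator,\smoothcompactsuppfuns(\Reals^\parmdim))$.

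First I would verify the hypotheses of the approximation theorem. Because $\overline{\generator}$ is by construction the closure of $\generator$ restricted to $\smoothcompactsuppfuns(\Reals^\parmdim)$, that set is a \emph{core} for $\overline{\generator}$, and it is precisely the set on which the assumed convergence $\normInf{\generator\upper{\samplesize}\testfun - \generator\testfun}\to 0$ holds. Each transition operator $\mkernel\upper{\samplesize}$ is a contraction on $\banachspace$---a Markov kernel does not increase the sup-norm, and I will use below that the pre-limit chains are Feller so that $\mkernel\upper{\samplesize}$ preserves $\contvanfuns(\Reals^\parmdim)$---whence the discrete iterates $\semigroup\upper{\samplesize}_{t} = (\mkernel\upper{\samplesize})^{\floor{\timescale\upper{\samplesize} t}}$ are contractions as well. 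Choosing the trivial approximating sequence $\testfun_\samplesize = \testfun$ for each $\testfun$ in the core reduces the Trotter--Kato hypothesis $\generator\upper{\samplesize}\testfun_\samplesize \to \generator\testfun$ to exactly the assumed generator convergence. Part~(a), namely $\semigroup\upper{\samplesize}_{t}\stk\to{s}\semigroup_{t}$ for each $t>0$ (indeed uniformly on compact time intervals), is then immediate from the theorem.

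For part~(b) I would invoke the weak-convergence theorem for Markov processes (Chapter~4 of \citet{ethier2009markov}), whose two hypotheses are now in hand: generator convergence on a core of a Feller generator, and convergence of the initial laws $\limparm\upper{\samplesize}(0)\distto\limparm(0)$. The theorem then yields $\limparm\upper{\samplesize}(\cdot)\distto\limparm(\cdot)$ in $\skspace(\PosReals,\Reals^\parmdim)$ under the Skorohod metric, using that the Feller process $\limparm$ admits a \cadlag version and so lives in Skorohod space (as recorded above).

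The real content, and hence the main obstacle, is packaged inside that Chapter~4 theorem: turning operator-level convergence into process-level convergence requires both convergence of the finite-dimensional distributions---which follows by composing the semigroup convergence of part~(a) with the convergence of initial laws---and relative compactness (tightness) of the laws of $\limparm\upper{\samplesize}$ in the Skorohod space. The tightness is the genuinely delicate ingredient, and in the Feller setting of \citet{ethier2009markov} it is extracted from the generator convergence on the core together with the limiting semigroup preserving $\contvanfuns(\Reals^\parmdim)$. The one structural hypothesis that must be checked separately when applying the proposition is that each pre-limit operator $\mkernel\upper{\samplesize}$ is itself Feller on $\contvanfuns(\Reals^\parmdim)$; for the meta-algorithm this holds because its one-step update is continuous in the current state and its additive noise renders the transition kernel weakly continuous with the decay at infinity required to preserve $\contvanfuns(\Reals^\parmdim)$.
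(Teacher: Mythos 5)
Your proposal follows essentially the same route as the paper: the paper's proof of (a) is a citation of Chapter~1, Theorem~6.5 of \citet{ethier2009markov} (the discrete-time Trotter--Kato/Sova--Kurtz theorem, applied with the trivial approximating sequence exactly as you do), and its proof of (b) is a citation of Chapter~4, Theorem~8.2 together with Corollaries~8.5 and~8.9 (the Markov-chain weak-convergence machinery, whose hypotheses---core convergence of generators plus convergence of initial laws---you identify correctly, including the fact that finite-dimensional convergence and Skorohod tightness are packaged inside).

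The one substantive point where you deviate, and where your argument is actually weaker than the proposition, is the Feller hypothesis you impose on the pre-limit kernels. You run the single-space Trotter--Kato theorem on $\banachspace=\contvanfuns(\Reals^\parmdim)$, which forces each $\mkernel\upper{\samplesize}$ to map $\contvanfuns(\Reals^\parmdim)$ into itself, and you then declare this a ``structural hypothesis that must be checked separately.'' The proposition makes no such assumption, and the cited results in \citet{ethier2009markov} are engineered precisely to avoid it: in their two-space formulation (Chapter~1, Theorem~6.9, and the Chapter~4 results the paper invokes), the pre-limit transition operators need only be contractions on the space $\bddmeasfuns(\Reals^\parmdim)$ of bounded measurable functions, with the limiting semigroup Feller on $\contvanfuns(\Reals^\parmdim)$ and the generator convergence measured in sup norm via the natural inclusion of $\contvanfuns(\Reals^\parmdim)$ into $\bddmeasfuns(\Reals^\parmdim)$---which is exactly the hypothesis $\normInf{\generator\upper{\samplesize}\testfun-\generator\testfun}\to0$ as stated. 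This matters for the intended application: the meta-algorithm's one-step kernel is weakly continuous under mild conditions, but $C_0$-preservation can fail (e.g., SGD with $\invtemp=+\infty$ and a drift map that is not proper, so that faraway states can be mapped back into a compact set), so grafting the Feller requirement onto the proposition would genuinely shrink its scope. The fix is simply to cite the two-space versions rather than verify Feller-ness of the chains; with that substitution your argument matches the paper's.
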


\begin{proof}[Proof of \cref{prop:semigroup-convergence-discrete-ek}]
	(a) Follows from Chapter~1, Theorem~6.5 of \citet{ethier2009markov}.
	(b) Follows by combining Chapter~4, Theorem~8.2, Corollary~8.5, and Corollary~8.9 of \citet{ethier2009markov}.
\end{proof}

\subsection{Miscellaneous notation and definitions}

\begin{definition}[Convergence in Probability to a constant]
	\label{def:conv-prob}
	Let $(\Omega,\Ff,\Pr)$ be a probability space, let $(\Xx,\tau)$ be a topological space endowed with the $\sigma$-field $\Ff_\Xx = \sigma(\tau)$, let $\rbra{X_n}_{n\in\Nats}$ be a sequence of $\Xx$-valued random elements, and let $x\in\Xx$. 
	Then \emph{$X_n$ converges to $x$ in probability as $n\to\infty$}, denoted $X_n\stk\to{p}x$, 
	when for every neighbourhood $x\in U\in \tau$ we have
	\*[
		\lim_{n\to\infty}\Pr(X_n \in U^c) = 0.
	\]
\end{definition}	

\begin{lemma}
	\label{lem:conv-prob-subseq}
	Let $(\Omega,\Ff,\Pr)$ be a probability space, let $(\Xx,\tau)$ be a topological space endowed with the $\sigma$-field $\Ff_\Xx = \sigma(\tau)$, let $\rbra{X_n}_{n\in\Nats}$ be a sequence of $\Xx$-valued random elements, and let $x\in\Xx$. 
	
	If for every sub-sequence $n_m$ there is a sub-sub-sequence $n_{m_k}$ such that $X_{n_{m_k}}\to x$ almost surely as $k\to\infty$ then $X_n\stk\to{p}x$.\
	
	If $(\Xx,\tau)$ is first-countable then the converse also holds; if $X_n\stk\to{p}x$ then for every sub-sequence $n_m$ there is a sub-sub-sequence $n_{m_k}$ such that $X_{n_{m_k}}\to x$ almost surely as $k\to\infty$.
\end{lemma}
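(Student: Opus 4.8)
The plan is to prove the two implications separately: the sufficiency direction by a contradiction argument built on the bounded convergence theorem, and the converse by a fast-subsequence selection combined with the Borel--Cantelli lemma. Throughout, the only genuinely delicate point is that $(\Xx,\tau)$ is merely a topological space, so I must consistently translate ``$y_k\to x$'' into its topological meaning, namely that for every neighbourhood $U$ of $x$ the sequence is eventually in $U$, and keep track of measurability of the events involved.

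First I would handle the sufficiency direction. Suppose, for contradiction, that $X_n$ does not converge to $x$ in probability. By \cref{def:conv-prob} there exist a neighbourhood $U\in\tau$ of $x$, a constant $\epsilon>0$, and a subsequence $(n_m)$ with $\Pr(X_{n_m}\in U^c)\geq \epsilon$ for all $m$; here $\{X_n\in U^c\}\in\Ff$ because $U^c\in\Ff_\Xx=\sigma(\tau)$ and each $X_n$ is measurable. Applying the hypothesis to $(n_m)$ yields a sub-subsequence $(n_{m_k})$ with $X_{n_{m_k}}\to x$ almost surely. The key step is that topological convergence $X_{n_{m_k}}(\omega)\to x$ forces $X_{n_{m_k}}(\omega)\in U$ for all large $k$, so $\ind{X_{n_{m_k}}\in U^c}\to 0$ almost surely; since these indicators are bounded by $1$, bounded convergence gives $\Pr(X_{n_{m_k}}\in U^c)=\EE[\ind{X_{n_{m_k}}\in U^c}]\to 0$, contradicting the lower bound $\epsilon$.

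For the converse I would use first-countability to reduce the uncountable family of neighbourhoods to a countable nested base. Fix a countable neighbourhood base $(V_j)_{j\in\Nats}$ at $x$ and set $U_j\defas\bigcap_{i\leq j}V_i$; each $U_j$ is a neighbourhood, the $U_j$ are decreasing, and they still form a base (any neighbourhood of $x$ contains some $V_i$, hence contains $U_j$ for all $j\geq i$). Consequently $y_k\to x$ holds \emph{iff} for every $j$ the sequence is eventually in $U_j$. Now, given any subsequence $(n_m)$, the hypothesis $X_n\stk\to{p}x$ gives, for each fixed $k$, that $\Pr(X_{n_m}\in U_k^c)\to 0$ as $m\to\infty$, so I may select indices $m_1<m_2<\cdots$ strictly increasing with $\Pr(X_{n_{m_k}}\in U_k^c)\leq 2^{-k}$. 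Then $\sum_k\Pr(X_{n_{m_k}}\in U_k^c)<\infty$, so by Borel--Cantelli almost surely $X_{n_{m_k}}\in U_k$ for all large $k$; since the $U_k$ are nested, for each fixed $j$ and almost every $\omega$ we get $X_{n_{m_k}}(\omega)\in U_k\subseteq U_j$ once $k\geq\max(j,K(\omega))$, whence $X_{n_{m_k}}\to x$ almost surely.

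I expect the main obstacle to be purely conceptual rather than computational: doing everything in a possibly non-metrizable space means I cannot use sequential distances, so I must phrase convergence through basic neighbourhoods and verify at each stage that the events (such as $\{X_{n_{m_k}}\in U_k^c\}$) lie in $\Ff$, which follows since $U_k\in\tau\subseteq\Ff_\Xx$ and countable operations preserve measurability. First-countability enters exactly once, in the converse, where the passage to a countable nested base is what makes the Borel--Cantelli argument available; the sufficiency direction needs no countability at all.
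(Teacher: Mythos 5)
Your proposal is correct and follows essentially the same route as the paper, which simply invokes the standard argument of Durrett's Theorem 2.3.2 and notes that one generalizes by ``replacing a sequence of balls of vanishing radius with a countable neighbourhood basis'' --- exactly the nested base $U_j = \bigcap_{i\le j} V_i$ plus Borel--Cantelli construction you give for the converse, and the bounded-convergence/contradiction argument you give for sufficiency. You have merely written out in full the details the paper leaves to the reader, including the correct observation that first-countability is needed only for the direction going from convergence in probability to almost-sure subsequences.
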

The proof of this result is the same as in \citet[Theorem~2.3.2]{durrett2019probability}, generalizing the metric space definition of convergence in probability and replacing a sequence of balls of vanishing radius with a countable neighbourhood basis. 
\section{Proof of {\cref{thm:sgld-scaling-limit-multiple-scalings-boundary-simple}}}
\label{sec:pf:sgld-scaling-limit-multiple-scalings}

In this section we prove \cref{thm:sgld-scaling-limit-multiple-scalings-boundary-simple}, as well as an additional result along with what was stated, since both follow from the same premises. The full statement of what we prove is given below. Item 2 below is used in the proof of \cref{cor:sgld-bvm}.

\begin{theorem}[Scaling Limits of SGD/SGLD/LD (Full)]
    \label{thm:sgld-scaling-limit-multiple-scalings-boundary}
    \label{THM:SGLD-SCALING-LIMIT-MULTIPLE-SCALINGS-BOUNDARY}

    Suppose that $(\parmiter{\samplesize}{\iternum})_{\iternum\in\Nats}$ evolves according to the gradient-based algorithm in \cref{eq:sgld-iter-update-boundary} with step size $\stepsize\upper{\samplesize} = \const{\stepsize}\samplesize^{-\stepsizePow}$, $\batchsize\upper{\samplesize} = \floor{\const{b}\samplesize^\batchsizePow}$,
    $\invtemp\upper{\samplesize} = \const{\invtemp}\samplesize^{\invtempPow}$,
    all other tuning parameters constant in $\samplesize$.
    Let $\trueparm \in \Reals^\parmdim$.
    Let $\dataset{\Nats} = \rbra{\data_{\dataidx}}_{\dataidx\in\Nats} \distas \datadist^{\otimes\Nats}$,
    and $\parmLMLE{{\samplesize}}$ be a critical point of the log-likelihood function $\sum_{\dataidx=1}^\samplesize \loglik(\cdot,\data_\dataidx)$ for each $\samplesize\in\Nats$; that is $\sum_{\dataidx=1}^\samplesize \grad\loglik(\parmLMLE{{\samplesize}},\data_\dataidx) = 0$ for all $\samplesize\in\Nats$.

    Let $\localparm\upper{\samplesize}_{t} = \localscale\upper{\samplesize}\rbra{\parmiter{\samplesize}{\floor{\timescale(\samplesize) t}} - \parmLMLE{{\samplesize}}}$, where $\localscale\upper{\samplesize} = \samplesize^{\localscalePow}$,  $\timescale\upper{\samplesize} = \samplesize^{\timescalePow}$, $\localscalePow\in\ointer{0,1}$,
    \*[
        \timescalePow = \min\cbra{\stepsizePow,\  \rbra{\invtempPow+\stepsizePow-2\localscalePow},\ \rbra{\batchsizePow + 2\stepsizePow - 2\localscalePow}}
    .
    \]

    If \cref{ass:basic-reg,ass:gll,ass:grad-norm-mom,ass:J,ass:V} all hold, $\timescalePow > 0$, and $\localparm\upper{\samplesize}(0) \distto \localparm(0)$ then
    \begin{enumerate}
        \item $
            (\localparm\upper{\samplesize}_{t})_{t\in\PosReals} \distto (\localparm_{t})_{t\in\PosReals}$
        in the Skorohod topology in probability,
        where
        $(\localparm_{t})_{t\in\Reals}$  follows the Ornstein--Uhlenbeck process:
        \*[
            d \localparm_{t} = - \frac{\const{\driftName}}{2}\precon \finfo(\trueparm) \localparm_{t} dt + \sqrt{\const{\langNoiseName} \aniso + \const{\mbNoiseName} \precon \vinfo(\trueparm) \precon'} \ d\weiner_{t},
        \]
        with
        \*[
            \const{\driftName} =
                & \begin{cases}
                    \const{\stepsize}
                        & \timescalePow = \stepsizePow \\
                    0   & \timescalePow < \stepsizePow
                \end{cases}
            &,\qquad &&
            \const{\langNoiseName} =
                & \begin{cases}
                    \frac{\const{h}}{\const{\invtemp}}
                        & \timescalePow = \stepsizePow+\invtempPow-2\localscalePow \\
                    0   & \timescalePow < \stepsizePow+\invtempPow-2\localscalePow \\
                \end{cases}
        \]
        and
        \*[
            \const{\mbNoiseName} =
                & \begin{cases}
                    \frac{\const{\stepsize}^2 (1-\const{\batchsize})}{4\const{\batchsize}}
                        & \timescalePow = 1 + 2\stepsizePow - 2\localscalePow \andT \batchsizePow = 1 \andT \text{no replacement}\\
                    \frac{\const{\stepsize}^2}{4\const{\batchsize}}
                        & \timescalePow = \batchsizePow + 2\stepsizePow - 2\localscalePow \andT ( \batchsizePow \neq 1 \orT \text{replacement} )\\
                    0   & \timescalePow < \batchsizePow + 2\stepsizePow - 2\localscalePow.
                \end{cases}
        \]
        \item If $\semigroup\upper{\samplesize}$ and $\semigroup$ are defined as in \cref{prop:semigroup-convergence-discrete-ek},
            then under the conditions above, every subsequence of $\rbra{\semigroup\upper{\samplesize}}_{\samplesize\in\Nats}$, then $\rbra{\semigroup\upper{\subseqsamplesize}}_{\subseqidx\in\Nats}$,
            has a further sub-subsequence, $\rbra{\semigroup\upper{\subsubseqsamplesize}}_{\subsubseqidx\in\Nats}$,
            such that with probability $1$,
     $\semigroup\upper{\subsubseqsamplesize}_{t} \stk\to{s} \semigroup_{t}$ for all $t>0$.
 \end{enumerate}
\end{theorem}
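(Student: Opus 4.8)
The plan is to obtain both conclusions as consequences of the operator–semigroup approximation result \cref{prop:semigroup-convergence-discrete-ek}, applied along almost-surely convergent subsequences extracted via \cref{lem:conv-prob-subseq}. Write $\mkernel\upper{\samplesize}$ for the one-step transition kernel of the local chain $\localparm\upper{\samplesize}$ and $\generator\upper{\samplesize}\defas\timescale\upper{\samplesize}(\mkernel\upper{\samplesize}-I)$ for its discrete generator. The entire theorem then reduces to showing that, for every $f\in\smoothcompactsuppfuns(\Reals^\parmdim)$, $\normInf{\generator\upper{\samplesize}f-\generator f}\stk\to{p}0$, where $\generator$ is the generator of the Ornstein--Uhlenbeck process \cref{eq:generic-OU}. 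The limit $\generator$ is the (closed) generator of a well-posed linear SDE, hence generates a Feller semigroup irrespective of whether $-\precon\finfoOpt$ is Hurwitz, so the structural hypotheses of \cref{prop:semigroup-convergence-discrete-ek} hold automatically and only the generator convergence must be verified.

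First I would compute $\generator\upper{\samplesize}f$ explicitly. For a local coordinate $\localparm$, set $\parm=\parmLMLE{\samplesize}+\localparm/\localscale\upper{\samplesize}$ and let $\delta$ be the random increment of the local chain out of $\localparm$, so $\generator\upper{\samplesize}f(\localparm)=\timescale\upper{\samplesize}\,\EE[f(\localparm+\delta)-f(\localparm)\mid\localparm]$. Decomposing the stochastic gradient as $\stochgrad{\samplesize}{\iternum}=\samplesize^{-1}\grad\potential{\samplesize}(\parm)+\gradnoise{\samplesize}{\iternum}$ into the full-data gradient and a mean-zero mini-batch fluctuation, independent of the Gaussian innovation $\innoviter{\iternum}$, a second-order Taylor expansion of $f$ yields three pieces: a first-moment term $\timescale\upper{\samplesize}\,\grad f(\localparm)\cdot\EE[\delta]=\tfrac12\timescale\upper{\samplesize}\stepsize\upper{\samplesize}\,\grad f(\localparm)\cdot\precon\,[\localscale\upper{\samplesize}\samplesize^{-1}\grad\potential{\samplesize}(\parm)]$; a second-moment term $\tfrac12\timescale\upper{\samplesize}\EE[\delta\transpose\hess f(\localparm)\delta]$ that splits into the mini-batch covariance and the Gaussian-innovation covariance; and a remainder gathering all cubic and higher contributions.

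Next I would identify the coefficient limits. Using the critical-point identity $\sum_{\dataidx}\grad\loglik(\parmLMLE{\samplesize};\data_\dataidx)=0$, a first-order Taylor expansion with integral remainder writes the full-data log-likelihood gradient as $-\bar{\finfo}\upper{\samplesize}\localparm/\localscale\upper{\samplesize}$, where $\bar{\finfo}\upper{\samplesize}$ is an average of $\finfoEmp{\samplesize}$ over the segment joining $\parm$ and $\parmLMLE{\samplesize}$; since $f$ has compact support, $\localparm$ stays in a fixed ball, and \cref{ass:J} (whose balls $\ball\upper{\samplesize}(r_{\finfo,\samplesize})$ have radius $r_{\finfo,\samplesize}\to\infty$ in local coordinates) forces $\bar{\finfo}\upper{\samplesize}\to\finfo(\trueparm)$ uniformly there; the regularizer contributes only at order $\samplesize^{-1}$ and is negligible after scaling, using that $\grad\reg$ is Lipschitz (\cref{ass:basic-reg}). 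With $\timescale\upper{\samplesize}\stepsize\upper{\samplesize}=\const{\stepsize}\samplesize^{\timescalePow-\stepsizePow}$ this produces the drift $-\tfrac{\const{\driftName}}{2}\precon\finfo(\trueparm)\localparm$, with $\const{\driftName}=\const{\stepsize}$ or $0$ according as $\timescalePow=\stepsizePow$ or $\timescalePow<\stepsizePow$. Symmetrically, the mini-batch covariance equals $(\ol{\const{\batchsize}}/\batchsize\upper{\samplesize})$ times the empirical gradient covariance at $\parm$, which \cref{ass:V} forces to $\vinfo(\trueparm)$, the batch constant $\ol{\const{\batchsize}}$ arising from the without-replacement finite-population correction $1-\batchsize\upper{\samplesize}/\samplesize$ (nontrivial only when $\batchsizePow=1$); tracking the powers of $\samplesize$ in $(\localscale\upper{\samplesize})^2(\stepsize\upper{\samplesize})^2/\batchsize\upper{\samplesize}$ and in $(\localscale\upper{\samplesize})^2\stepsize\upper{\samplesize}/\invtemp\upper{\samplesize}$ yields the phase-transition constants $\const{\mbNoiseName}$ and $\const{\langNoiseName}$ exactly as stated, so the diffusion matches $\const{\langNoiseName}\aniso+\const{\mbNoiseName}\precon\vinfo(\trueparm)\precon\transpose$.

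The main obstacle, and the step consuming the moment hypotheses, is showing that the scaled remainder and the tails of the scaled second moment vanish: both the cubic Taylor remainder and the contribution to $\EE[\delta\otimes\delta]$ from atypically large mini-batch gradients must die after multiplication by $\timescale\upper{\samplesize}$, even though $\grad\loglik$ and $\hess\loglik$ have only finitely many moments. I would control this by truncation: on $\{\abs{\delta}\le s\}$ bound the increment of $f$ by $\tfrac16\normInf{\grad^{\otimes3}f}\abs{\delta}^3$, and on $\{\abs{\delta}>s\}$ bound it by $2\normInf{f}$ while controlling $\timescale\upper{\samplesize}\Pr(\abs{\delta}>s)$ by Markov's inequality. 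The truncation level $s$ and the moment orders $\assmoment{2},\assmoment{3}$ are balanced so that both pieces tend to $0$; this is precisely what the condition $\assmoment{2}>1/(\stepsizePow-\localscalePow-\timescalePow/3)$ and the definition of $\assmoment{3}$ in \cref{ass:grad-norm-mom,ass:gll} encode, while \cref{ass:basic-reg,ass:gll} bound the higher-order terms of the drift expansion. Assembling these estimates gives $\normInf{\generator\upper{\samplesize}f-\generator f}\stk\to{p}0$ for each fixed $f$. Finally I would upgrade to the stated ``in probability'' conclusions: the bound on $\normInf{\generator\upper{\samplesize}f-\generator f}$ is an $f$-dependent multiple of a fixed, $f$-independent collection of random quantities (the errors in \cref{ass:J,ass:V}, the centring $\parmLMLE{\samplesize}-\trueparm$, and a few empirical moment averages), so by \cref{lem:conv-prob-subseq} every subsequence admits a further subsequence along which all of these converge almost surely, hence along which $\normInf{\generator\upper{\samplesize}f-\generator f}\to0$ simultaneously for \emph{all} $f\in\smoothcompactsuppfuns(\Reals^\parmdim)$. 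Applying the deterministic \cref{prop:semigroup-convergence-discrete-ek} along that subsequence yields the strong semigroup convergence of item~2 and, together with $\localparm\upper{\samplesize}(0)\distto\localparm(0)$, the Skorohod path convergence of item~1; since every subsequence contains such a sub-subsequence, both statements hold in probability.
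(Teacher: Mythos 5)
Your proposal reproduces the paper's overall architecture: reduce to uniform convergence of the discrete generators $\generator\upper{\samplesize}\testfun$ to the Ornstein--Uhlenbeck generator $\generator\testfun$ for $\testfun\in\smoothcompactsuppfuns(\Reals^\parmdim)$, invoke \cref{prop:semigroup-convergence-discrete-ek}, and pass between ``in probability'' and ``almost surely along subsequences'' via \cref{lem:conv-prob-subseq}, exploiting that the error is controlled by $\testfun$-independent random quantities. Your identification of the drift and diffusion coefficients (critical-point identity plus \cref{ass:J}; empirical gradient covariance plus \cref{ass:V}, with the finite-population correction) also matches the paper. The genuine gap is in how you propose to control the error \emph{uniformly over} $\limparm\in\Reals^\parmdim$: the truncation-plus-Markov step cannot deliver the required bound in the region far from the support of $\testfun$, and that region is where the real difficulty lies.

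Concretely, outside a compact extension $\compactset_1$ of the support one has $\generator\testfun=0$, while $[\generator\upper{\samplesize}\testfun](\limparm)=\timescale\upper{\samplesize}\,\cEE{\dataset{\Nats}}[\testfun(\limparm+\incrTotal{\samplesize}(\limparm))]$ is nonzero exactly when the chain can jump from $\limparm$ back into the support; this probability must be beaten by the polynomially growing factor $\timescale\upper{\samplesize}=\samplesize^{\timescalePow}$. Markov-type bounds fail here for two reasons. First, the conditional moments of $\incrTotal{\samplesize}(\limparm)$ grow linearly in $\norm{\limparm}$ through the drift, so the resulting bounds are not even uniform in $\limparm$. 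Second, even at a fixed distance from the support, Markov with the available $\assmoment{2}$ moments gives a bound of order $\samplesize^{\timescalePow-\assmoment{2}(\stepsizePow-\localscalePow)}$, which need not vanish under \cref{ass:grad-norm-mom}: for instance $\stepsizePow=3$, $\localscalePow=1/2$, $\batchsizePow=0$, $\invtempPow=1$ gives $\timescalePow=3$, and $\assmoment{2}=1$ is admissible (since $1>1/(\stepsizePow-\localscalePow-\timescalePow/3)=2/3$), yet the Markov exponent is $3-1\cdot(5/2)=1/2>0$. The paper's proof avoids this by first converting the moment conditions into almost-sure polynomial envelopes on the data maxima, $\max_{\dataidx\le\samplesize}\norm{\grad\loglik(\trueparm;\data_\dataidx)}\le\samplesize^{1/\assmoment{2}}$ and $\max_{\dataidx\le\samplesize}\normInf{\hess\loglik(\cdot;\data_\dataidx)}\le\samplesize^{1/\assmoment{3}}$ all but finitely often (\cref{lem:as-maximal-ineq}, a Borel--Cantelli argument), so that \emph{conditionally on the data} the prior and minibatch parts of the increment are deterministically bounded by $\tfrac12\norm{\limparm}$ plus a constant. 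Re-entering the support from outside a ball of radius $2\radius_0+2\const{0}$ then forces the Gaussian innovation alone to exceed $\samplesize^{\stepsizePow/2+\invtempPow/2-\localscalePow}$, whose probability decays exponentially and beats any polynomial $\timescale\upper{\samplesize}$ (and when $\invtemp=\infty$ such a jump is simply impossible). The same almost-sure envelopes are also what make the third-order Taylor remainder tractable inside $\compactset_1$, since only $\assmoment{2}$ (possibly fewer than three) moments of $\norm{\grad\loglik}$ are assumed, so the unconditional third moment you would need is not available. Without this device, or an equivalent replacement for your Markov step, your final assembly into $\normInf{\generator\upper{\samplesize}\testfun-\generator\testfun}\to0$ does not go through.
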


Before beginning the proof of this result, \cref{thm:sgld-scaling-limit-multiple-scalings-boundary}, we require the following lemma, which is used to turn the moment conditions in our assumptions into bounds on the magnitudes of certain random variables that hold all but finitely often with probability 1.

\begin{lemma}
    \label{lem:as-maximal-ineq}
    Let $\alpha:\PosReals \to \PosReals$ be non-decreasing, right continuous with left limits, with $\alpha(0)=0$, and $\lim_{t\to\infty}\alpha_{t} = \infty$.
    Let $Z_i \distas \mu$ for all $i\in\Nats$ (possibly not independent) with $Z_1\geq 0$ almost surely such that $\EE\sbra{\alpha(Z_1)}<\infty$. Let $\alpha^+: u\mapsto \inf\set{t\geq 0 \stT \alpha_{t}\geq u}$ be the generalized inverse of $\alpha$.
    Then
    \*[
        \Pr\rbra{\max_{i \in \range{n}} Z_i \geq \alpha^{+}(n) \quad \textup{i.o.}} = 0.
    \]
\end{lemma}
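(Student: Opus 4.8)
The plan is to turn this tail event into a first--Borel--Cantelli computation by exploiting the Galois correspondence between $\alpha$ and its generalized inverse $\alpha^+$. First I would record the duality: for a non-decreasing, right-continuous $\alpha$ one has $\alpha^+(u) \le t \iff \alpha_t \ge u$ for all $t,u\ge 0$. The direction $\alpha_t\ge u \Rightarrow \alpha^+(u)\le t$ is immediate from the definition of the infimum. The reverse direction is where right-continuity enters: it guarantees that the infimum defining $\alpha^+(u)$ is attained, i.e. $\alpha_{\alpha^+(u)}\ge u$, so that $\alpha^+(u)\le t$ together with monotonicity gives $u\le \alpha_{\alpha^+(u)}\le \alpha_t$. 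Applying this with $t=Z_i$ and $u=n$ yields the pointwise identity of events $\cbra{Z_i \ge \alpha^+(n)} = \cbra{\alpha(Z_i)\ge n}$, and hence $\cbra{\max_{i\in\range{n}} Z_i \ge \alpha^+(n)} = \Union_{i=1}^{n}\cbra{\alpha(Z_i)\ge n}$.

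Writing $W_i \defas \alpha(Z_i)$, the $W_i$ are identically distributed (since the $Z_i$ are and $\alpha$ is Borel measurable), nonnegative, and integrable with $\EE W_1 = \EE\sbra{\alpha(Z_1)}<\infty$; in particular each $W_i$ is a.s.\ finite. Integrability gives the summable tail bound $\sum_{n\in\Nats}\Pr(W_n\ge n) = \sum_{n\in\Nats}\Pr(W_1\ge n) \le \EE W_1 < \infty$, so the first Borel--Cantelli lemma (which needs no independence, and this is crucial since the $Z_i$ may be dependent) gives $\Pr(W_n\ge n\ \textup{i.o.})=0$. Thus almost surely there is a random index $N$ beyond which $W_n < n$.

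The one genuinely non-automatic step, and the place I expect to spend care, is upgrading control of the single term $W_n$ to control of the running maximum $\max_{i\le n}W_i$: a large early value could in principle keep the maximum above the threshold $n$ for many $n$. I would close this gap by working on the almost-sure event where $W_n<n$ for all $n\ge N$ and where $M\defas \max_{i<N}W_i<\infty$ (both full-measure, the latter because each $W_i$ is a.s.\ finite). For any $n>\max\cbra{M,N}$, split $\max_{i\le n}W_i = \max\cbra{\max_{i<N}W_i,\ \max_{N\le i\le n}W_i}$: the first block equals $M<n$, while every term in the second block obeys $W_i<i\le n$, so $\max_{i\le n}W_i<n$. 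Hence $\cbra{\max_{i\in\range{n}}Z_i\ge \alpha^+(n)}$ fails for all large $n$ almost surely, which is precisely $\Pr\rbra{\max_{i\in\range{n}}Z_i\ge \alpha^+(n)\ \textup{i.o.}}=0$. The main obstacle is thus this max-versus-single-term passage combined with the careful handling of the generalized inverse; the remaining estimates are routine.
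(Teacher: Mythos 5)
Your proof is correct and follows essentially the same route as the paper's: set $W_i=\alpha(Z_i)$, bound $\sum_{n}\Pr(W_n\ge n)$ by $\EE\sbra{\alpha(Z_1)}<\infty$, invoke the first Borel--Cantelli lemma (which, as you note, needs no independence), and upgrade control of the single term $W_n$ to the running maximum by splitting off the finitely many exceptional, a.s.\ finite terms. If anything, your treatment of the generalized-inverse duality $\{Z_i\ge\alpha^+(n)\}=\{\alpha(Z_i)\ge n\}$ via right-continuity is slightly more careful than the paper's, which states the corresponding equivalence with strict inequalities.
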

\begin{proof}[Proof of \cref{lem:as-maximal-ineq}]
    Let $S_{t} = \Pr(Z_1 > t)$ be the survival function of $\mu$, and let $W_n = \alpha(Z_n)$ for each $n\in\Nats$. Note that $\Pr(W_1 > t) = S(\alpha^+_{t})$.
    Then
    \*[
        \infty
            & > \EE\sbra{\rbra{\alpha(Z_1)}} = \int_0^\infty \Pr(W_1 > t) dt \geq \sum_{n=1}^\infty \Pr(W_1 > n)  = \sum_{n=1}^\infty \Pr(W_n > n) \\
    \]
    Therefore, from the Borel--Cantelli lemma $\Pr(W_n > n \quad \textup{i.o.}) = 0$, and equivalently $\Pr(W_n \leq n \quad \textup{a.b.f.o.}) = 1$.
    Now, whenever $W_n \leq n$ for all but finitely many $n$, then there exists $K \in \Nats$ and $I_1,\dots I_K \in \Nats$ with $W_n \leq n$ for all $n \in \Nats \setminus \set{I_j : j\in\range{K}}$. Therefore, for all $n\geq \max_{j\leq K} W_{I_j}$, $\max_{i\leq n} W_i \leq n$.
    Therefore $\Pr( \max_{i\leq n} W_i \leq n \quad \textup{a.b.f.o.}) = 1$, and equivalently $\Pr( \max_{i\leq n} W_i > n \quad \textup{i.o.}) = 0$.
    Finally, $W_i>n$ if and only if $Z_i>\alpha^+(n)$, hence
    \*[
        \Pr( \max_{i\leq n} Z_i > \alpha^+(n) \quad \textup{i.o.}) = 0.
    \]
\end{proof}

\subsection{Proof of \cref{thm:sgld-scaling-limit-multiple-scalings-boundary}}
	Let $\finfoOpt = \finfo(\trueparm)$ and $\vinfoOpt = \vinfo(\trueparm)$.

    The proof proceeds in the following stages.
		In \cref{sec:mainproof-reduce-to-as}, we will reduce the problem of weak convergence in the Skorohod topology in probability to one of weak convergence in the Skorohod topology almost-surly along subsequences and construct appropriate such subsequences.
		In \cref{sec:mainproof-notation} we introduce notation that will be useful in the remainder of the proof.
		In \cref{sec:mainproof-apply-ek} we discuss what is needed to apply \cref{prop:semigroup-convergence-discrete-ek} to establish the processes converge weakly in the Skorohod topology almost-surely.
		This amounts to showing that the difference between the approximate generator and limiting generator evaluated a smooth test function with compact support vanishes uniformly.
		We will examine this difference in two regimes.
		First, in \cref{sec:mainproof-outside}, we will consider arguments sufficiently far from the support of the test function.
		Then, in \cref{sec:mainproof-inside}, we will consider arguments in or close to the support of the test function, and use a Taylor series expansion of the approximate generator to divide this into three types of non-zero terms.
		The first type is non-remainder terms that vanish and have no corresponding term in the limiting generator; these are handled in \cref{sec:mainproof-inside-vanish}.
		The second type is terms that do not vanish and do have corresponding terms in the limiting generator; these are handled in \cref{sec:mainproof-inside-drift,sec:mainproof-inside-innov,sec:mainproof-inside-ll}.
		The third type of term is the remainder term, which is handled in \cref{sec:mainproof-inside-remainder}.
		Putting all of this together allows us to apply \cref{prop:semigroup-convergence-discrete-ek} along our subsequences, establishing the main result.

    \subsubsection{Reduction to almost-sure convergence on subsequences}\label{sec:mainproof-reduce-to-as}
		Let
    \*[
        \Upsilon\upper{\samplesize}
            & = \max\rbra{ \Upsilon\upper{\samplesize}_{1},\Upsilon\upper{\samplesize}_{2},\Upsilon\upper{\samplesize}_{3}}, \\
        \Upsilon\upper{\samplesize}_{1}
            & = \samplesize^{\asspower{3}} \norm{\parmLMLE{{\samplesize}} - \trueparm},\\
        \Upsilon\upper{\samplesize}_{2}
            & = \textstyle \sup_{\parm\in\ball\rbra{\parmMLE{\samplesize},r_{\finfo,\samplesize}/\samplesize^\localscalePow}} \norm{\finfoEmp{\samplesize}(\parm) - \finfo(\trueparm)}, \\
        \Upsilon\upper{\samplesize}_{3}
            & = \textstyle\sup_{\parm\in\ball\rbra{\parmMLE{\samplesize},r_{\vinfo,\samplesize}/\samplesize^\localscalePow}} \norm{\vinfoEmp{\samplesize}(\parm) - \vinfo(\trueparm)}.
    \]
    Each of the $\Upsilon$ terms corresponds to the important quantity that vanishes in probability for one of the assumptions.
    For example, $\Upsilon\upper{\samplesize}_{1}$ controls how quickly the local MLE converges under \cref{ass:grad-norm-mom} which lets us use a weaker moment assumption for the sup-norm of the Hessian of the log-likelihood.

    By assumption,
		$\Upsilon\upper{\samplesize}\stk\to{p}0$.
    Then, by \cref{lem:conv-prob-subseq}, for every subsequence $(\subseqsamplesize)_{\subseqidx\in\Nats}$ there is a further sub-sub\-sequence $(\subsubseqsamplesize)_{\subsubseqidx\in\Nats}$ so that this convergence is almost sure.
    Along an arbitrary such sub-subsequence, we will verify that $(\localparm\upper{\subsubseqsamplesize})_{t\in\PosReals} \distto (\localparm_{t})_{t\in\PosReals}$ in the Skorohod topology almost surely.
    Since weak convergence is metrizable (e.g., by the Levi--Prokhorov metric, and hence corresponds to a topology on probability distributions), and since for any subsequence $(\subseqsamplesize)_{\subseqidx\in\Nats}$ we will have shown a further subsequence $(\subsubseqsamplesize)_{\subsubseqidx\in\Nats}$ such that $(\localparm\upper{\subsubseqsamplesize}_{t})_{t\in\PosReals} \distto (\localparm_{t})_{t\in\PosReals}$ a.s., by \cref{lem:conv-prob-subseq} it must hold that
    $(\localparm\upper{\samplesize}_{t})_{t\in\PosReals} \distto (\localparm_{t})_{t\in\PosReals}$ in probability.

    Now, let $(\subseqsamplesize)_{\subseqidx\in\Nats}$ be an arbitrary subsequence\footnote{Since every sub-subsequence is itself a subsequence, we can simplify our notation from here onward.} of $\Nats$ such that
    $\Upsilon\upper{\subseqsamplesize} \stk\to{a.s.} 0$.
    Let $\samplespace$ denote the underlying probability space.
    Let
    \*[
        \samplespace\upper{0}
            & = \Intersect_{i=1}^3 \samplespace\upper{i},\\
        \samplespace\upper{1}
            & = \set{\Upsilon\upper{\subseqsamplesize} \to 0},\\
        \samplespace\upper{2}
            & = \set{\max_{\dataidx\in\range{\samplesize}} \norm{\grad\loglik(\trueparm; \data_\dataidx)} \leq \samplesize^{1/\assmoment{2}}  \quad \textup{a.b.f.o} }\ , \\
				\samplespace\upper{3}
						& =             \set{\max_{\dataidx\in\range{\samplesize}} \normInf{{\hess\loglik(\cdot; \data_\dataidx)}} \leq \samplesize^{1/\assmoment{3}}  \quad \textup{a.b.f.o.} }\ .
    \]
    By assumption, and by applying \cref{lem:as-maximal-ineq} to power functions of the form $\alpha:t\mapsto t^p$ and random variables $\norm{\grad\loglik(\trueparm; \data_\dataidx)}$ and $\normInf{{\hess\loglik(\cdot; \data_\dataidx)}}$, $\samplespace\upper{0}$ is a sure set.

		\subsubsection{Additional notation used in the proof}\label{sec:mainproof-notation}
    We notate the increments of the localized iterative algorithms (given that $\scaleparmiter{\samplesize}{0} = \limparm$) due to the Gaussian innovation $(\innov{})$, the gradient step contribution of the prior $(\prior{})$, the mini-batch gradient step based on the log-likelihood $(\loglik{})$, and the total increment, respectively, as
    \*[
        \incrInnov{\samplesize}
            & \defas \localscale\upper{\samplesize}\sqrt{\stepsize\invtemp^{-1}\aniso}\ \innoviter{1}, \\
        \incrPrior{\samplesize}(\limparm)
            & \defas \frac{\stepsize\localscale\upper{\samplesize} \precon}{2 \samplesize}
                \grad \log\prior\rbra{\parmLMLE{{\samplesize}}
                    + (\localscale\upper{\samplesize})^{-1} \limparm}, \\
        \incrLoglik{\samplesize}(\limparm)
            & \defas \frac{\stepsize\localscale\upper{\samplesize} \precon}{2\batchsize\upper{\samplesize}} \ \sum_{\mbidx\in \range{\batchsize\upper{\samplesize}}} \grad\loglik\rbra{\parmLMLE{{\samplesize}}
                + (\localscale\upper{\samplesize})^{-1} \limparm;\ \data_{\mbdataidx{\samplesize}{1}{\mbidx}}} , \andT \\
        \incrTotal{\samplesize}(\limparm)
            & \defas \incrInnov{\samplesize} + \incrPrior{\samplesize}(\limparm) + \incrLoglik{\samplesize}(\limparm).
    \]

    We define the sequence of operators $\generator\upper{\samplesize}$ by
    \[\label{eq:approx-gen-defn}
        [\generator\upper{\samplesize} \testfun](\limparm)
            & = \timescale\upper{\samplesize}
                \rbra{ \cEE{\dataset{\Nats}}
                    \sbra{
                        \testfun(\limparm + \incrTotal{\samplesize}(\limparm))} - \testfun(\limparm) }.
    \]
    for all $\samplesize\in\Nats$, and all $\testfun \in \smoothcompactsuppfuns(\Reals^\parmdim)$,  where $\timescale\upper{\samplesize} = \samplesize$.
    The generator of the (presumed, at this point) limiting OU process is given by
    \[\label{eq:lim-gen-defn}
        [\generator\testfun](\limparm)
            & = - \inner{\frac{\const{\driftName}}{2}\precon \finfoOpt \limparm}{\grad \testfun(\limparm)} + \frac{1}{2} \rbra{\const{\langNoiseName} \aniso + \const{\mbNoiseName} \precon \vinfoOpt \precon'} \frobprod \hess\testfun(\limparm)
    \]

		\subsubsection{How \cref{prop:semigroup-convergence-discrete-ek} is applied}\label{sec:mainproof-apply-ek}
		Consider a single realization of $\dataset{\Nats} \in \samplespace\upper{0}$.
		Our goal, now, is to apply \cref{prop:semigroup-convergence-discrete-ek}, treating $\dataset{\Nats}$ as fixed.
    To do so, it suffices to show that for each $\testfun \in \smoothcompactsuppfuns(\Reals^\parmdim)$ we have
    \*[
        \lim_{\subseqidx\to\infty} \sup_{\limparm\in\Reals^\parmdim} \abs{[\generator\upper{\subseqsamplesize} \testfun](\limparm) - [\generator \testfun](\limparm)} = 0.
    \]
    For an arbitrary test function, $\testfun\in\smoothcompactsuppfuns(\Reals^\parmdim)$, with compact support $\compactset_0$, we will show this in two parts.
    First we will identify a compact extension, $\compactset_1\supset \compactset_0$ to the compact support of $f$ such that
    \*[
        \lim_{\subseqidx\to\infty} \sup_{\limparm\in \compactset_1^c} \abs{[\generator\upper{\subseqsamplesize} \testfun](\limparm) - [\generator \testfun](\limparm)} = 0.
    \]
    Then we will separately show that
    \*[
        \lim_{\subseqidx\to\infty} \sup_{\limparm\in \compactset_1} \abs{[\generator\upper{\subseqsamplesize} \testfun](\limparm) - [\generator \testfun](\limparm)} = 0.
    \]

		\subsubsection{Convergence away from the test function support}\label{sec:mainproof-outside}
    For all $\limparm \in \compactset_0^c$, $f(\limparm) = 0$, $\grad f (\limparm) = 0$, and $\hess f (\limparm) = 0$.
    Therefore, for any $\compactset_1 \supset \compactset_0$,
    \[ \label{eq:compact-complement-bound-br-prob}
        &\hspace{-1em}
        \sup_{\limparm\in \compactset_1^c} \abs{[\generator\upper{\subseqsamplesize} \testfun](\limparm) - [\generator \testfun](\limparm)}\\
            & \leq \timescale\upper{\subseqsamplesize} \normInf{\testfun} \sup_{\limparm\in \compactset_1^c}\cPr{\dataset{\Nats}}\sbra{
                    \limparm + \incrTotal{\subseqsamplesize}(\limparm)
                        \in \compactset_0 }.
    \]
    Let $\radius_0 = \sup_{\limparm\in\compactset_0} \norm{\limparm}$.
    Let $\compactset_1 = \set{\limparm\in\Reals^\parmdim \stT \norm{\limparm}\leq 2\radius_0 + 2\const{0}}$, where
    \*[
        \const{0} = \frac{\const{\stepsize} \norm{\precon}}{2} \rbra{3+\norm{\grad \log\prior\rbra{\trueparm}}} + \sqrt{\const{\stepsize}/\const{\invtemp} \norm{\aniso}}.
    \]

    Then, using \cref{eq:compact-complement-bound-br-prob} and $\incrTotal{\subseqsamplesize}(\limparm) = \incrInnov{\subseqsamplesize}(\limparm)+\incrPrior{\subseqsamplesize}(\limparm)+\incrLoglik{\subseqsamplesize}(\limparm)$,
    \[ \label{eq:compact-complement-bound-br-prob-simp}
        &\hspace{-1em}\sup_{\limparm\in \compactset_1^c} \abs{[\generator\upper{\subseqsamplesize} \testfun](\limparm) - [\generator \testfun](\limparm)}\\
            & \leq \timescale\upper{\subseqsamplesize} \normInf{\testfun} \sup_{\norm{\limparm} > 2 \radius_0 + 2 \const{0}} \cPr{\dataset{\Nats}}\sbra{\norm{\limparm + \incrTotal{\subseqsamplesize}(\limparm)} \leq \radius_0} \\
            & \leq \timescale\upper{\subseqsamplesize} \normInf{\testfun} \sup_{\norm{\limparm} > 2 \radius_0 + 2 \const{0}} \cPr{\dataset{\Nats}}\sbra{\norm{\incrInnov{\subseqsamplesize}}
                \geq \norm{\limparm} - \norm{\incrPrior{\subseqsamplesize}(\limparm)} - \norm{\incrLoglik{\subseqsamplesize}(\limparm)} - \radius_0 }.
    \]

    For $\limparm \in \compactset_1^c$, using the assumption that $\grad\log\prior$ is $L_0$-Lipschitz and $\stepsize\upper{\samplesize} = \const{\stepsize}\samplesize^{\stepsizePow}$ and $\localscale\upper{\samplesize} = \samplesize^{\localscalePow}$,
    \*[
    \norm{\incrPrior{\subseqsamplesize}(\limparm)}
        & \leq \frac{\stepsize\upper{\subseqsamplesize}\localscale\upper{\subseqsamplesize} \norm{\precon}}{2\subseqsamplesize}
            \norm{\grad \log\prior\rbra{\parmLMLE{{\subseqsamplesize}}
                + (\localscale\upper{\subseqsamplesize})^{-1} \limparm}} \\
        & \leq \frac{\stepsize\upper{\subseqsamplesize}\localscale\upper{\subseqsamplesize} \norm{\precon}}{2\subseqsamplesize}
            \rbra{\norm{\grad \log\prior\rbra{\trueparm}} + \lipschitzconst_0 \norm{\parmLMLE{{\subseqsamplesize}} - \trueparm} + \frac{\lipschitzconst_0 \norm{\limparm}}{\localscale\upper{\subseqsamplesize}}}\\
        & \leq \frac{\const{\stepsize}\subseqsamplesize^{\localscalePow-\stepsizePow-1} \norm{\precon}}{2}
            \rbra{\norm{\grad \log\prior\rbra{\trueparm}} + \lipschitzconst_0 \norm{\parmLMLE{{\subseqsamplesize}} - \trueparm} + \frac{\lipschitzconst_0 \norm{\limparm}}{\subseqsamplesize^{\localscalePow}}},
    \]
    and similarly
    \*[
    &\hspace{-1em}\norm{\incrLoglik{\subseqsamplesize}(\limparm)}\\
        & \leq \frac{\stepsize\upper{\subseqsamplesize}\localscale\upper{\subseqsamplesize} \norm{\precon}}{2\batchsize\upper{\subseqsamplesize}} \norm{ \sum_{\mbidx\in \range{\batchsize\upper{\subseqsamplesize}}} \grad\loglik\rbra{ \parmLMLE{{\subseqsamplesize}}
            + (\localscale\upper{\subseqsamplesize})^{-1} \limparm;\ \data_{\mbdataidx{\subseqsamplesize}{1}{\mbidx}}}} \\
        & \leq \frac{\const{\stepsize}\subseqsamplesize^{\localscalePow-\stepsizePow} \norm{\precon}}{2\batchsize\upper{\subseqsamplesize}} \sum_{\mbidx\in \range{\batchsize\upper{\subseqsamplesize}}} \rbra{\norm{\grad\loglik\rbra{ \trueparm; \ \data_{\mbdataidx{\subseqsamplesize}{1}{\mbidx}}}}
            + \lipschitzconst(\data_{\mbdataidx{\subseqsamplesize}{1}{\mbidx}}) \norm{\parmLMLE{{\subseqsamplesize}} - \trueparm} + \frac{\lipschitzconst(\data_{\mbdataidx{\subseqsamplesize}{1}{\mbidx}}) \norm{\limparm}}{{\subseqsamplesize}^{\localscalePow}}} \\
        & \leq \frac{\const{\stepsize}\subseqsamplesize^{\localscalePow-\stepsizePow} \norm{\precon}}{2} \rbra{\lipschitzconst_\star(\dataset{\subseqsamplesize})
            + \lipschitzconst(\dataset{\subseqsamplesize}) \norm{\parmLMLE{{\subseqsamplesize}} - \trueparm} + \lipschitzconst(\dataset{\subseqsamplesize})\frac{ \norm{\limparm}}{{\subseqsamplesize}^{\localscalePow}}}
    \]
    where we define the (random) Lipschitz constants $\lipschitzconst(\data_\dataidx)$, $\lipschitzconst_\star(\dataset{\subseqsamplesize})$, and $\lipschitzconst(\dataset{\subseqsamplesize})$ by:
    \*[
        \lipschitzconst(\data_\dataidx)
            & \defas \normInf{{\hess\loglik(\cdot; \data_\dataidx)}}, \\
        \lipschitzconst_\star(\dataset{\subseqsamplesize})
            & \defas \max_{\dataidx\leq\subseqsamplesize}\norm{\grad\loglik\rbra{ \trueparm; \ \data_{\dataidx}}}, \andT \\
        \lipschitzconst(\dataset{\subseqsamplesize})
            & \defas \max_{\dataidx\leq\subseqsamplesize}\lipschitzconst(\data_\dataidx).
    \]
    Using that $\dataset{\Nats}\in\samplespace\upper{0}$, so that $\Upsilon\upper{\subseqsamplesize}\to 0$ etc., if $\subseqidx$ is large enough that all of the following hold:
    \*[
        \sup_{\subseqidx'\geq \subseqidx}\Upsilon\upper{\subseqsamplesize}
            & \leq \min(1, \lipschitzconst_0^{-1}), \\
        1
            &\geq \sup_{\subseqidx'\geq \subseqidx} \frac{\lipschitzconst_\star(\dataset{\samplesize_{\subseqidx'}})}{\samplesize_{\subseqidx'}^{1/\assmoment{2}}},
            \\
        \subseqsamplesize
            & \geq
                \max(\rbra{2\const{\stepsize} \norm{\precon}}^{1/(1/\assmoment{3}-\stepsizePow)},\rbra{2\const{\stepsize}\lipschitzconst_0 \norm{\precon}}^{\frac{1}{\stepsizePow+1-\timescalePow-\localscalePow}}), \qquad  \andT \\
        1
            &\geq
                \sup_{\subseqidx'\geq \subseqidx} \frac{\lipschitzconst(\dataset{\samplesize_{\subseqidx'}})}{\samplesize_{\subseqidx'}^{1/\assmoment{3}}}
             ;
    \]
    then, using that $0<\localscalePow<1$,
    \*[
    \norm{\incrPrior{\subseqsamplesize}(\limparm)}
        & \leq \frac{\const{\stepsize}\norm{\precon}}{2}
            \rbra{\norm{\grad \log\prior\rbra{\trueparm}} + 1} + \frac{1}{4}\norm{\limparm},
    \]
    and
    \*[
    \norm{\incrLoglik{\subseqsamplesize}(\limparm)}
        & \leq \frac{\const{\stepsize}\subseqsamplesize^{-\stepsizePow+\localscalePow} \norm{\precon}}{2} \rbra{
            \subseqsamplesize^{1/\assmoment{2}}
            + \subseqsamplesize^{1/\assmoment{3}}\Upsilon\upper{\subseqsamplesize} + \subseqsamplesize^{1/\assmoment{3}-\localscalePow} \norm{\limparm}} \\
        & \leq \frac{\const{\stepsize} \norm{\precon}}{2} \rbra{
            \subseqsamplesize^{1/\assmoment{2}-\stepsizePow+\localscalePow}
            + \subseqsamplesize^{1/\assmoment{3}-\stepsizePow+\localscalePow}\Upsilon\upper{\subseqsamplesize} + \subseqsamplesize^{1/\assmoment{3}-\stepsizePow} \norm{\limparm}},\\
        & \leq \const{\stepsize} \norm{\precon} + \frac{1}{4} \norm{\limparm}.
    \]

    Therefore, for $\limparm\in \compactset_1^c$ (and hence $\norm{\limparm} > 2\radius_0+2\const{0}$),
    \*[
    &\hspace{-1em}
    \norm{\limparm} - \norm{\incrPrior{\subseqsamplesize}(\limparm)} - \norm{\incrLoglik{\subseqsamplesize}(\limparm)} - \radius_0\\
        & \geq \frac{1}{2} \norm{\limparm} - \frac{\const{\stepsize} \norm{\precon}}{2} \rbra{3+\norm{\grad \log\prior\rbra{\trueparm}}} - \radius_0 \\
        & \geq \sqrt{\const{\stepsize}/\const{\invtemp} \norm{\aniso}} .
    \]
    Therefore, combining this with \cref{eq:compact-complement-bound-br-prob-simp} and the definition of $\incrInnov{\subseqsamplesize}(\limparm)$,
    \*[
        \lim_{\subseqidx\to\infty}\sup_{\limparm\in \compactset_1^c} \abs{[\generator\upper{\subseqsamplesize} \testfun](\limparm) - [\generator \testfun](\limparm)}
            & \leq \lim_{\subseqidx\to\infty} \timescale\upper{\subseqsamplesize} \normInf{\testfun} \cPr{\dataset{\Nats}}\rbra{\norm{\innoviter{1}} \geq \subseqsamplesize^{\stepsizePow/2+\invtempPow/2-\localscalePow}} \\
            & \leq \lim_{\subseqidx\to\infty} \timescale\upper{\subseqsamplesize} \normInf{\testfun} \parmdim\  \cPr{\dataset{\Nats}}\rbra{\abs{\innoviter{1,1}} \geq \frac{1}{\sqrt{\parmdim}} \subseqsamplesize^{\stepsizePow/2+\invtempPow/2-\localscalePow}}\\
            & \leq \lim_{\subseqidx\to\infty} 2 {\subseqsamplesize^\timescalePow} \normInf{\testfun} \parmdim \exp(-\subseqsamplesize^{{\stepsizePow+\invtempPow-2\localscalePow}} / 2 \parmdim) \\
            & = 0 .
    \]
    since $\stepsizePow+\invtempPow-2\localscalePow\geq \timescalePow>0$.

  	\subsubsection{Taylor expansion near the test function support}\label{sec:mainproof-inside}
    Recalling the definition of $\generator\upper{\subseqsamplesize}$ in \cref{eq:approx-gen-defn},
    using the definition of the time-scaling factor $\timescale\upper{\samplesize} = \samplesize^\timescalePow$,
    taking a second-order Taylor expansion of the test function $\testfun\in\smoothcompactsuppfuns$,
    and applying the decomposition $\incrTotal{\subseqsamplesize}(\limparm) = \incrInnov{\subseqsamplesize}(\limparm)+\incrPrior{\subseqsamplesize}(\limparm)+\incrLoglik{\subseqsamplesize}(\limparm)$,
    \*[
    &\hspace{-1em}
    [\generator\upper{\subseqsamplesize} \testfun](\limparm) \\
        & = \timescale\upper{\subseqsamplesize}\rbra{\cEE{\dataset{\Nats}}
                \sbra{\testfun\rbra{\limparm + \incrTotal{\subseqsamplesize}(\limparm) }} - \testfun(\limparm)} \\
        & = \underbrace{\subseqsamplesize^{\timescalePow}\cEE{\dataset{\Nats}}
            \inner{\grad \testfun (\limparm)}{\incrInnov{\subseqsamplesize}}}_{\term[$\sbra{1.\innov}\upper{\subseqsamplesize}(\limparm)$]{term2:1-innov} = 0}
            + \underbrace{\subseqsamplesize^{\timescalePow}\cEE{\dataset{\Nats}}
                \inner{\grad \testfun (\limparm)}{\incrPrior{\subseqsamplesize}(\limparm)}}_{\term[$\sbra{1.\prior}\upper{\subseqsamplesize}(\limparm)$]{term2:1-prior}}
            + \underbrace{\subseqsamplesize^{\timescalePow}\cEE{\dataset{\Nats}}
                \inner{\grad \testfun (\limparm)}{\incrLoglik{\subseqsamplesize}(\limparm)}}_{\term[$\sbra{1.\loglik}\upper{\subseqsamplesize}(\limparm)$]{term2:1-loglik}}
            \\
        &\qquad
            + \underbrace{\subseqsamplesize^{\timescalePow} \cEE{\dataset{\Nats}}
                \inner{\frac{1}{2}\hess\testfun(\limparm) \incrInnov{\subseqsamplesize} }{\incrInnov{\subseqsamplesize}}}_{\term[$\sbra{2.\innov\innov}\upper{\subseqsamplesize}(\limparm)$]{term2:2-innov-innov}}
            + \underbrace{\subseqsamplesize^{\timescalePow} \cEE{\dataset{\Nats}}
                \inner{\hess\testfun(\limparm) \incrPrior{\subseqsamplesize}(\limparm) }{\incrInnov{\subseqsamplesize}}}_{\term[$\sbra{2.\prior\innov}\upper{\subseqsamplesize}(\limparm)$]{term2:2-prior-innov}=0} \\
        &\qquad
            + \underbrace{\subseqsamplesize^{\timescalePow} \cEE{\dataset{\Nats}}
                \inner{\hess\testfun(\limparm) \incrLoglik{\subseqsamplesize}(\limparm) }{\incrInnov{\subseqsamplesize}}}_{\term[$\sbra{2.\loglik\innov}\upper{\subseqsamplesize}(\limparm)$]{term2:2-loglik-innov}=0}
            + \underbrace{\subseqsamplesize^{\timescalePow}\cEE{\dataset{\Nats}}
                \inner{\frac{1}{2}\hess\testfun(\limparm) \incrPrior{\subseqsamplesize}(\limparm) }{\incrPrior{\subseqsamplesize}(\limparm)}}_{\term[$\sbra{2.\prior\prior}\upper{\subseqsamplesize}(\limparm)$]{term2:2-prior-prior}} \\
        &\qquad
            + \underbrace{\subseqsamplesize^{\timescalePow} \cEE{\dataset{\Nats}}
                \inner{\hess\testfun(\limparm) \incrLoglik{\subseqsamplesize}(\limparm) }{\incrPrior{\subseqsamplesize}}}_{\term[$\sbra{2.\loglik\prior}\upper{\subseqsamplesize}(\limparm)$]{term2:2-loglik-prior}}
            + \underbrace{\subseqsamplesize^{\timescalePow}\cEE{\dataset{\Nats}}
                \inner{\frac{1}{2}\hess\testfun(\limparm) \incrLoglik{\subseqsamplesize}(\limparm) }{\incrLoglik{\subseqsamplesize}(\limparm)}}_{\term[$\sbra{2.\loglik\loglik}\upper{\subseqsamplesize}(\limparm)$]{term2:2-loglik-loglik}} \\
        & \qquad +  \underbrace{\subseqsamplesize^{\timescalePow}\cEE{\dataset{\Nats}}\sbra{\frac{1}{6}\sbra{\grad^{\otimes 3} \testfun(\limparm+S \incrTotal{\subseqsamplesize}(\limparm))}\rbra{\incrTotal{\subseqsamplesize}(\limparm),\incrTotal{\subseqsamplesize}(\limparm),\incrTotal{\subseqsamplesize}(\limparm)}} }_{\term[$\sbra{3.R}\upper{\subseqsamplesize}(\limparm)$]{term2:3-remainder}} \\
    \]
for some $S\in[0,1]$ depending on $\testfun,\limparm, \incrTotal{\subseqsamplesize}(\limparm)$, where $\grad^{\otimes 3} \testfun(\limparm)$ is the trilinear from of third order partials of $\testfun$ at $\limparm$ (and hence is linear in each of its three arguments). Terms that are linear in $\incrInnov{\subseqsamplesize}$ have mean $0$ and can be eliminated outright, as indicated in their corresponding underbraces. Terms are labelled by the order of the term, followed by the increments that appear in the term; for example \ref*{term2:2-loglik-innov} is the second order term involving a likelihood increment and a Gaussian noise (innovation) increment.
The $R$ in \ref*{term2:3-remainder} denotes that it is the \emph{remainder}.

Recall that
\*[
    [\generator\testfun](\limparm)
        & = \underbrace{- \inner{\frac{\const{\driftName}}{2}\precon \finfoOpt \limparm}{\grad \testfun(\limparm)}}_{\term[$\sbra{\textrm{I}.\precon\finfoOpt}(\limparm)$]{term2:I}}
        + \underbrace{\frac{\const{\langNoiseName}}{2}  \aniso \frobprod \hess\testfun(\limparm)}_{\term[$\sbra{\textrm{II}.\aniso}(\limparm)$]{term2:II-aniso}}
        + \underbrace{\frac{\const{\mbNoiseName}}{2} \precon \vinfoOpt \precon' \frobprod \hess\testfun(\limparm)}_{\term[$\sbra{\textrm{II}.\precon\vinfoOpt\precon'}(\limparm)$]{term2:II-vinfoOpt}}.
\]
We have similarly labelled these terms, with the roman numeral denoting the order and the subsequent symbol denoting the coefficient matrix (up to scaling factors).
Thus, after eliminating terms which are linear in $\incrInnov{\subseqsamplesize}$, and thus have mean $0$, the difference of approximate and limiting generator applied to the test function can be expressed as
\*[
    &\hspace{-1em}
    \abs{[\generator\upper{\subseqsamplesize} \testfun](\limparm) - [\generator \testfun](\limparm)}\\
        & \leq
            \abs{\text{\ref*{term2:1-prior}}}
						+ \abs{\text{\ref*{term2:2-prior-prior}}}
		        + \abs{\text{\ref*{term2:2-loglik-prior}}}\\
        & \qquad + \abs{\text{\ref*{term2:1-loglik}} - \text{\ref*{term2:I}}} \\
        & \qquad + \abs{\text{\ref*{term2:2-innov-innov}} - \text{\ref*{term2:II-aniso}}} \\
				&\qquad
            + \abs{\text{\ref*{term2:2-loglik-loglik}} - \text{\ref*{term2:II-vinfoOpt}}} \\
        & \qquad + \abs{\text{\ref*{term2:3-remainder}}}.
\]
We will show that each of these seven terms vanish uniformly on $\compactset_1$. The first three terms listed above, those non-remainder terms with no corresponding term in the limiting generator, will be handled first. Then we will handle each of the terms which corresponds to part of the limiting generator, and lastly we will handle the remainder term.

\subsubsection{Terms that do not contribute to the limit}\label{sec:mainproof-inside-vanish}
\*[
    \abs{\text{\ref*{term2:1-prior}}}
        & = \subseqsamplesize^{\timescalePow}\abs{\cEE{\dataset{\Nats}}
            \inner{\grad \testfun (\limparm)}{\incrPrior{\subseqsamplesize}(\limparm)}} \\
        & \leq \frac{\const{\stepsize}\subseqsamplesize^{\timescalePow-\stepsizePow+\localscalePow-1} \norm{\precon}}{2} \abs{\cEE{\dataset{\Nats}}
            \inner{\grad \testfun (\limparm)}{
                \grad \log\prior\rbra{\parmLMLE{{\subseqsamplesize}}
                    + \subseqsamplesize^{-\localscalePow} \limparm}}} \\
        & \leq \frac{\const{\stepsize}\subseqsamplesize^{\timescalePow-\stepsizePow+\localscalePow-1}  \norm{\precon}}{2} \normInf{{\grad\testfun}} \rbra{\norm{\grad \log\prior\rbra{\trueparm}} +\lipschitzconst_0\rbra{\Upsilon\upper{\subseqsamplesize} + \frac{2 \radius_0 +2\const{0}}{\subseqsamplesize^{\localscalePow}}}},
\]
which vanishes uniformly on $\compactset_1$, since ${\timescalePow+\localscalePow-\stepsizePow-1}\leq \localscalePow -1 < 0$.

\*[
& \hspace{-1em}
\abs{\text{\ref*{term2:2-prior-prior}}} \\
    & = \abs{\subseqsamplesize^{\timescalePow} \cEE{\dataset{\Nats}}
        \inner{\frac{1}{2}\hess\testfun(\limparm) \incrPrior{\subseqsamplesize}(\limparm) }{\incrPrior{\subseqsamplesize}(\limparm)}} \\
    & \leq \subseqsamplesize^{\timescalePow} \normInf{{\hess\testfun}}
        \rbra{\frac{\const{\stepsize}\subseqsamplesize^{\localscalePow-\stepsizePow-1}\norm{\precon}}{2}}^2\\
    &\qquad \times\rbra{\norm{\grad\log\prior(\trueparm)} + \lipschitzconst_0 \norm{\parmLMLE{{\subseqsamplesize}} - \trueparm} + \lipschitzconst_0 \frac{2\radius_0+2\const{0}}{\subseqsamplesize^\localscalePow}}^2
\]
which vanishes uniformly since $\timescalePow +2\localscalePow - 2\stepsizePow -2 \leq (2\localscalePow -2) -h <0$ (which follows from $\stepsizePow\geq \timescalePow$ and $\localscalePow <1$).

\*[
& \hspace{-1em}
\abs{\text{\ref*{term2:2-loglik-prior}}} \\
    & = \abs{\subseqsamplesize^{\timescalePow} 2\cEE{\dataset{\Nats}}
        \inner{\frac{1}{2}\hess\testfun(\limparm) \incrLoglik{\subseqsamplesize}(\limparm) }{\incrPrior{\subseqsamplesize}(\limparm)}} \\
    & \leq 2 \subseqsamplesize^{\timescalePow} \normInf{{\hess\testfun}}
        \rbra{\frac{\const{\stepsize}\subseqsamplesize^{\localscalePow-\stepsizePow-1}\norm{\precon}}{2}}
        \rbra{\frac{\const{\stepsize}\subseqsamplesize^{\localscalePow-\stepsizePow}\norm{\precon}}{2}}\\
    &\qquad \times\rbra{\norm{\grad\log\prior(\trueparm)} + \lipschitzconst_0 \Upsilon\upper{\subseqsamplesize} + \lipschitzconst_0 \frac{2\radius_0+2\const{0}}{\subseqsamplesize^{\localscalePow}}}\\
    &\qquad \times\rbra{
        \subseqsamplesize^{1/\assmoment{2}}
        + \subseqsamplesize^{1/\assmoment{3}}\Upsilon\upper{\subseqsamplesize} + \subseqsamplesize^{1/\assmoment{3} - \localscalePow}   }\\
\]
which vanishes uniformly due to the assumptions of the relationship between $\stepsizePow, \timescalePow, \localscalePow, \assmoment{3}, \assmoment{2}$ under each assumption.

\subsubsection{Convergence of the drift term}\label{sec:mainproof-inside-drift}
Third, using that $\sum_{\dataidx\in \range{\subseqsamplesize}} \grad\loglik\rbra{ \parmLMLE{{\subseqsamplesize}};\ \data_{\dataidx}} = 0$,
\[ \label{eq:approx-gen-loglik-term}
&\hspace{-1em}
\text{\ref*{term2:1-loglik}}\\
    & = \subseqsamplesize^{\timescalePow}\cEE{\dataset{\Nats}}
        \inner{\grad \testfun (\limparm)}{\incrLoglik{\subseqsamplesize}(\limparm)} \\
    & = \cEE{\dataset{\Nats}}
        \inner{\grad \testfun (\limparm)}{\frac{\const{\stepsize}\subseqsamplesize^{\timescalePow+\localscalePow-\stepsizePow} \precon}{2\batchsize\upper{\subseqsamplesize}}\sum_{\mbidx\in \range{\batchsize\upper{\subseqsamplesize}}} \grad\loglik\rbra{ \parmLMLE{{\subseqsamplesize}}
            + \frac{1}{\subseqsamplesize^{\localscalePow}} \limparm;\ \data_{\mbdataidx{\subseqsamplesize}{1}{\mbidx}}}} \\
    & =
        \inner{\frac{\const{\stepsize} \precon\adj}{2} \grad \testfun (\limparm)}{\subseqsamplesize^{\timescalePow+\localscalePow-\stepsizePow-1}\sum_{\dataidx\in \range{\subseqsamplesize}} \grad\loglik\rbra{ \parmLMLE{{\subseqsamplesize}}
            + \frac{1}{\subseqsamplesize^{\localscalePow}} \limparm;\ \data_{\dataidx}}} \\
    & =
        \inner{\frac{\const{\stepsize} \precon\adj}{2}  \grad \testfun (\limparm)}{
            \rbra{\int_0^1 \subseqsamplesize^{\timescalePow-\stepsizePow-1}\sum_{\dataidx\in \range{\subseqsamplesize}}\hess\loglik\rbra{ \parmLMLE{{\subseqsamplesize}}+ \frac{s}{\subseqsamplesize^{\localscalePow}} \limparm;\ \data_{\dataidx}}ds}} \\
\]
Now, for all $\subseqsamplesize$ large enough that $r_{\finfo,\subseqsamplesize}\geq \radius_0 + \const{0}$
\*[
&\hspace{-1em}
 \abs{
        \inner{\frac{\const{\stepsize} \precon\adj}{2}  \grad \testfun (\limparm)}{
            \rbra{\int_0^1 \subseqsamplesize^{-1}\sum_{\dataidx\in \range{\subseqsamplesize}}\hess\loglik\rbra{ \parmLMLE{{\subseqsamplesize}}+ \frac{s}{\subseqsamplesize^{\localscalePow}} \limparm;\ \data_{\dataidx}}ds\
            +\finfoOpt} \limparm}} \\
    & \leq \const{\stepsize}\norm{\precon} \normInf{{\grad\testfun}} \rbra{\radius_0 + \const{0}}\norm{
            \int_0^1 \sbra{\subseqsamplesize^{-1}\sum_{\dataidx\in \range{\subseqsamplesize}}\hess\loglik\rbra{ \parmLMLE{{\subseqsamplesize}}+ \frac{s}{\subseqsamplesize^{\localscalePow}} \limparm;\ \data_{\dataidx}} +\finfoOpt} ds\
            }\\
		& \leq \const{\stepsize}\norm{\precon} \normInf{{\grad\testfun}} \rbra{\radius_0 + \const{0}}\cdot \Upsilon\upper{\subseqsamplesize},
\]
and thus vanishes uniformly on $\compactset_1$.

When $\timescalePow > \stepsizePow$, so $\const{\driftName}=0$ and hence $\text{\ref*{term2:I}} = 0$ (where $\text{\ref*{term2:I}}$ is the drift term appearing in the definition of the limiting generator $\generator$ in \cref{eq:lim-gen-defn}), then the drift term will be inactive in the limit.
We show this by using the fact that $\text{\ref*{term2:1-loglik}}$ is a vanishing distance from a sequence that vanishes:
\*[
    &\hspace{-1em}
    \abs{\text{\ref*{term2:1-loglik}} - \text{\ref*{term2:I}}}\\
        & \leq \subseqsamplesize^{\stepsizePow-\timescalePow}  \abs{
                \inner{\frac{\const{\stepsize} \precon\adj}{2}  \grad \testfun (\limparm)}{
                    \rbra{\int_0^1 \subseqsamplesize^{-1}\sum_{\dataidx\in \range{\subseqsamplesize}}\hess\loglik\rbra{ \parmLMLE{{\subseqsamplesize}}+ \frac{s}{\subseqsamplesize^{\localscalePow}} \limparm;\ \data_{\dataidx}}ds\
                    +\finfoOpt} \limparm}} \\
        & \qquad + \subseqsamplesize^{\stepsizePow-\timescalePow}  \abs{
                \inner{\frac{\const{\stepsize} \precon\adj}{2}  \grad \testfun (\limparm)}{
                    \finfoOpt \limparm}};
\]
and hence vanishes uniformly on $\compactset_1$.

When $\timescalePow = \stepsizePow$, then the drift term is active in the limit, and we show that $\text{\ref*{term2:1-loglik}}$ converges to the drift term from the limiting process $\text{\ref*{term2:I}}$:
\*[
&\hspace{-1em}
\abs{\text{\ref*{term2:1-loglik}} - \text{\ref*{term2:I}}}\\
    &= \subseqsamplesize^{\stepsizePow-\timescalePow}  \abs{
                \inner{\frac{\const{\stepsize} \precon\adj}{2}  \grad \testfun (\limparm)}{
                    \rbra{\int_0^1 \subseqsamplesize^{-1}\sum_{\dataidx\in \range{\subseqsamplesize}}\hess\loglik\rbra{ \parmLMLE{{\subseqsamplesize}}+ \frac{s}{\subseqsamplesize^{\localscalePow}} \limparm;\ \data_{\dataidx}}ds\
                    +\finfoOpt} \limparm}}
\]
 vanishes uniformly on $\compactset_1$.

\subsubsection{Convergence of the diffusion term corresponding to Gaussian noise}\label{sec:mainproof-inside-innov}
\*[
    &\hspace{-1em}
    \abs{\text{\ref*{term2:2-innov-innov}} - \text{\ref*{term2:II-aniso}}} \\
        & = \abs{\subseqsamplesize^{\timescalePow} \cEE{\dataset{\Nats}}
            \inner{\frac{1}{2}\hess\testfun(\limparm) \incrInnov{\subseqsamplesize} }{\incrInnov{\subseqsamplesize}} - \frac{\const{\stepsize}}{2\const{\invtemp}} \aniso \frobprod \hess\testfun(\limparm)} \\
\]
If $\timescalePow + 2\localscalePow-\stepsizePow-\invtempPow = 0$ then, the corresponding diffusion term is active in the limit.
Using the definition of $\incrInnov{\subseqsamplesize}$ and that $\invtemp\upper{\samplesize} = \const{\invtemp} \samplesize^{\invtempPow}$, $\invtemp_{\stepsize} = \const{\stepsize} \samplesize^{\stepsizePow}$, and $\invtemp_{\localscale} = \samplesize^{\localscalePow}$
\*[
    &\hspace{-1em}
    \abs{\text{\ref*{term2:2-innov-innov}} - \text{\ref*{term2:II-aniso}}} \\
        & \leq  \frac{\const{\stepsize}}{2\const{\invtemp}}\abs{\subseqsamplesize^{\timescalePow + 2\localscalePow-\stepsizePow-\invtempPow}\cEE{\dataset{\Nats}}
            \inner{\hess\testfun(\limparm)  \sqrt{\aniso}\innoviter{1} }{\sqrt{\aniso}\innoviter{1}} -  \aniso \frobprod \hess\testfun(\limparm)} \\
        & = 0
\]
If $\timescalePow + 2\localscalePow-\stepsizePow-\invtempPow<0$ then the corresponding diffusion term is inactive in the limit, and so $\const{\langNoiseName}=0$ and so $\text{\ref*{term2:II-aniso}}=0$. In that case we show that $\text{\ref*{term2:2-innov-innov}}$ vanishes uniformly.
\*[
    &\hspace{-1em}
    \abs{\text{\ref*{term2:2-innov-innov}} - \text{\ref*{term2:II-aniso}}} \\
        & \leq  \frac{\const{\stepsize}}{2\const{\invtemp}} \subseqsamplesize^{\timescalePow + 2\localscalePow-\stepsizePow-\invtempPow}\abs{\cEE{\dataset{\Nats}}
            \inner{\hess\testfun(\limparm)  \sqrt{\aniso}\innoviter{1} }{\sqrt{\aniso}\innoviter{1}}} \\
        & = \frac{\const{\stepsize}}{2\const{\invtemp}} \subseqsamplesize^{\timescalePow + 2\localscalePow-\stepsizePow-\invtempPow}\norm{\aniso}_F \normInf{\norm{\hess \testfun}_F},
\]
which vanishes uniformly.

\subsubsection{Convergence of the diffusion term corresponding to minibatch noise}\label{sec:mainproof-inside-ll}
\*[
    & \hspace{-1em}
    \abs{\text{\ref*{term2:2-loglik-loglik} - \ref*{term2:II-vinfoOpt}}} \\
        & = \bigg\vert \subseqsamplesize^{\timescalePow} \cEE{\dataset{\Nats}}
            \inner{\frac{1}{2}\hess\testfun(\limparm) \incrLoglik{\subseqsamplesize}(\limparm) }{\incrLoglik{\subseqsamplesize}(\limparm)}  - \frac{\const{\mbNoiseName}}{2}  \precon \vinfoOpt \precon' \frobprod \hess\testfun(\limparm) \bigg\vert \\
        & = \frac{1}{2}\abs{\sbra{\subseqsamplesize^{\timescalePow} \cEE{\dataset{\Nats}}\sbra{
            \rbra{\incrLoglik{\subseqsamplesize}(\limparm)}^{\otimes 2}} \frobprod \hess \testfun (\limparm) - \frac{\const{\mbNoiseName}}{2}   \precon \vinfoOpt \precon' \frobprod \hess\testfun(\limparm)} } \\
        & \leq \frac{\normInf{{\hess\testfun}_{F}}}{2}
            \norm{\sbra{\subseqsamplesize^{\timescalePow} \cEE{\dataset{\Nats}}\sbra{
            \rbra{\incrLoglik{\subseqsamplesize}(\limparm)}^{\otimes 2}} - \frac{\const{\mbNoiseName}}{2}   \precon \vinfoOpt \precon'} }_F \\
        & \leq \sqrt{\parmdim}\frac{\normInf{{\hess\testfun}_{F}}}{2}
            \norm{\sbra{\subseqsamplesize^{\timescalePow} \cEE{\dataset{\Nats}}\sbra{
            \rbra{\incrLoglik{\subseqsamplesize}(\limparm)}^{\otimes 2}} - \frac{\const{\mbNoiseName}}{2}   \precon \vinfoOpt \precon'} } \\
\]
Now,
\small
\*[
    &\hspace{-1em}
    \cEE{\dataset{\Nats}}\subseqsamplesize^{\timescalePow}\rbra{\incrLoglik{\subseqsamplesize}(\limparm)}^{\otimes 2} \\
        & = \frac{\const{\stepsize}^2 \subseqsamplesize^{\timescalePow + 2\localscalePow - 2\stepsizePow} }{4 (\batchsize\upper{\subseqsamplesize})^2} \precon \rbra{\cEE{\dataset{\Nats}}\sum_{\mbidx\in \range{\batchsize\upper{\subseqsamplesize}}} \grad\loglik\rbra{ \parmLMLE{{\subseqsamplesize}}
                + \frac{1}{\subseqsamplesize^{\localscalePow}} \limparm;\ \data_{\mbdataidx{\subseqsamplesize}{1}{\mbidx}}}^{\otimes 2}}\precon' \\
            &\quad + \frac{\const{\stepsize}^2\subseqsamplesize^{\timescalePow + 2\localscalePow - 2\stepsizePow}}{4 (\batchsize\upper{\subseqsamplesize})^2} \precon \lcrx[3]({\cEE{\dataset{\Nats}}\sum_{\mbidx\in \range{\batchsize\upper{\subseqsamplesize}}}\sum_{\mbidx'\in \range{\batchsize\upper{\subseqsamplesize}}\setminus\set{\mbidx}} \grad\loglik\rbra{ \parmLMLE{{\subseqsamplesize}}
                    + \frac{\limparm}{\subseqsamplesize^{\localscalePow}} ;\ \data_{\mbdataidx{\subseqsamplesize}{1}{\mbidx}}} }.\\
                    &\hspace{20em}\lcrx[3].{\otimes \grad\loglik\rbra{ \parmLMLE{{\subseqsamplesize}}
                        + \frac{\limparm}{\subseqsamplesize^{\localscalePow}} ;\ \data_{\mbdataidx{\subseqsamplesize}{1}{\mbidx'}}}}) \precon' \\
        & = \frac{\const{\stepsize}^2\subseqsamplesize^{\timescalePow + 2\localscalePow - 2\stepsizePow}}{4 \batchsize\upper{\subseqsamplesize}} \precon \rbra{\frac{1}{\subseqsamplesize}\sum_{\dataidx\in \range{\subseqsamplesize}} \grad\loglik\rbra{ \parmLMLE{{\subseqsamplesize}}
                    + \frac{1}{\subseqsamplesize^{\localscalePow}} \limparm;\ \data_{\dataidx}}^{\otimes 2}}\precon' \\
            &\quad + \frac{\const{\stepsize}^2\subseqsamplesize^{\timescalePow + 2\localscalePow - 2\stepsizePow}}{4 (\batchsize\upper{\subseqsamplesize})^2} \precon \lcrx[3]({\cEE{\dataset{\Nats}}\sum_{\mbidx\in \range{\batchsize\upper{\subseqsamplesize}}}\sum_{\mbidx'\in \range{\batchsize\upper{\subseqsamplesize}}\setminus\set{\mbidx}} \grad\loglik\rbra{ \parmLMLE{{\subseqsamplesize}}
                    + \frac{\limparm}{\subseqsamplesize^{\localscalePow}} ;\ \data_{\mbdataidx{\subseqsamplesize}{1}{\mbidx}}}}. \\
                &\hspace{20em}\lcrx.{\otimes \grad\loglik\rbra{ \parmLMLE{{\subseqsamplesize}}
                        + \frac{\limparm}{\subseqsamplesize^{\localscalePow}} ;\ \data_{\mbdataidx{\subseqsamplesize}{1}{\mbidx'}}}}) \precon'
\]
\normalsize
If the mini-batches are drawn with replacement, then
\*[
    &\hspace{-1em} \cEE{\dataset{\Nats}}\sum_{\mbidx\in \range{\batchsize\upper{\subseqsamplesize}}}\sum_{\mbidx'\in \range{\batchsize\upper{\subseqsamplesize}}\setminus\set{\mbidx}} \grad\loglik\rbra{ \parmLMLE{{\subseqsamplesize}}
        + \frac{1}{\subseqsamplesize^{\localscalePow}} \limparm;\ \data_{\mbdataidx{\subseqsamplesize}{1}{\mbidx}}} \otimes \grad\loglik\rbra{ \parmLMLE{{\subseqsamplesize}}
            + \frac{1}{\subseqsamplesize^{\localscalePow}} \limparm;\ \data_{\mbdataidx{\subseqsamplesize}{1}{\mbidx'}}} \\
        & = \frac{\batchsize\upper{\subseqsamplesize}(\batchsize\upper{\subseqsamplesize}-1)}{\subseqsamplesize^2}\sum_{\dataidx\in \range{\subseqsamplesize}}\sum_{\dataidx'\in \range{\subseqsamplesize}} \grad\loglik\rbra{ \parmLMLE{{\subseqsamplesize}}
            + \frac{1}{\subseqsamplesize^{\localscalePow}} \limparm;\ \data_{\dataidx}} \otimes \grad\loglik\rbra{ \parmLMLE{{\subseqsamplesize}}
                + \frac{1}{\subseqsamplesize^{\localscalePow}} \limparm;\ \data_{\dataidx'}} \\
        & = \batchsize\upper{\subseqsamplesize}(\batchsize\upper{\subseqsamplesize}-1) \rbra{\frac{1}{\subseqsamplesize}\sum_{\dataidx\in \range{\subseqsamplesize}}\grad\loglik\rbra{ \parmLMLE{{\subseqsamplesize}}
            + \frac{1}{\subseqsamplesize^{\localscalePow}} \limparm;\ \data_{\dataidx}}}^{\otimes 2}  \\
        & = \batchsize\upper{\subseqsamplesize}(\batchsize\upper{\subseqsamplesize}-1) \rbra{\frac{1}{\subseqsamplesize}\sum_{\dataidx\in \range{\subseqsamplesize}} \int_0^1 \hess\loglik\rbra{ \parmLMLE{{\subseqsamplesize}}
            + \frac{s}{\subseqsamplesize^{\localscalePow}} \limparm;\ \data_{\dataidx}} ds \frac{1}{\subseqsamplesize^{\localscalePow}} \limparm}^{\otimes 2}  \\
\]

Thus, if $\timescalePow + 2\localscalePow - 2\stepsizePow - \batchsizePow= 0$, so that $\const{\mbNoiseName}\neq 0$ and the corresponding term is active in the limit, and the minibtaches are drawn with replacement, then combining the past several equations gives:
\*[
    &\hspace{-1em}
    \abs{\text{\ref*{term2:2-loglik-loglik} - \ref*{term2:II-vinfoOpt}}} \\
        & \leq \frac{\sqrt{\parmdim}\norm{\precon}^2\normInf{{\hess\testfun}_{F}}}{2}
            \norm{\frac{\const{\stepsize}^2}{4 \const{\batchsize}} \frac{\const{\batchsize}\subseqsamplesize^\batchsizePow}{\floor{\const{\batchsize}\subseqsamplesize^\batchsizePow}}  \rbra{\frac{1}{\subseqsamplesize}\sum_{\dataidx\in \range{\subseqsamplesize}} \grad\loglik\rbra{ \parmLMLE{{\subseqsamplesize}}
                        + \frac{1}{\subseqsamplesize^{\localscalePow}} \limparm;\ \data_{\dataidx}}^{\otimes 2}} - \const{\mbNoiseName}  \vinfoOpt  } \\
        & \qquad + \frac{\sqrt{\parmdim} \const{\stepsize}^2\norm{\precon}^2\normInf{{\hess\testfun}_{F}}\subseqsamplesize^{-2\localscalePow}}{8} \norm{\rbra{\frac{1}{\subseqsamplesize}\sum_{\dataidx\in \range{\subseqsamplesize}} \int_0^1 \hess\loglik\rbra{ \parmLMLE{{\subseqsamplesize}}
            + \frac{s}{\subseqsamplesize^{\localscalePow}} \limparm;\ \data_{\dataidx}} ds  \ \limparm}^{\otimes 2}}
\]
For $\limparm \in \compactset_1$, and for all $\subseqsamplesize$ large enough that $r_{\finfo,\subseqsamplesize}\geq \radius_0 + \const{0}$
\*[
    &\hspace{-1em}
    \subseqsamplesize^{-2\localscalePow}\norm{\rbra{\frac{1}{\subseqsamplesize}\sum_{\dataidx\in \range{\subseqsamplesize}} \int_0^1 \hess\loglik\rbra{ \parmLMLE{{\subseqsamplesize}}
        + \frac{s}{\subseqsamplesize^{\localscalePow}} \limparm;\ \data_{\dataidx}} ds  \ \limparm}^{\otimes 2}}\\
        & = \subseqsamplesize^{-2\localscalePow}\norm{\frac{1}{\subseqsamplesize}\sum_{\dataidx\in \range{\subseqsamplesize}} \int_0^1 \hess\loglik\rbra{ \parmLMLE{{\subseqsamplesize}}
            + \frac{s}{\subseqsamplesize^{\localscalePow}} \limparm;\ \data_{\dataidx}} ds  \ \limparm}^2 \\
        & \leq \frac{\rbra{2\radius_0+2\const{0}}^2}{\subseqsamplesize^{2\localscalePow}} \rbra{\norm{\finfoOpt}+\Upsilon\upper{\subseqsamplesize}}^2,
\]
which vanishes uniformly.

Since the mini-batches are drawn with replacement, using the definition of $\const{\mbNoiseName}$, for all $\subseqsamplesize$ large enough that $r_{\vinfo,\subseqsamplesize}\geq \radius_0 + \const{0}$
\*[
    & \hspace{-1em}
    \norm{\frac{\const{\stepsize}^2}{4 \const{\batchsize}} \frac{\const{\batchsize}\subseqsamplesize^\batchsizePow}{\floor{\const{\batchsize}\subseqsamplesize^\batchsizePow}} \rbra{\frac{1}{\subseqsamplesize}\sum_{\dataidx\in \range{\subseqsamplesize}} \grad\loglik\rbra{ \parmLMLE{{\subseqsamplesize}}
                + \frac{1}{\subseqsamplesize^{\localscalePow}} \limparm;\ \data_{\dataidx}}^{\otimes 2}} - \const{\mbNoiseName} \vinfoOpt }  \\
        & \leq \frac{\const{\stepsize}^2}{4 \const{\batchsize}} \frac{\const{\batchsize}\subseqsamplesize^\batchsizePow}{\floor{\const{\batchsize}\subseqsamplesize^\batchsizePow}}\norm{ \rbra{\frac{1}{\subseqsamplesize}\sum_{\dataidx\in \range{\subseqsamplesize}} \grad\loglik\rbra{ \parmLMLE{{\subseqsamplesize}}
                        + \frac{1}{\subseqsamplesize^{\localscalePow}} \limparm;\ \data_{\dataidx}}^{\otimes 2}} - \vinfoOpt  } \\
        &\qquad + \abs{\frac{\const{\stepsize}^2}{4 \const{\batchsize}} \frac{\const{\batchsize}\subseqsamplesize^\batchsizePow}{\floor{\const{\batchsize}\subseqsamplesize^\batchsizePow}} -\frac{\const{\stepsize}^2}{4 \const{\batchsize}}}\norm{\vinfoOpt}\\
				& \leq \frac{\const{\stepsize}^2}{4 \const{\batchsize}} \frac{\const{\batchsize}\subseqsamplesize^\batchsizePow}{\floor{\const{\batchsize}\subseqsamplesize^\batchsizePow}}\Upsilon\upper{\subseqsamplesize}
				+ \abs{\frac{\const{\stepsize}^2}{4 \const{\batchsize}} \frac{\const{\batchsize}\subseqsamplesize^\batchsizePow}{\floor{\const{\batchsize}\subseqsamplesize^\batchsizePow}} -\frac{\const{\stepsize}^2}{4 \const{\batchsize}}}\norm{\vinfoOpt}.
\]

And, if $\timescalePow + 2\localscalePow - 2\stepsizePow - \batchsizePow< 0$ and the mini-batches are drawn with replacement, so that $\const{\mbNoiseName}=0$, and the corresponding diffusion term is inactive in the limit and $\text{\ref*{term2:II-vinfoOpt}} = 0$, then
\*[
		&\hspace{-1em}
		\abs{\text{\ref*{term2:2-loglik-loglik}} - \text{\ref*{term2:II-vinfoOpt}}} \\
				& \leq \abs{\cEE{\dataset{\Nats}}\subseqsamplesize^{\timescalePow}\rbra{\incrLoglik{\subseqsamplesize}(\limparm)}^{\otimes 2} - \subseqsamplesize^{\timescalePow + 2\localscalePow - 2\stepsizePow - \batchsizePow}\frac{\const{\stepsize}^2}{4\const{\batchsize}} \vinfoOpt}
				+ \subseqsamplesize^{\timescalePow + 2\localscalePow - 2\stepsizePow - \batchsizePow}\abs{\frac{\const{\stepsize}^2}{4\const{\batchsize}} \vinfoOpt}
\]
which vanishes uniformly by the previous arguments.

Therefore,
when the mini-batches are drawn with replacement, we find that \*[\abs{\text{\ref*{term2:2-loglik-loglik} - \ref*{term2:II-vinfoOpt}}}\] vanishes uniformly on $\compactset_1$.

If the mini-batches are drawn without replacement.
\*[
    &\hspace{-1em} \cEE{\dataset{\Nats}}\sum_{\mbidx\in \range{\batchsize\upper{\subseqsamplesize}}}\sum_{\mbidx'\in \range{\batchsize\upper{\subseqsamplesize}}\setminus\set{\mbidx}} \grad\loglik\rbra{ \parmLMLE{{\subseqsamplesize}}
        + \frac{1}{\subseqsamplesize^{\localscalePow}} \limparm;\ \data_{\mbdataidx{\subseqsamplesize}{1}{\mbidx}}} \otimes \grad\loglik\rbra{ \parmLMLE{{\subseqsamplesize}}
            + \frac{1}{\subseqsamplesize^{\localscalePow}} \limparm;\ \data_{\mbdataidx{\subseqsamplesize}{1}{\mbidx'}}} \\
        & = \frac{\batchsize\upper{\subseqsamplesize}(\batchsize\upper{\subseqsamplesize}-1)}{\subseqsamplesize(\subseqsamplesize-1)}\sum_{\dataidx\in \range{\subseqsamplesize}}\sum_{\dataidx'\in \range{\subseqsamplesize}\setminus\set{\dataidx}} \grad\loglik\rbra{ \parmLMLE{{\subseqsamplesize}}
            + \frac{1}{\subseqsamplesize^{\localscalePow}} \limparm;\ \data_{\dataidx}} \otimes \grad\loglik\rbra{ \parmLMLE{{\subseqsamplesize}}
                + \frac{1}{\subseqsamplesize^{\localscalePow}} \limparm;\ \data_{\dataidx'}} \\
        & = \frac{\batchsize\upper{\subseqsamplesize}(\batchsize\upper{\subseqsamplesize}-1)}{\subseqsamplesize(\subseqsamplesize-1)}\sum_{\dataidx\in \range{\subseqsamplesize}}\sum_{\dataidx'\in \range{\subseqsamplesize}} \grad\loglik\rbra{ \parmLMLE{{\subseqsamplesize}}
            + \frac{1}{\subseqsamplesize^{\localscalePow}} \limparm;\ \data_{\dataidx}} \otimes \grad\loglik\rbra{ \parmLMLE{{\subseqsamplesize}}
                + \frac{1}{\subseqsamplesize^{\localscalePow}} \limparm;\ \data_{\dataidx'}} \\
            &\qquad - \frac{\batchsize\upper{\subseqsamplesize}(\batchsize\upper{\subseqsamplesize}-1)}{\subseqsamplesize(\subseqsamplesize-1)}\sum_{\dataidx\in \range{\subseqsamplesize}} \grad\loglik\rbra{ \parmLMLE{{\subseqsamplesize}}
                + \frac{1}{\subseqsamplesize^{\localscalePow}} \limparm;\ \data_{\dataidx}}^{\otimes 2} \\
        & = \batchsize\upper{\subseqsamplesize}(\batchsize\upper{\subseqsamplesize}-1)\frac{\subseqsamplesize}{\subseqsamplesize-1} \rbra{\frac{1}{\subseqsamplesize}\sum_{\dataidx\in \range{\subseqsamplesize}}\grad\loglik\rbra{ \parmLMLE{{\subseqsamplesize}}
            + \frac{1}{\subseqsamplesize^{\localscalePow}} \limparm;\ \data_{\dataidx}}}^{\otimes 2} \\
            &\qquad - \frac{\batchsize\upper{\subseqsamplesize}(\batchsize\upper{\subseqsamplesize}-1)}{\subseqsamplesize(\subseqsamplesize-1)}\sum_{\dataidx\in \range{\subseqsamplesize}} \grad\loglik\rbra{ \parmLMLE{{\subseqsamplesize}}
                + \frac{1}{\subseqsamplesize^{\localscalePow}} \limparm;\ \data_{\dataidx}}^{\otimes 2},
\]
and so,
\*[
    &\hspace{-1em}
    \cEE{\dataset{\Nats}}\subseqsamplesize\rbra{\incrLoglik{\subseqsamplesize}(\limparm)}^{\otimes 2} \\
        & = \frac{\const{\stepsize}^2}{4 \batchsize\upper{\subseqsamplesize}} \frac{\subseqsamplesize-\batchsize\upper{\subseqsamplesize}}{\subseqsamplesize-1} \precon \rbra{\frac{1}{\subseqsamplesize}\sum_{\dataidx\in \range{\subseqsamplesize}} \grad\loglik\rbra{ \parmLMLE{{\subseqsamplesize}}
                    + \frac{1}{\subseqsamplesize^{\localscalePow}} \limparm;\ \data_{\dataidx}}^{\otimes 2}}\precon' \\
            &\qquad + \frac{\const{\stepsize}^2}{4 (\batchsize\upper{\subseqsamplesize})^2} \precon \rbra{\batchsize\upper{\subseqsamplesize}(\batchsize\upper{\subseqsamplesize}-1)\frac{\subseqsamplesize}{\subseqsamplesize-1} \rbra{\frac{1}{\subseqsamplesize}\sum_{\dataidx\in \range{\subseqsamplesize}}\grad\loglik\rbra{ \parmLMLE{{\subseqsamplesize}}
                + \frac{1}{\subseqsamplesize^{\localscalePow}} \limparm;\ \data_{\dataidx}}}^{\otimes 2}}\precon'\\
\]
In this case, for all $\subseqsamplesize$ large enough that $r_{\vinfo,\subseqsamplesize}\geq \radius_0 + \const{0}$
\*[
    & \hspace{-1em}
    \norm{\frac{\const{\stepsize}^2}{4 \batchsize\upper{\subseqsamplesize}} \frac{\subseqsamplesize-\batchsize\upper{\subseqsamplesize}}{\subseqsamplesize-1} \rbra{\frac{1}{\subseqsamplesize}\sum_{\dataidx\in \range{\subseqsamplesize}} \grad\loglik\rbra{ \parmLMLE{{\subseqsamplesize}}
                + \frac{1}{\subseqsamplesize^{\localscalePow}} \limparm;\ \data_{\dataidx}}^{\otimes 2}} - \const{\mbNoiseName} \vinfoOpt }  \\
        & \leq \frac{\const{\stepsize}^2}{4 \batchsize\upper{\subseqsamplesize}} \frac{\subseqsamplesize-\batchsize\upper{\subseqsamplesize}}{\subseqsamplesize-1} \norm{ \rbra{\frac{1}{\subseqsamplesize}\sum_{\dataidx\in \range{\subseqsamplesize}} \grad\loglik\rbra{ \parmLMLE{{\subseqsamplesize}}
                        + \frac{1}{\subseqsamplesize^{\localscalePow}} \limparm;\ \data_{\dataidx}}^{\otimes 2}} - \vinfoOpt  } \\
        &\qquad + \abs{\frac{\const{\stepsize}^2}{4 \batchsize\upper{\subseqsamplesize}} \frac{\subseqsamplesize-\batchsize\upper{\subseqsamplesize}}{\subseqsamplesize-1} -\const{\mbNoiseName}}\norm{\vinfoOpt} \\
				& \leq \frac{\const{\stepsize}^2}{4 \batchsize\upper{\subseqsamplesize}} \frac{\subseqsamplesize-\batchsize\upper{\subseqsamplesize}}{\subseqsamplesize-1} \Upsilon\upper{\subseqsamplesize}
				+ \abs{\frac{\const{\stepsize}^2}{4 \batchsize\upper{\subseqsamplesize}} \frac{\subseqsamplesize-\batchsize\upper{\subseqsamplesize}}{\subseqsamplesize-1} -\const{\mbNoiseName}}\norm{\vinfoOpt},
\]
Thus,
when the mini-batches are drawn without replacement, we find that \*[\abs{\text{\ref*{term2:2-loglik-loglik} - \ref*{term2:II-vinfoOpt}}}\] vanishes uniformly on $\compactset_1$.

\subsubsection{Convergence of the Remainder Term}\label{sec:mainproof-inside-remainder}
\*[
&\hspace{-1em}
\abs{\text{\ref*{term2:3-remainder}}}\\
    & = \subseqsamplesize^{\timescalePow}\cEE{\dataset{\Nats}}\sbra{\frac{1}{6}\sbra{\grad^{\otimes 3} \testfun(\limparm+S \incrTotal{\subseqsamplesize}(\limparm))}\rbra{\incrTotal{\subseqsamplesize}(\limparm),\incrTotal{\subseqsamplesize}(\limparm),\incrTotal{\subseqsamplesize}(\limparm)}} \\
    &\leq \frac{\subseqsamplesize^{\timescalePow}}{6}\normInf{{\grad^{\otimes 3} \testfun}}  \cEE{\dataset{\Nats}}\norm{\incrTotal{\subseqsamplesize}(\limparm)}^3 \\
    &\leq \frac{27 \subseqsamplesize^{\timescalePow}}{6}\normInf{{\grad^{\otimes 3} \testfun}}  \rbra{\cEE{\dataset{\Nats}}\norm{\incrInnov{\subseqsamplesize}}^3 + \cEE{\dataset{\Nats}}\norm{\incrPrior{\subseqsamplesize}(\limparm)}^3 + \cEE{\dataset{\Nats}}\norm{\incrLoglik{\subseqsamplesize}(\limparm)}^3},
\]
Now
\*[
    \cEE{\dataset{\Nats}}\norm{\incrInnov{\subseqsamplesize}}^3
        & \leq \rbra{\frac{\const{\stepsize}}{2\const{\invtemp}}\subseqsamplesize^{-\stepsizePow-\invtempPow+2\localscalePow}\norm{\aniso}}^{3/2}\cEE{\dataset{\Nats}}\norm{\innoviter{1}}^3\\
        &= \subseqsamplesize^{-3/2\, (\stepsizePow+\invtempPow-2\localscalePow)} \rbra{\frac{\const{\stepsize}}{2\const{\invtemp}}\norm{\aniso}}^{3/2} 2^{3/2} \frac{\varGamma\rbra{\frac{\parmdim+3}{2}}} {\varGamma\rbra{\frac{\parmdim}{2}}},
\]
where $\varGamma$ is the gamma function. Note that $\timescale -3/2\, (\stepsizePow+\invtempPow-2\localscalePow)\leq -1/2 \, (\stepsizePow+\invtempPow-2\localscalePow)\leq-\timescalePow/2 < 0 $

Second,
\*[
    \norm{\incrPrior{\subseqsamplesize}(\limparm)}^3
        &\leq \rbra{\frac{\const{\stepsize}\subseqsamplesize^{-\stepsizePow+\localscalePow-1}\norm{\precon}}{2}}^3\rbra{\norm{\grad\log\prior(\trueparm)} + \lipschitzconst_0 \norm{\parmLMLE{{\subseqsamplesize}} - \trueparm} + \lipschitzconst_0 \frac{2\radius_0+2\const{0}}{\subseqsamplesize^{\localscalePow}}}^3.
\]
Note that $\timescalePow -3\stepsizePow+3\localscalePow-3\leq -2\stepsizePow-3(1-\localscalePow) < 0 $.

Third,
\*[
&\hspace{-1em}
\cEE{\dataset{\Nats}}\norm{\incrLoglik{\subseqsamplesize}(\limparm)}^3\\
    & \leq
    \rbra{\frac{\const{\stepsize}\subseqsamplesize^{-\stepsizePow+\localscalePow}\norm{\precon}}{2}}^3\rbra{
            \subseqsamplesize^{1/\assmoment{2}}
            + \subseqsamplesize^{1/\assmoment{3}}\Upsilon\upper{\subseqsamplesize} + \subseqsamplesize^{1/\assmoment{3} - \localscalePow}}^3\\
    & \leq
    \rbra{\frac{\const{\stepsize}\norm{\precon}}{2}}^3\rbra{
          \subseqsamplesize^{1/\assmoment{2} -\stepsizePow+\localscalePow}
            + \subseqsamplesize^{1/\assmoment{3} -\stepsizePow+\localscalePow}\Upsilon\upper{\subseqsamplesize} + \subseqsamplesize^{1/\assmoment{3} -\stepsizePow}}^3\\
\]
Therefore, $\abs{\text{\ref*{term2:3-remainder}}}$ vanishes uniformly.
\qedsymbol

\section{Proof of {\cref{cor:sgld-bvm}}}

\begin{proof}[Proof of \cref{cor:sgld-bvm}]
    To verify that that the stationary measures, $\statmeas\upper{\subseqsamplesize}$ of $\semigroup\upper{\subseqsamplesize}$ converge weakly in probability to $\statmeas$,
    we need to verify that every sub-subsequence $\statmeas\upper{\subsubseqsamplesize}$ has a sub-sub-subsequence $\statmeas\upper{\subsubsubseqsamplesize}$ converging weakly to $\statmeas$ almost surely. Since weak convergence of probability measures is metrizable, then applying \cref{lem:conv-prob-subseq} yields the desired result.

    By the second part of \cref{thm:sgld-scaling-limit-multiple-scalings-boundary}, every sub-subsequence of $\rbra[1]{\semigroup\upper{\subseqsamplesize}}_{\subseqidx\in\Nats}$, $\rbra[1]{\semigroup\upper{\subsubseqsamplesize}}_{\subsubseqidx\in\Nats}$,
    has a further sub-sub-subsequence, $\rbra[1]{\semigroup\upper{\subsubsubseqsamplesize}}_{\subsubsubseqidx\in\Nats}$, such that
    with probability $1$,
    $\semigroup\upper{\subsubsubseqsamplesize}_{t} \stk\to{s} \semigroup_{t}$ on $\contbddfuns(\Reals^\parmdim)$ for all $t>0$. 
     
    Applying \citet[Part 4, Theorem~9.10]{ethier2009markov}, we have that every weak limit of $\smash{\set[1]{\statmeas\upper{\subsubsubseqsamplesize}}_{\subsubsubseqidx\in\Nats}}$ is stationary for $\semigroup$.
    As a consequence of the assumption that the spectrum of $\precon\finfo(\trueparm)$ is a subset of $\set{x\in\Complex \stT \Re(x)>0}$, $\semigroup$ has a unique stationary distribution (see, for example, \citet[]{karatzas2014brownian}), $\statmeas = \normaldist(0, \statcov)$. Thus every weak limit of $\smash{\set[1]{\statmeas\upper{\subsubsubseqsamplesize}}_{\subsubsubseqidx\in\Nats}}$ must be $\statmeas$.
    
    Since $\set[1]{\statmeas\upper{\subseqsamplesize}}_{\subseqidx\in\Nats}$ is assumed to be tight, then all of its sub-subsequences have a weakly converging sub-sub-subsequence, concluding the proof.
\end{proof}

\section{Sufficient conditions for {\cref{ass:J}} and {\cref{ass:V}}}
\label{sec:assVJ-suff}\label{SEC:ASSVJ-SUFF}

In this section we provide some sufficient conditions that ensure \cref{ass:J,ass:V}. For each of the two assumptions, we one sufficient condition based on convergence of the corresponding information matrix empirical process, one sufficient condition based on equicontinuity of the derivatives of the likelihood function, and one sufficient condition based expected Lipschitz or local Lipschitz constants for the derivatives of the likelihood.

\begin{proposition}[Sufficient conditions for {\cref{ass:J}}]
	\label{prop:assJ-suff}
	Each of the following imply \cref{ass:J}.
	\begin{enumerate}
			\item[a)]
			there exists a $\delta_1>0$ with $\sup_{\parmdum\in \ball_{\delta_1}(\trueparm)} \norm{\frac{1}{\samplesize} \sum_{\dataidx\in\range{\samplesize}} \hess\loglik(\parmdum; \data_\dataidx) + \finfo(\parmdum)}\stk\to{p} 0$ and $\finfo$ is continuous at $\trueparm$,

			\item[b)]
			$\set{\hess\loglik(\cdot;\dumdata)\ \vert \ \dumdata\in\dataspace}$ is equicontinuous at $\trueparm$,

			\item[c)]
			there exists a $\delta_1>0$ with
			\*[
					\EE \sbra{ \sup_{\parm\in \ball_{\delta_1}(\trueparm)}{\frac{\norm{\grad^{\otimes 2}\loglik(\parm;\data_1) -\grad^{\otimes 2}\loglik(\trueparm;\data_1)}}{\norm{\parm - \trueparm}}}}
					& <\infty,
			\]
	\end{enumerate}
\end{proposition}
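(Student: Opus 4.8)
The plan is to establish all three implications through a single reduction. Each hypothesis controls $\finfoEmp{\samplesize}$ on a \emph{fixed} neighbourhood $\ball_{\delta_1}(\trueparm)$ of the true parameter, whereas \cref{ass:J} concerns the \emph{random, shrinking} ball $\ball\upper{\samplesize}(r_{\finfo,\samplesize})$ centred at $\parmLMLE{\samplesize}$. To bridge the two I would choose $r_{\finfo,\samplesize}$ diverging slowly, e.g.\ $r_{\finfo,\samplesize} = \samplesize^{\localscalePow/2}$, which is non-decreasing, tends to infinity, and satisfies $r_{\finfo,\samplesize}/\samplesize^\localscalePow = \samplesize^{-\localscalePow/2} \to 0$. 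Because $\parmLMLE{\samplesize} \stk\to{p} \trueparm$ (which holds under \cref{ass:gll}), the radius bound
\[
\rho_\samplesize \defas \sup_{\parm\in\ball\upper{\samplesize}(r_{\finfo,\samplesize})} \norm{\parm - \trueparm} \le \frac{r_{\finfo,\samplesize}}{\samplesize^\localscalePow} + \norm{\parmLMLE{\samplesize} - \trueparm} = o_p(1)
\]
holds, and for each fixed $\delta>0$ the containment event $A_\samplesize^\delta \defas \{\ball\upper{\samplesize}(r_{\finfo,\samplesize}) \subseteq \ball_\delta(\trueparm)\}$ satisfies $\Pr(A_\samplesize^\delta) \to 1$. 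Via $\Pr(\sup(\cdot)>\epsilon) \le \Pr(\{\sup(\cdot)>\epsilon\}\cap A_\samplesize^{\delta_1}) + \Pr((A_\samplesize^{\delta_1})^c)$, it then suffices to bound the supremum on the event where the random ball lies inside the fixed ball on which the hypothesis applies.

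For (a), I would write $\finfoEmp{\samplesize}(\parm) - \finfo(\trueparm) = [\finfoEmp{\samplesize}(\parm) - \finfo(\parm)] + [\finfo(\parm) - \finfo(\trueparm)]$. On $A_\samplesize^{\delta_1}$ the supremum of the first bracket over $\ball\upper{\samplesize}(r_{\finfo,\samplesize})$ is at most $\sup_{\parm\in\ball_{\delta_1}(\trueparm)} \norm{\finfoEmp{\samplesize}(\parm) - \finfo(\parm)}$, which vanishes in probability by the hypothesis of (a) after the identification $\frac{1}{\samplesize}\sum_{\dataidx\in\range{\samplesize}} \hess\loglik(\parm;\data_\dataidx) + \finfo(\parm) = -(\finfoEmp{\samplesize}(\parm) - \finfo(\parm))$. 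The second bracket is at most the modulus of continuity of $\finfo$ at $\trueparm$ evaluated at $\rho_\samplesize$, which tends to $0$ since $\finfo$ is continuous at $\trueparm$ and $\rho_\samplesize = o_p(1)$.

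Parts (b) and (c) share the decomposition $\finfoEmp{\samplesize}(\parm) - \finfo(\trueparm) = [\finfoEmp{\samplesize}(\parm) - \finfoEmp{\samplesize}(\trueparm)] + [\finfoEmp{\samplesize}(\trueparm) - \finfo(\trueparm)]$. The second bracket vanishes in probability by the law of large numbers at the single point $\trueparm$, which is legitimate because $\finfo(\trueparm) = -\EE[\hess\loglik(\trueparm;\data)]$ is a finite matrix, so $\EE\norm{\hess\loglik(\trueparm;\data)} < \infty$. For (b), equicontinuity at $\trueparm$ yields, for each $\epsilon>0$, a $\delta>0$ with $\norm{\hess\loglik(\parm;\dumdata) - \hess\loglik(\trueparm;\dumdata)} < \epsilon$ for all $\dumdata\in\dataspace$ whenever $\norm{\parm-\trueparm}<\delta$; averaging over the sample and restricting to $A_\samplesize^\delta$ bounds the first bracket uniformly by $\epsilon$, which is arbitrary. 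For (c), the hypothesis provides a data-dependent Lipschitz constant $\lipschitzconst(\data_\dataidx) \defas \sup_{\parm\in\ball_{\delta_1}(\trueparm)} \norm{\hess\loglik(\parm;\data_\dataidx) - \hess\loglik(\trueparm;\data_\dataidx)}/\norm{\parm-\trueparm}$ with $\EE[\lipschitzconst(\data_1)]<\infty$, so on $A_\samplesize^{\delta_1}$ the first bracket is at most $\big(\frac{1}{\samplesize}\sum_{\dataidx\in\range{\samplesize}} \lipschitzconst(\data_\dataidx)\big)\,\rho_\samplesize$; the average converges in probability to $\EE[\lipschitzconst(\data_1)]$ and is thus $O_p(1)$, while $\rho_\samplesize = o_p(1)$, so the product is $o_p(1)$.

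I expect the main obstacle to be the bookkeeping required to transfer control from the fixed deterministic neighbourhood $\ball_{\delta_1}(\trueparm)$, on which each hypothesis is phrased, to the random ball $\ball\upper{\samplesize}(r_{\finfo,\samplesize})$ centred at the fluctuating estimator $\parmLMLE{\samplesize}$: one must choose $r_{\finfo,\samplesize}$ diverging slowly enough, verify $\Pr(A_\samplesize^{\delta})\to1$, and combine the several ``in probability'' statements (the empirical-process convergence, the consistency of $\parmLMLE{\samplesize}$, and the averaged Lipschitz constants) without circularity. The remaining steps are routine triangle-inequality and law-of-large-numbers estimates.
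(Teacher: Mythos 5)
Your proof is correct and takes essentially the same approach as the paper's: the same radius choice $r_{\finfo,\samplesize} \asymp \samplesize^{\localscalePow/2}$, the same decompositions (adding and subtracting $\finfo(\parm)$ in part (a), and $\finfoEmp{\samplesize}(\trueparm)$ in parts (b) and (c)), the weak law of large numbers at $\trueparm$, and the averaged local Lipschitz constants multiplied by the shrinking radius in part (c). The only difference is bookkeeping for combining the convergence-in-probability statements---the paper passes to almost-surely convergent sub-subsequences via \cref{lem:conv-prob-subseq}, while you work directly in probability with the containment events $A_\samplesize^{\delta}$ and a union bound---and the two devices are interchangeable here.
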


\begin{proof}[Proof of \cref{prop:assJ-suff}] \
	
	\begin{enumerate}
		\item[a)] Let $r_{\finfo,\samplesize} = \delta_1\samplesize^{\localscalePow/2} /2$. 
		Then $\ball\rbra{\parmMLE{\samplesize},r_{\finfo,\samplesize}/\samplesize^\localscalePow} \subseteq \ball\rbra{\parmMLE{\samplesize},\delta_1/2}$. 
		
		Given that $\parmMLE{\samplesize}\stk\to{p}\trueparm$, any subsequence of indices $\subseqsamplesize$ has a further sub-subsequence of indices $\subsubseqsamplesize$ where both $\parmMLE{\subsubseqsamplesize}\to\trueparm$ 
		and 
		\*[
			\sup_{\parmdum\in \ball_{\delta_1}(\trueparm)} \norm[2]{\frac{1}{\subsubseqsamplesize} \sum_{\dataidx\in\range{\subsubseqsamplesize}} \hess\loglik(\parmdum; \data_\dataidx) + \finfo(\parmdum)}\to 0 \text{ a.s.}
		\]
		Then there is a $\subsubseqidx_0$ such that if $\subsubseqidx\geq\subsubseqidx_0$ then $\norm{\parmMLE{\subsubseqsamplesize}-\trueparm}\leq \delta_1/2$. Therefore if $\subsubseqidx\geq\subsubseqidx_0$ then $\ball\rbra{\parmMLE{\subsubseqsamplesize},r_{\finfo,\samplesize}/\subsubseqsamplesize^\localscalePow} \subseteq \ball\rbra{\trueparm,\delta_1}$.
		
		Thus, for $\subsubseqidx\geq\subsubseqidx_0$,
		\*[
		&\hspace{-1em}\sup_{\parm\in\ball\rbra{\parmMLE{\subsubseqsamplesize},r_{\finfo,\samplesize}/\subsubseqsamplesize^\localscalePow}}\norm{\finfoEmp{\subsubseqsamplesize}(\parm) - \finfo(\trueparm)} \\
			& \leq \sup_{\parm\in\ball\rbra{\parmMLE{\subsubseqsamplesize},r_{\finfo,\samplesize}/\subsubseqsamplesize^\localscalePow}}\norm{\finfoEmp{\subsubseqsamplesize}(\parm) - \finfo(\parm)} 
			+ \sup_{\parm\in\ball\rbra{\parmMLE{\subsubseqsamplesize},r/\subsubseqsamplesize^\localscalePow}}\norm{\finfo(\parm) - \finfo(\trueparm)} \\ 
			& \leq \sup_{\parm\in\ball\rbra{\trueparm,\delta_1}}\norm{\finfoEmp{\subsubseqsamplesize}(\parm) - \finfo(\parm)} + \sup_{\parm\in\ball\rbra{\parmMLE{\subsubseqsamplesize},\delta_1/\subsubseqsamplesize^{\localscalePow/2}}}\norm{\finfo(\parm) - \finfo(\trueparm)}\\
			& \leq \sup_{\parm\in\ball\rbra{\trueparm,\delta_1}}\norm{\finfoEmp{\subsubseqsamplesize}(\parm) - \finfo(\parm)} + \sup_{\parm\in\ball\rbra{\trueparm,\norm{\parmMLE{\subsubseqsamplesize}-\trueparm}+\delta_1/\subsubseqsamplesize^{\localscalePow/2}}}\norm{\finfo(\parm) - \finfo(\trueparm)}\\
			&\stk\to{a.s.} 0.
		\]
		Therefore, every subsequence of $S_\samplesize = \sup_{\parm\in\ball\rbra{\parmMLE{\samplesize},r_{\finfo,\samplesize}/\samplesize^\localscalePow}}\norm{\finfoEmp{\samplesize}(\parm) - \finfo(\trueparm)}$ has a further sub-subsequence converging almost surely to $0$, and hence $S_\samplesize$ converges in probability to $0$.
		\item[b)] Equicontinuity implies there is a function $\equicontinuityfun_{\finfoOpt}: \PosReals \to \PosReals$ with $\lim_{t\to0}\equicontinuityfun_{\finfoOpt}(t) = 0$, and
		\*[
		    \sup_{\dumdata\in\dataspace}\sup_{\limparm\in B_\delta(\trueparm)} \norm{ \hess\loglik(\limparm;\dumdata) - \hess\loglik(\trueparm;\dumdata)} \leq \equicontinuityfun_{\finfoOpt}(\delta).
		\]
		Let $r_{\finfo,\samplesize} = \samplesize^{\localscalePow/2}$. 
		Then
		\*[
		&\hspace{-1em}\sup_{\parm\in\ball\rbra{\parmMLE{\samplesize},r_{\finfo,\samplesize}/\samplesize^\localscalePow}}\norm{\finfoEmp{\samplesize}(\parm) - \finfo(\trueparm)} \\
			& \leq \sup_{\parm\in\ball\rbra{\parmMLE{\samplesize},\samplesize^{-\localscalePow/2}}}\norm{\finfoEmp{\samplesize}(\parm) - \finfoEmp{\samplesize}(\trueparm)} 
			+ \norm{\finfoEmp{\samplesize}(\trueparm) - \finfo(\trueparm)
			} \\ 
			& \leq \sup_{\parm\in\ball\rbra{\trueparm,\norm{\parmMLE{\samplesize}-\trueparm}+\samplesize^{-\localscalePow/2}}}\norm{\finfoEmp{\samplesize}(\parm) - \finfoEmp{\samplesize}(\trueparm)} 
			+ \norm{\finfoEmp{\samplesize}(\trueparm) - \finfo(\trueparm)}\\
			& \leq \equicontinuityfun_{\finfoOpt}\rbra{\norm{\parmMLE{\samplesize}-\trueparm}+\samplesize^{-\localscalePow/2}}
			+ \norm{\finfoEmp{\samplesize}(\trueparm) - \finfo(\trueparm)}\\
			&\stk\to{p} 0.
		\]
		In the last step we used that the first term vanishes in probability because $\parmMLE{\samplesize}\stk\to{p}\trueparm$, and the second term vanishes in probability by the weak law of large numbers.
		\item[c)] Let 
			\*[
				Q_\samplesize 
					 & = \frac{1}{\samplesize} \sum_{\dataidx\in\range{\samplesize}}\sbra{ \sup_{\parm\in \ball_{\delta_1}(\trueparm)}{\frac{\norm{\grad^{\otimes 2}\loglik(\parm;\data_\dataidx) -\grad^{\otimes 2}\loglik(\trueparm;\data_\dataidx)}}{\norm{\parm - \trueparm}}}}, \andT\\
				q 
					& = \EE \sbra{ \sup_{\parm\in \ball_{\delta_1}(\trueparm)}{\frac{\norm{\grad^{\otimes 2}\loglik(\parm;\data_1) -\grad^{\otimes 2}\loglik(\trueparm;\data_1)}}{\norm{\parm - \trueparm}}}}.
			\]
		By the weak law of large numbers, $Q_\samplesize \stk\to{p} q$ and $\finfoEmp{\subsubseqsamplesize}(\trueparm) \stk\to{p} \finfo(\trueparm)$.
		Let $r_{\finfo,\samplesize} = \delta_1\samplesize^{\localscalePow/2}/2$.
		As in part a), given that $\parmMLE{\samplesize}\stk\to{p}\trueparm$, any subsequence of indices $\subseqsamplesize$ has a further sub-subsequence of indices $\subsubseqsamplesize$ where both $\parmMLE{\subsubseqsamplesize}\to\trueparm$, 
		$Q_{\subsubseqsamplesize}\to q $, and $\finfoEmp{\subsubseqsamplesize}(\trueparm) \to \finfo(\trueparm)$ almost surely. 
		Then there is a $\subsubseqidx_0$ such that if $\subsubseqidx\geq\subsubseqidx_0$ then $\norm{\parmMLE{\subsubseqsamplesize}-\trueparm}\leq \delta_1/2$. Therefore if $\subsubseqidx\geq\subsubseqidx_0$ then $\ball\rbra{\parmMLE{\subsubseqsamplesize},r_{\finfo,\samplesize}/\subsubseqsamplesize^\localscalePow} \subseteq \ball\rbra{\trueparm,\delta_1}$.
		
		Thus, for $\subsubseqidx\geq\subsubseqidx_0$,
		\*[
    &\hspace{-1em}
    \sup_{\parm\in\ball\rbra{\parmMLE{\subsubseqsamplesize},r_{\finfo,\samplesize}/\subsubseqsamplesize^\localscalePow}}\norm{\finfoEmp{\subsubseqsamplesize}(\parm) -\finfoOpt} \\
			& \leq \norm{\finfoEmp{\subsubseqsamplesize}(\trueparm) - \finfo(\trueparm)} 
			+ \sup_{\parm\in\ball\rbra{\parmMLE{\subsubseqsamplesize},\delta_1\subsubseqsamplesize^{-\localscalePow/2}/2}}\norm{\finfoEmp{\subsubseqsamplesize}(\parm) - \finfoEmp{\subsubseqsamplesize}(\trueparm)} \\ 
		  &\leq \norm{\finfoEmp{\subsubseqsamplesize}(\trueparm) - \finfo(\trueparm)} \\
				&\qquad + \rbra{\norm{\parmMLE{\subsubseqsamplesize}-\trueparm} + \delta_1\subsubseqsamplesize^{-\localscalePow/2}/2} \sup_{\parm\in\ball\rbra{\parmMLE{\subsubseqsamplesize},\delta_1\subsubseqsamplesize^{-\localscalePow/2}/2}}\frac{\norm{\finfoEmp{\subsubseqsamplesize}(\parm) - \finfoEmp{\subsubseqsamplesize}(\trueparm)}}{\norm{\parm - \trueparm}} \\   
			&\leq \norm{\finfoEmp{\subsubseqsamplesize}(\trueparm) - \finfo(\trueparm)} \\
				&\qquad + \rbra{\norm{\parmMLE{\subsubseqsamplesize}-\trueparm} + \delta_1\subsubseqsamplesize^{-\localscalePow/2}/2} \sup_{\parm\in\ball\rbra{\trueparm,\delta_1}}\frac{\norm{\finfoEmp{\subsubseqsamplesize}(\parm) - \finfoEmp{\subsubseqsamplesize}(\trueparm)}}{\norm{\parm - \trueparm}} \\   	
			&\leq \norm{\finfoEmp{\subsubseqsamplesize}(\trueparm) - \finfo(\trueparm)} \\
				&\qquad + \rbra{\norm{\parmMLE{\subsubseqsamplesize}-\trueparm} + \delta_1\subsubseqsamplesize^{-\localscalePow/2}/2} \sup_{\parm\in\ball\rbra{\trueparm,\delta_1}} \frac{1}{\subsubseqsamplesize}\sum_{\dataidx\in\range{\subsubseqsamplesize}}\sbra{ {\frac{\norm{\grad^{\otimes 2}\loglik(\parm;\data_\dataidx) -\grad^{\otimes 2}\loglik(\trueparm;\data_\dataidx)}}{\norm{\parm - \trueparm}}}}\\
			&\leq \norm{\finfoEmp{\subsubseqsamplesize}(\trueparm) - \finfo(\trueparm)} + \rbra{\norm{\parmMLE{\subsubseqsamplesize}-\trueparm} + \delta_1\subsubseqsamplesize^{-\localscalePow/2}/2} Q_{\subsubseqsamplesize}
			\stk\to{a.s.} 0 					   
		\]		 
			Therefore, every subsequence of $S_\samplesize = \sup_{\parm\in\ball\rbra{\parmMLE{\samplesize},r_{\finfo,\samplesize}/\samplesize^\localscalePow}}\norm{\finfoEmp{\samplesize}(\parm) - \finfo(\trueparm)}$ has a further sub-subsequence converging almost surely to $0$, and hence $S_\samplesize$ converges in probability to $0$.
	\end{enumerate}
	\end{proof}

\begin{proposition}[Sufficient conditions for {\cref{ass:V}}]
	\label{prop:assV-suff}
	Each of the following imply \cref{ass:V}.
	\begin{enumerate}    
	\item[a)]
	there exists a $\delta_2>0$ with $\sup_{\parmdum\in \ball_{\delta_2}(\trueparm)} \norm{\frac{1}{\samplesize} \sum_{\dataidx\in\range{\samplesize}} \grad\loglik(\parmdum; \data_\dataidx)^{\otimes 2} - \vinfo(\parmdum)}\stk\to{p} 0$
	and $\vinfo$ is continuous at $\trueparm$,

	\item[b)]
	$\set{\grad\loglik(\cdot;\dumdata)^{\otimes2} \ \vert \ \dumdata\in\dataspace}$ is equicontinuous at $\trueparm$,

	\item[c)] $\EE \sbra{ \normInf{{\hess\loglik(\cdot;\data_1)}}^{2}} <\infty$,
	\end{enumerate}
\end{proposition}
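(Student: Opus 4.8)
The plan is to mirror, essentially verbatim, the three-part argument already carried out for \cref{prop:assJ-suff}, replacing the Hessian-based empirical information $\finfoEmp{\samplesize}$ by the gradient-outer-product empirical information $\vinfoEmp{\samplesize}$, the limiting object $\finfo$ by $\vinfo$, and $\finfo(\trueparm)$ by $\vinfo(\trueparm)$. The common skeleton in all three cases is: choose a sequence $r_{\vinfo,\samplesize} \to \infty$ slowly enough that the shrinking ball $\ball\upper{\samplesize}(r_{\vinfo,\samplesize})$ of radius $r_{\vinfo,\samplesize}/\samplesize^\localscalePow$ is, once $\parmMLE{\samplesize}$ is near $\trueparm$, eventually contained in a fixed neighbourhood of $\trueparm$; then split $\vinfoEmp{\samplesize}(\parm) - \vinfo(\trueparm)$ by the triangle inequality; then invoke \cref{lem:conv-prob-subseq} to pass from almost-sure subsequential bounds back to convergence in probability.

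For part a), I would take $r_{\vinfo,\samplesize} = \delta_2\samplesize^{\localscalePow/2}/2$, so that $\ball\upper{\samplesize}(r_{\vinfo,\samplesize}) \subseteq \ball_{\delta_2/2}(\parmMLE{\samplesize})$, and along an arbitrary subsequence extract a further subsequence on which $\parmMLE{\samplesize}\to\trueparm$ and the assumed uniform empirical limit both hold almost surely. Once $\norm{\parmMLE{\samplesize}-\trueparm}\le\delta_2/2$, the shrinking ball lies inside $\ball_{\delta_2}(\trueparm)$, and the split $\norm{\vinfoEmp{\samplesize}(\parm)-\vinfo(\trueparm)} \le \norm{\vinfoEmp{\samplesize}(\parm)-\vinfo(\parm)} + \norm{\vinfo(\parm)-\vinfo(\trueparm)}$ sends the first term to $0$ by the uniform convergence on $\ball_{\delta_2}(\trueparm)$ and the second to $0$ by continuity of $\vinfo$ at $\trueparm$. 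For part b), equicontinuity at $\trueparm$ of $\{\grad\loglik(\cdot;\dumdata)^{\otimes 2}\}$ supplies a modulus $\equicontinuityfun_{\vinfoOpt}$ with $\equicontinuityfun_{\vinfoOpt}(t)\to0$ controlling $\sup_{\dumdata}\norm{\grad\loglik(\parm;\dumdata)^{\otimes 2} - \grad\loglik(\trueparm;\dumdata)^{\otimes 2}}$ uniformly in the data; taking $r_{\vinfo,\samplesize} = \samplesize^{\localscalePow/2}$ and splitting as $\norm{\vinfoEmp{\samplesize}(\parm)-\vinfoEmp{\samplesize}(\trueparm)} + \norm{\vinfoEmp{\samplesize}(\trueparm)-\vinfo(\trueparm)}$ bounds the first term by $\equicontinuityfun_{\vinfoOpt}(\norm{\parmMLE{\samplesize}-\trueparm}+\samplesize^{-\localscalePow/2})$, which vanishes since $\parmMLE{\samplesize}\stk\to{p}\trueparm$, while the second vanishes by the weak law of large numbers (finiteness of $\vinfo(\trueparm)$ gives the required integrability).

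Part c) is where a genuinely new ingredient appears, since the integrand is now the outer product $\grad\loglik^{\otimes 2}$ rather than the Hessian, and outer products are not globally Lipschitz in $\parm$. Here I would linearize through the gradient, using the algebraic identity $a^{\otimes 2} - b^{\otimes 2} = a\otimes(a-b) + (a-b)\otimes b$, hence $\norm{a^{\otimes 2}-b^{\otimes 2}}\le\norm{a-b}(\norm{a}+\norm{b})$, together with the mean-value bound $\norm{\grad\loglik(\parm;\data)-\grad\loglik(\trueparm;\data)}\le\lipschitzconst(\data)\norm{\parm-\trueparm}$ for $\lipschitzconst(\data) = \normInf{\hess\loglik(\cdot;\data)}$, exactly as in the J proof. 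Setting $r_{\vinfo,\samplesize} = \samplesize^{\localscalePow/2}$ and writing $\varrho_\samplesize \defas \norm{\parmMLE{\samplesize}-\trueparm}+\samplesize^{-\localscalePow/2}$, on $\ball\upper{\samplesize}(r_{\vinfo,\samplesize})$ one obtains $\norm{\vinfoEmp{\samplesize}(\parm)-\vinfoEmp{\samplesize}(\trueparm)} \le \varrho_\samplesize\cdot\frac{1}{\samplesize}\sum_\dataidx \lipschitzconst(\data_\dataidx)\rbra{2\norm{\grad\loglik(\trueparm;\data_\dataidx)}+\lipschitzconst(\data_\dataidx)\varrho_\samplesize}$. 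The remaining empirical averages converge in probability to finite limits by the weak law of large numbers: $\frac{1}{\samplesize}\sum\lipschitzconst(\data_\dataidx)^2\to\EE[\lipschitzconst(\data)^2]<\infty$ by hypothesis, and $\frac{1}{\samplesize}\sum\lipschitzconst(\data_\dataidx)\norm{\grad\loglik(\trueparm;\data_\dataidx)}$ converges since, by Cauchy--Schwarz, $\EE[\lipschitzconst(\data)\norm{\grad\loglik(\trueparm;\data)}] \le (\EE[\lipschitzconst(\data)^2])^{1/2}(\EE[\norm{\grad\loglik(\trueparm;\data)}^2])^{1/2}$ with $\EE[\norm{\grad\loglik(\trueparm;\data)}^2] = \tr\vinfo(\trueparm)<\infty$. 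Since $\varrho_\samplesize\stk\to{p}0$ while these averages stay bounded in probability, the increment term vanishes, $\norm{\vinfoEmp{\samplesize}(\trueparm)-\vinfo(\trueparm)}\stk\to{p}0$ again by the weak law, and passing through subsequences via \cref{lem:conv-prob-subseq} establishes \cref{ass:V}. The main obstacle is precisely this part c): because the outer product's local increment must be routed through the gradient's Lipschitz constant, the cross-moment $\EE[\lipschitzconst(\data)\norm{\grad\loglik(\trueparm;\data)}]$ has to be controlled, which is exactly where the second-moment hypothesis on $\normInf{\hess\loglik}$ must be combined with the finiteness of $\vinfo(\trueparm)$.
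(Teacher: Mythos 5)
Your proposal is correct and follows essentially the same route as the paper: parts a) and b) are direct transpositions of the corresponding arguments for \cref{prop:assJ-suff} (which is exactly what the paper does), and part c) uses the same decomposition into $\norm{\vinfoEmp{\samplesize}(\trueparm)-\vinfo(\trueparm)}$ plus a local increment term, with the increment routed through the random Lipschitz constant $\normInf{\hess\loglik(\cdot;\data_\dataidx)}$ and controlled by the second-moment hypothesis, Cauchy--Schwarz, and the weak law of large numbers. The only cosmetic differences in c) are that you bound the outer-product difference via the identity $a^{\otimes 2}-b^{\otimes 2}=a\otimes(a-b)+(a-b)\otimes b$ where the paper uses the integral-form Taylor expansion of the gradient (the resulting bounds coincide), and you apply Cauchy--Schwarz at the population level to justify a WLLN for the cross-term average, where the paper applies it to the empirical averages, yielding $\sqrt{\tr(\vinfoEmp{\samplesize}(\trueparm))}\sqrt{Q_\samplesize}$ with each factor converging in probability---both are valid and rest on the same implicit finiteness of $\tr\vinfo(\trueparm)$.
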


\begin{proof}[Proof of \cref{prop:assV-suff}] \
	
	\begin{enumerate}    
	\item[a), b)] The proofs are the same as for \cref{prop:assJ-suff} a), b).
	\item[c)] 	
	Let $Q_\samplesize = \frac{1}{\samplesize}\sum_{\dataidx\in\range{\samplesize}}\normInf{\hess\loglik(\cdot; \data_\dataidx)}^2$, $q = \EE \normInf{\hess\loglik(\cdot; \data_1)}^2$, and let $r_{\vinfo,\samplesize} = \samplesize^{\localscalePow/2}$. By the weak law of large numbers, $Q_\samplesize\stk\to{p} q$, and
	$\vinfoEmp{\samplesize}(\trueparm) \stk\to{p} \vinfo(\trueparm)$.
	Starting with
	\*[
	\sup_{\parm\in\ball\rbra{\parmMLE{\samplesize},r_{\vinfo,\samplesize}/\samplesize^\localscalePow}}\norm{\vinfoEmp{\samplesize}(\parm) -\vinfoOpt}
		& \leq \norm{\vinfoEmp{\samplesize}(\trueparm) - \vinfo(\trueparm)} 
		+ \sup_{\parm\in\ball\rbra{\parmMLE{\samplesize},\samplesize^{-\localscalePow/2}}}\norm{\vinfoEmp{\samplesize}(\parm) - \vinfoEmp{\samplesize}(\trueparm)},
	\]
	we can bound the second term with a Taylor series and Cauchy-Schwartz as
	\*[
	&\hspace{-1em}\norm{\vinfoEmp{\samplesize}(\parm) - \vinfoEmp{\samplesize}(\trueparm)}\\
		&\leq \frac{1}{\samplesize} \sum_{\dataidx\in\range{\samplesize}}	\norm[2]{\rbra[2]{\grad\loglik\rbra{\trueparm;\ \data_{\dataidx}} + \int_0^1 \hess\loglik\rbra{\trueparm + s(\parm-\trueparm) ;\ \data_{\dataidx}} ds\ (\parm-\trueparm) }^{\otimes 2} - \grad\loglik\rbra{\trueparm;\ \data_{\dataidx}}^{\otimes 2}} \\
		&\leq \frac{2}{\samplesize} \sum_{\dataidx\in\range{\samplesize}}	\norm{\grad\loglik\rbra{\trueparm;\ \data_{\dataidx}}}\norm[2]{\int_0^1 \hess\loglik\rbra{\trueparm + s(\parm-\trueparm) ;\ \data_{\dataidx}} ds\ (\parm-\trueparm) } \\
		&\qquad + \frac{1}{\samplesize} \sum_{\dataidx\in\range{\samplesize}}	\norm[2]{\rbra{\int_0^1 \hess\loglik\rbra{\trueparm + s(\parm-\trueparm) ;\ \data_{\dataidx}} ds\ (\parm-\trueparm) }^{\otimes 2}} \\
		&\leq \frac{2}{\samplesize} \sum_{\dataidx\in\range{\samplesize}}	\norm{\grad\loglik\rbra{\trueparm;\ \data_{\dataidx}}}\normInf{\hess\loglik(\cdot; \data_\dataidx)}\norm{\parm-\trueparm } 
		+ \frac{1}{\samplesize} \sum_{\dataidx\in\range{\samplesize}}	\normInf{\hess\loglik(\cdot; \data_\dataidx)}^2 \norm{\parm-\trueparm }^2 \\
		&\leq 2\norm{\parm-\trueparm }  \sqrt{\frac{1}{\samplesize} \sum_{\dataidx\in\range{\samplesize}}	\norm{\grad\loglik\rbra{\trueparm;\ \data_{\dataidx}}}^2} \sqrt{\frac{1}{\samplesize} \sum_{\dataidx\in\range{\samplesize}} L(\data_\dataidx)^2}
		+ \norm{\parm-\trueparm }^2Q_\samplesize \\
		&\leq 2\norm{\parm-\trueparm }  \sqrt{\tr(\vinfoEmp{\samplesize}(\trueparm)) } \sqrt{Q_\samplesize}
		+ \norm{\parm-\trueparm }^2Q_\samplesize,
	\]
		Plugging this back in,				   
		\*[
		&\hspace{-1em}
		\sup_{\parm\in\ball\rbra{\parmMLE{\samplesize},r_{\vinfo,\samplesize}/\samplesize^\localscalePow}}\norm{\vinfoEmp{\samplesize}(\parm) -\vinfoOpt} \\
			& \leq \norm{\vinfoEmp{\samplesize}(\trueparm) - \vinfo(\trueparm)} 
			+ \sup_{\parm\in\ball\rbra[2]{\parmMLE{\samplesize},\samplesize^{-\localscalePow/2}}}\rbra{2\norm{\parm-\trueparm }  \sqrt{\tr(\vinfoEmp{\samplesize}(\trueparm)) } \sqrt{Q_\samplesize}
			+ \norm{\parm-\trueparm }^2Q_\samplesize}\\ 
			& \leq \norm{\vinfoEmp{\samplesize}(\trueparm) - \vinfo(\trueparm)} 
			+ 2\rbra[2]{\norm{\parmMLE{\samplesize}-\trueparm }+ \samplesize^{-\localscalePow/2}}  \sqrt{\tr(\vinfoEmp{\samplesize}(\trueparm)) } \sqrt{Q_\samplesize}
			+ \rbra[2]{\norm{\parmMLE{\samplesize}-\trueparm }+ \samplesize^{-\localscalePow/2}}^2Q_\samplesize\\
			&\stk\to{p}0.
		\] 
\end{enumerate}
\end{proof}

\section{Proof of Proposition {\ref{prop:averaged-covariance}}} \label{proof:prop:averaged-covariance}

Recall that
\[
d \localparm_{t}
&= -\frac{1}{2} \driftMat \localparm_{t} \,dt + \vphantom{\frac{1}{2}}\sqrt{\diffusionMat} \, d\weiner_{t},
\]
which implies
\[
\localparm_{t} =\exp(-B/2)\localparm_{0}+\int_0^t\exp(-B(t-s)/2)A^{1/2}d\weiner_{s}.
\]
Assuming stationarity, $\localparm_{t}\sim \mathcal{N}(0,Q_{\infty})$ where $Q_{\infty}=\int_0^{\infty}\exp(-Bs/2)A\exp(-Bs/2)ds$, we have
\[
\Cov \rbra{\int_{0}^{t} \limparm_{s} \, ds}
&=\EE\rbra{\int_0^t\int_0^t \localparm_{s}\localparm_{r}^T ds dr}
=\int_0^t\int_0^s \EE(\localparm_{s}\localparm_{r}^T) dr ds +\int_0^t\int_0^r \EE(\localparm_{s}\localparm_{r}^T) ds dr.
\]
We focus on the first term since the second term can be written similarly:
\[
\int_0^t\int_0^s \EE(\localparm_{s}\localparm_{r}^T) dr ds
&=\int_0^t\int_0^s \EE\left[\left(\exp(-B(s-r)/2)\localparm_{r}+\int_r^s\exp(-B(s-u)/2)A^{1/2} d\weiner_u\right)\localparm_{r}^T\right] dr ds\\
&=\int_0^t\int_0^s \exp(-B(s-r)/2)\EE(\localparm_{r}\localparm_{r}^T)dr ds\\
&=\int_0^t\int_0^s \exp(-B(s-r)/2)Q_{\infty}dr ds\\
&=\int_0^t -2B^{-1} (\exp(-Bs/2)-1)Q_{\infty} ds\\
&=\left[4B^{-2}(\exp(-Bt/2)-1)+2tB^{-1}\right]Q_{\infty}.
\]
We can write $\int_0^t\int_0^r \EE(\localparm_{s}\localparm_{r}^T) ds dr$ similarly and combine the two results
\[
\Cov\rbra{\limparmavg_{t}}=\frac{1}{t^2}\Cov (\int_{0}^{t} \limparm_{s} \, ds)
&=\frac{1}{t^2}\left[\int_0^t\int_0^s \EE(\localparm_{s}\localparm_{r}^T) dr ds +\int_0^t\int_0^r \EE(\localparm_{s}\localparm_{r}^T) ds dr\right]\\
&=  \frac{4}{t}\Sym\rbra{\driftMat^{-1}\statcov} - \frac{8}{t^{2}}\Sym\rbra{\driftMat^{-2}\cbra{I - e^{-t \driftMat/2}}\statcov},
\]
which verifies \cref{eq:continuous-time-IA}.

Using Taylor's theorem and the assumption that $-\driftMat$ is Hurwitz, we obtain
\[
e^{-t \driftMat/2} - I
&= \sum_{k=1}^{\ell} \frac{1}{k!}\rbra{\frac{-t \driftMat}{2}}^{k} + R_{\ell}(t),
\]
where $\norm{R_{\ell}(t)} \le \frac{\norm{t \driftMat/2}^{\ell+1}}{(\ell+1)!}$.
Taking $\ell = 3$ yields
\[
\lefteqn{\frac{4}{t}\driftMat^{-1}\diffusionMat\driftMat^{-\top} - \frac{8}{t^{2}}\Sym\rbra{\driftMat^{-2}\cbra{I - e^{-t \driftMat/2}}\statcov}} \\
&= \frac{4}{t}\driftMat^{-1}\diffusionMat\driftMat^{-\top} + \frac{8}{t^{2}}\cbra{-\rbra{\frac{t}{2}} \Sym\rbra{\driftMat^{-1}\statcov} + \frac{1}{2}\rbra{\frac{t}{2}}^{2}\statcov - \frac{1}{6} \rbra{\frac{t}{2}}^{3} \Sym\rbra{\driftMat\statcov} + \Sym\rbra{\driftMat^{-2} {R}_{3}(t) \statcov}} \\
&= \statcov - \frac{t}{6} \diffusionMat + \tilde{R}_{3}(t),
\]
where $\norm{\tilde{R}_{3}(t)} \le \frac{t^{2}}{48}\norm{\driftMat}^{4}\norm{\driftMat^{-2}\statcov}$,
and we have used that $\Sym\rbra[0]{\driftMat^{-1}\statcov} = \driftMat^{-1}\diffusionMat\driftMat^{-\top}$ and $\Sym\rbra{\driftMat \statcov} = \diffusionMat$, and that $B^{-1}$ and $R_{\ell}(t)$ commute.

For any $t > 0$, we have
\[
\norm{\frac{8}{t^{2}}\Sym\rbra{\driftMat^{-2}\cbra{I - e^{-t \driftMat/2}}\statcov}}
& = \norm{\frac{8}{t^{2}}\Sym\rbra{\cbra{I - e^{-t \driftMat/2}}\driftMat^{-2}\statcov}} \\
&\le \frac{8}{t^{2}} \norm{\Sym\rbra{\driftMat^{-2}\statcov}}
\le \frac{8}{t^{2}} \norm{\driftMat^{-2}\statcov},
\]
which is small when $t \gg 3\norm{\driftMat^{-2}\statcov}^{1/2}$.

\section{Proof of {\cref{cor:iteravg}}}
\begin{proof}
For \cref{eq:asymptotic-iteravg-cov}, we have
\*[
\avgcov{\iternum}\upper{\samplesize}
	&= \Cov\rbra{\iteravg{\samplesize}{\floor{\numpasses\samplesize/\batchsize\upper{\samplesize}}}}
	\approx \frac{1}{(\localscale\upper{\samplesize})^{2}}\Cov\rbra{\limparmavg_{\numpasses\samplesize/(\batchsize\upper{\samplesize}\timescale\upper{\samplesize})}} \\
	&=  \frac{4}{\numpasses}\frac{\timescale\upper{\samplesize}\batchsize\upper{\samplesize}}{\samplesize(\localscale\upper{\samplesize})^{2}}\Sym\rbra{\cbra{\const{\stepsize}\precon\finfoOpt}^{-1}\statcov} \\
	&\phantom{=~} -  \frac{8}{\numpasses^{2}}\frac{(\timescale\upper{\samplesize}\batchsize\upper{\samplesize})^{2}}{(\samplesize\localscale\upper{\samplesize})^{2}}\Sym\rbra{\cbra{\const{\stepsize}\precon\finfoOpt}^{-2}\cbra{I - \exp\sbra{-\frac{\const{\stepsize}\numpasses\samplesize}{2\batchsize\upper{\samplesize}\timescale\upper{\samplesize}}\precon\finfoOpt}\statcov}}.
\]
Now, given $\batchsizePow + \stepsizePow \le \invtempPow$,
\*[
	\lim_{\samplesize\to\infty}\samplesize\avgcov{\iternum}\upper{\samplesize}
	&= \frac{4 \const{\batchsize}}{\numpasses} \Sym\rbra{\cbra{\const{\stepsize}\precon\finfoOpt}^{-1}\statcov} \\
	 &\qquad - \ind{\batchsizePow + \stepsizePow = 1} \frac{8\const{\batchsize}^2}{\numpasses^{2}} \Sym\rbra{\sbra{\const{\stepsize}\precon\finfoOpt}^{-2}\sbra{I - e^{-\frac{\const{\stepsize}\numpasses}{2\const{\batchsize}}\precon\finfoOpt}}\statcov}\bigg\}
\]
The rest follows by combining this with \cref{prop:averaged-covariance} and the simplifications following it, and by noting that since $\stepsizePow+\batchsizePow\leq 1$ and $\stepsizePow>0$ we must have $\batchsizePow<1$, and hence $\overline{\const{\batchsize}}=1$.
\end{proof}

\section{Sketch Proof of Scaling Limit for SGLD with Control Variates}
\label{apx:sketch-cv}
We argue that the mini-batch noise is always lower order for SGLD with control variates.
In SGLD-FP, the stochastic gradient $\grad\loglik\rbra{\theta;\ \data_I}$ is replaced by $\grad\loglik\rbra{\theta;\ \data_I} - \grad\loglik\rbra{\trueparm;\ \data_I}$. By construction this stochastic gradient is still unbiased, but its significantly lower variance leads to materially different behaviour in the asymptotic analysis. Specifically, the corresponding \ref{term2:2-loglik-loglik} from the proof of \cref{thm:sgld-scaling-limit-multiple-scalings-boundary-simple} in \cref{sec:pf:sgld-scaling-limit-multiple-scalings} is vanishing under any scaling limit where the drift term $\term[$\sbra{1.\loglik}$]{term1:1-loglik}$ does not vanish.

\*[
& \hspace{-1em}
\underbrace{\subseqsamplesize^{\timescalePow}\cEE{\dataset{\Nats}}
		\inner{\frac{1}{2}\hess\testfun(\limparm) \incrLoglik{\subseqsamplesize}(\limparm) }{\incrLoglik{\subseqsamplesize}(\limparm)}}_{\term[$\sbra{2.\loglik\loglik}\upper{\subseqsamplesize}(\limparm)$]{term2:2-loglik-loglik-cv}} \\
		& = \subseqsamplesize^{\timescalePow}\cEE{\dataset{\Nats}}
		\frac{1}{2}\hess\testfun(\limparm) \frobprod
				\rbra{ \incrLoglik{\subseqsamplesize}(\limparm) }^{\otimes2} \\
		& = \subseqsamplesize^{\timescalePow}\cEE{\dataset{\Nats}}
		\frac{1}{2}\precon\hess\testfun(\limparm) \frobprod
				\rbra{ \frac{\stepsize\localscale\upper{\samplesize} \precon}{2\batchsize\upper{\samplesize}} \ \sum_{\mbidx\in \range{\batchsize\upper{\samplesize}}} \rbra{\grad\loglik\rbra{\parmLMLE{{\samplesize}}
						+ (\localscale\upper{\samplesize})^{-1} \limparm;\ \data_{\mbdataidx{\samplesize}{1}{\mbidx}}}
						-\grad\loglik\rbra{\parmLMLE{{\samplesize}}
								;\ \data_{\mbdataidx{\samplesize}{1}{\mbidx}} }} }^{\otimes2} \\
		& = \frac{\const{\stepsize}^2}{\const{\batchsize}^2}\subseqsamplesize^{\timescalePow-2\stepsizePow+2\localscalePow-2\batchsizePow} \frac{1}{2}\precon\hess\testfun(\limparm)\precon\transpose \\
		&\qquad \frobprod \cEE{\dataset{\Nats}}
				\rbra{\sum_{\mbidx\in \range{\batchsize\upper{\samplesize}}} \rbra{\grad\loglik\rbra{\parmLMLE{{\samplesize}}
						+ (\localscale\upper{\samplesize})^{-1} \limparm;\ \data_{\mbdataidx{\samplesize}{1}{\mbidx}}}
						-\grad\loglik\rbra{\parmLMLE{{\samplesize}}
								;\ \data_{\mbdataidx{\samplesize}{1}{\mbidx}} }} }^{\otimes2} \\
		& \approx \frac{\const{\stepsize}^2}{\const{\batchsize}^2}\subseqsamplesize^{\timescalePow-2\stepsizePow+2\localscalePow-2\batchsizePow} \frac{1}{2}\precon\hess\testfun(\limparm)\precon\transpose \frobprod \cEE{\dataset{\Nats}}
				\rbra{\sum_{\mbidx\in \range{\batchsize\upper{\samplesize}}} \hess\loglik\rbra{\parmLMLE{{\samplesize}};\ \data_{\mbdataidx{\samplesize}{1}{\mbidx}}}(\localscale\upper{\samplesize})^{-1} \limparm }^{\otimes2}			\\
		& = \frac{\const{\stepsize}^2}{\const{\batchsize}^2}\subseqsamplesize^{\timescalePow-2\stepsizePow-2\batchsizePow} \frac{1}{2}\precon\hess\testfun(\limparm)\precon\transpose \frobprod \cEE{\dataset{\Nats}}
				\rbra{\sum_{\mbidx\in \range{\batchsize\upper{\samplesize}}} \hess\loglik\rbra{\parmLMLE{{\samplesize}};\ \data_{\mbdataidx{\samplesize}{1}{\mbidx}}}\limparm }^{\otimes2}	\\
		& \approx \subseqsamplesize^{\timescalePow-2\stepsizePow-2\batchsizePow} \frac{1}{2}\precon\hess\testfun(\limparm)\precon\transpose \frobprod \sbra{
		\batchsize\upper{\samplesize}(\batchsize\upper{\samplesize} -1) \finfoOpt\limparm\limparm\transpose\finfoOpt + \batchsize\upper{\samplesize} K(\trueparm; \limparm)
		}	\\
\]
where $K(\trueparm; \limparm) = \int \hess\loglik(\trueparm; x)\ \limparm^{\otimes2}\ \hess\loglik(\trueparm; x) \datadist(dx)$.

Now, we recall that for the drift term to be non-zero in the limit, we need $\timescalePow=\stepsizePow$. However, at any such scaling the \ref{term2:2-loglik-loglik-cv} term is $\Oo(\samplesize^{-\stepsizePow-2\batchsizePow})$, and so is always $0$ in the limit.
\section{Sketch Proof for constrained parameter spaces}
\label{apx:sketch-bounded}

Let $\boundarymap: \parmspace \times (\Reals^d)^3 \to \parmspace$ be a measurable function such that:
\begin{enumerate}
  \item[(i)] $\boundarymap$ is \emph{faithful} to $\parmspace$, meaning that
if $\convhull(\parm, \parm + \incrPriorNup + \incrLoglikNup + \incrInnovNup) \subset\parmspace$ then
    \[ \label{eq:boundaryfun-faithful}
      \boundarymap(\parm, \incrPriorNup, \incrLoglikNup, \incrInnovNup) = \parm + \incrPriorNup + \incrLoglikNup + \incrInnovNup,
    \]
	where $\convhull(\parm_1,\parm_2)$ is the line segment from $\parm_1$ to $\parm_2$.
  \item[(ii)] $\boundarymap$ is \emph{local}, meaning that there exists $\const{\boundarymap}>0$ such that for all $(\parm, \incrPriorNup, \incrLoglikNup, \incrInnovNup) \in \parmspace \times (\Reals^d)^3$
    \[ \label{eq:boundaryfun-local}
      \norm{\boundarymap(\parm, \incrPriorNup, \incrLoglikNup, \incrInnovNup) - \parm}
        & \leq \const{\boundarymap} \rbra{\norm{\incrPriorNup}+\norm{\incrLoglikNup}+\norm{\incrInnovNup}}.
    \]
\end{enumerate}

We will consider the iterative algorithm on $\parmspace$ given by
\[\label{eq:sgld-iter-update-boundary}
	\parmiter{\samplesize}{\iternum+1}
		& =
    \boundarymap\rbra{
      \parmiter{\samplesize}{\iternum},\
      \frac{\stepsize \precon}{2\samplesize} \grad \log\prior\rbra{\parmiter{\samplesize}{\iternum}},\
      \frac{\stepsize \precon}{2} \frac{1}{\batchsize}\sum_{\mbidx\in \range{\batchsize}} \grad\loglik\rbra{ \parmiter{\samplesize}{\iternum};\ \data_{\mbdataidx{\samplesize}{\iternum}{\mbidx}}},\
      \sqrt{\stepsize\invtemp^{-1}\aniso}\ \innoviter{\iternum}
    }.
\]

The key idea is that, if $\trueparm \in \interior(\parmspace)$, there is a $r>0$ with $\trueparm \in \ball(\trueparm,r)\subset \interior(\parmspace)$, and for any compactly supported test function $\testfun$ and compact extension of its support, $\compactset_1$, for sufficiently large sample sizes $\samplesize$, $\compactset_1 \subseteq  \ball(0,\localscale\upper{\samplesize} r)$.
In the proof of the $\parmspace = \Reals^\parmdim$ case we found that, along sub-sequences ($\subsubseqsamplesize$), the increments from the log-likelihood and from the prior vanish uniformly within a sufficiently large extension of the support of $\testfun$. 
Combining this with faithfulness of $\boundarymap$ (defined in \cref{sec:constrained-parmspace}) and an application of the Lebesgue dominated convergence theorem to handle truncation of the Gaussian increments shows that the $\generator_{\subseqsamplesize}\testfun \to \generator \testfun$ uniformly within the extension of the support of $\testfun$ when $\parmspace \neq \Reals^\parmdim$.
Moreover, the local property of the boundary condition (defined in \cref{sec:constrained-parmspace}) ensures that for sufficiently large sample sizes, if the process were far enough outside of the support of $\testfun$ then it cannot re-enter the support via an arbitrarily large jump caused by the boundary condition. Thus, outside of the extension of the support of $\testfun$, the deviation of $\generator_{\subseqsamplesize}\testfun$ from $0$ is essentially indistinguishable from the unconstrained case. Using those two facts we can rely on the faithfulness of the boundary dynamics to ensure that the process converges weakly to the same Ornstein-Uhlenbeck limit as in the unconstrained case.
\section{Further discussion of asymptotics of mixing times}
\label{apx:mixing-asympt}
The discussion of the implications on the mixing time from \cref{sec:implications-mixingtime} is only a heuristic because, even if the process converge weakly and the stationary distributions converge weakly, it is insufficient to conclude that the mixing times converge. 
Instead the mixing time of limiting process corresponds to fixing a duration of scaled time for which to run the process, say $T$, then computing the limit of the covariance of an estimator based on the run up to time $T$, then letting $T$ tend to infinity. The mixing time of the limit is of more practical relevance for our understanding of the local process since it accurately reflects the time needed for the limiting stationary distribution to provide a good approximation to a sample from the local process. On the other hand the limit of mixing times determines how long it would take to visit other modes if they exist, and would often tend to $\infty$ with sample size. This can be seen by considering a simple non-identifiable model, for example Gaussian location clustering, for which there would be two identical optimal solutions which differ only by permutations of the clusters. The limit of mixing times corresponds to the time it takes to explore both modes, while the mixing time of the limit corresponds to the time needed to explore the model closer to which the process is started. Even if there was not a second equally good mode, a second suboptimal mode that persists (though shrinking) at all sample sizes, and is moving farther away as the process is re-scaled, could lead to mixing times that do not converge.

In future work, we plan to introduce a more rigorous characterization of the correspondence between limit of mixing times and the mixing time of the limiting process. In particular, \citet{atchade2021approximate} introduces the $\zeta$-spectral gap, defined as
\[
	\textrm{SpecGap}_\zeta
		& \defas \inf \bigg\{\frac{\pi[f^2] - \inner{f}{Pf}_{L^2(\pi)}}{\pi[f^2] - \zeta/2} \ \big\vert \ \\
		&\qquad\qquad f\in L^2(\pi),\ \pi f=0,\ \pi[f^2] > \zeta,\ \norm{f}_{L^2(\pi)}<\infty \bigg\}.
\]
We conjecture that for any $\zeta>0$, under appropriate scaling (corresponding to the time rescaling factor $\timescale\upper{\samplesize}$), if the sequence of posterior distributions is tight, then the $\zeta$-spectral gap will converge to that of the OU-process for all $\zeta>0$. This is supported by the intuitive interpretation of the $\zeta$-spectral gap; that it corresponds to the mixing time of the process within a local region containing most of the probability mass of the stationary distribution. Under the tightness assumption we expect that this is sufficient to rule out the types of pathological behaviour described in the previous paragraph.

\section{Additional Details for Experimental Results}
\label{apx:experiment-supp}

\begin{table}[ht]
  \centering
  \begin{tabular}{l|l|l|l}
      & Experiment 1 & Experiment 2 & Experiment 3 \\ \hline
    true distribution
      & $\normaldist_{10}\rbra{0, \frac{1}{2} I + \frac{1}{2} \one\one'}$
      & unknown
      & unknown
      \\
    log-likelihood $\loglik(\cdot;\parm)$
      & $\sum_{i=1}^{10} \frac{(\dumdata_i- \parm_i)^2}{\sqrt{i}}$
      & $yx\transpose\theta - \log(1+e^{x\transpose\theta})$
      & $y x\transpose\theta - \exp(x\transpose\theta)$
      \\
    log-prior $\log\prior(\parm)$ 
      & $0$
      & $0$
      & $0$
      \\
    sample size $\samplesize$
      & $1000$
      & $1000000$
      & $150000$
      \\
    batch size $\batchsize$
      & $1$
      & $1000$
      & $250$
      \\
    number of steps $\iternum$
      & $10000 \samplesize / \batchsize$
      & $1000 \samplesize / \batchsize$
      & $1000 \samplesize / \batchsize$
      \\
    step size (SGD) $\stepsize$
      & $4\batchsize/\samplesize$
      & $4\batchsize/\samplesize$
      & $4\batchsize/\samplesize$
      \\
    step size (SGLD) $\stepsize$
      & $2\batchsize/\samplesize$
      & $\batchsize/\samplesize$
      & $2\batchsize/\samplesize$
      \\
    inv. temp. (SGLD) $\invtemp$
      & $2$
      & $1$
      & $2$
      \\
  \end{tabular}
  \caption{Settings for experiments 1, 2, \& 3. When the true distribution is unknown it is approximated by the empirical distribution on a larger version of the dataset for these experiments.}
  \label{tab:exp1-2-3-settings}
\end{table}

\begin{figure}[ht]
  \centering
  \begin{subfigure}[b]{0.45\textwidth}
    \includegraphics[width=\textwidth]{./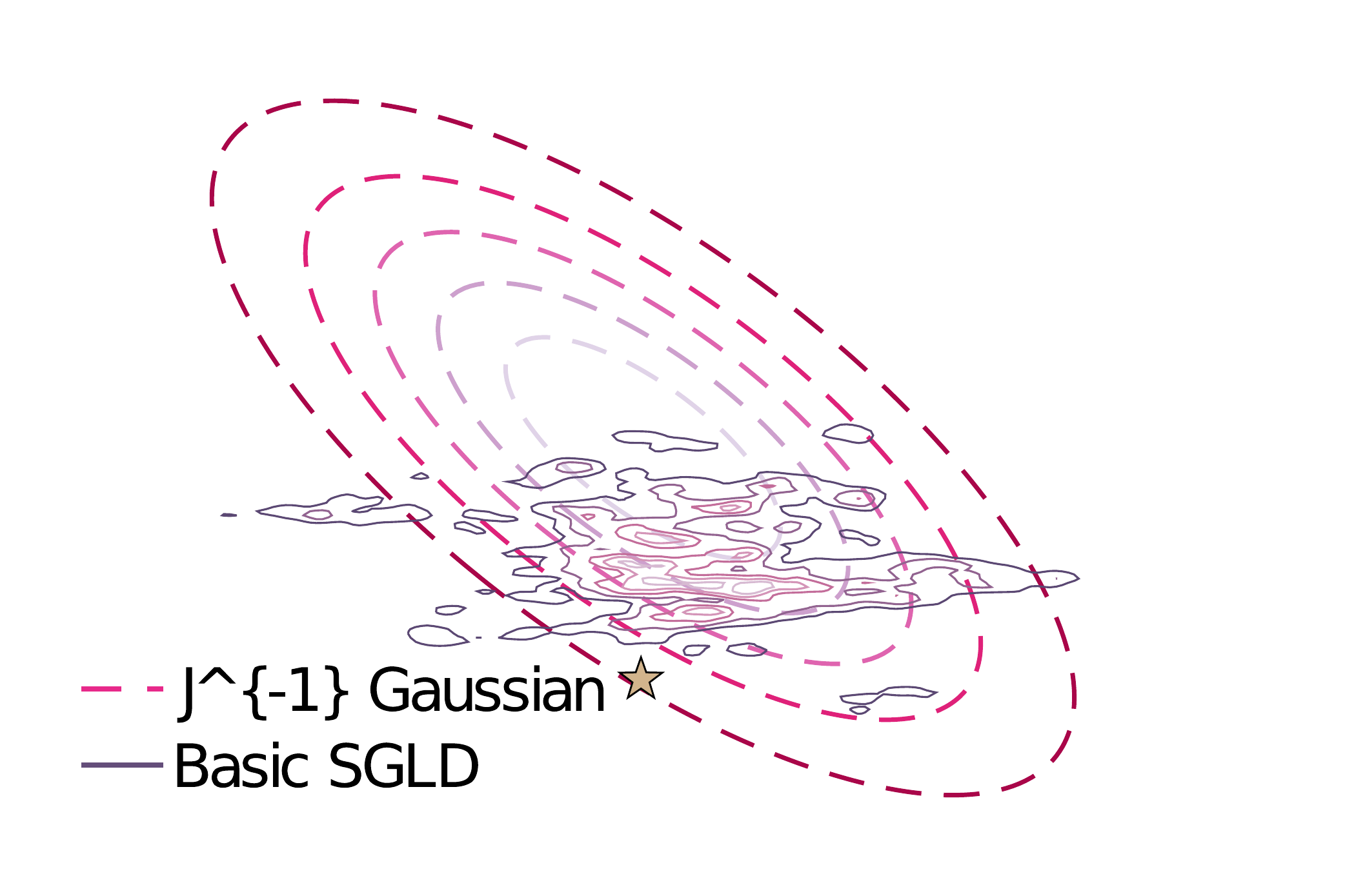}
    \caption{SGLD without Preconditioning}
  \end{subfigure}
  \begin{subfigure}[b]{0.45\textwidth}
    \includegraphics[width=\textwidth]{./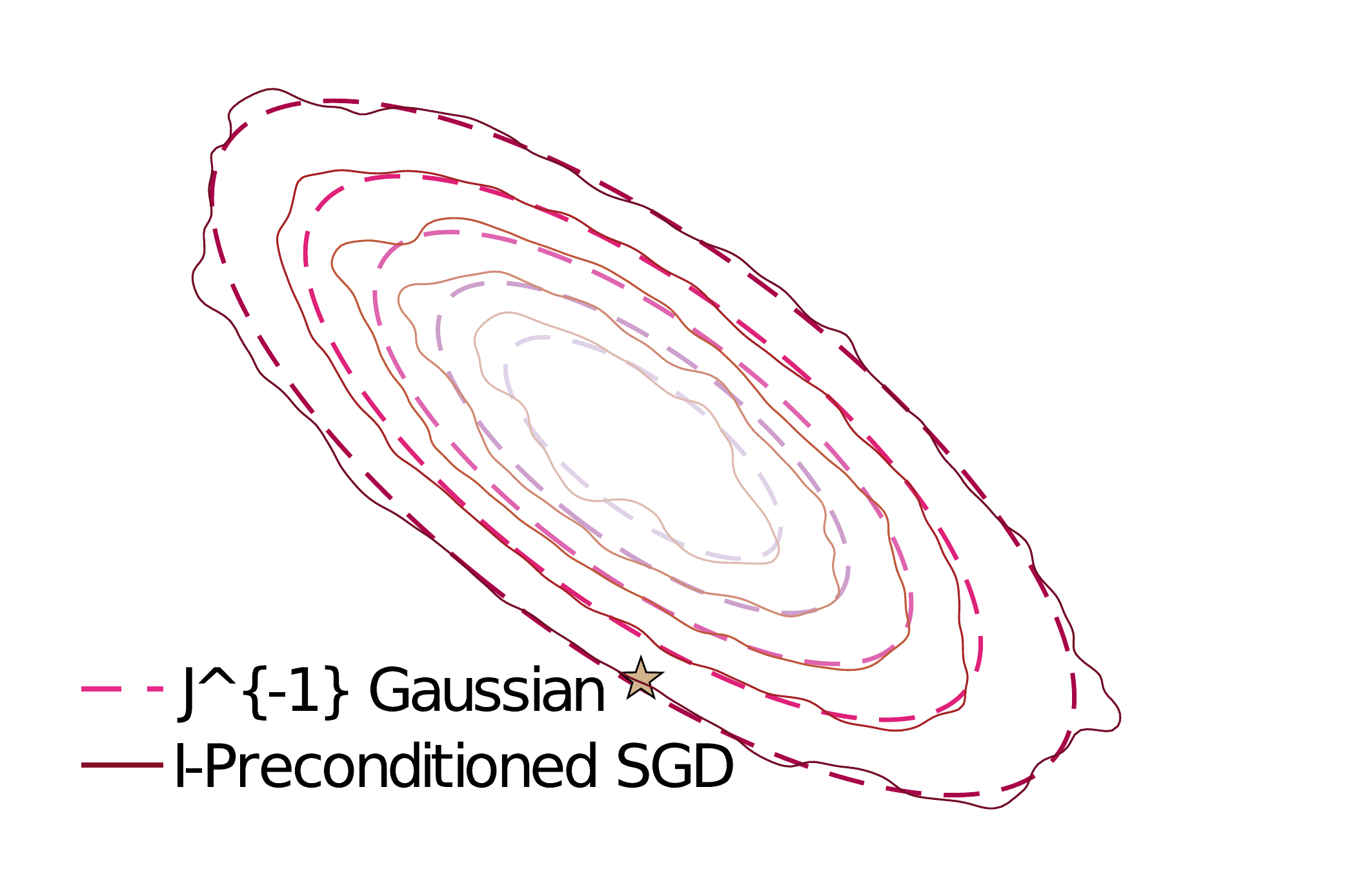}
    \caption{SGD Preconditioned by $\vinfoOpt$}
  \end{subfigure}
  \caption{Joint results of experiment 2: Parameters 1 and 4}
  \label{fig:exp2-results-joint}
\end{figure}

\begin{figure}[ht]
  \centering
  \includegraphics[width=0.5\textwidth]{./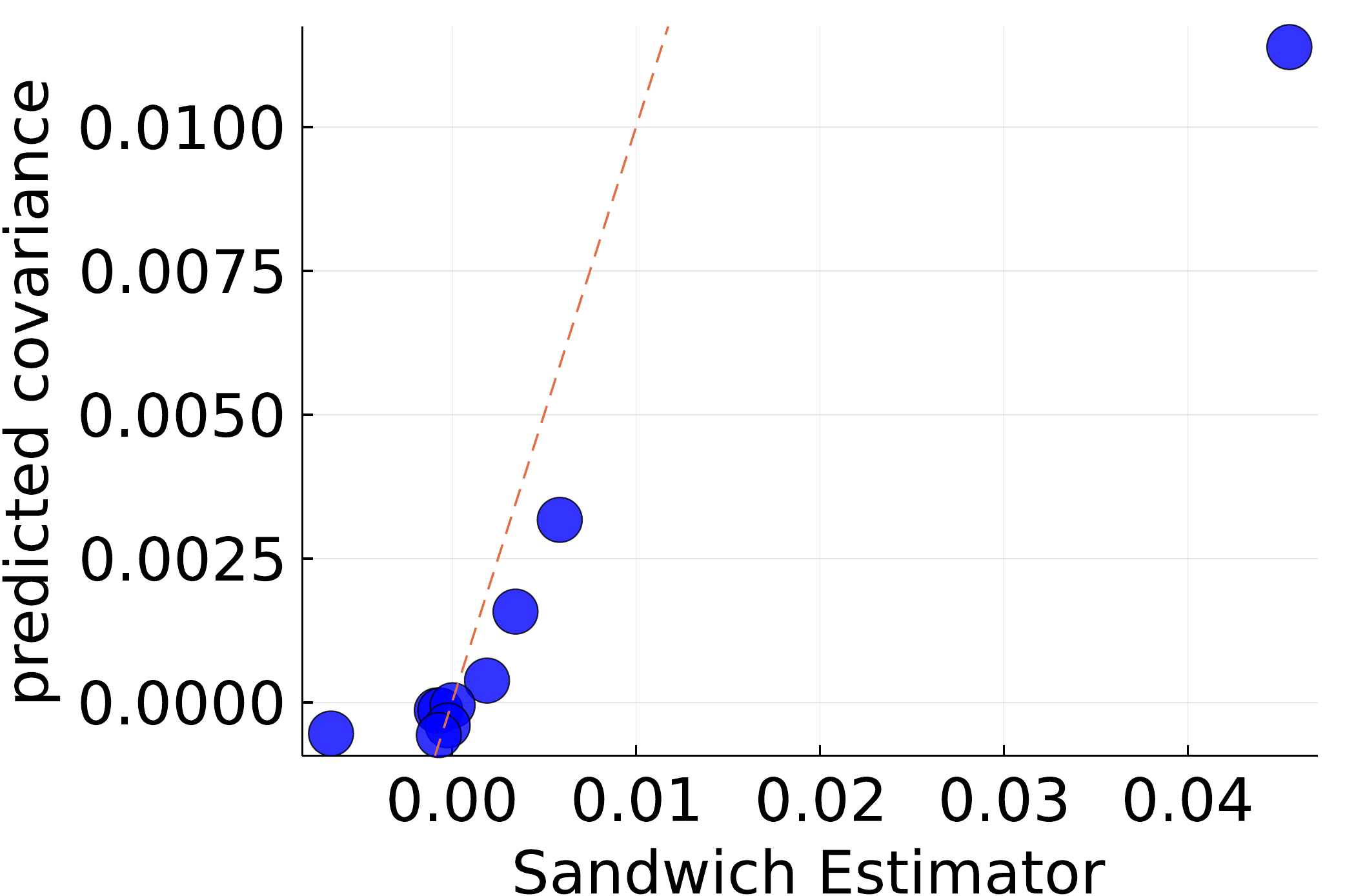}
  \label{fig:exp2plus-results-a}
  \caption{Further result for experiment 2 comparing the scaled sandwich covariance estimator \cref{eq:asymptotic-iteravg-cov} to the predicted values variance-covariance matrix based upon \cref{eq:asymptotic-iteravg-cov-simple} for iterate averages when $\stepsizePow+\batchsizePow=1$.  We see that the higher order correction is material in this case, as expected based upon the theoretical results.}
  \label{fig:exp2-results-supp}
\end{figure}

\end{document}